\DeclareMathOperator{\cost}{cost}
\DeclareMathOperator{\sol}{sol}
\def \simplify {\textsc{Simplify}\xspace}
\def \randomcover {\textsc{RandomCover}\xspace}
\def \req {\ensuremath{\mathcal{T}}}
\def \djcut {\textsc{Disjunctive Multicut}\xspace}
\def \mdt {\textsc{Triple Multicut}\xspace}
\def \spc {\textsc{Split Paired Cut}\xspace}
\newcommand{\rel}[1]{\textnormal{\sf #1}\xspace}
\title{Parameterized Complexity of Equality MinCSP\thanks{
  The first author was supported by
  the Wallenberg AI, Autonomous Systems and Software Program (WASP) funded
  by the Knut and Alice Wallenberg Foundation.
}}
\author{
  George Osipov\thanks{Link{\"o}ping University, Sweden, \texttt{george.osipov@pm.me}} \and
  Magnus Wahlstr{\"o}m\thanks{Royal Holloway, University of London, UK, \texttt{Magnus.Wahlstrom@rhul.ac.uk}}
}
\newif\iflong
\newif\ifshort
\begin{document}

\begin{titlepage}
  \maketitle

  \begin{abstract}
    We study the parameterized complexity of MinCSP for so-called
    \emph{equality languages}, i.e., for finite languages over an
    infinite domain such as $\mathbb{N}$, where the relations are defined via
    first-order formulas whose only predicate is $=$.
    This is an important class of languages that forms the starting
    point of all study of infinite-domain CSPs under the commonly used
    approach pioneered by Bodirsky, i.e., languages defined as reducts
    of finitely bounded homogeneous structures.
    Moreover, MinCSP  over equality languages forms a natural class of
    optimisation problems in its own right, covering such problems as 
    \textsc{Edge Multicut}, \textsc{Steiner Multicut} and
    (under singleton expansion) \textsc{Edge Multiway Cut}.
    We classify $\textsc{MinCSP}(\Gamma)$ for every finite equality language $\Gamma$,
    under the natural parameter, 
    as either FPT, W[1]-hard but admitting a constant-factor FPT-approximation,
    or not admitting a constant-factor FPT-approximation unless FPT=W[2].
    In particular, we
    describe an FPT case that slightly generalises \textsc{Multicut}, and
    show a constant-factor FPT-approximation for \textsc{Disjunctive Multicut},
    the generalisation of \textsc{Multicut} where the ``cut requests''
    come as disjunctions over $d = O(1)$ individual cut requests $s_i \neq t_i$.
    We also consider \emph{singleton expansions} of equality languages,
    i.e., enriching an equality language with the capability for
    assignment constraints $(x=i)$, $i \in \NN$, for either a finite
    or infinitely many constants $i$, and fully characterize the
    complexity of the resulting MinCSP.
  \end{abstract}

  \thispagestyle{empty}
\end{titlepage}

\iflong
  \thispagestyle{empty}
  \tableofcontents
  \newpage
\fi

\setcounter{page}{1}
\section{Introduction}
\label{sec:intro}

Let $D$ be a fixed domain, and let $\Gamma$ be a finite set of finitary relations over $D$.
$\Gamma$ is referred to as a \emph{constraint language}.
A \emph{constraint} over $\Gamma$ is a pair $(R,X)$, less formally written $R(X)$, 
where $R \in \Gamma$ is a relation of some arity $r$ and $X=(x_1,\ldots,x_r)$ is a tuple of variables.
It is \emph{satisfied} by an assignment $\alpha$ if $(\alpha(x_1),\ldots,\alpha(x_r)) \in R$. 
For a constraint language $\Gamma$, the \emph{constraint satisfaction problem} over $\Gamma$,
\csp{\Gamma}, is the problem where an instance $I$ is a collection of
constraints over $\Gamma$, on some set of variables $V$, and the question
is if there is an assignment such that all constraints in $I$ are satisfied. 
In the optimization variant \mincsp{\Gamma}, the input also contains
an integer $k$ and the question is whether there is an assignment
such that all but at most $k$ constraints are satisfied.
Less formally, a constraint language $\Gamma$ determines the ``type of constraints''
allowed in an instance of \csp{\Gamma} or \mincsp{\Gamma},
and varying the constraint language defines problems of varying
complexity (such as $k$-SAT, \textsc{$k$-Colouring}, \textsc{$st$-Min Cut}, etc.). 
After decades-long investigations, \emph{dichotomy theorems} have been
established for these problems: for every constraint language over a
finite domain, \csp{\Gamma} and \mincsp{\Gamma} is either in P or NP-complete,
and the characterizations are known~\cite{Bulatov17CSP,Zhuk20CSP,ThapperZ16JACM,KolmogorovKR17}.
For fixed cases, such as the Boolean domain $D=\{0,1\}$, 
\emph{parameterized} dichotomies are also known, characterizing 
every problem \mincsp{\Gamma} as either FPT or W[1]-hard~\cite{KimKPW23flow3},
and similarly for approximate FPT algorithms~\cite{bonnet2016mincsp}.
This work represents significant advancements of our understanding of
tractable and intractable computational problems (classical or parameterized).

But as highlighted by Bodirsky~\cite{Bodirsky-Hab,BodirskyM17Dagstuhl},
there are also many problems from a range of application domains
that do not lend themselves to a formulation in the above CSP framework,
yet which can be formulated via CSPs over structures with \emph{infinite} domains. 
Unfortunately, CSPs with fixed templates over infinite domains are
not as well-behaved as over finite domains; it is known that
the problem $\csp{\Gamma}$ over an infinite domain can have any 
computational complexity (including being intermediate),
making any dichotomy impossible~\cite{BodirskyG08Nondich,BodirskyM17Dagstuhl}.
There are also questions of how an arbitrary infinite-domain
relation would be represented. The approach used by Bodirsky,
which is the standard approach for the study of infinite-domain CSPs,
is to consider a language $\Gamma$ as a \emph{reduct of a finitely
  bounded homogeneous structure}. Less technically, consider
a structure, for example $(\QQ,<)$ or $(\ZZ,<)$, and
let $\Gamma$ be a finite language where every relation in $\Gamma$
has a quantifier-free first-order definition over the structure; i.e.,
$\Gamma$ is a \emph{first-order reduct} of the structure.\footnote{The
  definition can be assumed to be quantifier-free
  since these structures admit quantifier elimination.}
For such languages a dichotomy is plausible, and many cases
have been settled, including \emph{temporal} CSPs, i.e.,
first-order reducts of $(\QQ,<)$~\cite{BodirskyK10Temporal};
\emph{discrete temporal} CSPs, i.e., first-order reducts of
$(\ZZ,<)$~\cite{BodirskyMM18discrete}; CSPs over the universal random
graph~\cite{BodirskyP15graphs}; and many more. 

Our goal is to study the parameterized complexity of MinCSPs over such structures. 
Many important problems in parameterized complexity, which are not well handled by
CSP optimization frameworks over finite-domain CSPs,
can be expressed very simply in this setting. 
For example, the MinCSP with domain $\QQ$ and the single relation $<$
is equivalent to the \textsc{Directed Feedback Arc Set} problem, i.e., 
given a digraph $D$ and an integer $k$, find a set $X$
of at most $k$ arcs from $D$ such that $D-X$ is acyclic.
(Here, the vertices of $D$ become variables, the arcs constraints, and
the topological order of $D-X$ becomes an assignment which violates
at most $|X|$ constraints.)
Other examples include \textsc{Subset Directed Feedback Arc Set},
which corresponds to $\mincsp{<,\leq}$, and
\textsc{Symmetric Directed Multicut} which corresponds
to $\mincsp{\leq,\neq}$.
The former is another important FPT problem~\cite{chitnis2015directed},
while FPT status of the latter is open~\cite{EibenRW22ipec}.

The structure we study in this paper is $(\NN,=)$.
The relations definable over this structure are called
\emph{equality constraint languages}. Here, $\NN$ is an arbitrary,
countably infinite domain; first-order reducts of $(\NN,=)$
are simply relations definable by a quantifier-free first-order
formula whose only predicate is $=$. Equivalently, relations in an equality
language accept or reject an assignment to their arguments purely
based on the partition that the assignment induces.
Since every first-order formula is allowed to use equality in this
framework, equality languages are contained in \emph{every} other
class of languages studied in the framework. Hence, characterizing the
complexity of equality languages is a prerequisite for studying any
other structure. 

Moreover, the setting also covers problems that are important in their
own right, as it captures undirected \emph{graph separation} problems. 
In particular, \textsc{(Vertex/Edge) Multicut} is defined as follows.
The input is a graph $G$, an integer $k$, and a set of \emph{cut
  requests} $\cT \subseteq \binom{V(G)}{2}$, and the task is to find a
set $X$ of at most $k$ vertices, respectively edges, such that for
every cut request $st \in \cT$, vertices $s$ and $t$ are in different
connected components in $G-X$. \textsc{Multicut} is FPT
parameterized by $k$ --- a breakthrough result, settling a long-open
question~\cite{marx2014fixed,BousquetDT18}. 
As with the above examples, there appears to be no natural way of capturing
\textsc{Multicut} as a finite-domain CSP optimization problem.\footnote{The 
  \emph{bipedal form} of \textsc{Multicut} encountered as a processing
  step in the solution~\cite{marx2014fixed} can be captured as a
  MinCSP over domain 3, but that is hardly a natural or immediate
  capture given the amount of work involved in reaching this stage.}
However, it naturally corresponds to $\mincsp{=,\neq}$ over domain $\NN$,
where edges correspond to soft $=$-constraints and cut requests to
crisp $\neq$-constraints. 
Another classic problem is \textsc{Multiway Cut}, which is the special
case of \textsc{Multicut} where the cut requests are $\cT=\binom{T}{2}$
for a set $T$ of \emph{terminal vertices} in the graph.
\textsc{Multiway Cut} was among the first graph separation problems
shown to be FPT~\cite{Marx06mwc}, and remains a relevant problem,
e.g., for the question of \emph{polynomial kernelization}~\cite{kratsch2020representative,Wahlstrom22talg}.
While \textsc{Multiway Cut} is not directly captured by an equality CSP,
it is captured by the \emph{singleton expansion} of the setting, i.e.,
adding ``assignment constraints'' (see later). 

We characterize \mincsp{\Gamma} for any equality language $\Gamma$
as being in P or NP-hard, FPT or W[1]-hard, and in terms of admitting
constant-factor FPT approximations. We also characterize 
the complexity for singleton expansions of $\Gamma$.

\subsection{Related work}

Bodirsky and Kara~\cite{bodirsky2008complexity} characterized
$\csp{\Gamma}$ as in P or NP-hard for every equality language $\Gamma$.
Bodirsky, Chen and Pinsker~\cite{BodirskyCP10equality}
characterized the structure of equality languages up to pp-definitions
(\emph{primitive positive definitions}, see Section~\ref{sec:prelim});
these are too coarse to preserve the parameterized complexity of a
problem, but their results are very useful as a guide to our search.
For much more material on CSPs over infinite domains, see
Bodirsky~\cite{Bodirsky-book21}.
Singleton expansions (under different names) are discussed by
Bodirsky~\cite{Bodirsky-book21} and Jonsson~\cite{Jonsson18constants}.
We have taken the term from Barto et al.~\cite{BartoKW17}.

Many variations on cut problems have been considered, and have been
particularly important in parameterized complexity~\cite{cygan2015parameterized}
(see also~\cite{crespelle2020survey}). We cover \textsc{Multicut},
\textsc{Multiway Cut} and \emph{Steiner cut}. 
Given a graph $G$ and a set of terminals $T \subseteq V(G)$,
a Steiner cut is an edge cut in $G$ that separates $T$,
i.e., 
a cut $Z$ such that some pair of vertices in $T$
is disconnected in $G - Z$.
\textsc{Steiner Cut} is the problem of finding a minimum Steiner cut.
This can clearly be solved in polynomial time; in fact, using advanced
methods, it can even be computed in \emph{near-linear} time~\cite{LiP20mincut,ChenKLPGS22maxflow}.
\textsc{Steiner Multicut} is the generalization where the input
contains a set $\cT=\{T_1,\ldots,T_t\}$ of terminal sets and the task
is to find a smallest-possible cut that separates every set $T_i$. Clearly, this
is NP-hard if $t \geq 3$. Bringmann et al.~\cite{bringmann2016parameterized}
considered parameterized variations of this and showed, among other results,
that \textsc{Edge Steiner Multicut} is FPT even for terminal sets
$T_i$ of unbounded size, if the parameter includes both $t$ and the
cut size $k$. On the other hand, parameterized by $k$ alone,
\textsc{Steiner Multicut} is W[1]-hard for terminal sets of size
$|T_i|=3$. 

Other parameterized CSP dichotomies directly
relevant to our work are the dichotomies for Boolean MinCSP
as having constant factor FPT-approximations or being W[1]-hard to approximate~\cite{bonnet2016mincsp}
(with additional results in a later preprint~\cite{Bonnet2016mincspCoRR})
and the recent FPT/W[1]-hardness dichotomy~\cite{KimKPW23flow3}.

The area of FPT approximations has seen significant activity in recent years,
especially regarding lower bounds on FPT approximations~\cite{FeldmannSLM20survey,KarthikS22ccc,BhattacharyyaBE21jacm,LinRSW23soda}.
In particular, we will need that there is no constant-factor
FPT-approximation for \textsc{Nearest Codeword} in Boolean codes
unless FPT=W[1]~\cite{Bonnet2016mincspCoRR,BhattacharyyaBE21jacm},
or for \textsc{Hitting Set} unless FPT=W[2]~\cite{LinRSW23soda}.
Lokshtanov et al.~\cite{lokshtanov2021fpt} considered fast FPT-approximations
for problems whose FPT algorithms are slow; in particular,
our result for \textsc{Steiner Multicut} 
builds on their algorithm giving an $O^*(2^{O(k)})$-time
2-approximation for \textsc{Vertex Multicut}. 

\subsection{Our results}

We study the classical and parameterized complexity of
$\mincsp{\Gamma}$ for every finite equality language $\Gamma$,
as well as for singleton expansions over equality languages. 
We consider both exact FPT-algorithms and constant-factor
FPT-approximations. We provide our results in overview here; for
details, see the body of the paper.

Unsurprisingly, $\mincsp{\Gamma}$ for an equality language $\Gamma$ is NP-hard except in trivial cases,
since $\mincsp{=,\neq}$ already corresponds to \textsc{Edge Multicut}. 
Specifically, \mincsp{\Gamma} is in P if $\Gamma$ is \emph{constant}, 
in which case every relation in $\Gamma$ contains the tuple $(1,\ldots,1)$,
or \emph{strictly negative},
in which case every relation in $\Gamma$ contains every tuple $(1,\ldots,r)$ 
where all values are distinct
(proper definitions of the terms are found in 
\iflong Section~\ref{sec:equalangs}\fi
\ifshort Section~\ref{sec:short}\fi
).
In all other cases, \mincsp{\Gamma} is NP-hard.

\iflong
\begin{theorem}[Theorem~\ref{thm:eq-mincsp-npn-dichotomy}~and~Corollary~\ref{cor:poly-inapprox}]
  \label{ithm:mincsp-p}
  Let $\Gamma$ be an equality constraint language. Then 
  $\mincsp{\Gamma}$ is in P if $\Gamma$ is constant or strictly negative.
  Otherwise, \mincsp{\Gamma} is  NP-hard and has no
  constant-factor approximation under the Unique Games Conjecture.
\end{theorem}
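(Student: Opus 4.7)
The plan is to handle the two sides of the dichotomy separately. For the tractable side, I would observe that equality relations are invariant under all permutations of $\NN$, and so if $\Gamma$ is constant then every $R \in \Gamma$ contains $(a, \ldots, a)$ for every $a \in \NN$; hence the constant assignment satisfies every constraint and $\mincsp{\Gamma}$ has optimum $0$. Symmetrically, if $\Gamma$ is strictly negative then every injective assignment to the variables satisfies every constraint. Both cases are therefore trivially in P.

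For the intractable side, the plan is to give a gap-preserving reduction from $\mincsp{\{=,\neq\}}$, which is precisely \textsc{Edge Multicut} and is known to be both NP-hard and, under the Unique Games Conjecture, inapproximable to within any constant factor. Assume $\Gamma$ is neither constant nor strictly negative: then there exist $R_1, R_2 \in \Gamma$ such that $R_1$ rejects the all-equal tuple and $R_2$ rejects some tuple with pairwise distinct entries. The key observation is that, by permutation invariance, the only non-empty binary equality relations on $\NN$ are $=$, $\neq$, and $\NN^2$; hence to pp-define $\neq$ from $\Gamma$ it suffices to produce from $R_1$ a non-empty binary relation disjoint from the diagonal. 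I would pick a non-trivial partition $P$ accepted by $R_1$, identify the variables of $R_1$ according to a two-block coarsening of $P$, and existentially quantify away the remaining positions; the resulting binary relation is non-empty (witnessed by any tuple refining $P$) and diagonal-free (since $R_1$ rejects the all-equal tuple), so it equals $\neq$. A dual construction starting from $R_2$ and its rejected injective tuple yields $=$.

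The reduction then replaces each soft $=$ or $\neq$ constraint in a given multicut instance by a copy of the corresponding pp-gadget on fresh auxiliary variables. The main technical obstacle, and the step most needing care, is turning the pp-definitions into gadgets whose violation cost is faithful: for any assignment to the endpoints $x,y$ the minimum number of violated $\Gamma$-constraints inside the gadget must be $0$ when the simulated constraint holds and at least $1$ (and at most some constant $c=c(\Gamma)$) otherwise, regardless of how the auxiliary variables are assigned. The risk is that a clever assignment of auxiliary variables allows a gadget to be fully satisfied even when the simulated constraint is violated; avoiding this forces the gadget design to "lock" the behaviour of $x,y$, and is the real content of the argument. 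Once such gadgets of constant size are available, the reduction changes the number of constraints and the optimum each by a factor in $[1,c]$, so constant-factor approximability is preserved, and both NP-hardness and UGC-hardness of constant-factor approximation transfer from \textsc{Edge Multicut} to $\mincsp{\Gamma}$.
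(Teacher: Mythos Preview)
Your overall strategy matches the paper's: show the P cases are trivial, then reduce \textsc{Edge Multicut} (with its UGC inapproximability) to $\mincsp{\Gamma}$ by implementing $=$ and $\neq$ from $\Gamma$. However, both of your implementation arguments have genuine gaps.

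\textbf{The $\neq$ construction fails.} You pick an accepted partition $P$ and identify variables according to a two-block coarsening $\{A,B\}$ of $P$. But the resulting binary relation $R_1'(x,y)$ is satisfied only by tuples whose partition is $\{A,B\}$ (or coarser), and there is no reason $R_1$ contains such a tuple: $P$ is \emph{finer} than $\{A,B\}$, so the tuple with partition $P$ does not witness non-emptiness. Concretely, take $R_1=\rel{NEQ}_3$; any two-block identification gives $\rel{NEQ}_3(x,x,y)$, which is empty. The paper's fix (Lemma~4.2) is to choose $P$ to be a \emph{least refined} (coarsest) accepted partition, identify according to $P$ itself (yielding possibly more than two variables), and treat two of them as primary and the rest as auxiliary; minimality of $P$ then forces the two primaries to be unequal in every satisfying extension.

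\textbf{The ``dual'' construction for $=$ does not work as stated.} Your premise that ``$\Gamma$ not strictly negative implies some $R_2\in\Gamma$ rejects an injective tuple'' is false. Strictly negative means upward-closed in the refinement order, and a relation can accept every injective tuple yet fail to be upward-closed: e.g.\ the Horn relation $R(x_1,x_2,x_3,x_4)$ that accepts exactly the partitions $\{12|34\}$ and $\{1|2|3|4\}$ accepts all injective tuples, rejects the constant tuple, and is not strictly negative (since $(1,1,2,2)\in R$ but the finer $(1,2,3,3)\notin R$). Your dual construction cannot even start on such $R$. The paper instead makes a case split you omit: if $\Gamma$ is not Horn (and not constant) then already $\csp{\Gamma}$ is NP-hard, so deciding cost $0$ is NP-hard and no constant-factor approximation exists. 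If $\Gamma$ is Horn, then a reduced Horn CNF of any non-strictly-negative $R\in\Gamma$ has a clause $(x_1=x_2)\vee\bigvee_s(x_{i_s}\neq x_{j_s})$, and identifying $x_{i_s}$ with $x_{j_s}$ for all $s$ kills the negative literals and forces $x_1=x_2$ --- this is where the Horn hypothesis is essential, and no purely order-theoretic dual of the $\neq$ argument replaces it.

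Your discussion of the cost-faithfulness issue is on target; in the paper both gadgets are single constraints, so violation cost is exactly $1$ and the reduction is cost-preserving rather than merely constant-factor-preserving.
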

\fi

\ifshort
\begin{theorem}
  \label{ithm:mincsp-p}
  Let $\Gamma$ be an equality constraint language. Then 
  $\mincsp{\Gamma}$ is in P if $\Gamma$ is constant or strictly negative.
  Otherwise, \mincsp{\Gamma} is  NP-hard and has no
  constant-factor approximation under the Unique Games Conjecture.
\end{theorem}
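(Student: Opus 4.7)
The two tractable cases are immediate: if $\Gamma$ is constant, the all-equal assignment satisfies every constraint, and if $\Gamma$ is strictly negative, the assignment giving each variable a fresh distinct value satisfies every constraint. Hence the MinCSP optimum is always zero, so both cases lie in P.

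For the hard direction, the plan is a cost-preserving reduction (within a constant factor) from $\mincsp{=,\neq}$, equivalently \textsc{Edge Multicut}, which is both NP-hard and, under the Unique Games Conjecture, admits no constant-factor polynomial-time approximation. Both conclusions of the theorem would then follow at once. Fix $R_0, R_1 \in \Gamma$ (possibly equal) such that $R_0$ omits the all-equal tuple and $R_1$ omits some all-distinct tuple; these exist by assumption. The main ingredient is a pair of $O(1)$-size gadgets over $\Gamma$: a soft-$\neq$ gadget $G_\neq(x,y)$ built from $R_0$ and a soft-$=$ gadget $G_=(x,y)$ built from $R_1$, such that an optimal assignment violates a constant number of constraints in the bad case ($x=y$ for $G_\neq$, $x \neq y$ for $G_=$) and none in the good case. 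Given these, the reduction replaces each Multicut edge by one copy of $G_=$ and each cut request by $k+1$ copies of $G_\neq$ with a correspondingly scaled budget, which preserves the optimum up to constants, and both NP-hardness and UGC-inapproximability of \textsc{Multicut} transfer.

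The main obstacle is the uniform construction of $G_\neq$ and $G_=$ regardless of the peculiarities of the accepted-partition structures of $R_0$ and $R_1$. A natural first attempt for $G_\neq$ is to pick a two-block accepted partition $(B_1, B_2)$ of $R_0$ and identify all arguments in $B_1$ with $x$ and all in $B_2$ with $y$: since $R_0$ rejects the all-equal partition, the resulting single $\Gamma$-constraint is violated exactly when $x = y$. When $R_0$ admits no two-block accepted partition, it must still accept some partition with at least two blocks, and extra blocks are absorbed by fresh existentially quantified auxiliary variables (and, if needed, by iterating the identification scheme). A careful combinatorial argument on the partition lattice of $R_0$ is needed to show that in all corner cases the gadget indeed forces $x \neq y$ at optimum; the symmetric argument, using that $R_1$ rejects some all-distinct partition, constructs $G_=$ from $R_1$. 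Once this case analysis is in place, the reduction is routine and yields the stated inapproximability and NP-hardness.
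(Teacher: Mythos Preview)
The tractable direction and the $G_\neq$ gadget are essentially as in the paper (take a \emph{least-refined} tuple $\ba\in R_0$ and write the single constraint $R_0(x_{\ba_1},\dots,x_{\ba_r})$; note that least-refinedness is needed, not merely ``some accepted partition with at least two blocks'', or the gadget can be satisfiable when $x=y$).

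The genuine gap is the $G_=$ gadget. The claimed symmetric argument fails because variable identification can merge blocks but never separate them, so there is no dual of the $G_\neq$ construction using $R_1$ alone. Concretely, let $R_1(x,y,z)\equiv(x{=}y)\lor(y{=}z)\lor(x{=}z)$; it rejects the all-distinct tuple, so it is a legitimate choice of $R_1$ in $\Gamma=\{\neq,R_1\}$, which is neither constant nor strictly negative. For \emph{any} collection of $R_1$-constraints on primary variables $a,b$ and auxiliaries $Y$, assigning every $y\in Y$ the value of $a$ leaves at most two distinct values in each scope, so every constraint is satisfied even when $a\neq b$. Hence no $R_1$-only gadget forces $a=b$, and your \textsc{Multicut} reduction collapses for this $\Gamma$.

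The paper resolves this via a case split you omit. If $\Gamma$ is Horn, then any non-strictly-negative $R_1\in\Gamma$ is itself Horn, and a reduced Horn CNF of $R_1$ contains a clause with a \emph{unique} positive literal $(x_1{=}x_2)$; identifying variables so as to falsify all negative literals of that clause gives a one-constraint implementation of $=$, after which the cost-preserving reduction from \textsc{Edge Multicut} goes through and yields both NP-hardness and UGC-inapproximability. If $\Gamma$ is not Horn (and not constant), then already $\csp{\Gamma}$ is NP-hard by the Bodirsky--K\'ara dichotomy, whence $\mincsp{\Gamma}$ is NP-hard and admits no finite-factor approximation (distinguishing cost $0$ from cost $\geq 1$ is already NP-hard), with no \textsc{Multicut} reduction needed at all. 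Your route could be repaired by building $G_=$ from $R_1$ \emph{together with} the already-available $G_\neq$ gadget (pin down a most-refined tuple of $R_1$ via crisp disequalities on the auxiliaries), but that is a different construction from the ``symmetric'' one you sketch.
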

\fi

For our FPT results, we introduce the following generalization of
\textsc{Vertex Multicut}.

\pbDefP{Vertex Multicut with Deletable Triples (aka \mdt)}
{A graph $G$, a collection $\cT \subseteq \binom{V(G)}{3}$ of vertex triples, and integer $k$.}
{$k$.}
{Are there subsets $Z_V \subseteq V(G)$ and $Z_{\cT} \subseteq \cT$ such that
  $|Z_V| + |Z_{\cT}| \leq k + 1$ and every connected component of $G - X_V$
  intersects every triple in $\cT \setminus X_{\cT}$ in at most one vertex?}

Note that this is a proper generalization of \textsc{Vertex Multicut}. On the
one hand, for any cut request $uv$ in a \textsc{Vertex Multicut} instance
we can create a triple $uvz$ for an auxiliary vertex $z$ not
connected to the rest of the graph. On the other hand, there is no
apparent way to implement triples $uvw$ in \textsc{Vertex Multicut}
with the condition that the whole triple can be ignored at unit cost. 
We show that \mdt is FPT.

\begin{theorem}[Theorem~\ref{thm:triple-multicut-fpt}]
  \mdt is fixed-parameter tractable.
\end{theorem}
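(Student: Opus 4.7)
The plan is to establish fixed-parameter tractability via iterative compression combined with a branching algorithm guided by important separators, generalising the approach used for \textsc{Vertex Multicut}~\cite{marx2014fixed,BousquetDT18}. We process the triples of $\cT$ one at a time, inductively maintaining a solution of size at most $k+1$ to the current sub-instance. Advancing by one triple and appending it to $Z_\cT$ yields a solution of size at most $k+2$; we compress by guessing in $2^{O(k)}$ ways the intersection of the target solution with the known one, which reduces the task to a \emph{disjoint} compression subproblem in which the newly chosen vertices and triples must avoid specified sets.

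For the disjoint subproblem, we branch on a violated triple. If every alive triple has its three vertices in pairwise distinct components of $G - Z_V$, we are done. Otherwise, pick a violated triple $T = \{a,b,c\}$ with, say, $a$ and $b$ in the same component. Branch into one option that deletes $T$ (budget decreases by $1$) and, for each of the at most $4^k$ important $\{a\}$--$\{b,c\}$ separators of size within the remaining budget, one option that includes that separator in $Z_V$. Since every branch strictly decreases the budget, the recursion has depth at most $k+1$ and the branching tree has $2^{O(k^2)}$ leaves, each resolvable in polynomial time.

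The main obstacle will be completeness of the important-separator enumeration, since naive branching on important separators is known not to suffice for \textsc{Multicut} on its own. The disjointness imposed by iterative compression provides the crucial extra structure: any feasible extension that does not delete $T$ must separate $a$ from $\{b,c\}$, and a standard uncrossing argument lets us pivot the corresponding slice of the solution onto an important separator without disturbing other alive triples. A shadow-removal preprocessing in the spirit of Marx--Razgon handles the residual interactions between deletable triples and binary cuts, and combining with the $2^{O(k)}$ guessing factor from iterative compression yields the claimed FPT running time.
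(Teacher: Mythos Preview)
Your proposal has a genuine gap at precisely the point you flag as ``the main obstacle''. You write that ``a standard uncrossing argument lets us pivot the corresponding slice of the solution onto an important separator without disturbing other alive triples'', but this is exactly the step that fails for \textsc{Vertex Multicut}: pushing a partial cut towards an important separator can destroy separations required by other cut requests, and this is why the Marx--Razgon algorithm does \emph{not} proceed by naive important-separator branching even after iterative compression. Disjointness from the known solution does not change this; the disjoint compression variant of \textsc{Multicut} is not easier in this respect. Your final sentence then invokes shadow removal as a fix, but it is appended after the branching description rather than integrated into it, and you give no indication of what the post-shadow-removal instance looks like when triples can be deleted at unit cost. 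In the classical \textsc{Multicut} algorithm, shadow removal is followed by a reduction to a structured subproblem (e.g., \textsc{Almost 2-SAT} or the bipedal form); the analogue here would have to encode the choice ``delete triple vs.\ separate triple'', and you do not supply it.

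The paper takes a different route that sidesteps important separators entirely. After iterative compression and guessing the disjoint intersection, it additionally guesses the partition of the compressed solution's vertices into connected components of the target solution (a factor of $k^{O(k)}$), and then \emph{encodes} the residual problem as a Boolean \textsc{MinCSP}: Boolean variables $v_i,\hat v_i$ record whether vertex $v$ lies in component $i$, with a soft ``at most one component'' constraint per vertex (whose deletion models deleting $v$), crisp implication constraints along edges, and a soft triangle constraint per triple per component (whose deletion models deleting the triple). All relations used are bijunctive with $2K_2$-free Gaifman graphs, so the resulting \textsc{MinCSP} falls into the FPT class of Kim--Kim--Pilipczuk--Wahlstr\"om. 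The advantage of this encoding is that the delicate interaction between vertex cuts and triple deletions is absorbed into the Boolean \textsc{MinCSP} machinery rather than handled by ad hoc separator arguments.
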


For the FPT cases of \mincsp{\Gamma}, let
$\rel{NEQ}_3$ be the ternary relation which contains all tuples with
three distinct values, and let a \emph{split} constraint be a constraint $R$
of some arity $p+q$ for $p, q \geq 0$, defined (up to argument order) by
\[
  R(x_1,\ldots,x_p,y_1,\ldots,y_q) \equiv \bigwedge_{i, j \in [p]} (x_i=x_j)
  \land \bigwedge_{i \in [p], j \in [q]} (x_i \neq y_j).
\]
We note that \mincsp{\Gamma} with split constraints reduces to
\textsc{Vertex Multicut}. A split constraint $R(u_1,\ldots,u_p,v_1,\ldots,v_q)$
can be represented by introducing a new vertex $c$,
adding edges $cu_i$ for every $i \in [p]$ and
cut requests $cv_j$ for every $i \in [q]$.
Furthermore, a constraint $\rel{NEQ}_3(u,v,w)$ naturally corresponds
to a triple $uvw \in \cT$. Thus \mincsp{\Gamma} reduces to \mdt
if every relation is either split or $\rel{NEQ}_3$, and hence is FPT.
We show that all other cases are W[1]-hard and get the following.

\iflong
\begin{theorem}[from Theorem~\ref{thm:fpt-class}]
  Let $\Gamma$ be an equality constraint language that is not constant
  or strictly negative. Then \mincsp{\Gamma} is FPT if every relation in
  $\Gamma$ is either split or $\rel{NEQ}_3$, and W[1]-hard otherwise.
\end{theorem}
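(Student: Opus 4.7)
The containment direction follows directly from the \mdt algorithm. Given an instance of \mincsp{\Gamma} in which every relation is split or $\rel{NEQ}_3$, we build an equivalent \mdt instance: variables become vertices; a soft constraint $\rel{NEQ}_3(u,v,w)$ becomes the deletable triple $\{u,v,w\}$; and a soft split constraint $R(u_1,\ldots,u_p,v_1,\ldots,v_q)$ gets a fresh ``central'' vertex $c_R$, edges $c_R u_i$ for $i \in [p]$, and, for each $v_j$, a deletable triple $\{c_R, v_j, z_j\}$ where $z_j$ is a fresh isolated auxiliary vertex. Duplicating $c_R$ and the $z_j$'s $k+2$ times ensures they are never picked for deletion. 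Then violating an entire split constraint at cost $1$ corresponds to deleting $c_R$, and the optimum of the \mdt instance equals $k+1$ exactly when the original \mincsp{\Gamma} instance has a solution of cost at most $k$. The FPT algorithm for \mdt then yields the positive direction.

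For the hardness direction, we must show that whenever $\Gamma$ contains a relation that is neither split nor $\rel{NEQ}_3$, \mincsp{\Gamma} is W[1]-hard. Each equality relation of arity $r$ is fully described by a family of partitions of $[r]$, and the split and $\rel{NEQ}_3$ patterns correspond to very restricted such families. The plan is to show that any non-split, non-$\rel{NEQ}_3$ relation allows us to express, via cost-preserving gadgets, one of a short list of canonical hard relations. The prototypical target is a paired-disjunction relation of the form $(x_1 = x_2) \vee (y_1 \neq y_2)$, or a symmetric variant, from which a reduction from \spc (or another known W[1]-hard paired-cut variant) transfers W[1]-hardness to \mincsp{\Gamma}. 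For each non-split, non-$\rel{NEQ}_3$ relation $R$, we identify a small subset of arguments whose projection or identification in $R$ realises one of the canonical hard relations, possibly together with auxiliary crisp $=$ and $\neq$ constraints, which are available because $\Gamma$ is neither constant nor strictly negative.

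The principal obstacle is twofold. First, pp-definability is not cost-preserving in the MinCSP setting, so the Bodirsky--Chen--Pinsker classification alone is insufficient; we must build explicit cost-preserving reductions, typically using high-multiplicity copies and crisp substitutions to prevent cheap attacks on auxiliary structure. Second, the combinatorial case analysis: we must verify that the condition ``$R$ is neither split nor $\rel{NEQ}_3$'' always leaves $R$ expressive enough to realise one of the canonical hard patterns. We plan to tackle this by induction on arity, using argument identification and projection to reduce to minimal-arity cases (essentially ternary and $4$-ary relations), and then checking these base cases by direct inspection of their accepted partition families. The combinatorial case analysis is likely to be the most delicate step, since the interaction between allowed identifications and the cost semantics forces one to be careful about every partition that is and is not accepted.
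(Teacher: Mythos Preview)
Your FPT direction is essentially the paper's reduction to \mdt (Lemma~\ref{lem:split-nae3-to-mdt}), but there is a slip: you say $c_R$ should be duplicated so it is never deleted, and in the next sentence that deleting $c_R$ encodes violating the split constraint. These are incompatible. In the paper's reduction the \emph{variable} vertices are made undeletable, while the central vertex of a split constraint stays deletable and its deletion is what corresponds to paying unit cost for that constraint.

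The hardness direction has a genuine gap. Your plan is to funnel every non-split, non-$\rel{NEQ}_3$ relation into a single canonical target $(x_1=x_2)\lor(y_1\neq y_2)$ (or a ``symmetric variant'') and then invoke \spc. This cannot work uniformly: that target has a positive literal in a non-singleton clause, so it is \emph{not} negative, and negativity is preserved under pp-definitions and projections (Proposition~\ref{prop:projection}). Hence from a negative language such as $\{=,\neq,\rel{NAE}_3\}$ you cannot pp-define your target at all, yet $\mincsp{\rel{NAE}_3,=}$ is W[1]-hard (it is \textsc{Steiner Multicut}). The paper therefore needs several distinct canonical relations and hardness sources, organised by the syntactic hierarchy Horn $\supset$ negative $\supset$ conjunctive: $\rel{ODD}_3$ for Horn-but-not-negative languages (via \textsc{Hitting Set}); $\rel{NAE}_3$ or $R^{\lor}_{\neq,\neq}$ for negative-but-not-conjunctive languages (via \textsc{Steiner Multicut} and \textsc{Multicoloured Independent Set}); and the three $(=,=)/(=,\neq)/(\neq,\neq)$-relations for conjunctive languages that are neither split nor $\rel{NEQ}_3$ (via \spc with the choice gadget). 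An ``induction on arity'' without this syntactic stratification does not give you a handle on which canonical relation is reachable. A smaller point: your concern about cost-preservation is largely misplaced, since in all but the conjunctive case the canonical hard relation is used only as a \emph{crisp} constraint, so pp-definability suffices (Proposition~\ref{prop:ppdef-impl}); implementations are only needed to obtain soft $=$ and $\neq$, which Lemma~\ref{lem:neither-const-nor-neg} already provides once $\Gamma$ is Horn and neither constant nor strictly negative.
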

\fi
\ifshort
\begin{theorem}
  Let $\Gamma$ be an equality constraint language that is not constant
  or strictly negative. Then \mincsp{\Gamma} is FPT if every relation in
  $\Gamma$ is either split or $\rel{NEQ}_3$, and W[1]-hard otherwise.
\end{theorem}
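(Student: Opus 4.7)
The strategy splits into the FPT direction and the W[1]-hardness direction. The FPT direction is a direct reduction to \mdt (fixed-parameter tractable by the preceding theorem), while the W[1]-hardness direction requires a structural analysis of equality relations together with a reduction from a suitable W[1]-hard cut problem.

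For the FPT direction, I would formalize the reduction sketched in the discussion preceding the statement. Given an instance $(I,k)$ of $\mincsp{\Gamma}$ with variable set $V$, build a graph $G$ on vertex set $V$, plus one fresh vertex $c_R$ per split constraint $R$ and a fresh isolated vertex per $v_j$-coordinate of a split constraint. For $R(u_1,\ldots,u_p,v_1,\ldots,v_q)$, add edges $c_R u_i$ for $i \in [p]$ and triples $\{c_R, v_j, z_{R,j}\}$ for $j \in [q]$; for each $\rel{NEQ}_3$-constraint on $(u,v,w)$ add the triple $\{u,v,w\}$. A satisfying assignment $\alpha$ that violates $\ell$ constraints yields an \mdt-solution of size at most $\ell$ by deleting the centre $c_R$ of each violated split constraint and the triple of each violated $\rel{NEQ}_3$-constraint; the equivalence classes of $\alpha$ refine the components of $G - Z_V$, so every surviving triple is indeed cut. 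Conversely, an \mdt-solution gives a colouring of $V$ from the components of $G - Z_V$, with variables that happen to lie in $Z_V$ assigned fresh unique values, and a careful charging shows that at most $|Z_V| + |Z_{\mathcal{T}}|$ constraints are violated.

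For the W[1]-hardness direction, I would invoke the Bodirsky--Chen--Pinsker structural description of equality languages to reduce the analysis to a short list of minimal ``obstructions'' to the split-plus-$\rel{NEQ}_3$ class: representative examples are $\rel{NEQ}_r$ for $r \geq 4$, a disjunctive-equality relation of the form $(x_1 = x_2) \lor (x_3 = x_4)$ on disjoint arguments, and ``exactly two of $r$ coordinates equal'' patterns. For each such obstruction I would give a cost-preserving gadget reduction from a problem known to be W[1]-hard parameterized by solution size, such as \spc or a restricted version of \djcut, combining the obstruction with ordinary equality and disequality constraints so as to implement a genuine disjunctive cut request, which is the defining feature of the W[1]-hard cut variants.

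The main obstacle is that pp-definitions---the level at which Bodirsky--Chen--Pinsker classify equality languages---are too coarse to preserve the parameterized complexity of \mincsp: two pp-interdefinable relations can yield problems of radically different FPT status. The W[1]-hardness argument must therefore proceed through \emph{cost-preserving} gadget reductions rather than pp-definitions, and every obstruction shape produced by the classification must be checked individually. Some additional bookkeeping is also needed in the FPT direction to handle \mdt-solutions that delete variable vertices of $V$ rather than auxiliary centres $c_R$ or triples, which is consistent with (and explains the usefulness of) the slack ``$k+1$'' budget in the definition of \mdt.
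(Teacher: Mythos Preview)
Your FPT direction is essentially the paper's approach: reduce to \mdt via centre vertices for split constraints and triples for $\rel{NEQ}_3$. The paper handles the bookkeeping issue you flag differently---it makes all variable vertices undeletable (by $(k{+}1)$-clique blow-up), so an \mdt solution can only delete centre vertices $c_R$ or triples, and the correspondence with violated constraints is one-to-one.

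The W[1]-hardness direction has two genuine gaps. First, your list of obstructions is off the mark. Two of your three examples---$(x_1{=}x_2)\lor(x_3{=}x_4)$ and ``exactly two of $r$ equal''---are not Horn, so $\csp{\Gamma}$ is already NP-hard and $\mincsp{\Gamma}$ is para-NP-hard; these are not the cases that need work. The paper's actual case split is: non-Horn (trivial); Horn but not negative ($\Gamma$ pp-defines $\rel{ODD}_3$ by Bodirsky--Chen--Pinsker, giving \textsc{Hitting Set}-hardness); negative but not conjunctive ($\{R,=,\neq\}$ pp-defines $\rel{NAE}_3$ or $R^{\lor}_{\neq,\neq}$); conjunctive but not split/$\rel{NEQ}_3$ (a direct graph argument---two independent edges with no blue edge between their endpoints---produces an implementation of an $(=,=)$-, $(=,\neq)$- or $(\neq,\neq)$-relation, each handled by a reduction from \spc). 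Your $\rel{NEQ}_r$, $r\geq 4$, covers only the last of these sub-cases.

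Second, and more importantly, your claim that the argument ``must proceed through cost-preserving gadget reductions rather than pp-definitions'' misses the key device: pp-definitions \emph{do} suffice when the defined relation appears only in \emph{crisp} constraints (Proposition~\ref{prop:ppdef-impl}). All the paper's hardness proofs establish W[1]-hardness for $\mincsp{R',=,\neq}$ with crisp $R'$-constraints; since $\Gamma$ implements $=$ and $\neq$ (Lemma~\ref{lem:neither-const-nor-neg}) and pp-defines $R'$, hardness transfers. This lets the Bodirsky--Chen--Pinsker pp-definition results be used essentially off the shelf and avoids the per-obstruction cost-preserving gadgetry you anticipate.
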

\fi

Next, we describe the cases with constant-factor FPT-approximations. 
Again, we introduce a new problem to capture this.
Let $G$ be a graph.
A subset $L \subseteq \binom{V(G)}{2}$ of pairs
is a \emph{request list}, and
a set of vertices $X \subseteq V(G)$ \emph{satisfies} $L$ if
there is a pair $st \in L$ separated by $X$.
For a graph $G$ and a collection of request lists $\cL$,
let $\cost(G, \cL)$ be the minimum size of a set 
$X \subseteq V(G)$ that satisfies all lists in $\cL$.

\pbDefP{\djcut}
{A graph $G$,
  a collection $\cL$ of request lists, each of size at most $d$,
  and an integer $k$.}
{$k$.}
{Is $\cost(G, \cL) \leq k$?}

Note that \textsc{Steiner Multicut} is the special case of
\textsc{Disjunctive Multicut} where each request list is
$L=\binom{T_i}{2}$ for some terminal set $T_i$.
Our main algorithmic contribution is an FPT-approximation for \djcut.

\begin{theorem}[Theorems~\ref{thm:djcut-fpta} and~\ref{thm:steiner-fpa}]
  \label{ithm:fpa-algs}
  Let $d \in \NN$ be a constant. \textsc{Disjunctive Multicut}
  with request lists of length at most $d$ has a constant-factor
  FPT-approximation parameterized by $k$. 
  \textsc{Steiner Multicut} where every terminal set $T_i$
  has $|T_i| \leq d$ has an $O^*(2^{O(k)})$-time 2-approximation.
\end{theorem}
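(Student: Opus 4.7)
The theorem has two parts. For Part 1, I combine branching over request lists with LP-based pruning and reduce to Vertex Multicut. For Part 2, I reduce Steiner Multicut with bounded-size terminal sets to Vertex Multicut in a more refined way and invoke the $O^*(2^{O(k)})$-time 2-approximation of Lokshtanov et al.~\cite{lokshtanov2021fpt} as a black box.

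For \djcut, view each list $L \in \cL$ as a disjunction over at most $d$ cut requests; an optimum solution of size $k$ must separate at least one pair of each list. Naively guessing the separated pair per list yields $d^{|\cL|}$ Vertex Multicut subinstances, which is not FPT. To prune, the plan is to solve an LP relaxation with variables $x_v \in [0,1]$ for each vertex and, for each list, a convex constraint expressing that the total fractional separation across the $\leq d$ pairs is at least one. Using half-integrality and an LP-rounding argument, I would identify a set of $O(k)$ ``active'' lists on which to branch; the remaining lists should be automatically satisfied by any LP-rounded candidate (since their fractional value is already safely above threshold). This leaves $d^{O(k)}$ branches, each reducing to a Vertex Multicut instance on which I invoke the Lokshtanov et al. algorithm, giving a constant-factor FPT-approximation overall.

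For Steiner Multicut with $|T_i| \leq d$, each terminal set corresponds to a request list of size $\leq \binom{d}{2}$, so the problem is a special case of \djcut. To upgrade the approximation factor to exactly $2$ (rather than an unspecified constant), the plan is to enumerate, for each $T_i$, the partition of $T_i$ induced by $G - X^*$ (at most $B_d$ options per set, where $B_d$ is the $d$-th Bell number) and translate each guess into a Vertex Multicut instance whose cut requests are all inter-block pairs of the guessed partition. After edge-to-vertex subdivision to handle edge cuts, I apply the Lokshtanov et al. 2-approximation directly to each Multicut instance. A ``bounded active sets'' argument analogous to Part 1 limits the total enumeration to $2^{O(k)}$ guesses, preserving the 2-approximation factor and the $O^*(2^{O(k)})$ running time.

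The main technical obstacle in both parts is establishing the structural lemma bounding the number of ``active'' lists or terminal sets by $O(k)$. I expect this to follow from a half-integrality argument combined with iterative compression in the spirit of the classical Multicut FPT algorithms: LP values concentrate on a bounded number of vertices per unit of cost, so only $O(k)$ lists can have their fractional separation straddling the rounding threshold. Getting the precise constants right, and making the rounding compatible with the disjunctive list constraints (which are not directly covered by the standard Garg--Vazirani--Yannakakis multicut rounding), will be the most delicate step.
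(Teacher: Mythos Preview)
Your plan has a genuine gap at exactly the point you flag as ``the most delicate step'': the structural lemma bounding the number of active lists (or terminal sets) by $O(k)$. You suggest this will follow from half-integrality plus iterative compression, but neither ingredient delivers what you need. The natural LP for \textsc{Vertex Multicut} is \emph{not} half-integral (half-integrality holds for \textsc{Multiway Cut}, not \textsc{Multicut}), and adding disjunctive constraints of the form ``at least one of $d$ pairs is separated'' only makes the polytope worse, not better. More fundamentally, there is no reason the number of lists whose fractional separation is near the rounding threshold should be $O(k)$: an optimum of cost $k$ can simultaneously serve arbitrarily many request lists, all of them tight. Without this bound, your $d^{O(k)}$ branching never gets off the ground, and the same failure propagates to your Part~2 argument, where enumerating partitions over the terminal sets costs $B_d^{|\cT|}$ rather than $2^{O(k)}$.

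The paper takes a completely different route for Part~1: rather than isolate a few active lists, it simplifies \emph{all} lists at once via an iterative procedure. Each round uses iterative compression and random shadow covering (in the sense of Marx--Razgon and Chitnis et al.) to replace one non-singleton request $st$ in every list by at most two singleton requests drawn from small separator sets $R_s$, $R_t$; this decreases a potential $\mu(\cL)$ and after $O(d)$ rounds the instance becomes bounded \textsc{Hitting Set}. For Part~2 the paper does not enumerate partitions at all: it plugs directly into the Lokshtanov et al.\ framework, reducing to \textsc{Strict Steiner Multicut} (a single vertex $x$ already satisfies every set), and solves that exactly in $O^*(p^k)$ time by a closest-separator branching that generalises the \textsc{Digraph Pair Cut} algorithm. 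Your reduction-to-\textsc{Multicut} idea is natural but, as written, cannot achieve the claimed running time or factor without the missing structural bound.
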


This precisely describes the FPT-approximable cases of \mincsp{\Gamma}:
For every equality constraint language $\Gamma$ such that \csp{\Gamma} is in P,
either \mincsp{\Gamma} reduces to \textsc{Disjunctive Multicut} in an
immediate way (up to a constant-factor approximation loss), 
implying a constant-factor FPT-approximation,
or there is a cost-preserving reduction from \textsc{Hitting Set}
to \mincsp{\Gamma}.
We refer to the latter as \mincsp{\Gamma} being \textsc{Hitting Set}-hard.

\iflong
\begin{theorem}[from Theorem~\ref{thm:fpta-class}]
  Let $\Gamma$ be an equality constraint language such that \csp{\Gamma} is in P.
  Then either \mincsp{\Gamma} reduces to \djcut
  and has a constant-factor FPT-approximation,
  or \mincsp{\Gamma} is \textsc{Hitting Set}-hard.
\end{theorem}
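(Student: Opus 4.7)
My plan is to invoke the Bodirsky--Chen--Pinsker algebraic classification of equality languages, specialised to those $\Gamma$ for which $\csp{\Gamma}$ is in P. Combined with Bodirsky--Kara, such $\Gamma$ is preserved either by a constant operation or by an injective binary operation, yielding a finite catalogue of canonical cases (up to pp-definitions). Within this catalogue, the plan is to identify the dichotomy boundary: on one side, relations whose forbidden tuples correspond to enforcing a single conjunction of at most $d$ inequalities, so that their violations can be encoded by request lists of size $d$; on the other side, relations which already in isolation can serve as a ``hit one element of a set'' gadget encoding \textsc{Hitting Set}. Here $d$ is the maximum arity occurring in $\Gamma$, which is a constant since $\Gamma$ is finite.

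For the algorithmic side, I would build an explicit cost-preserving gadget reduction from \mincsp{\Gamma} to \djcut. Each variable becomes a vertex; each soft equality constraint is represented by a small subdivision gadget whose deletion at unit cost corresponds to violating that constraint; and each more complex constraint $R(x_1,\ldots,x_r)$ is replaced by an auxiliary vertex $v_R$ together with request lists (of size at most a constant depending only on $\Gamma$) encoding the semantics of $R$, so that violating $R$ corresponds to deleting $v_R$, while if $v_R$ is kept the request lists faithfully enforce $R$. Feeding the resulting instance to the algorithm of Theorem~\ref{ithm:fpa-algs} then yields a constant-factor FPT-approximation. For the hardness side, I would fix a canonical ``bad'' relation $R^{\ast}$ which must be expressible in every non-DM-reducible $\Gamma$, and use it to encode a \textsc{Hitting Set} instance $(U,\{S_i\},k)$ by introducing a variable $x_u$ per universe element together with a soft ``default value'' constraint, and one $R^{\ast}$-constraint per set $S_i$ that is satisfiable only by violating one of the defaults assigned to elements of $S_i$. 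The translation preserves cost up to a constant factor, transferring the \textsc{Hitting Set} FPT-inapproximability lower bound of Lin et al.\ to \mincsp{\Gamma}.

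The main obstacle is that the Bodirsky--Chen--Pinsker classification is only up to pp-definability, which is too coarse for our purposes---pp-definitions may duplicate variables and blow up arities arbitrarily, so the reduction would not preserve the number of violated constraints. I would handle this by restricting attention to cost-preserving gadget implementations with only constant blowup, and by a case analysis along the orbit structure of the polymorphism clones of the tractable languages. The trickiest cases will be ``mixed'' languages combining constant-preserving and negative-preserving relations, where I would need to carefully check which relations admit constant-blowup \djcut gadgets and which ones still witness \textsc{Hitting Set}-hardness. Modulo this careful boundary work, the result reduces to applying Theorem~\ref{ithm:fpa-algs} on the algorithmic side and invoking the \textsc{Hitting Set} FPT-inapproximability result on the hardness side.
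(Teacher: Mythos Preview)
Your high-level plan---use the algebraic classification to split into an algorithmic and a hardness side, then feed one side to \djcut and the other to a \textsc{Hitting Set} reduction---matches the paper's architecture. But the proposal leaves the two crucial concrete steps unspecified, and the ``main obstacle'' you flag is not an obstacle at all.

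First, the dividing line is much sharper than you indicate. By Bodirsky--K\'ara, $\csp{\Gamma}$ in P means $\Gamma$ is constant or Horn; constant and strictly negative languages give a trivial MinCSP, so the interesting case is Horn with $=,\neq$ implemented. Within Horn the boundary is exactly \emph{negative} (every clause with a positive literal is a singleton) versus \emph{not negative}. You never name this class, and your informal description ``forbidden tuples correspond to enforcing a single conjunction of at most $d$ inequalities'' does not match it (negative relations may also contain positive singleton clauses $x_i=x_j$). Without this characterisation the case analysis you sketch has no anchor.

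Second, on the hardness side the canonical bad relation is $\rel{ODD}_3=\{(a,a,a),(a,b,c):|\{a,b,c\}|=3\}$, and Theorem~67 of Bodirsky--Chen--Pinsker says precisely that every Horn, non-negative language (together with $=,\neq$) pp-defines $\rel{ODD}_3$. The \textsc{Hitting Set} reduction then chains $\rel{ODD}_3$ constraints to express ``not all of $x_{a_1},\ldots,x_{a_\ell}$ are equal'' for each set, with a single soft constraint $x_i=z$ per universe element. Your proposal leaves $R^\ast$ unidentified and suggests an ad hoc gadget per set, which is not sufficient for a proof.

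Third---and this is the important correction---your ``main obstacle'' dissolves once you observe two facts the paper uses. For hardness, the $\rel{ODD}_3$ constraints in the reduction are \emph{crisp}, so a pp-definition (not an implementation) suffices: crisp constraints can be replaced by their pp-definitions at no cost to the parameter. For the algorithmic side, pp-definitions \emph{do} preserve constant-factor FPT-approximability (Lemma~10 of Bonnet et al.): the approximation factor simply absorbs the number of constraints in the definition. Hence there is no need for ``cost-preserving gadget implementations with only constant blowup'' or a case analysis along orbit structure. Every negative relation is (quantifier-free) pp-definable over $\{=, R^{\neq}_d\}$ where $R^{\neq}_d \equiv \bigvee_{i=1}^d x_i\neq y_i$, and $\mincsp{=,R^{\neq}_d}$ reduces cost-preservingly to \djcut with lists of size $d+1$. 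That is the whole algorithmic reduction; your auxiliary-vertex-per-constraint gadget is in the right spirit but again not pinned down.

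In short: the skeleton is right, but the proof lives in naming the class \emph{negative}, naming the relation $\rel{ODD}_3$, and recognising that pp-definitions already suffice on both sides.
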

\fi
\ifshort
\begin{theorem}
  Let $\Gamma$ be an equality constraint language such that \csp{\Gamma} is in P.
  Then either \mincsp{\Gamma} reduces to \djcut
  and has a constant-factor FPT-approximation,
  or \mincsp{\Gamma} is \textsc{Hitting Set}-hard.
\end{theorem}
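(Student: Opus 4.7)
The plan is to combine the Bodirsky--Kara tractability classification with the clone-level description of Bodirsky--Chen--Pinsker~\cite{BodirskyCP10equality} to pinpoint the dividing line within the polynomial class. Since \csp{\Gamma} is in P, the language $\Gamma$ is preserved either by a constant operation or by a binary injection. The constant case is trivial: every relation of $\Gamma$ contains an all-equal tuple, hence \mincsp{\Gamma} is solvable in polynomial time by assigning every variable a single value, and thus vacuously reduces to \djcut. In the binary-injection case I would use the fact that every $R \in \Gamma$ is, up to logical equivalence, a bounded conjunction of equality-Horn clauses of the form $\bigvee_i (x_i \neq y_i) \lor (u = v)$, with at most one positive equality per clause; this provides the structural backbone for the dichotomy.

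On the algorithmic side, I would call $\Gamma$ \emph{friendly} if every relation of $\Gamma$ decomposes into clauses that are either purely negative, of the form $\bigvee_i (x_i \neq y_i)$, or correspond to split-like structures generalising \rel{NEQ}$_3$ and the split relations from the FPT theorem above. For friendly $\Gamma$, each soft constraint yields $O(1)$ pieces (bounded by the arity of $R$); each piece becomes either a single bounded-length request list in \djcut or a split-style gadget built from auxiliary vertices and forced edges in the graph $G$, as in the \mdt reduction described earlier. The resulting reduction loses only a constant factor in cost, so Theorem~\ref{thm:djcut-fpta} delivers the claimed constant-factor FPT-approximation for \mincsp{\Gamma}.

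For the hardness side, any $R \in \Gamma$ that fails the friendly condition must assert a non-trivial disjunction of equalities among its arguments, i.e., a forced choice among several merging patterns. I would exploit such an $R$ to build a cost-preserving reduction from \textsc{Hitting Set}: elements become variables, and each set $S_i$ is realised by copies of $R$ whose satisfaction requires identifying at least one element-variable with a distinguished cluster, itself synthesised by repeated variable identifications across gadget copies (since pure equality languages admit no singleton constants). The inapproximability of \textsc{Hitting Set} under FPT$=$W[2]~\cite{LinRSW23soda} then closes the dichotomy.

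The main technical obstacle is the hardness reduction. Without singleton constants, the distinguished cluster has to emerge solely from equality identifications, and the soundness analysis must rule out unintended cheap separations that would bypass the hitting structure. Verifying cost preservation to within a constant factor on both sides hinges on careful tracking of the partition structure induced by assignments to \mincsp{\Gamma}-variables; this bookkeeping, together with identifying the precise structural boundary between friendly and unfriendly equality-Horn languages, is the most delicate component of the proof.
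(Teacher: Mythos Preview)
Your proposal has the right overall architecture but misidentifies the dividing line and leaves the hardness direction genuinely incomplete.

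\textbf{The boundary.} Your notion of ``friendly'' conflates two different classifications. The FPT/W[1]-hardness boundary is indeed at split $\cup\,\{\rel{NEQ}_3\}$, but the FPT-approximation boundary is strictly coarser: it is the class of \emph{negative} Horn languages, i.e., those where every clause is either a single positive literal $(x_i=x_j)$ or purely negative $\bigvee_i(x_i\neq y_i)$. Relations like $R^\lor_{\neq,\neq}$ or $\rel{NAE}_3$ are negative (hence FPT-approximable) yet neither split nor $\rel{NEQ}_3$ (hence W[1]-hard exactly). Your ``split-like structures generalising $\rel{NEQ}_3$'' is not a clause-level notion and does not carve out the correct class.

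\textbf{The algorithmic side.} Once the boundary is set at ``negative'', the reduction to \djcut is cleaner than you suggest: every negative relation is pp-definable over $\{=,R^{\neq}_d\}$ where $R^{\neq}_d(x_1,y_1,\dots,x_d,y_d)\equiv\bigvee_i x_i\neq y_i$, and pp-definitions preserve constant-factor FPT-approximability (Lemma~\ref{lem:pp-def-preserves-apx}). Then $\mincsp{=,R^{\neq}_d}$ maps directly to \djcut with lists of length $d+1$. No split-style gadgets are needed here; those belong to the exact-FPT reduction to \mdt, which is a separate result.

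\textbf{The hardness side.} This is where your proposal has a real gap. You correctly sense that a non-negative Horn relation forces some ``merging choice'', but you have not identified a concrete gadget, and you flag this yourself. The paper closes this gap in two steps. First, it invokes Theorem~67 of Bodirsky--Chen--Pinsker: any Horn equality language that is not negative pp-defines $\rel{ODD}_3$, the ternary relation accepting tuples with either one or three distinct values. Second, it gives an explicit cost-preserving reduction from \textsc{Hitting Set} to $\mincsp{\rel{ODD}_3,=,\neq}$ using only \emph{crisp} $\rel{ODD}_3$- and $\neq$-constraints: a chain $\rel{ODD}_3(x_{a_1},x_{a_2},y_2),\,\rel{ODD}_3(y_2,x_{a_3},y_3),\,\dots$ terminated by $x_{a_1}\neq y_\ell$ is satisfiable iff not all $x_{a_i}$ are equal, which encodes ``set $e$ is hit'' against a hub variable $z$ with soft constraints $x_i=z$. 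Since the $\rel{ODD}_3$-constraints are crisp, pp-definability (not implementation) suffices, and the reduction is exactly cost-preserving. Your sketch gestures at a hub cluster built from identifications, which is in the right spirit, but without isolating $\rel{ODD}_3$ as the intermediate target you would have to re-derive a set-specific gadget for every non-negative $R$, and there is no indication of how soundness would be argued.
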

\fi

\paragraph{Singleton expansion.}

In addition to the above (main) results, we also investigate the effect of adding
constants to an equality language motivated by the problem \textsc{Multiway Cut}.
More precisely, for an equality language $\Gamma$, 
we investigate the effect of adding some
number of unary singleton relations $\{(i)\}$ to $\Gamma$.
This is equivalent to allowing ``assignment constraints'' $(x=i)$ in
$\mincsp{\Gamma}$. We consider adding either a finite number of
singletons, or every singleton relation.
For an equality language $\Gamma$ and an integer $c \in \NN$,
$c \geq 1$, we define $\Gamma_c^+=\Gamma \cup \{\{(i) \mid i \in [c]\}$
as the language $\Gamma$ with $c$ different singletons added,
and let $\Gamma^+$ denote $\Gamma$ with every singleton $\{(i)\}$, $i \in \NN$
added. 
\textsc{Edge Multiway Cut} corresponds to $\mincsp{\Gamma^+}$
over the language $\Gamma=\{=\}$,
and \textsc{$s$-Edge Multiway Cut}, the special case with $s$ terminals,
corresponds to \mincsp{\Gamma_s^+}.
By a \emph{singleton expansion of $\Gamma$} we refer to either
the language $\Gamma'=\Gamma^+$ or $\Gamma'=\Gamma_c^+$ for some $c \in \NN$.

As the first step of the characterization, 
we observe that if $\Gamma$
can express $=$ and $\neq$, then the singleton expansion adds no
power, i.e., $\mincsp{\Gamma^+}$ reduces back to $\mincsp{\Gamma}$
by introducing variables $c_1, \dots, c_m$ for arbitrarily many constants,
adding constraints $c_i \neq c_j$ whenever $i \neq j$,
and using constraints $x = c_i$ in place of assignments $x = i$.
For the rest of the characterization, we thus study the cases that
either cannot express equality, or cannot express disequality. 
We defer the explicit characterization of the cases to the main text,
but in summary we get the following result.
We say that a language is \emph{positive conjunctive} if every
relation $R \in \Gamma$ can be defined as a conjunction of clauses
$(x_i=x_j)$. In the below, %
\emph{is equivalent to} refers to the there being 
cost-preserving reductions in both directions (see Section~\ref{sec:prelim}).

\begin{theorem} \label{ithm:singleton-expansion-cases}
  Let $\Gamma$ be an equality constraint language and let $\Gamma'$
  be a singleton expansion of $\Gamma$. Then one of the following cases
  applies.
  \begin{itemize}
  \itemsep0em
  \item \mincsp{\Gamma'} is equivalent to \mincsp{\Gamma}
  \item \mincsp{\Gamma'} is trivial, i.e., always satisfiable
  \item \mincsp{\Gamma'} is equivalent to the MinCSP over a singleton
    expansion of the empty language $\Delta=\emptyset$, in which case
    \mincsp{\Gamma'} is in P
  \item \mincsp{\Gamma'} is equivalent to \mincsp{\Delta} for a Boolean
    language $\Delta$
  \item $\Gamma$ is strictly negative, \mincsp{\Gamma'} is NP-hard but
    FPT and has a constant-factor approximation
  \item \mincsp{\Gamma'} is equivalent to \mincsp{\Delta'}
    where $\Delta'$ is a singleton expansion
    of a positive conjunctive language $\Delta$
  \item \mincsp{\Gamma'} is \textsc{Hitting Set}-hard,
    $\Gamma$ is Horn and \csp{\Gamma'} is in P
  \item \csp{\Gamma'} is NP-hard
  \end{itemize}
\end{theorem}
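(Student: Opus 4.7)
The plan is to carry out a case analysis on the algebraic type of the underlying equality language $\Gamma$, using the Bodirsky--Chen--Pinsker classification of equality languages up to pp-definability together with the constant/strictly negative/Horn distinctions already used for Theorem~\ref{ithm:mincsp-p}. The easiest branch is the pre-theorem observation: whenever $\Gamma$ pp-defines both $=$ and $\neq$, then $\mincsp{\Gamma'}$ reduces back to $\mincsp{\Gamma}$ by introducing fresh variables $c_1,\dots,c_m$ playing the role of constants, enforced pairwise distinct by crisp $\neq$-constraints, and translating each singleton constraint $x=i$ into $x=c_i$. Cost preservation is immediate because every introduced constraint is crisp, and this yields Case~1.

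For the remaining cases I would split on which side of the equality/disequality divide $\Gamma$ lies. If $\Gamma$ is positive conjunctive then by definition $\mincsp{\Gamma'}$ already is the MinCSP of a singleton expansion of a positive conjunctive language, giving Case~6. If $\Gamma$ has essentially no expressive content then, depending on whether the singleton set is infinite, finite, or absorbed by the trivial constraints, we land in Case~2 (trivially satisfiable), Case~3 (a singleton expansion of the empty language, solvable by inspection as a small matching/hitting problem), or Case~4 (only finitely many singletons matter, so the relevant domain collapses and we obtain a Boolean MinCSP). If $\Gamma$ is strictly negative but not constant, then Theorem~\ref{ithm:mincsp-p} places $\mincsp{\Gamma}$ in $\mathrm{P}$, but adding singletons creates genuine separation obligations between variables forced to distinct constants; I would exhibit an FPT algorithm and constant-factor approximation by reducing these obligations to \mdt and appealing to Theorem~\ref{thm:triple-multicut-fpt}, yielding Case~5. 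If $\Gamma$ is Horn but falls into none of the previous classes, then $\csp{\Gamma'}$ remains polynomial by Bodirsky--Kara, yet I would build a cost-preserving reduction from \textsc{Hitting Set} by encoding each set element as a singleton and each set as a Horn clause composed of the available equality-Horn relations in $\Gamma$, giving Case~7. In the remaining sub-case $\csp{\Gamma}$ is already NP-hard by Bodirsky--Kara, so $\csp{\Gamma'}$ is NP-hard and Case~8 applies.

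The main obstacle will be proving exhaustiveness, i.e., verifying that the algebraic branches (pp-defining both $=$ and $\neq$, positive conjunctive, strictly negative, Horn, NP-hard) partition equality languages cleanly into the eight listed outcomes once singleton expansions are admitted. Within this, the most delicate step is the Case~7 \textsc{Hitting Set}-hardness reduction, which must use only Horn clauses available in $\Gamma$ combined with singletons while remaining cost-preserving; a close second is the Case~5 argument, whose FPT and approximation guarantees hinge on funnelling strictly negative constraints with singletons correctly into the \mdt framework. Threading the distinction between $\Gamma^+$ and $\Gamma_c^+$ through Cases~3 and~4 is an additional source of sub-case bookkeeping that must be handled carefully.
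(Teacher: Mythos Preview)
Your high-level decomposition misses the branch that carries nearly all the weight in the paper: the case where $\Gamma$ is \emph{constant} but not strictly negative. You treat ``positive conjunctive'' as a standalone case (Case~6), but the theorem's Case~6 says $\mincsp{\Gamma'}$ is equivalent to $\mincsp{\Delta'}$ for a positive conjunctive $\Delta$ --- crucially, $\Delta$ need not be $\Gamma$ itself. The paper obtains $\Delta$ as the \emph{$c$-slice} $\{R\cap[c]^{r(R)}:R\in\Gamma\}$, and the heart of the argument is showing that either $\Gamma_c^+$ has a retraction onto domain $[c]$ (so $\mincsp{\Gamma_c^+}$ is equivalent to the slice problem, and the slice is forced to be positive conjunctive when $c\geq 3$ by the Bodirsky--Chen--Pinsker polymorphism description), or else $\Gamma_c^+$ implements $\rel{ODD}_3$ and is \textsc{Hitting Set}-hard, or $\csp{\Gamma_c^+}$ is NP-hard. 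Your Case~7 reduction sketch (``encoding each set as a Horn clause composed of the available equality-Horn relations'') is too vague to work; the actual mechanism is that the absence of a retraction to $[c]$ yields an implementation of $\rel{ODD}_3$, and the \textsc{Hitting Set} reduction proceeds through $\rel{ODD}_3$ with two constants, not through generic Horn clauses. The $c=2$ sub-case (yielding Case~4, the Boolean language) likewise arises from a retraction to $\{1,2\}$, not from $\Gamma$ having ``essentially no expressive content''.

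Separately, your plan for Case~5 is off: strictly negative languages have no equality constraints, so there is no graph on which \mdt could act. The paper handles strictly negative $\Gamma$ with singletons by a direct bounded-search-tree argument --- branch on contradictory assignment constraints $(v=i),(v=j)$, then form the tentative assignment satisfying all remaining assignments and giving fresh values elsewhere; any violated negative clause yields a bounded obstruction to branch on. The constant-factor approximation is the analogous greedy deletion. Neither step goes through \mdt.
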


Note the distinction between \emph{$\Gamma$ is positive conjunctive}
and \emph{\mincsp{\Gamma'} is equivalent to \mincsp{\Delta'} 
where $\Delta$ is positive conjunctive}. 
This distinction is the main subtlety of the result.
Consider a relation $R(x_1,\ldots,x_r) \equiv (|\{x_1,\ldots,x_r\}| \neq r)$.
For $\{R\}_c^+$ with $c<r$ there is never a need to use more than $c$ distinct values
in an assignment, hence $R$ becomes ``effectively trivial''. 
But the language $\{R\}_c^+$ for $c \geq r$ is intractable.
Finally, we note the cases of singleton expansions of a positive conjunctive language.
In particular, every such case reduces to
\textsc{Multiway Cut} up to a constant-factor loss.

\begin{theorem} \label{ithm:pos-conj}
  Let $\Gamma$ be a positive conjunctive language and $\Gamma'$ a
  singleton expansion of $\Gamma$ with at least three added singleton
  relations. Then $\mincsp{\Gamma'}$ is NP-hard but has a
  constant-factor approximation. Furthermore, $\mincsp{\Gamma'}$ is
  FPT if $\Gamma$ is split, otherwise W[1]-hard. 
\end{theorem}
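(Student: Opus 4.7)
The plan is to tackle the three pieces of the statement --- NP-hardness, constant-factor FPT-approximation, and the FPT versus W[1]-hardness dichotomy --- in sequence, each via a reduction connecting $\mincsp{\Gamma'}$ to a variant of \textsc{Multiway Cut}.

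For NP-hardness I would observe that any non-trivial positive conjunctive $\Gamma$ has some relation $R$ with a block of size at least two; identifying all other arguments and projecting pp-defines binary equality. Combined with the three singleton relations this captures \textsc{Edge Multiway Cut} with three terminals as a special case, which is NP-hard. For the constant-factor approximation I would \emph{block-decompose}: each relation $R \in \Gamma$ is determined by a partition $\pi_R$ of its arguments, so I would replace each constraint $R(x_1, \ldots, x_r)$ by $|\pi_R|$ ``block constraints'' that force the variables indexed by each $B \in \pi_R$ to be equal. A solution of the decomposed instance lifts back to the original at the same cost, and conversely an original solution gives a decomposed solution of cost at most $\max_{R \in \Gamma} |\pi_R|$ times larger --- a constant since $\Gamma$ is finite --- reducing the general case to the split sub-case up to a constant factor.

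For the FPT algorithm when $\Gamma$ is split, I would reduce to \textsc{Vertex Multiway Cut}, which is FPT. Build a graph with an undeletable terminal $t_i$ for every singleton constant, an undeletable vertex for every variable, a deletable hub $v_R$ joined by undeletable edges to $x_1, \ldots, x_r$ for each split constraint $R(x_1, \ldots, x_r)$, and a deletable hub $v_{x, i}$ adjacent to $x$ and $t_i$ for each assignment constraint $x = i$. Undeletability is enforced by the standard trick of replicating such vertices $k+1$ times. Deleting $v_R$ or $v_{x, i}$ corresponds precisely to paying one unit to violate the associated constraint, so the two optima agree, and the FPT algorithm for \textsc{Vertex Multiway Cut} transfers.

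The last and most delicate piece is W[1]-hardness when $\Gamma$ is not split. Then some $R^* \in \Gamma$ has at least two blocks, and by identifying arguments within each block one pp-defines the paired equality $R_2(x, y, z, w) \equiv (x = y) \land (z = w)$. The plan is to reduce from a W[1]-hard source such as \textsc{Multicoloured Clique}, using the defining feature of $R_2$ that paying one unit unlocks both sub-equalities simultaneously; this ``two-for-one'' behaviour is precisely what distinguishes paired relations from plain equality and what drives the hardness. The main obstacle is to design the choice gadget so that every sufficiently cheap solution corresponds to a valid structure in the source instance --- in particular preventing cheap ``mixed'' solutions where unrelated $R_2$-constraints are violated --- and verifying cost-preservation so that the reduction survives the FPT setting.
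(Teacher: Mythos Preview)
Your proposal is correct and closely parallels the paper's proof (Lemma~\ref{lem:c3-posconj}), with two differences worth noting.

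For the FPT case, you reduce directly to \textsc{Vertex Multiway Cut}; the paper instead observes that once $\Gamma$ implements $=$, Lemma~\ref{lem:eq-neq-gives-constants} absorbs all singleton constraints back into $\mincsp{\Gamma,=,\neq}$, after which the main classification Theorem~\ref{thm:fpt-class} applies (positive conjunctive connected relations are split). Your route is more elementary and self-contained for this specific case; the paper's route simply reuses existing machinery. Both are fine.

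For W[1]-hardness, the paper does not go back to \textsc{Multicoloured Clique}. Having implemented $R^\land_{=,=}$, it reduces from \textsc{Split Paired Cut} (already W[1]-hard, see~\cite{dabrowski2023almost}), and the reduction is then almost mechanical as in Lemma~\ref{lem:eq-eq-hard}: edges become crisp equalities, each pair $\{e_1,e_2\}\in\cP$ becomes one soft $R^\land_{=,=}$-constraint, and the requirements $s_i\neq t_i$ are enforced via crisp singleton constraints (e.g.\ $s_i=1$, $t_i=2$; since $G_1,G_2$ are vertex-disjoint and $R^\land_{=,=}$ introduces no cross-equalities, two constants suffice). This completely sidesteps the choice-gadget obstacle you anticipate. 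Your direct clique reduction is feasible but unnecessary given that \textsc{Split Paired Cut} is available off the shelf.

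One minor point: you describe the approximation goal as ``constant-factor FPT-approximation'', but the theorem (and the paper) claim a \emph{polynomial-time} constant-factor approximation. Your block-decomposition followed by the reduction to \textsc{(Vertex or Edge) Multiway Cut} does deliver this, since Multiway Cut has a polynomial-time constant-factor approximation; the paper splits all the way down to individual equality literals and then cites the Edge Multiway Cut approximation of~\cite{CalinescuKR00mwc}.
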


\paragraph{Roadmap.}
Section~\ref{sec:prelim} contains technical preliminaries.
\ifshort
  Section~\ref{sec:short} contains an overview of the classification proof.
\else
  Section~\ref{sec:equalangs} properly defines equality constraint
  languages and gives the classification with proofs deferred to later sections.  
  Section~\ref{sec:reductions} contains hardness reductions 
  and reductions from \mincsp{\Gamma} to various cut problems.
\fi
Section~\ref{sec:triple-multicut} gives the FPT algorithm for \mdt.
Section~\ref{sec:djcut-and-steiner} gives the FPT approximation algorithms.
\iflong
  Section~\ref{sec:singleton-expansion} contains the classification for
  singleton expansions. 
  Section~\ref{sec:discuss} concludes the paper. 
\fi

\section{Preliminaries}
\label{sec:prelim}

\paragraph{Graph Separation.}

Let $G$ be an undirected graph.
Denote the vertex set of $G$ by $V(G)$ and the edge set by $E(G)$.
For a subset of edges/vertices $X$ in $G$, let $G - X$ denote 
the graph obtained by removing the elements of $X$ from $G$,
i.e. $G - X = (V(G), E(G) \setminus X)$ if $X \subseteq E(G)$ and
$G - X = G[V(G) \setminus X]$ if $X \subseteq V(G)$.
A \emph{cut request} is a pair of vertices $st \in \binom{V(G)}{2}$,
and an $st$-cut/$st$-separator is a subset of edges/vertices $X$
such that $G - X$ contains no path connecting $s$ and $t$.
We write that $X$ \emph{fulfills} $st$ if $X$ is an $st$-cut/$st$-separator.
We implicitly allows the inputs to cut problems such as
\textsc{Multiway Cut} and \textsc{Multicut} 
to contain undeletable edges/vertices:
for edges, with a parameter of $k$, we can include $k + 1$ parallel copies;
for vertices, we can replace a vertex $v$ with a clique
of size $k+1$, where every member of the clique
has the same neighbourhood as $v$.

\paragraph{Parameterized Deletion.}

A {\em parameterized} problem is a subset of $\Sigma^* \times \NN$,
where $\Sigma$ is the input alphabet.
The parameterized complexity class FPT contains problems decidable in 
$f(k) \cdot n^{O(1)}$ time, where $f$ is a computable function and 
$n$ is the bit-size of the instance.
Let $L_1, L_2 \subseteq \Sigma^* \times \NN$ be two parameterized problems.
A mapping $F: \Sigma^* \times \NN \rightarrow \Sigma^* \times \NN$
is an \emph{FPT-reduction} from $L_1$ to $L_2$ if
\begin{itemize}
  \itemsep0em
  \item $(x, k) \in  L_1$ if and only if $F((x, k)) \in L_2$, 
  \item the mapping can be computed in $f(k) \cdot n^{O(1)}$ time for some computable function $f$, and 
  \item there is a computable function $g : \NN \rightarrow \NN$  
  such that for all $(x,k) \in \Sigma^* \times \NN$, if $(x', k') = F((x, k))$, then $k' \leq g(k)$.
\end{itemize}
The classes W[1] and W[2] contains all problems that are 
FPT-reducible to \textsc{Clique} and \textsc{Hitting Set}, respectively,
parameterized by the solution size.
These problems are not in FPT under the standard assumptions
FPT$\neq$W[1] and FPT$\neq$W[2].
For a thorough treatment of parameterized complexity
we refer to~\cite{cygan2015parameterized}.

\paragraph{Constraint Satisfaction.}

Fix a \emph{domain} $D$.
A relation $R$ of \emph{arity} $r$ is a subset of tuples in $D^r$, i.e. $R \subseteq D^r$.
We write $=$ and $\neq$ to denote the binary equality and disequality relations over $D$,
i.e. $\{(a,b) \in D^2 : a = b\}$ and $\{(a,b) \in D^2 : a \neq b\}$, respectively.
A \emph{constraint language} $\Gamma$ is a set of relations over a domain $D$.
A \emph{constraint} is defined by a relation $R$
and a tuple of variables $\bx = (x_1, \dots, x_r)$, where $r$ is the arity of $R$.
It is often written as $R(\bx)$ or $R(x_1,\dots,x_r)$.
An assignment $\alpha : \{x_1, \dots, x_r\} \to D$ \emph{satisfies} 
the constraint if $\alpha(\bx) = (\alpha(x_1), \dots, \alpha(x_r)) \in R$,
and \emph{violates} the constraint if $\alpha(\bx) \notin R$.

\pbDef{Constraint Satisfaction Problem for $\Gamma$ (\csp{\Gamma})}
{An instance $I$, where $V(I)$ is a set of variables
  and $C(I)$ is a multiset of constraints using relations from $\Gamma$.}
{Is there an assignment $\alpha : V(I) \to D$ that satisfies all constraints in $C(I)$?}

\textsc{MinCSP} is an optimization version of the problem seeking
an assignment that minimizes the number of violated constraints.
In this constraints are allowed to be \emph{crisp} and \emph{soft}.
The \emph{cost of assignment $\alpha$} in an instance $I$ of \textsc{CSP}
is infinite if it violates a crisp constraint, and
equals the number of violated soft constraints otherwise.
The \emph{cost of an instance $I$} denoted by $\cost(I)$
is the minimum cost of any assignment to $I$.

\pbDef{\mincsp{\Gamma}}
{An instance $I$ of \csp{\Gamma} and an integer $k$.}
{Is $\cost(I) \leq k$?}

Next, we recall a useful notion that captures local reductions between CSPs.

\begin{definition} \label{def:pp-definition}
  Let $\Gamma$ be a constraint language over $D$
  and $R \subseteq D^r$ be a relation.
  A \emph{primitive positive definition (pp-definition)} 
  of $R$ in $\Gamma$ is an instance
  $C_R$ of \csp{\Gamma, =} with 
  primary variables $\bx$, auxiliary variables $\by$ and
  the following properties:
  \begin{enumerate}[(1)]
  \itemsep0em
  \item if $\alpha$ satisfies $C_R$, then it satisfies $R(\bx)$,
  \item if $\alpha$ satisfies $R(\bx)$, then
    there exists an extension of $\alpha$ to $\by$ that satisfies $C_R$.
  \end{enumerate}
\end{definition}

Informally, pp-definitions can be used to simulate $R$
using the relations available in $\Gamma$ and equality:
every constraint using $R$ can be replaced
by a gadget based on the pp-definition,
resulting in an equivalent instance.
The type of reductions captured by pp-definitions is however
incompatible with \textsc{MinCSP}
because the reductions do not preserve assignment costs.
\iflong
For example, consider the double-equality relation
$R = \{ (a,a,b,b) : a, b \in D \}$ and its pp-definition 
$C_R = \{x_1 = x_2, x_3 = x_4\}$.
The cost of assignment 
$(1,2,1,2)$ is one in $R(x_1,x_2,x_3,x_4)$ but two in $C_R$.
\fi
This motivates the following definition.

\begin{definition} \label{def:implementation}
  Let $\Gamma$ be a constraint language over $D$
  and $R \subseteq D^r$ be a relation.
  An \emph{implementation} of $R$ in $\Gamma$ is a pp-definition of $R$ 
  with primary variables $\bx$, 
  auxiliary variables $\by$ and an additional property:
  if $\alpha$ violates $R(\bx)$,
  there exists an extension of $\alpha$ to $\by$ of cost one.
\end{definition}

Although pp-definitions do not preserve costs, 
they can be used to simulate crisp constraints
in \textsc{MinCSP} instances.

\begin{proposition}[Proposition~5.2~in~\cite{kim2022flow}] \label{prop:ppdef-impl}
  Let $\Gamma$ be a constraint language over a domain $D$ and $R$ be a relation over $D$.
  Then the following hold.
  \begin{enumerate}
  \itemsep0em
  \item If $\Gamma$ pp-defines $R$, then there is an FPT-reduction
    from \mincsp{\Gamma, R} restricted to instances with 
    only crisp $R$-constraints to \mincsp{\Gamma, =}.
  \item If $\Gamma$ implements $R$, then there is an FPT-reduction
    from \mincsp{\Gamma, R} to \mincsp{\Gamma, =}.
  \end{enumerate}
\end{proposition}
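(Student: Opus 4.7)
The plan is to prove both parts by the same gadget replacement scheme, differing only in how the inner gadget constraints are marked crisp or soft.

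For part 1, given an instance $I$ of $\mincsp{\Gamma, R}$ whose $R$-constraints are all crisp, I would build an instance $I'$ of $\mincsp{\Gamma, =}$ by taking, for each crisp constraint $R(\mathbf{x})$ in $I$, a fresh copy of the pp-definition $C_R$ with its primary variables identified with $\mathbf{x}$ and its auxiliary variables renamed so as to be fresh in $I'$; every constraint in this copy is marked crisp. The soft constraints of $I$, all of which already use relations from $\Gamma$, are carried over unchanged, and the budget $k$ is kept the same. The reduction is computable in polynomial time and preserves the parameter. Correctness is immediate from the two properties of a pp-definition: property~(1) ensures that a crisp-feasible assignment to $I'$, when projected onto the original variables, is crisp-feasible for $I$; property~(2) ensures that any crisp-feasible assignment to $I$ extends to one for $I'$ at no additional cost. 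Since the soft constraints coincide in the two instances, costs match exactly.

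For part 2, the construction is identical except that every $R$-constraint in $I$, soft or crisp, is replaced by a fresh copy of the implementation, and the inner gadget constraints inherit the crisp/soft marking of the $R$-constraint they replace. The parameter is again kept equal to $k$. I would verify $\cost(I') = \cost(I)$ by proving both inequalities. For $\cost(I') \leq \cost(I)$, take an optimal assignment $\alpha$ to $I$ and extend gadget by gadget: on gadgets for satisfied $R$-constraints use property~(2) to extend at zero extra cost, on gadgets for violated soft $R$-constraints invoke the additional implementation property to extend at exactly unit cost, and note that crisp $R$-constraints must be satisfied in any feasible assignment to $I$, so their gadgets contribute zero. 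For $\cost(I) \leq \cost(I')$, restrict a feasible assignment of $I'$ to the original variables; property~(1) tells us that a gadget with all inner constraints satisfied forces the corresponding $R$-constraint to be satisfied, so every violated $R$-constraint in $I$ can be charged to at least one violated inner constraint of its gadget in $I'$. The soft $\Gamma$-constraints of $I$ appear verbatim in $I'$, so the two costs agree.

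The only real subtlety is the second inequality in part~2: a single gadget may contain several violated inner constraints, so the charging has to be one-to-many rather than one-to-one. What saves the argument is exactly property~(1): a gadget whose inner constraints are all satisfied cannot correspond to a violated $R$-constraint in $I$, so no ``free'' violation appears on the $I$ side. Combined with the matching upper bound this gives exact cost preservation, and the reduction therefore maps yes-instances of budget $k$ to yes-instances of budget $k$ in both directions, establishing the required FPT-reductions.
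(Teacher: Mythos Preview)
Your proof is correct and follows the standard gadget-replacement argument for this folklore-style result. Note, however, that the paper does not give its own proof of this proposition: it is quoted verbatim as Proposition~5.2 of~\cite{kim2022flow} and used as a black box. Your write-up is exactly the expected argument; the one point worth stating more explicitly in part~2 is that the charging in the inequality $\cost(I)\le\cost(I')$ is injective because distinct $R$-constraints get gadgets with disjoint constraint sets (fresh auxiliaries), which you use implicitly.
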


\paragraph{Approximation.}

A minimization problem over an alphabet $\Sigma$
is a triple $(\cI, \sol, \cost)$, where 
\iflong
\begin{itemize}
  \itemsep0em
  \item $\cI \subseteq \Sigma^*$ is the set of instances,
  \item $\sol : \cI \to \Sigma^*$ is a function such that maps instances $I \in \cI$
    to the sets of solutions $\sol(I)$, and
  \item $\cost : \cI \times \Sigma^* \to \ZZ_{\geq 0}$
    is a function that takes an instance $I \in \cI$ and a solution
    $X \in \sol(I)$ as input,
    and returns a non-negative integer cost of the solution.
\end{itemize}
\fi
\ifshort
$\cI \subseteq \Sigma^*$ is the set of instances,
$\sol : \cI \to \Sigma^*$ is a function such that maps instances $I \in \cI$
to the sets of solutions $\sol(I)$, and
$\cost : \cI \times \Sigma^* \to \ZZ_{\geq 0}$
is a function that takes an instance $I \in \cI$ and a solution
$X \in \sol(I)$ as input,
and returns a non-negative integer cost of the solution.
\fi
Define $\cost(I) := \min \{ \cost(I, X) : X \in \cost(I) \}$.
A constant-factor approximation algorithm with factor $c \geq 1$
takes an instance $x \in \cI$ and an integer $k \in \NN$, and returns
`yes' if $\cost(I) \leq k$ and `no' if $\cost(I) > c \cdot k$.
A \emph{cost-preserving reduction} from a problem
$A = (\cI_A, \sol_A, \cost_A)$ to $B = (\cI_B, \sol_B, \cost_B)$
is a pair of functions polynomial-time computable functions $F$ and $G$ such that
\iflong
\begin{itemize}
  \itemsep0em
  \item for every $I \in \cI_A$, we have $F(I) \in \cI_B$ with $\cost_A(I) = \cost_B(F(I))$, and
  \item for every $I \in \cI_A$ and $Y \in \sol(F(I))$,
  we have $G(I, Y) \in \sol(I)$, and $\cost_A(I, G(I, Y)) \leq \cost_B(F(I), Y)$.
\end{itemize}
\fi
\ifshort 
(1) for every $I \in \cI_A$, we have $F(I) \in \cI_B$ with $\cost_A(I) = \cost_B(F(I))$, and
(2) for every $I \in \cI_A$ and $Y \in \sol(F(I))$,
we have $G(I, Y) \in \sol(I)$, and $\cost_A(I, G(I, Y)) \leq \cost_B(F(I), Y)$.
\fi
If there is a cost-preserving reduction from $A$ to $B$,
and $B$ admits a constant-factor polynomial-time/fpt 
approximation algorithm,
then $A$ also admits a constant-factor polynomial-time/fpt 
approximation algorithm.

\ifshort
  \section{Classification Overview}
  \label{sec:short}
  
  \begin{table}
    \begin{center}
      \begin{tabular}{| c | m{5cm} | l | l |}
  \hline
  Name & CNF Formula & Tuples & Complexity \\
  \hline
  \hline
  $\rel{EQ}_3$ &
  $({x_1 = x_2}) \land ({x_2 = x_3}) \land ({x_1 = x_3})$ & 
  $(1,1,1)$ & 
  \textcolor{Green}{FPT} \\
  \hline

  \iflong
  --- &
  $({x_1 = x_2})$ & 
  $(1,1,1), (1,1,2)$ & 
  \textcolor{Green}{FPT} \\
  \hline
    
  --- &
  $({x_1 \neq x_3}) \land ({x_2 \neq x_3})$ & 
  $(1,1,2), (1,2,3)$ & 
  \textcolor{Green}{FPT} \\
  \hline
  \fi
  
  $\rel{NEQ}_3$ &
  $({x_1 \neq x_2}) \land ({x_2 \neq x_3}) \land ({x_1 \neq x_3})$ & 
  $(1,2,3)$ & 
  \textcolor{Green}{FPT} \\
  \hline

  \iflong
  --- &
  $({x_2 \neq x_3})$ & 
  $(1,1,2), (1,2,1), (1,2,3)$ & 
  \textcolor{Green}{FPT} \\
  \hline
  \fi
  --- &
  $({x_1 = x_2}) \land ({x_1 \neq x_3}) \land ({x_2 \neq x_3})$ & 
  $(1,1,2)$ & 
  \textcolor{Green}{FPT} \\
  \hline
    
  $\rel{ODD}_3$ &
  $({x_1 = x_2} \lor {x_1 \neq x_3}) \land
  ({x_1 = x_2} \lor {x_2 \neq x_3}) \land ({x_1 \neq x_2} \lor {x_2 \neq x_3})$ & 
  $(1,1,1), (1,2,3)$ & 
  \textcolor{Red}{\textsc{Hitting Set}-hard} \\
  \hline

  \iflong
  --- &
  $({x_1 = x_2} \lor {x_1 \neq x_3}) \land ({x_1 = x_2} \lor {x_2 \neq x_3})$ & 
  $(1,1,1), (1,1,2), (1,2,3)$ & 
  \textcolor{Red}{\textsc{Hitting Set}-hard} \\
  \hline
      
  --- &
  $({x_1 \neq x_2} \lor {x_2 = x_3})$ & 
  excludes $(1,1,2)$ & 
  \textcolor{Red}{\textsc{Hitting Set}-hard} \\
  \hline
  \fi
  
  $\rel{NAE}_3$ &
  $({x_1 \neq x_2} \lor {x_2 \neq x_3})$ & 
  excludes $(1,1,1)$ & 
  \textcolor{Red}{W[1]-hard}, \textcolor{Blue}{FPA} \\
  \hline

  \hline

  $R^{\lor}_{\neq,\neq}$ &
  $\displaystyle ({x_1 \neq x_2} \lor {x_3 \neq x_4}) 
  \land ({x_1 \neq x_3}) \land ({x_1 \neq x_4}) \land ({x_2 \neq x_3}) \land ({x_2 \neq x_4})$ & 
  $(1,2,3,3)$, $(1,1,2,3)$, $(1,2,3,4)$ & 
  \textcolor{Red}{W[1]-hard}, \textcolor{Blue}{FPA} \\
  \hline

  $R^{\land}_{=,=}$ &
  $({x_1 = x_2}) \land ({x_3 = x_4})$ & 
  $(1,1,1,1)$, $(1,1,2,2)$ & 
  \textcolor{Red}{W[1]-hard}, \textcolor{Blue}{FPA} \\
  \hline


  $R^{\land}_{\neq,\neq}$ &
  $({x_1 \neq x_2}) \land ({x_3 \neq x_4})$ & 
  -- \textit{too many too list here} -- & 
  \textcolor{Red}{W[1]-hard}, \textcolor{Blue}{FPA} \\
  \hline


  $R^{\land}_{=,\neq}$ &
  $({x_1 = x_2}) \land ({x_3 \neq x_4})$ & 
  $(1,1,1,2), (1,1,2,1), (1,1,2,3)$ & 
  \textcolor{Red}{W[1]-hard}, \textcolor{Blue}{FPA} \\
  \hline


\end{tabular}

    \end{center}
    \caption{Selected Horn relations $R$ and the complexity of
      $\mincsp{R,=,\neq}$.
      FPA refers to fixed-parameter approximation.}
    \label{tab:horn-names}
  \end{table}  
  
  We now give an overview of the complexity dichotomy.
  Details are deferred to the full paper.
  We begin with a definition of the relevant language classes.
  Recall that an \emph{equality language} is a constraint language over $\NN$
  whose relations can be defined via Boolean formulas over the equality predicate.
  More precisely, for a set of variables $X=\{x_1,\ldots,x_n\}$,
  let a \emph{positive literal} be a term $(x_i=x_j)$
  and a \emph{negative literal} a term $(x_i \neq x_j)$, $i, j \in [n]$.
  A \emph{clause} is a disjunction of literals.
  Then every equality relation has a CNF definition as a conjunction
  of clauses. A relation (respectively language) is \emph{Horn}
  if it (respectively every relation in the language) has a CNF
  definition where every clause has at most one positive literal,
  \emph{negative} if positive literals only occur in singleton
  clauses $(x_i=x_j)$,
  \emph{strictly negative} if there are no positive literals,
  and \emph{conjunctive} if all clauses are
  singletons. Note that split relations and $\rel{NEQ}_3$ are both conjunctive. 

  Bodirsky and Kara~\cite{bodirsky2008complexity}
  showed that for an equality language $\Gamma$,
  \csp{\Gamma} is in P if $\Gamma$ is Horn or constant, and NP-hard otherwise.
  We note that \mincsp{\Gamma} is trivial if $\Gamma$ is constant or
  strictly negative, and also show that if $\Gamma$ is Horn but not
  constant or strictly negative then $\Gamma$ implements the relations $=$ and $\neq$.
  Hence we focus on this case and assume that $\Gamma$ is Horn and $=, \neq \; \in \Gamma$. 
  Theorem~\ref{ithm:mincsp-p} then follows since \textsc{Edge Multicut} reduces to \mincsp{=,\neq}.
  For the remaining steps of the classification, we show that
  \begin{enumerate}
    \itemsep0em
  \item $\mincsp{\Gamma, =, \neq}$ admits a constant-factor
    fpt-approximation if $\Gamma$ is negative, otherwise it is
    \textsc{Hitting Set}-hard;
    \item $\mincsp{R, =, \neq}$ is W[1]-hard if $R$ is negative but not conjunctive;
    \item $\mincsp{R, =, \neq}$ is W[1]-hard if $R$ is conjunctive but neither split nor $\rel{NEQ}_3$;
    \item $\mincsp{\Gamma, =, \neq}$ is in FPT if $\Gamma$ is conjunctive and all relations in 
    $\Gamma$ are split or $\rel{NEQ}_3$.
  \end{enumerate}
  Table~\ref{tab:horn-names} lists a number of Horn relations and the
  associated complexity of \textsc{MinCSP}.
  
  Towards hardness of approximation, we recall a result of
  Bodirsky, Chen and Pinsker~\cite{BodirskyCP10equality}
  that every equality language that is not negative can define
  $\rel{ODD}_3$ (see Table~\ref{tab:horn-names}).
  We show that \mincsp{\rel{ODD}_3,=,\neq} is \textsc{Hitting Set}-hard.
  
  \begin{lemma}
    There is a cost-preserving reduction from
    \textsc{Hitting Set} to \mincsp{\rel{ODD}_3,=,\neq}
    where every $\rel{ODD}_3$-constraint is crisp.
  \end{lemma}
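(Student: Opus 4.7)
The plan is to encode the hitting-set instance into a MinCSP where each set $S$ of size $r$ becomes a crisp ``not-all-equal'' constraint $\rel{NAE}_r$ on the variables representing its elements, with one soft $x_u = a$ constraint per universe element driving the cost, and then to realise each crisp $\rel{NAE}_r$ by a pp-definition over $\rel{ODD}_3$ and $\neq$.

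Given a hitting-set instance $(U, \mathcal{S}, k)$, I would introduce an anchor variable $a$ and a variable $x_u$ for each $u \in U$; add a soft constraint $(x_u = a)$ for each $u$, so that the cost of an assignment $\alpha$ equals $|\{u : \alpha(x_u) \neq \alpha(a)\}|$; and for each $S = \{u_1, \dots, u_r\} \in \mathcal{S}$ add a crisp constraint $\rel{NAE}_r(x_{u_1}, \dots, x_{u_r})$ forbidding the all-equal tuple. Cost equivalence is then immediate in both directions. Forward: any $\alpha$ satisfying the crisp part cannot have $\alpha(x_{u_1}) = \cdots = \alpha(x_{u_r}) = \alpha(a)$, so some $\alpha(x_{u_j})$ differs from $\alpha(a)$ for every set, whence $\{u : \alpha(x_u) \neq \alpha(a)\}$ is a hitting set of size $\cost(\alpha)$. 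Backward: given a hitting set $H$, set $\alpha(x_u) = \alpha(a)$ for $u \notin H$ and give each $u \in H$ its own fresh pairwise distinct value; then no $\rel{NAE}_r$ is all-equal (either $S$ contains both $a$-valued and non-$a$-valued entries, or $S \subseteq H$ and its picked entries take pairwise distinct fresh values), and the cost is exactly $|H|$.

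To realise the crisp $\rel{NAE}_r$ constraints inside $\mincsp{\rel{ODD}_3, =, \neq}$, I would pp-define them from $\rel{ODD}_3$ and $\neq$. The base case is
\[
  \rel{NAE}_3(x, y, z) \equiv \exists w: \rel{ODD}_3(x, y, w) \land w \neq z :
\]
if $x = y$, the $\rel{ODD}_3$ clause forces $w = x = y$ and $w \neq z$ then gives $z \neq x = y$; if $x \neq y$, the clause forces $w, x, y$ pairwise distinct and such a $w$ distinct from $z$ always exists in the infinite domain. I would then recurse via
\[
  \rel{NAE}_{r+1}(x_1, \dots, x_{r+1}) \equiv \exists y: \rel{ODD}_3(x_r, x_{r+1}, y) \land \rel{NAE}_r(x_1, \dots, x_{r-1}, y),
\]
splitting on whether $x_r = x_{r+1}$ (in which case $y$ is forced to their common value and $\rel{NAE}_{r+1}$ reduces precisely to the inner $\rel{NAE}_r$) or $x_r \neq x_{r+1}$ (in which case $\rel{NAE}_{r+1}$ holds automatically and $y$ can be picked fresh to satisfy the inner clause). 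Unfolding yields a crisp pp-gadget for $\rel{NAE}_r$ of size $O(r)$ using only $\rel{ODD}_3$ and $\neq$, and by Proposition~\ref{prop:ppdef-impl}(1) these gadgets can replace the $\rel{NAE}_r$ constraints without changing any cost.

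The hard part will be the existential direction of the recursive pp-definition, where $y$ must simultaneously satisfy the $\rel{ODD}_3$ clause and the inner $\rel{NAE}_r$; this is where the infinite equality domain is essential, since fresh values avoiding any finite list of forbidden ones are always available. A smaller subtlety is in the backward direction of the main reduction: picked elements must receive pairwise distinct fresh values rather than a common ``picked'' token, since otherwise a set fully contained in $H$ would collapse to all-equal and violate its $\rel{NAE}_r$ gadget.
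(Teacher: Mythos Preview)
Your proposal is correct and takes essentially the same approach as the paper: an anchor variable with soft equality constraints encoding the cost, plus a crisp $\rel{NAE}_r$-gadget per set built as a chain of $\rel{ODD}_3$-constraints terminated by a single $\neq$. Your recursive pp-definition of $\rel{NAE}_r$ unfolds to a minor variant of the paper's explicit chain (one fewer $\rel{ODD}_3$ per gadget due to how you handle the base case), but the mechanism and the correctness argument are identical.
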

  \begin{proof}[Proof Sketch]
    Let the input be $(V,\cE,k)$, $V = \{1,\dots,n\}$.
    Create an instance $(I, k)$ of $\mincsp{\rel{ODD}_3, =, \neq}$
    starting from variables $x_1,\dots,x_n$ and $z$, with
    soft constraints $x_i = z$ for all $i \in [n]$.
    For every set $e = \{a_1, \dots, a_{\ell}\} \in \cE$,
    add auxiliary variables $y_2, \dots, y_{\ell}$
    and crisp constraints
    $\rel{ODD}_3(x_{a_1}, x_{a_2}, y_2)$,
    $\rel{ODD}_3(y_{i-1}, x_{a_i}, y_{i})$ for all $3 \leq i \leq \ell$, and
    $x_{a_1} \neq y_\ell$.
    These constraints are satisfiable if and only if not all variables
    $x_i$, $i \in e$ are equal,
    so to satisfy them
    it is sufficient to break a soft constraint $x_{a_i} = z$.
    Thus, $X \subseteq V$ is a hitting set
    if and only if 
    $I - \{x_i=z : i \in X\}$ is consistent.
  \end{proof}
  
  As noted, this implies that \mincsp{\Gamma} is W[1]-hard to even
  approximate in FPT time. Now assume that $\Gamma$ is negative.
  Then every relation $R \in \Gamma$ is defined by a conjunction of
  positive singleton clauses and strictly negative clauses.
  For a constant-factor approximation,
  by~\cite[Lemma~10]{bonnet2016mincsp}
  we may split these definitions
  into separate constraints, so we may assume that an instance of
  \mincsp{\Gamma} uses constraint types $(x_i=x_j)$
  and $(x_1 \neq y_1 \lor \ldots \lor x_r \neq y_r)$, of lengths
  $r \leq d$ for some constant $d$. Then \mincsp{\Gamma} reduces to
  \djcut, and since $d=O(1)$
  we get an FPT approximation via Theorem~\ref{ithm:fpa-algs}.
  This settles the cases where \mincsp{\Gamma} has a constant-factor
  FPT approximation.

  \begin{figure}
    \centering
    \begin{tikzpicture}
  \def \radius {2}
  \def \step {-360/7}

  \draw (0,0) circle (\radius);
  
  \coordinate (v5) at ({\radius*cos(270 + \step*0)},{\radius*sin(270 + \step*0)});
  \coordinate (v6) at ({\radius*cos(270 + \step*1)},{\radius*sin(270 + \step*1)});
  \coordinate (v0) at ({\radius*cos(270 + \step*2)},{\radius*sin(270 + \step*2)});
  \coordinate (v1) at ({\radius*cos(270 + \step*3)},{\radius*sin(270 + \step*3)});
  \coordinate (v2) at ({\radius*cos(270 + \step*4)},{\radius*sin(270 + \step*4)});
  \coordinate (v3) at ({\radius*cos(270 + \step*5)},{\radius*sin(270 + \step*5)});
  \coordinate (v4) at ({\radius*cos(270 + \step*6)},{\radius*sin(270 + \step*6)});

  \draw[red, thick]  (v0) -- (v3);
  \draw[red, dashed] (v1) -- (v4);
  \draw[red, dashed] (v2) -- (v5);
  \draw[red, dashed] (v3) -- (v6);
  \draw[red, thick]  (v4) -- (v0);
  \draw[red, thick]  (v5) -- (v1);
  \draw[red, thick]  (v6) -- (v2);

  \filldraw[black] (v5) circle (1pt) node[anchor={-270}] {$v_5$};
  \filldraw[black] (v6) circle (1pt) node[anchor={-270+\step*1}] {$v_6$};
  \filldraw[black] (v0) circle (1pt) node[anchor={-270+\step*2}] {$v_0$};  
  \filldraw[black] (v1) circle (1pt) node[anchor={-270+\step*3}] {$v_1$};  
  \filldraw[black] (v2) circle (1pt) node[anchor={-270+\step*4}] {$v_2$};  
  \filldraw[black] (v3) circle (1pt) node[anchor={-270+\step*5}] {$v_3$};  
  \filldraw[black] (v4) circle (1pt) node[anchor={-270+\step*6}] {$v_4$};  

\end{tikzpicture}
    \caption{An illustration of a choice gadget for $t = 3$.
      Black arcs correspond to equality constraints,
      dashed red edges to soft disequality constraints, and
      the bold red edge to a crisp disequality constraint.}
    \label{fig:wheel}
  \end{figure}

  Towards delineating FPT and W[1]-hard cases, assume that $\Gamma$ is negative but not
  conjunctive, and let $R \in \Gamma$ be a relation such that every
  CNF definition of $R$ contains a non-singleton clause. 
  We show that in this case
  $\{R,=,\neq\}$ pp-defines $R^{\lor}_{\neq,\neq}$ or $\rel{NAE}_3$,
  and that $\mincsp{R', =, \neq}$ is W[1]-hard with
  crisp $R'$-constraints, where $R'$ is either $R^{\lor}_{\neq,\neq}$ or $\rel{NAE}_3$.
  The problem \mincsp{\rel{NAE}_3,=} with crisp $\rel{NAE}_3$-constraints
  naturally corresponds to \textsc{Steiner Multicut}, which is W[1]-hard~\cite{bringmann2016parameterized}.
  For the remaining proofs, we will need a \emph{choice gadget}. 
  Let $S = \{s_1, \dots, s_t\}$ be a set.
  Define an instance $W(S)$ of $\csp{=,\neq}$ as follows.
  Introduce $2t+1$ variables $v_0, \dots, v_{2t}$
  and identify indices modulo $2t+1$, e.g. $v_0 = v_{2t+1}$.
  Connect variables in a cycle of equalities, i.e.
  add soft constraints $v_i = v_{i+1}$ for all $0 \leq i \leq 2t$.
  The \emph{forward partner} of a variable $v_i$ is $f(v_i) := v_{i+t}$,
  i.e. the variable that is $t$ steps ahead of $v_i$ on the cycle.
  Add soft constraints $v_i \neq f(v_i)$ for all $0 \leq i \leq 2t$,
  and make the constraint $v_0 \neq v_t$ crisp.
  See Figure~\ref{fig:wheel} for an illustration.
  
  \begin{lemma} \label{lem:wheel-gadget}
    Let $S$ be a set of size at least two and $W(S)$ be the choice gadget.
    Then $\cost(W(S)) = 3$.
    Moreover, if $X \subseteq W(S)$, $|X| = 3$, $W(S) - X$ is consistent
    and $X$ contains $v_i \neq f(v_i)$,
    then $X$ also contains $v_{i-1} = v_{i}$ and $f(v_{i}) = f(v_{i+1})$.
  \end{lemma}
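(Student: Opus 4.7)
The plan is to exploit the two odd cycles underlying $W(S)$: the equality cycle $v_0 - v_1 - \cdots - v_{2t} - v_0$, and the disequality graph on $\{v_0,\dots,v_{2t}\}$ with edges $\{v_j, v_{j+t}\}$ for $j=0,\dots,2t$, which is a single cycle of length $2t+1$ because $\gcd(t, 2t+1) = 1$. For the upper bound of Part~1 I would exhibit an explicit assignment: choose distinct $a, b \in S$ and set $v_j = a$ for $0 \leq j \leq t-1$ and $v_j = b$ for $t \leq j \leq 2t$. A direct check shows that the crisp $v_0 \neq v_t$ is satisfied, exactly the equalities $v_{t-1} = v_t$ and $v_{2t} = v_0$ are violated, and among the soft disequalities only $v_t \neq v_{2t}$ is violated, giving cost at most $3$.

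For the matching lower bound, in any valid solution $X$ the equality graph $W(S) - X$ must separate $v_0$ from $v_t$, because the crisp constraint forbids $v_0 = v_t$. Removing fewer than two edges from a $(2t+1)$-cycle leaves it connected, so $X$ contains at least two equalities. If $X$ contains exactly two equalities, the vertices split into two arcs, each forced to a single value, and every remaining disequality (including the crisp one) must join the two arcs; but that would properly $2$-color the disequality $(2t+1)$-cycle, which is impossible since odd cycles are non-bipartite. Hence $X$ must also contain at least one disequality, so $|X| \geq 3$.

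For Part~2, note that $v_0 \neq f(v_0)$ is the crisp constraint, so $v_i \neq f(v_i) \in X$ forces $i \neq 0$; the argument above then shows $X$ contains exactly two equalities and exactly one disequality (the given one). The two removed equalities split the equality cycle into arcs of sizes $l$ and $2t+1-l$, and for $W(S)-X$ to be consistent the number of disequality edges with both endpoints in the same arc must equal exactly one (the removed disequality). A direct count gives this number as $\max(0, l-t) + \max(0, t+1-l)$, whose minimum over $1 \leq l \leq 2t$ is $1$, attained iff $l \in \{t, t+1\}$. Thus the arcs have sizes $t$ and $t+1$. Writing the $(t+1)$-arc as $[j, j+t]$, its unique internal disequality edge is $v_j \neq v_{j+t}$, which must be the removed one, so $j = i$. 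The boundary equalities of this arc, namely $v_{i-1} = v_i$ and $v_{i+t} = v_{i+t+1}$, must therefore lie in $X$; the latter is $f(v_i) = f(v_{i+1})$ since $f(v_i) = v_{i+t}$ and $f(v_{i+1}) = v_{i+t+1}$. The main subtlety is the within-arc count argument, where the oddness of $2t+1$ both forces at least one within-arc disequality (via non-bipartiteness) and pins down the arc sizes when only one disequality is available for removal.
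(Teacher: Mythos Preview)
Your proof is correct and follows essentially the same strategy as the paper: both argue that at least two equalities must be removed to separate $v_0$ from $v_t$, and then show that two equalities alone leave a within-arc pair $v_j, f(v_j)$ whose disequality must also be removed. Your framing of the lower bound via non-bipartiteness of the odd disequality cycle and the explicit count $\max(0,l-t)+\max(0,t+1-l)$ for Part~2 are somewhat cleaner than the paper's sketch (which phrases the same pigeonhole as ``a chain of length at least $t$ remains'' and leaves the ``two compatible choices'' step vague), but the underlying ideas coincide.
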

  \begin{proof}[Proof Sketch]
  Since $v_0 \neq v_t$ is a crisp constraint, we need to remove
  one link from the top and one from the the bottom
  chain of equality constraints connecting $v_0$ and $v_t$.
  After this deletion, at least one chain of length 
  $\ceil{\frac{2t + 1 -2}{2}} = t$ remains, which connects
  a vertex $v_i$ with its forward partner $f(v_i)$.
  Thus, we either need to disconnect $v_i$ and $f(v_i)$,
  or remove the constraint $v_i \neq f(v_i)$.
  Observe that it is sufficient to delete
  $v_{i-1} = v_{i}$, $f(v_{i}) = f(v_{i+1})$ and 
  $v_{i} \neq f(v_{i})$ to satisfy $W(S)$.
  Moreover, if a deletion set $X$ of size $3$ contains
  $v_{i} \neq f(v_{i})$, the only pair of vertices
  $v_{j}$ and $f(v_{j})$ that may remain connected 
  in $W(S) - X$ is the pair with $j = i$.
  Out of the two compatible choices,
  only deleting $v_{i-1} = v_{i}$ and $f(v_{i}) = f(v_{i+1})$ 
  leaves no constraint unsatisfied.
  \end{proof}
  
  Intuitively, deleting $v_{i-1} = v_{i}$, $f(v_{i}) = f(v_{i+1})$ and 
  $v_{i} \neq f(v_{i})$ from $W(S)$
  corresponds to choosing element $s_i$ from the set $S$.
  We note a simple reduction from
  \textsc{Multicoloured Independent Set} to 
  \mincsp{R^{\lor}_{\neq,\neq},=}.
  Let the input be a graph $G$ with $k$ colour classes $V(G)=V_1 \cup \ldots V_k$. 
  Create a choice gadget for every $V_i$ without soft $\neq$-constraints
  (but keeping the crisp one), and set the budget to $2k$.
  For every pair $u \in V_i$, $v \in V_j$ for $i \neq j$ such that $uv \in E(G)$,
  add a crisp constraint $R^\lor_{\neq,\neq}(u,f(u),v,f(v))$. 
  Due to the budget, $u=f(u)$ holds for at least one $u \in V_i$ for
  every $V_i$, corresponding to a selection,
  and the $R$-constraint prevents that $u=f(u)$ and $v=f(v)$,
  thereby blocking the combination of $u \in V_i$ and $v \in V_j$.
  We defer the details.
  
  Having shown that \mincsp{\Gamma,=,\neq} is W[1]-hard 
  if $\Gamma$ is non-conjunctive, it remains to show hardness for
  languages which are conjunctive but not $\rel{NEQ}_3$ or split.
  We use the following W[1]-hard problem (see~\cite[Lemma~6.1]{dabrowski2023almost}).
  
  \pbDefP{Split Paired Cut}
  {Graphs $G_1, G_2$, vertices $s_1, t_1 \in V(G_1)$, $s_2, t_2 \in V(G_2)$,
    a family of disjoint edge pairs $\cP \subseteq E(G_1) \times E(G_2)$,
    and an integer $k$.}
  {$k$.}
  {Is there a subset $X \subseteq \cP$ of size at most $k$ 
    such that for both $i \in \{1,2\}$,
    $\{e_i : \{e_1, e_2\} \in X \}$ is an $st$-cut in $G_i$?}
  
  Assuming $\Gamma$ is conjunctive, associate with a relation $R(x_1,\ldots,x_r)$ a graph
  where for each pair $i, j  \in [r]$ there is a blue edge $x_ix_j$
  if $i \neq j$ and $R$ implies $x_i=x_j$,
  and a red edge $x_ix_j$ if $R$ implies $x_i \neq x_j$. 
  Then the graph of a split relation $R(X,Y)$,
  $X=\{x_1,\ldots,x_p\}$ and $Y=\{y_1,\ldots,y_q\}$
  is a blue clique on $X$ and a red biclique of $(X,Y)$,
  and the graph of $\rel{NEQ}_3$ is a red triangle.
  The remaining cases are precisely the cases
  when the graph contains two disjoint edges $x_1x_2$, $x_3x_4$
  with no blue edges between their endpoints (such as
  $R^\land_{=,=}$, $R^\land_{=,\neq}$ or $R^\land_{\neq,\neq}$).   
  There is a curious parallel to the characterization of
  FPT cases of \textsc{Boolean MinCSP}, in terms of 2CNF-definable
  relations whose \emph{Gaifman graph} is
  $2K_2$-free~\cite{KimKPW23flow3}.
  We show hardness for all such cases. If both $x_1x_2$ and $x_3x_4$
  are blue, a reduction is immediate from \textsc{Split Paired Cut}.
  We sketch the reduction for a more interesting case of 
  \mincsp{R^\land_{\neq,\neq},=,\neq}, and note that the reduction for 
  \mincsp{R^\land_{=,\neq},=,\neq} is a variant of the above, and 
  hence omitted in this version of the paper.
 
  \begin{lemma}
    \mincsp{R,=} is W[1]-hard for $R=R_{\neq,\neq}^\land$.
  \end{lemma}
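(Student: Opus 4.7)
The plan is to reduce from \spc, which is W[1]-hard by~\cite{dabrowski2023almost}. Given an \spc instance $(G_1, G_2, s_1, t_1, s_2, t_2, \cP, k)$, I will construct a MinCSP$(R, =)$ instance $I$ on the variable set $V(G_1) \sqcup V(G_2)$. As a first step, every non-paired edge of $G_i$ becomes a crisp equality, and the required separations $s_i \neq t_i$ are introduced as crisp disequalities, which are pp-implementable from $R$ via $\neq(x, y) \equiv \exists a, b.\, R(x, y, a, b)$ together with Proposition~\ref{prop:ppdef-impl}.

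The heart of the reduction is a \emph{pair gadget} attached to each $\{e_1, e_2\} \in \cP$. For $e_1 = u_1 v_1$ and $e_2 = u_2 v_2$, I will use fresh auxiliary variables $a_p, b_p$ and a combination of soft and crisp $R$- and $=$-constraints so that, after optimizing over $a_p, b_p$, the gadget contributes cost $c$ when the pair is \emph{selected} (both $\alpha(u_1) \neq \alpha(v_1)$ and $\alpha(u_2) \neq \alpha(v_2)$, so that $e_1$ and $e_2$ are simultaneously cut in the equality graphs of $G_1, G_2$), cost $0$ when not selected, and strictly more than $c$ when exactly one of the paired edges is cut. With such a gadget in hand, any \spc solution of size $\leq k$ translates to an assignment of $I$ of cost $\leq ck$, and conversely the set of pairs contributing nonzero gadget cost in any optimal assignment of $I$ will form an \spc solution of at most the same size; this yields an FPT reduction with only a constant parameter blow-up.

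The main obstacle is the gadget design. The natural coupling $(u_1 = v_1) \Leftrightarrow (u_2 = v_2)$ would give the cleanest behaviour, but it is \emph{not} pp-definable from $\{R, =\}$: pp-definitions produce conjunctions of positive and negative equality literals and cannot capture the disjunction between an equality and a disequality that a biconditional requires. The plan is therefore to realise the coupling only \emph{indirectly}, by linking $(u_1, v_1)$ and $(u_2, v_2)$ through the auxiliary pair $(a_p, b_p)$ via soft $R$-constraints of the form $R(u_1, v_1, a_p, b_p)$, $R(u_2, v_2, a_p, b_p)$ together with a soft equality $a_p = b_p$, and then relying on cost optimization to enforce biconditional-like behaviour at the optimum. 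The main technical work will be a careful case analysis over the four cut patterns of $\{e_1, e_2\}$ and the two settings $a_p = b_p$ versus $a_p \neq b_p$, checking that half-cuts are strictly more costly than either of the two ``coupled'' outcomes. A secondary subtlety is that the gadget may contribute a nonzero baseline cost to every pair; if so, the plan is to absorb this baseline into the parameter via a uniform offset or, if needed, a duplication trick so that the reduction remains FPT in $k$ rather than polynomial in $|\cP|$.
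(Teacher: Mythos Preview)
Your overall plan—reduce from \spc and attach a local gadget to each pair—does not go through with this language, and the specific gadget you propose has the cost profile backwards. With soft $R(u_1,v_1,a_p,b_p)$, $R(u_2,v_2,a_p,b_p)$ and $a_p=b_p$, one computes: if $u_1=v_1$ and $u_2=v_2$ then both $R$-constraints are violated regardless of $a_p,b_p$ (the first conjunct $u_i\neq v_i$ fails), so the optimum is cost~$2$; if $u_1\neq v_1$ and $u_2\neq v_2$ then setting $a_p\neq b_p$ satisfies both $R$ and costs~$1$; in the half-cut case the optimum is~$2$. So the profile is $(2,1,2)$, i.e.\ \emph{selecting is cheaper than not selecting}, which drives the optimisation towards cutting everything rather than cutting a small set $X$. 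The baseline of $2$ per pair gives a target parameter of order $|\cP|$, and no ``uniform offset'' or ``duplication trick'' can bring this down to a function of $k$ alone: any offset must absorb $\Theta(|\cP|)$, which is unbounded in $k$.

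More fundamentally, a local pair gadget with your target profile $(0,c,>c)$ cannot exist over $\{R^{\land}_{\neq,\neq},=\}$. Since $R$ is a conjunction of disequalities, making variables more distinct never breaks an $R$-constraint; only $=$-constraints can penalise a cut. But a soft equality on $u_i,v_i$ penalises each cut independently, so cutting both cannot cost strictly less than cutting one. In particular you cannot get the required ``discount'' that makes full selection cheaper than a half-cut while keeping non-selection free. The paper circumvents this by abandoning the per-pair gadget idea: it uses the maxflow decomposition of $G_1,G_2$ into $2\ell$ paths, builds a \emph{choice gadget} (an odd cycle of equalities with disequality chords) on each path, and then \emph{pairs up the soft disequalities across gadgets} using $R$. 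The budget is tuned so that, after paying for two equality deletions per gadget, the only way to stay within budget is to kill two paired disequalities with a single $R$-deletion—this is where the coupling between $G_1$ and $G_2$ is enforced. That global accounting via maxflows and choice gadgets is the missing idea.
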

  \begin{proof}[Proof sketch]
    Let $(G_1, G_2, s_1, t_1, s_2, t_2, \cP, k)$ be an instance of \spc.
    By the construction of~\cite[Lemma~5.7]{kim2020solving},
    assume $k = 2\ell$, and $\cF_i$ for $i \in \{1,2\}$ are 
    $s_i t_i$-maxflows in $G_i$ partitioning 
    $E(G_i)$ into $k$ pairwise edge-disjoint paths.
    Construct an instance $(I, k')$ of \mincsp{R, =, \neq} 
    with $k'= 5\ell$ as follows.
    Start by creating a variable for every vertex in $V(G_1) \cup
    V(G_2)$ with the same name as the vertex.
    For each $i \in \{1,2\}$, consider a path $P \in \cF_i$, and 
    let $p$ be the number of edges on $P$.
    Create a choice gadget $W(P)$ for every $P$ with variables 
    $v^{P}_0, \dots, v^{P}_p$ following the path, and
    fresh variables $v^{P}_{j}$ for $p < j \leq 2p$ added to the instance.
    Observe that variables may appear on several paths in $\cF_i$.
    In particular, $v^P_{0} = s_i$ and $v^P_{p} = t_i$ for every $P \in \cF_i$,
    so we have crisp constraints $s_i \neq t_i$.
    Furthermore, since $\cF_i$ partitions $E(G_i)$, 
    the construction contains a copy of graphs $G_1$ and $G_2$
    with equality constraints for edges.
    Now we pair up edges according to $\cP$.
    For every pair $\{e_1, e_2\} \in \cP$,
    let $P \in \cF_1$ and $Q \in \cF_2$ be the paths such that $e_1 \in P$ and $e_2 \in Q$, and 
    suppose $e_1 = v^{P}_{i-1} v^{P}_{i}$ and $e_2 = v^{Q}_{j-1} v^{Q}_{j}$.
    Pair up soft constraints $v^{P}_i \neq f(v^{P}_i)$ and $v^{Q}_j \neq f(v^{Q}_j)$,
    i.e. replace individual constraints with one soft constraint
    $R(v^{P}_{i}, f(v^{P}), v^{Q}_{j}, f(v^{Q}_{j}))$.
    Finally, if an edge $uv \in E(G)$ does not appear in any pair of $\cP$,
    make constraint $u = v$ crisp in $I$.
    This completes the construction.

    To sketch correctness, observe that $I$ contains $2\ell$ choice gadgets.
    By Lemma~\ref{lem:wheel-gadget}, a solution is required
    to delete a pair of equality constraints from each plus another constraint. 
    Choice gadgets do not share equality constraints,
    so the remaining budget after deleting the pairs of 
    equality constraints is $\ell = 5\ell - 2 \cdot 2\ell$.
    Thus, the remaining $\ell$ deletions in the $2\ell$ gadgets 
    have to synchronize according to the $R$-constraints
    (to remove two soft disequality constraints at cost one).
    Furthermore, crisp constraints force 
    the selection to form an $s_1t_1$-cut and an $s_2t_2$-cut. 
    We omit the details of the verification.
  \end{proof}
  
  For all remaining cases, as noted, every $R \in \Gamma$ is either
  split or $\rel{NEQ}_3$, and there is a simple reduction from
  \mincsp{\Gamma} to \mdt. Since the latter is FPT
  (Theorem~\ref{thm:triple-multicut-fpt})
  the classification is complete.

  We omit the details of the classification for singleton expansion
  cases. The proofs are somewhat more intricate, explicitly using the
  algebraic method as employed by Bodirsky, Chen and Pinsker~\cite{BodirskyCP10equality},
  but ultimately both the hardness proofs and algorithmic cases are
  less interesting than for the above.  
\fi

\iflong
  \section{Classifications for Equality Constraint Languages}
  \label{sec:equalangs}

  This section offers a bird's eye view of complexity classifications
  of \textsc{MinCSP} over equality constraint languages,
  including our main results -- parameterized complexity
  and parameterized approximation classifications for
  equality constraint languages and their singleton expansions.
  The aim is to develop introduce necessary definitions,
  present some basic observations, and
  give an overview of the proof strategies.
  All technical proofs are delegated to subsequent sections.

  Fix an infinite (countable) domain, e.g. the set of natural numbers $\NN$.
  A set of relations $\Gamma$ over $\NN$ is an \emph{equality constraint 
    language} if all its relations are preserved by every permutation of the domain.
  Syntactically, equality constraint relations can be defined by Boolean formulas 
  using the equality relation, conjunction, disjunction, and negation, e.g.
  \begin{equation*}
    R(x_1, x_2, x_3) \equiv (x_1 = x_2 \land x_2 \neq x_3) \lor
    (x_2 = x_3 \land x_1 \neq x_2) \lor
    (x_1 = x_3 \land x_2 \neq x_3).
  \end{equation*}
  is an equality constraint relation.
  Atomic formulas $x_i = x_j$ and $x_i \neq x_j$ are referred to as \emph{positive}
  and \emph{negative literals}, respectively.
  We only consider proper relations, i.e. relations that are neither empty nor complete.
  Moreover, we assume that relations do not have redundant arguments,
  i.e. if the arity of $R$ is $r$,
  every definition of $R$ is a formula on $r$ variables.
  The relation-defining formulas can be converted into conjunctive normal form (CNF),
  and we refer to the conjunctions in CNF formulas as \emph{clauses}.

  There is another way to define equality constraint relations.
  Recall that these relations are closed under all automorphisms of $\NN$.
  One can think of the orbits of tuples in a relation
  under the action of automorphisms as partitions of indices,
  and define the relation by the list of non-isomorphic tuples in it.
  With this in mind, we invoke a definition.

  \begin{definition} \label{def:tuple-refine}
    Let $\ba, \bb \in \NN^r$ be two tuples.
    We say that $\ba$ \emph{refines} $\bb$ if $\ba_i = \ba_j$ implies $\bb_i = \bb_j$
    for all $i,j \in \range{1}{r}$.
    Furthermore, if there exist indices $i,j \in \range{1}{r}$ such that
    $\ba_i \neq \ba_j$ and $\bb_i = \bb_j$, 
    then $\ba$ \emph{strictly refines} $\bb$.
  \end{definition}

  For example, tuple $(1, 1, 2, 3, 4)$ strictly refines $(5, 5, 5, 6, 7)$.
  Refinement is a partial order on the tuples.

  \subsection{Polynomial-Time Complexity of Equality CSP}
  \label{ssec:equalangs-csp}

  Polynomial-time complexity of CSP for equality constraint languages
  was classified by Bodirsky and K{\'a}ra~\cite{bodirsky2008complexity}.
  To describe the dividing line between tractable and NP-hard problems,
  we recall the necessary definitions.

  \begin{definition}
    Let $R$ be an equality constraint relation.
    \begin{itemize}
    \item $R$ is \emph{constant} if it contains the tuple $(1, \dots, 1)$.
    \item $R$ is \emph{Horn} if it is definable by a CNF formula with at most
      one positive literal in each clause.
    \end{itemize}
    An equality constraint language is constant/Horn
    if all its relations are constant/Horn, respectively.
  \end{definition}

  \begin{example}
    Consider two relations defined by the formulas 
    $(x_1 = x_2 \lor x_3 = x_4)$ and
    $(x_1 = x_2 \lor x_3 \neq x_4) \land (x_1 \neq x_3)$.
    The first formula is not Horn since it has two positive literals 
    in a single clause, while the second relation is Horn.
    It is not hard to show that the relation 
    defined by the first formula is in fact not Horn,
    i.e. it admits no equivalent Horn formulation.
    The first formula defines a constant relation, while the second does not.
  \end{example}

  Every constraint using a constant relation
  is satisfies by setting all variables to the same value, 
  so every instance of CSP for constant languages is trivially consistent.
  To check if an instance of CSP over a Horn language
  is consistent, one can first propagate positive unit clauses $(x = y)$
  by identifying variables $x$ and $y$,
  removing falsified literals $x \neq x$ from every clause,
  and repeating the procedure until either an empty clause is derived
  or there are no more positive unit clauses to propagate.
  If an empty clause is derived, then we have a no-instance.
  Otherwise, we obtain a Horn instance without positive unit clauses,
  which is satisfiable by any assignment of distinct values to all variables.
  Bodirsky and K{\'a}ra~\cite{bodirsky2008complexity} proved that
  constant and Horn are the only polynomial-time solvable cases.

  \begin{theorem}[Theorem~1~in~\cite{bodirsky2008complexity}]
    \label{thm:eq-csp-dichotomy}
    Let $\Gamma$ be an equality constraint language.
    Then \csp{\Gamma} is solvable in polynomial time if 
    $\Gamma$ is either constant or Horn, and is NP-complete otherwise.
  \end{theorem}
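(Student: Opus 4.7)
The plan is to prove each direction separately. For the tractable direction, both algorithms are straightforward. If $\Gamma$ is constant, every instance is satisfied by the all-equal assignment sending every variable to a single common domain element. If $\Gamma$ is Horn, fix a CNF representation of each constraint and run unit propagation: whenever a positive unit clause $(x=y)$ appears, unify the two variables and simplify every other clause by deleting falsified literals $(z \neq z)$; if any clause is emptied, output ``no''. Once propagation stabilises, every remaining clause has at least one negative literal (being Horn with no positive unit clause left), so assigning pairwise distinct values to all remaining variables satisfies everything.

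For NP-hardness, the strategy is to show that any non-constant, non-Horn $\Gamma$ pp-defines a fixed hard template and then to reduce a classical NP-hard problem to CSP over this template. First, non-constancy gives a pp-definition of $\neq$: pick $R \in \Gamma$ with $(1,\ldots,1) \notin R$, choose a non-constant tuple $\mathbf{a} \in R$ (which exists since $R$ is closed under all permutations of $\NN$ and is non-constant), and use variable identifications matching the equality pattern of $\mathbf{a}$ together with existential quantification to project $R$ onto a pair of coordinates forced to differ. Second, non-Hornness gives a pp-definition of the double-equality disjunction $R^\lor_{=,=}(x_1,x_2,x_3,x_4) \equiv (x_1=x_2) \lor (x_3=x_4)$: take $S \in \Gamma$ whose every CNF representation contains a clause with at least two positive literals, and via substitutions and quantifications extract precisely this disjunction. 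With $\{\neq, R^\lor_{=,=}\}$ available, reduce from 3-SAT by introducing distinguished variables $T$ and $F$ with $T \neq F$; for each Boolean $x$ introduce a CSP variable $v_x$ with the constraint $R^\lor_{=,=}(v_x, T, v_x, F)$ (forcing $v_x \in \{T,F\}$ up to automorphism); and encode each clause by chaining $R^\lor_{=,=}$-constraints over auxiliary ``or-accumulator'' variables. By Proposition~\ref{prop:ppdef-impl}, these pp-definitions lift to polynomial-time reductions of $\csp{\neq, R^\lor_{=,=}}$ into $\csp{\Gamma}$.

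The main obstacle is the pp-definability of $R^\lor_{=,=}$ from an arbitrary non-Horn equality language, since one must distill a single ``non-Horn witness'' from a relation whose every CNF representation may interleave positive and negative literals in intricate ways, without interference from the remaining literals in the same clause. The cleanest route is algebraic: characterize Horn equality relations as exactly those preserved by a specific binary ``equality-merging'' polymorphism, observe that a non-Horn $\Gamma$ must fail this preservation, and invoke the Inv-Pol Galois correspondence to conclude pp-definability of $R^\lor_{=,=}$. A more direct combinatorial argument is feasible but requires careful bookkeeping of interdependent literals and exploits the fact that the domain is infinite, so local reassignments of distinct fresh values behave as intended and do not accidentally satisfy unintended literals.
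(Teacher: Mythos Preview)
The paper does not prove this theorem. It is stated as a cited result from Bodirsky and K\'ara~\cite{bodirsky2008complexity}; the surrounding text only sketches the tractable direction (the all-equal assignment for constant languages, and unit propagation followed by an all-distinct assignment for Horn languages), then defers the NP-hardness direction entirely to the reference. Your tractable direction is essentially identical to what the paper sketches.

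Your proposal therefore goes well beyond the paper by attempting the hardness direction as well. The overall strategy you outline is the right one and is in the spirit of Bodirsky--K\'ara's original argument: extract $\neq$ from non-constancy (this is exactly Lemma~\ref{lem:neither-const-nor-neg}.1 of the paper, and your least-refined-tuple argument matches the paper's proof of that lemma), extract a positive-disjunction relation from non-Hornness, and then reduce from a standard NP-hard problem. You are also right that the delicate step is pp-defining $R^\lor_{=,=}$ from an arbitrary non-Horn relation, and that the cleanest route is via polymorphisms and the Inv--Pol Galois connection; this is in fact how Bodirsky and K\'ara proceed.

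Two places in your sketch are genuinely incomplete rather than merely compressed. First, for the pp-definability of $R^\lor_{=,=}$ you name two routes but carry out neither; the algebraic route requires identifying the specific polymorphism characterising Horn equality relations and verifying that $R^\lor_{=,=}$ is the canonical relation not preserved by it, and the combinatorial route needs an explicit procedure for isolating a two-positive-literal clause while neutralising all other literals. Second, your 3-SAT encoding is too vague: the constraint $R^\lor_{=,=}(v_x,T,v_x,F)$ does force $v_x\in\{T,F\}$, but ``chaining or-accumulators'' to encode a ternary clause using only binary disjunctions of equalities requires care (you need to express that an accumulator variable equals $T$ if and only if some prefix of literals is satisfied, which is not a single $R^\lor_{=,=}$-constraint). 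Both gaps are fillable, but as written the hardness half is a plan rather than a proof.
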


  \subsection{Polynomial-Time Complexity of Equality MinCSP}
  \label{ssec:equalangs-mincsp}

  \textsc{MinCSP} for equality constraint languages is NP-hard even for very simple languages: indeed,
  \mincsp{=,\neq} is NP-hard by a simple reduction from \textsc{Edge Multicut}.
  As it turns out, every \textsc{MinCSP} for an equality constraint language 
  is either solvable in polynomial time by a trivial algorithm or NP-hard.
  Towards proving this, we define another class of equality languages.

  \begin{definition}
    An equality constraint relation is \emph{strictly negative} if it admits a CNF definition
    with only negative literals.
    An equality constraint language is strictly negative if all its relations are strictly negative.
  \end{definition}
  Strictly negative relations are a special case of Horn relations.
  The following lemma implies NP-hardness of \textsc{MinCSP}
  for almost all equality constraint languages.

  \begin{lemma}[Section~\ref{ssec:expressive}] 
  \label{lem:neither-const-nor-neg}
    Let $R$ be an equality constraint relation.
    \begin{enumerate}
    \item If $R$ is not constant, then $R$ implements $x_1 \neq x_2$.
    \item If $R$ is Horn but not strictly negative, then $R$ implements $x_1 = x_2$.
    \end{enumerate}
  \end{lemma}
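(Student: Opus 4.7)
Both parts follow the same template: I will construct an implementation by exhibiting a single soft $R$-constraint $R(z_1,\dots,z_r)$ in which each argument $z_i$ is syntactically identified with one of the primary variables $x_1, x_2$ or with an auxiliary variable $y$. Variable identification (rather than explicit $=$-constraints) is critical here: it keeps the gadget's cost equal to the cost of that single soft $R$-constraint, namely $0$ when satisfied and $1$ when violated, which is exactly what an implementation requires.

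For part (1), I pick $\ba \in R$ achieving the minimum number $d \geq 2$ of distinct values (at least $2$ because $R$ is non-constant) and let $A_1,\dots,A_d$ be the partition of indices it induces. Setting $z_i := x_1$ on $A_1$, $z_i := x_2$ on $A_2$, and $z_i := y_j$ on $A_j$ for $j \geq 3$ yields the desired gadget: with $x_1 \neq x_2$ and the $y_j$'s chosen as pairwise distinct fresh values, the $z$-tuple's partition equals that of $\ba$ and so lies in $R$ by closure under permutations; with $x_1 = x_2$, the tuple has at most $d-1$ distinct values and hence lies outside $R$ by minimality of $d$, forcing cost $1$. This implements $x_1 \neq x_2$.

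For part (2), I first use failure of strict negativity to pick $\bb \in R$ and a strictly refining $\ba \notin R$. Walking along any chain of single-block splits connecting $\bb$ to $\ba$ must cross the boundary of $R$, so I may assume that $\ba$ arises from $\bb$ by splitting a single block $B^* = B_1 \sqcup B_2$ with all other blocks unchanged. The gadget sets $z_i := x_1$ on $B_1$, $z_i := x_2$ on $B_2$, and $z_i := y_B$ on each remaining block $B$ of $\bb$. The easy direction is $x_1 = x_2$: choosing the $y_B$'s pairwise distinct from each other and from $x_1$ recreates the partition of $\bb$, placing the tuple in $R$.

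The main obstacle is the reverse direction, and this is where Horn-ness is essential: a priori some collision among the $y_B$'s, or with $x_1, x_2$, could produce a coarsening of $\ba$ that happens to lie in $R$ and spoil the implementation. To rule this out, I fix any Horn CNF for $R$ and pick a clause $C$ satisfied by $\bb$ but violated by $\ba$. Since any negative literal true on $\bb$ is still true on the finer tuple $\ba$, the satisfying literal on $\bb$ must be positive, and Horn-ness makes it unique, say $(x_p = x_q)$. From $b_p = b_q$ and $a_p \neq a_q$ I get $p \in B_1$ and $q \in B_2$, and for each negative literal $(x_u \neq x_v)$ of $C$, its violation by $\ba$ forces $a_u = a_v$, so $u$ and $v$ lie in the same part of $\ba$ and therefore receive the same identified variable in the gadget, making the literal syntactically $\mathrm{false}$. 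Hence $C$ simplifies in the gadget to $(x_1 = x_2)$, which fails under every extension whenever $x_1 \neq x_2$, forcing the $R$-constraint to be violated at cost exactly $1$ and completing the implementation of $x_1 = x_2$.
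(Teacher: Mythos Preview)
Your proof is correct. Part~(1) is essentially identical to the paper's: picking a tuple with the minimum number of distinct values is a special case of the paper's ``least refined tuple in $R$'', and the resulting gadget is the same.

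For Part~(2) both arguments ultimately show that some Horn clause collapses to $(x_1=x_2)$ under suitable argument identifications, but they reach that clause from opposite ends. The paper works \emph{syntactically}: it takes a reduced Horn CNF, picks any clause $C$ containing a positive literal (such a clause exists since $R$ is not strictly negative), and invokes the reduced-formula observation to obtain a witness tuple $\ba\in R$ that satisfies only the positive literal of $C$; the gadget identifies arguments along the negative literals of $C$. You work \emph{semantically}: you use the characterisation ``strictly negative $=$ upward-closed under refinement'' (which you use without proof, but which is an easy exercise) to locate $\bb\in R$ and a single-block-split refinement $\ba\notin R$, build the gadget by identifying along the full partition of $\ba$, and only then read off from any Horn CNF a clause separating $\bb$ from $\ba$. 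Your route trades the reduced-formula observation for the refinement-closure characterisation; a small bonus is that your verification of the pp-definition conditions is slightly cleaner, since the satisfying extension when $x_1=x_2$ is exhibited directly as the partition of~$\bb$.
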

  We use Lemma~\ref{lem:neither-const-nor-neg} to prove NP-hardness.
  In fact, we prove a slightly stronger result.

  \begin{lemma}[Section~\ref{ssec:multicut-variants}]
  \label{lem:multicut-to-mincsp}
    Let $\Gamma$ be an equality constraint language that implements
    binary equality and pp-defines binary disequality relations.
    Then there is a cost-preserving reduction from 
    \textsc{Edge Multicut} to \mincsp{\Gamma}.
  \end{lemma}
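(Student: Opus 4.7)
The plan is to first construct the natural reduction from \textsc{Edge Multicut} to \mincsp{=,\neq}, and then lift it to \mincsp{\Gamma} by invoking the implementation of $=$ and the pp-definition of $\neq$. Given an instance $(G,\cT,k)$ of \textsc{Edge Multicut}, introduce a variable $x_v$ for each $v \in V(G)$, a soft constraint $x_u = x_v$ for each edge $uv \in E(G)$, and a crisp constraint $x_s \neq x_t$ for each cut request $st \in \cT$. Verifying that this first step is cost-preserving is routine: any assignment satisfying all crisp constraints induces a partition of $V(G)$ in which no cut request has both endpoints in the same part, and the set of violated soft constraints then corresponds exactly to an Edge Multicut of the same size.

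Next, I would transform the resulting \mincsp{=,\neq} instance into a \mincsp{\Gamma} instance in two phases. In the first phase, replace every crisp $x_s \neq x_t$ constraint by a copy of the pp-definition of $\neq$ in $\Gamma$ on fresh auxiliary variables, declaring every constraint of the gadget crisp. By the two properties of Definition~\ref{def:pp-definition}, the crisp gadget is satisfiable on an extension of $\alpha$ if and only if $\alpha(x_s) \neq \alpha(x_t)$. In the second phase, replace every remaining $=$-constraint --- both the soft ones from the initial reduction and any crisp ones introduced in the previous phase --- by a copy of the implementation of $=$ in $\Gamma$, again on fresh auxiliary variables, making the gadget crisp or soft according to the status of the original constraint. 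The cost-one extension property of Definition~\ref{def:implementation} then guarantees that a soft equality gadget contributes exactly $1$ to the cost of the instance when its endpoints receive different values, and $0$ otherwise.

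Putting these pieces together yields a \mincsp{\Gamma} instance whose optimum equals that of the intermediate \mincsp{=,\neq} instance, which in turn equals the minimum Edge Multicut size. The recovery map extracts, from any \mincsp{\Gamma} solution $\alpha$, the edges $uv$ whose soft equality gadget contains a violated constraint under $\alpha$; by the implementation property these are precisely the edges $uv$ with $\alpha(x_u) \neq \alpha(x_v)$, and the crisp disequality gadgets force every cut request to be separated, so the recovered edge set is an Edge Multicut of size at most the cost of $\alpha$. The only real subtlety --- and the main obstacle to iron out --- is the interplay between the two phases: the pp-definition of $\neq$ lives over $\Gamma \cup \{=\}$, so the first phase may introduce auxiliary $=$-constraints even though $=$ itself is merely implemented by $\Gamma$. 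This is handled cleanly by the second phase because every implementation is in particular a pp-definition, so declaring those auxiliary $=$-constraints crisp and expanding them via the crisp implementation gadget preserves the intended semantics and does not affect the cost balance.
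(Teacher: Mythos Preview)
Your proposal is correct and follows essentially the same approach as the paper: construct the natural \mincsp{=,\neq} instance (soft $=$ per edge, crisp $\neq$ per cut request), verify the cost-preserving correspondence, and then lift to \mincsp{\Gamma} via the pp-definition of $\neq$ for crisp constraints and the implementation of $=$ for soft ones. In fact you are more explicit than the paper about the two-phase gadget replacement and the interaction between the phases (auxiliary $=$-constraints arising from the pp-definition of $\neq$), which the paper handles only implicitly via Proposition~\ref{prop:ppdef-impl}.
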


  The classification follows by observing that if $\Gamma$
  is neither Horn nor constant, then even $\csp{\Gamma}$ is NP-hard.

  \begin{theorem}
    \label{thm:eq-mincsp-npn-dichotomy}
    Let $\Gamma$ be an equality constraint language.
    \mincsp{\Gamma} is solvable in polynomial time if
    $\Gamma$ is constant or strictly negative,
    and is NP-hard otherwise.
  \end{theorem}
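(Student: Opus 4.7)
The plan is to handle the two tractable cases by exhibiting trivial satisfying assignments, and to handle all remaining cases by reducing \textsc{Edge Multicut} to $\mincsp{\Gamma}$ via the two lemmas above. First, if $\Gamma$ is constant then every relation contains $(1,\ldots,1)$, so the all-equal assignment satisfies every constraint of every instance and the cost is always $0$. If $\Gamma$ is strictly negative, then every relation in $\Gamma$ admits a CNF definition using only negative literals; since we may assume no redundant arguments, every such relation contains the tuple $(1,2,\ldots,r)$, so assigning pairwise distinct values to the variables of an instance satisfies all constraints and the cost is again $0$. In both cases $\mincsp{\Gamma}$ is trivially in $\mathrm{P}$.

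For the hardness direction, assume $\Gamma$ is neither constant nor strictly negative. I would split on whether $\Gamma$ is Horn. If $\Gamma$ is not Horn, then by Theorem~\ref{thm:eq-csp-dichotomy} already $\csp{\Gamma}$ is NP-hard, which immediately gives NP-hardness of $\mincsp{\Gamma}$ (since deciding cost $\leq 0$ is just $\csp{\Gamma}$). If $\Gamma$ is Horn, pick a relation $R_1 \in \Gamma$ that is not constant and a relation $R_2 \in \Gamma$ that is not strictly negative (where $R_2$ is Horn, as $\Gamma$ is Horn). By Lemma~\ref{lem:neither-const-nor-neg}(1), $R_1$ implements $x_1 \neq x_2$ over $\Gamma$, and by Lemma~\ref{lem:neither-const-nor-neg}(2), $R_2$ implements $x_1 = x_2$. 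In particular, $\Gamma$ implements binary equality and pp-defines binary disequality (since implementations are in particular pp-definitions). Lemma~\ref{lem:multicut-to-mincsp} then provides a cost-preserving reduction from \textsc{Edge Multicut} to $\mincsp{\Gamma}$, and since \textsc{Edge Multicut} is NP-hard, so is $\mincsp{\Gamma}$.

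The only real content here is the two lemmas cited, both of which are stated (but not yet proved) before the theorem; the theorem itself is a straightforward case analysis combining them with the known CSP dichotomy of Bodirsky and K\'ara. The main obstacle is therefore not in this proof but in establishing Lemmas~\ref{lem:neither-const-nor-neg} and~\ref{lem:multicut-to-mincsp}; here I simply need to verify that the two witness relations $R_1, R_2$ can be chosen independently from $\Gamma$ and that implementing $=$ together with pp-defining $\neq$ is precisely the hypothesis of Lemma~\ref{lem:multicut-to-mincsp}, which is immediate.
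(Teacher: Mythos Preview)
Your proof is correct and follows essentially the same approach as the paper: the tractable cases are dispatched by the trivial all-equal and all-distinct assignments, and hardness is obtained by splitting on whether $\Gamma$ is Horn, invoking Theorem~\ref{thm:eq-csp-dichotomy} in the non-Horn case and Lemmas~\ref{lem:neither-const-nor-neg} and~\ref{lem:multicut-to-mincsp} in the Horn case. The paper's own argument is essentially the one-line remark preceding the theorem, pointing to exactly these ingredients.
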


  \subsection{Parameterized Complexity of Equality MinCSP}

  We classify parameterized complexity of \mincsp{\Gamma} 
  for finite equality constraint language $\Gamma$. 
  By Theorem~\ref{thm:eq-mincsp-npn-dichotomy},
  we may focus on Horn equality languages $\Gamma$ that are neither constant nor strictly negative.
  To present the results, we need more definitions.

  \begin{definition}
    Let $R$ be an equality constraint relation.
    \begin{itemize}
    \item $R$ is \emph{negative} if it is definable by
      a CNF formula with positive literals occurring only in singleton clauses.
    \item $R$ is \emph{conjunctive} if it is definable by
      a CNF formula without disjunction.
    \item $R$ is \emph{split} if it is definable by a CNF formula
      $\bigwedge_{p,p' \in P} (x_p = x_{p'}) \land \bigwedge_{p \in P, q \in Q} (x_p \neq x_q)$,
      where $R$ is a relation of arity $p + q$ and $P \uplus Q$ is a partition of the indices $\{1,\dots,p+q\}$.
    \end{itemize}
  \end{definition}

  \begin{table}
    \begin{center}
      
    \end{center}
    \caption{Several Horn relations $R$ and the complexity of $\mincsp{R,=,\neq}$.
      The first part of the tables contains all ternary Horn relations up to permutation of indices.}
    \label{tab:horn-names}
  \end{table}  

  We remark that negative relations are Horn,
  conjunctive relations are negative, and
  split relations are conjunctive.
  We also define several important relations, see Table~\ref{tab:horn-names}.
  In particular, we need the `not-all-equal'
  relation ${\rel{NAE}_3(x_1, x_2, x_3) \equiv (x_1 \neq x_2 \lor x_2 \neq x_3)}$
  to state the main theorem.

  \begin{theorem} \label{thm:fpt-class}
    Let $\Gamma$ be a finite Horn equality constraint language.
    Assume $\Gamma$ is neither constant nor strictly negative.
    \begin{enumerate}
    \item \label{thm:fpt-class:not-ess-neg}
      If $\Gamma$ is not negative, then \mincsp{\Gamma} is W[2]-hard.
    \item \label{thm:fpt-class:ess-neg}
      If $\Gamma$ is negative and contains any relation that is neither split nor $\rel{NEQ}_3$, 
      then \mincsp{\Gamma} is W[1]-hard.
    \item \label{thm:fpt-class:split-nae}
      If $\Gamma$ contains only split relations and $\rel{NEQ}_3$, then \mincsp{\Gamma} is in FPT.
    \end{enumerate}
  \end{theorem}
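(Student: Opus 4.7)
The plan is to handle the three cases in turn, using expressibility (pp-definitions and implementations) to reduce to canonical base problems. Throughout, since $\Gamma$ is Horn but neither constant nor strictly negative, Lemma~\ref{lem:neither-const-nor-neg} together with Proposition~\ref{prop:ppdef-impl} lets me assume $=,\neq \in \Gamma$ without loss of generality. The three cases partition the remaining space: part~(\ref{thm:fpt-class:not-ess-neg}) covers all non-negative $\Gamma$; parts~(\ref{thm:fpt-class:ess-neg}) and~(\ref{thm:fpt-class:split-nae}) together cover the negative case, split according to whether every relation is already split or $\rel{NEQ}_3$.

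For part~(\ref{thm:fpt-class:not-ess-neg}), if $\Gamma$ is not negative then some $R \in \Gamma$ has, in every equivalent CNF definition, a clause with at least two positive literals. Following the algebraic analysis of Bodirsky, Chen, and Pinsker~\cite{BodirskyCP10equality}, I would show that $\{R,=,\neq\}$ pp-defines $\rel{ODD}_3$ by instantiating the non-singleton positive clause and using $\neq$-constraints to block unintended tuples. Combined with the \textsc{Hitting Set}-hardness of $\mincsp{\rel{ODD}_3,=,\neq}$ with crisp $\rel{ODD}_3$-constraints (via the chain-of-$\rel{ODD}_3$ gadget sketched in Section~\ref{sec:short}) and Proposition~\ref{prop:ppdef-impl}(1), this gives a cost-preserving reduction from \textsc{Hitting Set} and hence W[2]-hardness.

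For part~(\ref{thm:fpt-class:ess-neg}), assume $\Gamma$ is negative and let $R \in \Gamma$ be neither split nor $\rel{NEQ}_3$. I split on whether $R$ is conjunctive. If $R$ is negative but non-conjunctive, then some unavoidable clause in every CNF of $R$ has length at least two and is purely negative; by identifying and specializing arguments via auxiliary equalities, $\{R,=,\neq\}$ pp-defines either $\rel{NAE}_3$ or $R^{\lor}_{\neq,\neq}$. The former gives W[1]-hardness by the reduction from \textsc{Steiner Multicut} on triples~\cite{bringmann2016parameterized}, while the latter admits a reduction from \textsc{Multicoloured Clique} using the choice gadget $W(S)$ of Lemma~\ref{lem:wheel-gadget}, one per colour class, with non-adjacency enforced through the crisp $R^{\lor}_{\neq,\neq}$-constraints. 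If $R$ is conjunctive but neither split nor $\rel{NEQ}_3$, I associate with $R$ its two-coloured Gaifman graph on the argument positions (blue for forced equalities, red for forced disequalities) and argue structurally that failing to be split or $\rel{NEQ}_3$ forces a $2K_2$-like obstruction: two vertex-disjoint edges with no blue edge between their bipartition classes. Case-splitting on the colours of these two edges, $R$ pp-defines one of $R^{\land}_{=,=}$, $R^{\land}_{=,\neq}$, $R^{\land}_{\neq,\neq}$, or $R^{\lor}_{\neq,\neq}$, each of which reduces from \spc~\cite{dabrowski2023almost} via the paired choice-gadget construction sketched for $R^{\land}_{\neq,\neq}$ in Section~\ref{sec:short}.

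For part~(\ref{thm:fpt-class:split-nae}), I reduce \mincsp{\Gamma} to \mdt. Given an instance $I$, build a graph $G$ on the variables of $I$; for each split constraint $R(x_1,\dots,x_p,y_1,\dots,y_q)$ introduce a fresh vertex $c$ with equality edges $cx_i$ (capturing the internal equalities on $\{x_1,\dots,x_p\}$) and triples $\{c,y_j,\perp_j\}$ with a fresh isolated vertex $\perp_j$ (capturing each disequality $x_i \neq y_j$); for each $\rel{NEQ}_3(u,v,w)$-constraint add the triple $\{u,v,w\}$ to $\cT$. Soft constraints map to deletable edges or triples; crisp constraints are made undeletable by the $(k{+}1)$-fold duplication trick described in Section~\ref{sec:prelim}. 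The semantics of \mdt then mirror exactly the semantics of the original constraints, so FPT membership follows from Theorem~\ref{thm:triple-multicut-fpt}.

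The main obstacle is part~(\ref{thm:fpt-class:ess-neg}): verifying the structural dichotomy that every conjunctive relation outside $\{\text{split}\} \cup \{\rel{NEQ}_3\}$ really exhibits the required $2K_2$-obstruction in its two-coloured Gaifman graph, and, in the non-conjunctive sub-case, carving out a clean pp-definition of $\rel{NAE}_3$ or $R^{\lor}_{\neq,\neq}$ from an arbitrary unavoidable negative clause. The \mdt reduction in part~(\ref{thm:fpt-class:split-nae}) is routine, and the $\rel{ODD}_3$ bootstrap in part~(\ref{thm:fpt-class:not-ess-neg}) closely follows the existing algebraic analysis in~\cite{BodirskyCP10equality}.
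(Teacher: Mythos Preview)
Your plan matches the paper's proof almost exactly: the same reduction to $=,\neq\in\Gamma$ via Lemma~\ref{lem:neither-const-nor-neg}, the same invocation of Theorem~67 of~\cite{BodirskyCP10equality} together with Lemma~\ref{lem:hitting-set-to-odd3} for part~(\ref{thm:fpt-class:not-ess-neg}), the same conjunctive/non-conjunctive split in part~(\ref{thm:fpt-class:ess-neg}) (yielding $\rel{NAE}_3$ or $R^{\lor}_{\neq,\neq}$ in the non-conjunctive sub-case and an $(=,=)$-, $(=,\neq)$- or $(\neq,\neq)$-relation via the two-coloured Gaifman graph in the conjunctive sub-case), and the same reduction to \mdt for part~(\ref{thm:fpt-class:split-nae}).

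Two small inaccuracies to fix. First, a Horn relation that fails to be negative has, in every CNF, a non-singleton clause containing \emph{exactly one} positive literal together with some negative literals; ``at least two positive literals'' would contradict Horn. Second, in the conjunctive sub-case you need an \emph{implementation} of the target $R^{\land}$-relation, not merely a pp-definition, because the reductions from \spc use these constraints as soft constraints; fortunately the paper obtains the relation by projecting $R$ onto four coordinates, which is a one-constraint gadget and hence automatically an implementation. (Also, $R^{\lor}_{\neq,\neq}$ is not conjunctive, so it should not appear in your list for the conjunctive sub-case.)
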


  We present the proof of Theorem~\ref{thm:fpt-class} in four parts,
  one for part~\ref{thm:fpt-class:not-ess-neg},
  two parts for~\ref{thm:fpt-class:ess-neg},
  and one part for~\ref{thm:fpt-class:split-nae},
  The second point is split into two cases 
  based on whether $\Gamma$ is conjunctive or not.
  Throughout the section,
  we assume by Lemma~\ref{lem:neither-const-nor-neg} that $\Gamma$ implements $=$ and $\neq$ 

  \paragraph{Case~\ref{thm:fpt-class:not-ess-neg}: Not negative.}

  For non-negative languages, we recall a result of~\cite{BodirskyCP10equality}.

  \begin{theorem}[Theorem~67~of~\cite{BodirskyCP10equality}]
    \label{thm:not-essen-neg}
    If $R$ is a Horn equality constraint relation
    that is not negative,
    then \mincsp{R, =, \neq} pp-defines $\rel{ODD}_3$.
  \end{theorem}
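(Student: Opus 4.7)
The plan is to split the proof into two pp-definability steps: first, show that $\{R, =, \neq\}$ pp-defines the canonical ``mixed Horn'' relation $R^*(u_1, u_2, v_1, v_2) \equiv (u_1 = u_2 \lor v_1 \neq v_2)$; second, use $R^*$ together with $=$ to pp-define $\rel{ODD}_3$ directly. Composition of pp-definitions then gives the theorem.

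For the second step, I would read $R^*(u_1, u_2, v_1, v_2)$ as the implication $v_1 = v_2 \Rightarrow u_1 = u_2$, and verify the explicit pp-definition
\[
  \rel{ODD}_3(x_1, x_2, x_3) \equiv R^*(x_2, x_3, x_1, x_2) \land R^*(x_1, x_2, x_2, x_3) \land R^*(x_1, x_2, x_1, x_3).
\]
The three conjuncts encode the implications $x_1 = x_2 \Rightarrow x_2 = x_3$, $x_2 = x_3 \Rightarrow x_1 = x_2$ and $x_1 = x_3 \Rightarrow x_1 = x_2$, which together force that either all pairwise equalities among $x_1, x_2, x_3$ hold, or none do. A case analysis on the partition type of $(x_1, x_2, x_3)$ confirms that the satisfying assignments are precisely the tuples in $\rel{ODD}_3 = \{(a, a, a) : a \in \NN\} \cup \{(a, b, c) : |\{a, b, c\}| = 3\}$.

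For the first step, fix a \emph{minimal} Horn CNF for $R$. Since $R$ is not negative, some clause $C$ must be non-singleton and contain a (necessarily unique) positive literal; write $C \equiv (x_{i_0} = x_{j_0} \lor \bigvee_{k=1}^s x_{i_k} \neq x_{j_k})$ with $s \geq 1$. By minimality, there is a witness tuple $\ba \in R$ that satisfies every other clause but satisfies $C$ only via its positive literal, so $a_{i_0} \neq a_{j_0}$ and $a_{i_k} = a_{j_k}$ for all $k \geq 1$. I would then build a pp-gadget with primary variables $x_1, x_2, x_3, x_4$ identified with $x_{i_0}, x_{j_0}, x_{i_1}, x_{j_1}$; for $k \geq 2$, identify $x_{i_k}$ with $x_{j_k}$ so that those negative literals of $C$ become false and are removed from $C$. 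For every remaining variable of $R$, introduce an auxiliary copy and add equality constraints mirroring the equivalence classes of $\ba$, together with $\neq$-constraints separating classes wherever $\ba$ forces a disequality. The single $R$-constraint on the resulting assignment then collapses all clauses except $C$, and $C$ itself reduces to $(x_1 = x_2 \lor x_3 \neq x_4)$.

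The main obstacle is proving that the gadget projects exactly onto $R^*$. Forward containment is automatic, since the reduced clause $C$ is precisely the definition of $R^*$. The harder direction is backward containment: for every tuple $(t_1, t_2, t_3, t_4) \in R^*$ we must extend it to a satisfying assignment of the full gadget. The extension is to mimic $\ba$'s equivalence pattern on the auxiliary variables, so that every clause other than $C$ is satisfied as it is by $\ba$, while $C$ is satisfied by the chosen tuple. A delicate issue arises if the minimal CNF contains multiple non-singleton positive clauses that share variables with $C$, since the pattern of $\ba$ must simultaneously witness all of them; this can be addressed by iteratively refining the choice of $\ba$ via the minimality of each offending clause, ensuring every such clause is satisfied via its positive literal under the reduction.
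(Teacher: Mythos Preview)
The paper does \emph{not} prove this statement; it is quoted verbatim as Theorem~67 of Bodirsky, Chen and Pinsker~\cite{BodirskyCP10equality} and used as a black box. So there is no in-paper proof to compare against, and your proposal must be judged on its own.

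Your Step~2 is correct and pleasant: the three implications $x_1{=}x_2 \Rightarrow x_2{=}x_3$, $x_2{=}x_3 \Rightarrow x_1{=}x_2$, $x_1{=}x_3 \Rightarrow x_1{=}x_2$ do pp-define $\rel{ODD}_3$ from $R^*$.

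Step~1, however, has real gaps beyond the ones you flag. First, a sign slip: if $\ba$ satisfies $C$ \emph{only} via its positive literal $x_{i_0}=x_{j_0}$, then $a_{i_0}=a_{j_0}$ (not $\neq$) and $a_{i_k}=a_{j_k}$ for $k\ge 1$. This matters, because your backward-containment plan is to ``mimic $\ba$'s equivalence pattern on the auxiliary variables''. But $\ba$ collapses $x_{i_0}$ with $x_{j_0}$, so mimicking $\ba$ cannot realise any $R^*$-tuple with $u_1\neq u_2$; you need a second witness (e.g.\ the minimality witness for the literal $x_{i_1}\neq x_{j_1}$, which has $b_{i_1}\neq b_{j_1}$) and an argument that the two witnesses can be blended consistently. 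Second, adding $\neq$-constraints ``wherever $\ba$ forces a disequality'' will in general impose disequalities among the four primary variables themselves, cutting out legitimate $R^*$-tuples such as $(1,1,1,1)$. Third, nothing in your construction prevents other clauses of the minimal CNF from surviving the identifications and projecting nontrivially onto $x_1,\dots,x_4$; your claim that ``all clauses except $C$ collapse'' is asserted, not argued. The actual proof in \cite{BodirskyCP10equality} goes via the polymorphism side (characterising the clone of operations preserving a non-negative Horn relation and showing $\rel{ODD}_3$ lies in the corresponding relational clone), which sidesteps these syntactic difficulties entirely. A direct syntactic route along your lines is plausible but would require substantially more care than the sketch provides.
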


  Combined with the following lemma and the fact that \textsc{Hitting Set} is W[2]-hard, 
  this yields Theorem~\ref{thm:fpt-class}.\ref{thm:fpt-class:not-ess-neg}.

  \begin{lemma}[Section~\ref{ssec:odd3-nae3}]
    \label{lem:hitting-set-to-odd3}
    There is a polynomial-time reduction that takes an instance $(V, \cE, k)$ of
    \textsc{Hitting Set}, and produces an instance $(I, k)$ of $\mincsp{\rel{ODD}_3, =, \neq}$
    where every $\rel{ODD}_3$-constraint is crisp.
    Furthermore,
    $(V, \cE, k)$ is a yes-instance if and only if
    $(I, k)$ is a yes-instance.
  \end{lemma}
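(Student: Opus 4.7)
The plan is to build a crisp ``not-all-equal'' gadget, one per set $e \in \cE$, whose satisfiability on $\{x_i : i \in e\}$ characterises exactly the assignments in which those variables are not all equal; the overall instance then uses a soft ``anchor'' constraint $(x_i = z)$ for every $i \in V = \{1, \ldots, n\}$, so that deleting such a constraint corresponds to choosing $i$ into the hitting set. Concretely, for $e = \{a_1, \ldots, a_\ell\}$ with $\ell \geq 2$, I would introduce fresh variables $y_2^e, \ldots, y_\ell^e$ together with crisp constraints $\rel{ODD}_3(x_{a_1}, x_{a_2}, y_2^e)$, $\rel{ODD}_3(y_{i-1}^e, x_{a_i}, y_i^e)$ for $3 \leq i \leq \ell$, and $x_{a_1} \neq y_\ell^e$; singletons $\{a\}$ are handled directly by adding a crisp $x_a \neq z$. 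The reduction is polynomial, uses only crisp $\rel{ODD}_3$-constraints besides the soft anchors, and keeps the parameter equal to $k$.

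The technical core is the gadget correctness claim. The key observation is that $\rel{ODD}_3$ has only the two orbits $\{(1,1,1)\}$ and $\{(1,2,3)\}$ under permutations of $\NN$, so $\rel{ODD}_3(u,v,w)$ forces $w = u$ whenever $u = v$ and $w \notin \{u,v\}$ whenever $u \neq v$. If $x_{a_1} = \cdots = x_{a_\ell} = v$, a one-line induction along the chain forces $y_i^e = v$ throughout, contradicting $x_{a_1} \neq y_\ell^e$. Conversely, if the $x_{a_i}$'s are not all equal, letting $j$ be the smallest index with $x_{a_j} \neq x_{a_1}$, I would fill in the chain by propagating the value $x_{a_1}$ up to index $j-1$ and then, from $j$ onwards, choosing each $y_i^e$ to be a fresh domain value not used anywhere else. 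The infinite equality domain supplies the fresh values, and an easy induction shows $y_i^e \neq x_{a_1}$ for $i \geq j$, hence the crisp $\neq$-constraint is satisfied.

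With the gadget in hand, the two reduction directions are immediate. Forward: from a hitting set $S \subseteq V$ of size at most $k$, delete $(x_i = z)$ for $i \in S$ and set $z := 0$, $x_i := 0$ for $i \notin S$, $x_i := 1$ for $i \in S$; since every $e$ is hit, the values on $x_{a_1}, \ldots, x_{a_\ell}$ are not all equal, so each gadget is independently satisfiable on its fresh $y$-variables. Backward: from a set $T$ of at most $k$ deleted soft constraints, the set $S := \{i : (x_i = z) \in T\}$ must hit every $e \in \cE$, since otherwise $x_{a_i} = z$ for all $a_i \in e$ in the surviving assignment and the corresponding gadget would be inconsistent. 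The main obstacle is the ``satisfiable whenever not all equal'' direction of the gadget analysis: one must show that the $\rel{ODD}_3$-chain can be ``diverted'' away from the value $x_{a_1}$ the moment it encounters a differing $x_{a_i}$, which is precisely the step that relies on having an infinite domain of fresh values at our disposal.
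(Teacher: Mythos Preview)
Your construction and overall argument match the paper's proof essentially verbatim: the same soft anchors $(x_i=z)$, the same $\rel{ODD}_3$-chain gadget per set, and the same gadget analysis (the chain collapses to $x_{a_1}$ when all $x_{a_i}$ coincide, and can be ``diverted'' to fresh values once a discrepancy appears).

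There is, however, a genuine gap in your forward direction. You propose the assignment $z:=0$, $x_i:=0$ for $i\notin S$, and $x_i:=1$ for $i\in S$, and then claim ``since every $e$ is hit, the values on $x_{a_1},\ldots,x_{a_\ell}$ are not all equal.'' This is false when $e\subseteq S$: then every $x_{a_i}$ receives the value $1$, so they \emph{are} all equal, your own chain argument forces $y_\ell^e=1=x_{a_1}$, and the crisp constraint $x_{a_1}\neq y_\ell^e$ is violated. Nothing prevents a hitting set from containing an entire set (e.g.\ $\cE=\{\{1,2\},\{2,3\},\{1,3\}\}$ with $S=\{1,2\}$).

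The fix is immediate and is exactly what the paper does: assign the elements of $S$ pairwise \emph{distinct} values, say $x_i:=i$ for $i\in S$ and $x_i:=0$ otherwise, with $z:=0$. Then for any $e$ that is hit, either some $a_j\in e$ lies outside $S$ (value $0$) and some lies inside (value $\neq 0$), or $e\subseteq S$ and the values are already pairwise distinct; in either case the $x_{a_i}$'s are not all equal and your gadget-satisfiability argument goes through unchanged. With this correction the proof is complete and coincides with the paper's.
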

  \begin{proof}[Proof Sketch]
    Let $V = \{1,\dots,n\}$, and
    construct an instance $(I, k)$ of $\mincsp{\rel{ODD}_3, =, \neq}$
    by introducing variables $x_1,\dots,x_n$ and $z$, and 
    add soft constraints $x_i = z$ for all $i \in [n]$.
    For every subset $e = \{a_1, \dots, a_{\ell}\} \in \cE$,
    introduce auxiliary variables $y_2, \dots, y_{\ell}$
    and crisp constraints
    $\rel{ODD}_3(x_{a_1}, x_{a_2}, y_2)$,
    $\rel{ODD}_3(y_{i-1}, x_{a_i}, y_{i})$ for all $3 \leq i \leq \ell$, and
    $x_{a_1} \neq y_\ell$.
    Correctness follows by observing that
    crisp constraints introduced for each $e \in \cE$ imply that
    not all variables $x_{a_1}, \dots, x_{a_\ell}$ are equal.
    Moreover, to satisfy these constraints,
    it is sufficient to break a soft $x_{a_i} = z$ constraint.
    Thus, $X \subseteq V$ is a hitting set
    if and only if 
    $I - \{x_i : i \in X\}$ is consistent.
  \end{proof}

  \paragraph{Case~\ref{thm:fpt-class:ess-neg}a: Negative, Not Conjunctive}

  Recall ternary `not-all-equals' $\rel{NAE}_3$ and
  the relation $R^{\lor}_{\neq,\neq}$
  from Table~\ref{tab:horn-names}.
  Our hardness results rely on the following two lemmas.

  \begin{lemma}[Section~\ref{ssec:split-paired-cut}] 
    \label{lem:disjneqneq-hard}
    $\mincsp{R^{\lor}_{\neq, \neq}, =}$ is W[1]-hard
    even restricted to instances with only crisp $R^{\lor}_{\neq, \neq}$-constraints.
  \end{lemma}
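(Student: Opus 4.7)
The plan is to reduce from \textsc{Multicoloured Independent Set}, which is W[1]-hard. Given an input graph $G$ with color classes $V_1, \dots, V_k$ (each an independent set in $G$), I would build an instance of $\mincsp{R^\lor_{\neq,\neq}, =}$ with budget $2k$ as follows. For each color class $V_i$, I install a modified copy of the choice gadget $W(V_i)$ of Lemma~\ref{lem:wheel-gadget}, but with the soft disequality constraints removed; only the soft equality cycle on $2|V_i|+1$ variables and the crisp endpoint disequality $v_0 \neq v_t$ remain. The crisp disequality I implement by a single crisp constraint $R^\lor_{\neq,\neq}(v_0, z_1, v_t, z_2)$ on two fresh auxiliary variables, exploiting that the relation already entails the $x_1 \neq x_3$ conjunct. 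Each vertex $u \in V_i$ sits at a distinguished cycle position with forward partner $f(u)$, and I call $u$ \emph{activated} by a deletion set if $u$ and $f(u)$ remain in the same connected component of the surviving equality graph.

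For each edge $uv \in E(G)$ with $u \in V_i$ and $v \in V_j$, $i \neq j$, I add a crisp constraint $R^\lor_{\neq,\neq}(u, f(u), v, f(v))$. The correctness argument then has two directions. Any budget-$2k$ deletion must cut each modified gadget's cycle into two components separating $v_0$ from $v_t$, requiring at least two equality deletions per gadget (one per $v_0$-to-$v_t$ chain), so exactly two per gadget. A short combinatorial case analysis on the cut positions shows that the set $S_i$ of activated vertices in class $V_i$ is always non-empty. The $R^\lor_{\neq,\neq}$ edge constraints enforce $\neg(u = f(u) \wedge v = f(v))$ for every edge $uv$, so no edge has both endpoints activated, and selecting any $u_i \in S_i$ yields a multicoloured independent set. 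For the reverse direction I use the canonical ``select $u_i$'' deletion adapted from Lemma~\ref{lem:wheel-gadget}: given a multicoloured independent set $\{u_1, \dots, u_k\}$, deleting the two cycle edges immediately before $u_i$ and $f(u_i)$ produces balanced components of sizes $|V_i|+1$ and $|V_i|$, activating exactly $u_i$ in class $V_i$ and satisfying every edge constraint.

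The main obstacle I expect is handling the four auxiliary crisp disequalities $u \neq v$, $u \neq f(v)$, $f(u) \neq v$, $f(u) \neq f(v)$ that come bundled with each $R^\lor_{\neq,\neq}$ edge constraint. Since $u, f(u)$ live in gadget $i$ and $v, f(v)$ in gadget $j$, I would satisfy these by giving each gadget its own palette of domain values and each connected component of each gadget its own value from that palette, which is possible because the domain is infinite; this uniformly secures all inter-gadget disequality constraints. A secondary subtlety is that a budget-optimal deletion can split a gadget into unbalanced components and thereby activate several vertices of a class at once, but this only makes the edge constraints harder to satisfy and never invalidates the reduction: any single selector $u_i \in S_i$ per class already yields a valid independent transversal.
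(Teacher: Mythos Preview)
Your overall strategy (reduce from \textsc{Multicoloured Independent Set}, build one choice-gadget cycle per colour class, encode edges by crisp $R^{\lor}_{\neq,\neq}$-constraints) is exactly the route the paper takes. However, your stripped-down gadget is too weak, and the claim ``a short combinatorial case analysis shows $S_i$ is always non-empty'' is false.

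Concretely, take $t=|V_i|$ and the cycle on $v_0,\dots,v_{2t}$ with only the single crisp constraint $v_0\neq v_t$. Deleting the two edges $v_{a-1}v_a$ and $v_{a+t-1}v_{a+t}$ (for any $1\le a\le t$) costs~$2$ and separates $v_0$ from $v_t$, yet the two components are $\{v_a,\dots,v_{a+t-1}\}$ and $\{v_{a+t},\dots,v_{a-1}\}$. For every $j\in\{1,\dots,t\}$ the vertices $v_j$ and $f(v_j)=v_{j+t}$ lie in different components, so \emph{no} class vertex is activated. (The only activated position is $a+t\notin\{1,\dots,t\}$.) Hence with your gadget the MinCSP instance is a yes-instance for every input graph, and soundness fails. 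Note that your own ``canonical'' deletion, the edges immediately before $u_i$ and before $f(u_i)$, is exactly this bad case with $a=j$; it does \emph{not} activate $u_i$.

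The paper avoids this by keeping, in addition to $v_0\neq v_t$, the disequalities $v_i\neq f(v_i)$ for all $i\in\{t+1,\dots,2t\}$ as \emph{crisp} constraints (each easily realised by a crisp $R^{\lor}_{\neq,\neq}$-constraint with two fresh auxiliaries, just as you do for $v_0\neq v_t$). With these extra crisp constraints the bad deletion above violates $v_{a+t}\neq v_{a-1}$ and is ruled out; one then checks that any budget-$2$ deletion leaves exactly one activated position, and it necessarily lies in $\{1,\dots,t\}$. The correct canonical deletion for selecting $u_i=v_j$ is then $v_{j-1}v_j$ and $v_{j+t}v_{j+t+1}$ (before $v_j$ and \emph{after} $f(v_j)$), which puts $v_j,\dots,v_{j+t}$ in one component and activates precisely~$j$.
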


  \begin{lemma}[Section~\ref{ssec:odd3-nae3}]
    \label{lem:nae3-hard}
    $\mincsp{\rel{NAE}_3, =}$ is W[1]-hard
    even restricted to instances with only crisp $\rel{NAE}_3$-constraints.
  \end{lemma}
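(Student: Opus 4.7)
The plan is to give a direct cost-preserving reduction from \textsc{Edge Steiner Multicut} with terminal sets of size exactly three. By Bringmann et al.~\cite{bringmann2016parameterized}, that problem is W[1]-hard parameterized by the cut budget, so this will immediately yield W[1]-hardness of $\mincsp{\rel{NAE}_3,=}$ with crisp $\rel{NAE}_3$-constraints. The idea is to exploit that satisfying $\rel{NAE}_3(a,b,c)$ under an equality language is equivalent to saying that $a,b,c$ do not all receive the same value, which in turn models the requirement that the three terminals are not all in one component.

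More concretely, given an instance $(G, \{T_1,\ldots,T_t\}, k)$ with each $T_i \in \binom{V(G)}{3}$, I would build an instance $(I,k)$ of $\mincsp{\rel{NAE}_3,=}$ as follows. Introduce one variable per vertex of $G$, keeping the names. For every edge $uv \in E(G)$, add a soft equality constraint $u = v$. For every terminal triple $T_i = \{a,b,c\}$, add a crisp constraint $\rel{NAE}_3(a,b,c)$. The parameter $k$ is unchanged, so this is trivially an FPT-reduction (in fact a polynomial-time cost-preserving reduction, with $\cost(I)$ equal to the minimum Steiner multicut size).

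The correctness argument has two directions. For the forward direction, given an edge Steiner multicut $F$ with $|F| \le k$, assign distinct values to the connected components of $G - F$; then exactly the edges in $F$ give violated soft constraints, and each crisp constraint $\rel{NAE}_3(a,b,c)$ is satisfied because $F$ separates at least one pair from $\{a,b,c\}$, forcing at least two distinct values among them. For the backward direction, given an assignment $\alpha$ of cost at most $k$, take $F = \{uv \in E(G) : \alpha(u) \neq \alpha(v)\}$; then $|F| \le k$, and for every terminal set $\{a,b,c\}$ the satisfied crisp $\rel{NAE}_3(a,b,c)$ guarantees that not all three values $\alpha(a), \alpha(b), \alpha(c)$ coincide. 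Since any path in $G - F$ has $\alpha$ constant on it, the two terminals in $\{a,b,c\}$ with distinct $\alpha$-values are separated by $F$, so $F$ is a Steiner multicut.

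There is no genuine obstacle here: the reduction is essentially a direct translation, and the only point to double-check is that $\rel{NAE}_3(a,b,c)$ being crisp in the reduced instance corresponds exactly to forbidding the forbidden tuple $(1,1,1)$ in any assignment, which is precisely the Steiner cut requirement for a triple. The one design choice worth noting is that the $\rel{NAE}_3$-constraints are kept crisp, which is needed to use the W[1]-hardness of Steiner Multicut (where cut requests cannot be dropped), and which is also exactly what the lemma statement allows.
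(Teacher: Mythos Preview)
Your proposal is correct and essentially identical to the paper's proof: both reduce from \textsc{Edge Steiner Multicut} with terminal triples by introducing a variable per vertex, a soft equality constraint per edge, and a crisp $\rel{NAE}_3$-constraint per terminal triple, with the same correctness argument in both directions.
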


  Armed with Lemmas~\ref{lem:not-conj-nae-3-or-disjneqneq}~and~\ref{lem:nae3-hard},
  we show that every negative non-conjunctive language 
  pp-defines $\rel{NAE}_3$ or $R^{\lor}_{\neq,\neq}$,
  thus proving
  Theorem~\ref{thm:fpt-class}.\ref{thm:fpt-class:ess-neg}
  for all negative non-conjunctive languages.

  \begin{lemma}[Section~\ref{ssec:expressive}]
    \label{lem:not-conj-nae-3-or-disjneqneq}
    Let $R$ be a negative non-conjunctive 
    equality constraint relation.
    Then $\{R, =, \neq\}$ 
    pp-defines $R^{\lor}_{\neq,\neq}$ or $\rel{NAE}_3$.
  \end{lemma}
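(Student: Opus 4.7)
The plan is to exploit the CNF structure of $R$: since $R$ is negative, every non-singleton clause in any CNF definition of $R$ is a pure disjunction of disequality literals, and non-conjunctivity guarantees that at least one non-singleton clause must appear in every CNF representation. I would fix a CNF definition of $R$ and, within it, a shortest non-singleton clause $C^* = \bigvee_{k=1}^{m}(x_{i_k} \neq x_{j_k})$ with $m \geq 2$, minimizing over all CNF representations of $R$. Then I would split into two cases based on whether any two literals of $C^*$ share a variable: in the \emph{sharing} case, after reordering we have $j_1 = i_2$ and the first two literals form an $\rel{NAE}_3$-like pattern on $(x_{i_1}, x_{j_1}, x_{j_2})$, so the target is $\rel{NAE}_3$; in the \emph{disjoint} case, the first two literals involve four pairwise distinct variables and the target is $R^{\lor}_{\neq,\neq}$.

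In both cases the pp-definition of the target would take the form $\exists \bar y.\ R(\bar z) \land \Phi$, where $\bar z$ places the primary variables of the target at positions $\{i_1,j_1\}$ and $\{i_2,j_2\}$ of $R$ and fresh auxiliary variables $\bar y$ at every other position. The conjunctive part $\Phi$ consists of (i) equalities $z_{i_k} = z_{j_k}$ for $k \geq 3$ that annihilate the extra literals of $C^*$, leaving only a $2$-literal disjunction on the primary variables, and (ii) in the disjoint case, the four cross-disequalities needed by $R^{\lor}_{\neq,\neq}$, which are available since $\neq \in \Gamma$. Positive singleton clauses $(x_a = x_b)$ of the fixed CNF of $R$ can be handled by a preprocessing step: each such clause is absorbed by identifying positions $a$ and $b$ of $R$ before writing down the pp-definition, so we may assume no such singletons remain.

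The forward direction (pp-definition implies target) is immediate from (i) and (ii): once all but the first two literals of $C^*$ are annihilated, the requirement $R \models C^*$ collapses to exactly the disjunction defining the target. The main obstacle is the backward direction: given a primary-variable assignment satisfying the target, one must extend it to $\bar y$ so that \emph{every} clause of $R$ is satisfied. My strategy would be to assign each auxiliary variable a pairwise distinct, otherwise fresh value, which trivially satisfies every purely negative non-singleton clause of $R$ distinct from $C^*$ as well as every negative singleton; $C^*$ itself is satisfied by the primary variables by hypothesis. The delicate point is to argue that imposing the equalities from (i) never makes this extension impossible: if adding $z_{i_k} = z_{j_k}$ for some $k \geq 3$ triggered a chain of forced identifications that violated another clause of $R$, the resulting residue would exhibit a strictly shorter non-singleton clause implied by $R$, contradicting the minimality of $C^*$. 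Making this minimality argument rigorous, via a case analysis on how the added equalities interact with the remaining clauses, is the principal technical hurdle.
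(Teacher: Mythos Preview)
Your approach is structurally the same as the paper's: both pick a shortest negative non-singleton clause $C$ in a minimal CNF of $R$, freeze all but two of its literals via added equalities, and argue that what remains on the three or four surviving positions is $\rel{NAE}_3$ or $R^{\lor}_{\neq,\neq}$. The difference lies entirely in how the backward direction is handled.

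The paper does not attempt your explicit extension argument. Instead, it defines $R'$ as the projection of $R \land \bigwedge_{k \geq 3}(x_{i_k}=x_{j_k})$ onto the positions $\{i_1,j_1,i_2,j_2\}$ and invokes a structural result of Bodirsky, Chen and Pinsker (Proposition~68 of~\cite{BodirskyCP10equality}, restated in the paper as Proposition~\ref{prop:projection}): projections of negative relations are again negative. Once you know $R'$ is negative and of arity $3$ or $4$, you simply enumerate the finitely many negative relations on that many variables. Minimality of $C$ is then only used to rule out that $R'$ implies $x_{i_1}\neq x_{j_1}$ or $x_{i_2}\neq x_{j_2}$ individually; this forces $R'$ to be $\rel{NAE}_3$ in the ternary case, and in the quaternary case forces the clause $(x_1\neq x_2)\lor(x_3\neq x_4)$, after which the four cross-disequalities are conjoined explicitly to obtain $R^{\lor}_{\neq,\neq}$.

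Your ``principal technical hurdle'' is essentially a hands-on reproof of this projection-closure fact for the specific pp-definition at hand. The sketch you give is plausible but incomplete: the worry is not only clauses that become \emph{strictly shorter} after the added equalities, but also clauses $D \neq C^*$ of the same length whose surviving literals after the merges land entirely on primary positions that the target allows to coincide (namely $\{i_1,j_1\}$ or $\{i_2,j_2\}$). Ruling these out does ultimately follow from your minimality assumption, but the bookkeeping is exactly what the projection-closure lemma packages for you. I would recommend citing that result rather than redoing the case analysis.
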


  \paragraph{Case~\ref{thm:fpt-class:ess-neg}b: negative Conjunctive, neither Split nor $\rel{NEQ}_3$}

  Let $R$ be a conjunctive relation of arity $r$. 
  Define edge-coloured graph $G_R$ with vertices $\{1,\dots,r\}$.
  Add a blue edge \textcolor{blue}{$ij$} whenever $R$ implies \textcolor{blue}{$x_i = x_j$},
  and a red edge \textcolor{red}{$ij$} whenever $R$ implies \textcolor{red}{$x_i \neq x_j$}.
  Note that blue edges in $G_R$ form cliques because equality relation is transitive,
  and red edges connect all members of one clique to all members of another.
  The graph $G_R$ for a split relation $R$ with indices $\{1,\dots,r\}$ 
  partitioned into $P \uplus Q$ 
  consists of a clique of blue edges on $P$
  and a biclique of red edges on $(P, Q)$.
  The graph $G_R$ for $R = \rel{NEQ}_3$ is a red triangle.
  If $R$ is neither split nor $\rel{NEQ}_3$,
  then $G_R$ contains two independent edges,
  which we assume are $\{1,2\}$ and $\{3,4\}$ without loss of generality,
  and every edge $uv \in E(G_R)$ with $u \in \{1,2\}$ and $v \in \{3,4\}$ is red.
  By considering only $x_1,x_2,x_3,x_4$ as primary variables and
  depending on the colours of edges $\{1,2\}$ and $\{3,4\}$,
  the projection of $R$ onto $\{1,2,3,4\}$ is one of the relations
  \begin{align*}
    &(\textcolor{blue}{x_1 = x_2}) \land (\textcolor{blue}{x_3 = x_4}) \land 
      \textstyle \bigwedge_{i \in A, j \in B} (x_i \neq x_j), \\
    &(\textcolor{blue}{x_1 = x_2}) \land (\textcolor{red}{x_3 \neq x_4}) \land 
      \textstyle \bigwedge_{i \in A, j \in B} (x_i \neq x_j), \textnormal{or} \\
    &(\textcolor{red}{x_1 \neq x_2}) \land (\textcolor{red}{x_3 \neq x_4}) \land 
      \textstyle \bigwedge_{i \in A, j \in B} (x_i \neq x_j)
  \end{align*}
  for some $A \subseteq \{1,2\}$ and $B \subseteq \{3,4\}$.
  We refer to the relations defined by the formulas above as
  \emph{$(=,=)$-relations}, \emph{$(=,\neq)$-relations} and \emph{$(\neq,\neq)$-relations}, respectively.
  Examples of such relations with $A = B = \emptyset$ are 
  $R^{\land}_{=,=}$, $R^{\land}_{=,\neq}$ and $R^{\land}_{\neq,\neq}$
  from Table~\ref{tab:horn-names}.

  \begin{observation}
    \label{obs:double-conj}
    If $R$ is a conjunctive equality constraint relation, and
    $R$ is neither split nor $\rel{NEQ}_3$, then
    $R$ implements an $(=,=)$-relation, an $(=,\neq)$-relation
    or a $(\neq,\neq)$-relation.
  \end{observation}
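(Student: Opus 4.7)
The plan is to package the case analysis carried out in the paragraph just above with a short check that $R$ implements (rather than merely pp-defines) the resulting projection. That paragraph already identifies two independent edges of $G_R$, labels their endpoints $\{1,2\}$ and $\{3,4\}$ so that every cross edge present in $G_R$ is red, and distinguishes three sub-cases based on the colours of $\{1,2\}$ and $\{3,4\}$, reading off the projection of $R$ onto $\{x_1,x_2,x_3,x_4\}$ as an $(=,=)$-, $(=,\neq)$- or $(\neq,\neq)$-relation $R'$. It remains to verify the existence claim for the two independent edges and to certify the implementation property.

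For the existence claim, I would perform a case analysis on the blue equivalence classes $C_1,\ldots,C_m$ of $G_R$. If at least two classes have size at least two, picking one blue edge inside each works: transitivity of equality forces every cross edge to be red or absent. If exactly one class $C$ has size at least two, I claim there is a red edge among the singletons outside $C$; pairing it with any blue edge inside $C$ gives the desired two edges. Otherwise every red edge would be incident to $C$, and since every singleton carries some incident edge by the no-redundant-arguments convention, every singleton would be red-linked to $C$, making $R$ split with $P = C$ and $Q$ equal to the remaining singletons, contradicting the hypothesis. Finally, if all classes are singletons, $G_R$ is purely red; being neither a star (which would make $R$ split) nor a triangle (which would make $R = \rel{NEQ}_3$) forces the matching number of $G_R$ to be at least two, and any two independent red edges automatically have only red or absent cross edges.

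For the implementation property, take the single-constraint pp-definition $C_{R'}$ consisting of the constraint $R(x_1,\ldots,x_r)$ with primary variables $\{x_1,\ldots,x_4\}$ and auxiliary variables $\{x_5,\ldots,x_r\}$. The two pp-definition conditions of Definition~\ref{def:pp-definition} are immediate from the definition of $R'$ as the projection $\exists x_5,\ldots,x_r\,R(x_1,\ldots,x_r)$. For the extra condition in Definition~\ref{def:implementation}, if $\alpha\colon\{x_1,\ldots,x_4\}\to D$ violates $R'$, then no extension of $\alpha$ to the auxiliary variables satisfies $R$, so every extension violates the sole $R$-constraint and thus has cost exactly one.

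The main obstacle is the single-large-class sub-case of the existence analysis: ruling out that every red edge is incident to $C$. As sketched above, this is resolved by observing that otherwise $R$ would already be split with the class $C$ as the $P$-part, directly contradicting the standing hypothesis.
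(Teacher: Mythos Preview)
Your proposal is correct and follows the same route as the paper. The paper treats the observation as immediate from the preceding paragraph: it introduces the coloured graph $G_R$, asserts without proof that if $R$ is neither split nor $\rel{NEQ}_3$ then $G_R$ contains two independent edges with only red cross edges, and reads off the projection. You fill in precisely the two gaps the paper leaves implicit---the existence of the two edges (via your blue-clique case analysis) and the implementation property (via the single-constraint argument)---so your write-up is a strict elaboration of the paper's sketch rather than a different approach.
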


  We prove that $\mincsp{R, =, \neq}$ is W[1]-hard if
  $R$ is either an $(=,=)$-relation, an $(=,\neq)$-relation or a $(\neq,\neq)$-relation.
  The reductions in all three cases are from \spc (defined in Section~\ref{sec:reductions}),
  and we reuse some gadgets in the proofs, but the results are better presented as three separate lemmas.

  \begin{lemma}[Section~\ref{ssec:split-paired-cut}]
    \label{lem:eq-eq-hard}
    If $R$ is an $(=,=)$-relation, then $\mincsp{R, =, \neq}$ is W[1]-hard.
  \end{lemma}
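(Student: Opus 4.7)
The plan is to present a cost-preserving reduction from \spc, which is W[1]-hard. By Observation~\ref{obs:double-conj} together with Proposition~\ref{prop:ppdef-impl}, it suffices to prove hardness when $R$ is the canonical 4-ary $(=,=)$-relation
\[
R(x_1,x_2,x_3,x_4) \;\equiv\; (x_1=x_2)\,\land\,(x_3=x_4)\,\land\,\bigwedge_{i \in A,\, j \in B} (x_i \neq x_j),
\]
for fixed but arbitrary $A \subseteq \{1,2\}$ and $B \subseteq \{3,4\}$; the argument should be uniform in $A$ and $B$.

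Given an instance $(G_1, G_2, s_1, t_1, s_2, t_2, \cP, k)$ of \spc, I would construct an instance $(I,k)$ of $\mincsp{R,=,\neq}$ by introducing one variable for every vertex in $V(G_1) \sqcup V(G_2)$ (assumed disjoint). For every edge $uv \in E(G_1) \cup E(G_2)$ that does \emph{not} occur in any pair of $\cP$, add a crisp constraint $u=v$. For every pair $\{e_1, e_2\} \in \cP$ with $e_1 = u_1v_1 \in E(G_1)$ and $e_2 = u_2v_2 \in E(G_2)$, add a soft constraint $R(u_1, v_1, u_2, v_2)$, placing the $G_1$-endpoints at positions $1,2$ and the $G_2$-endpoints at positions $3,4$. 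Finally, add the two crisp disequalities $s_1 \neq t_1$ and $s_2 \neq t_2$, leaving the budget at $k$.

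For the backward direction, if $I$ admits an assignment of cost at most $k$, let $X \subseteq \cP$ be the set of pairs whose $R$-constraint is violated, so $|X| \leq k$. Every satisfied $R$-constraint $R(u_1,v_1,u_2,v_2)$ forces $u_1=v_1$ and $u_2=v_2$; combined with the crisp equalities for non-paired edges, any $s_i$-$t_i$ path in $G_i$ avoiding $X_i := \{e_i : \{e_1,e_2\} \in X\}$ would propagate $s_i = t_i$, contradicting the crisp disequality. Hence each $X_i$ is an $s_it_i$-cut, witnessing a solution to \spc of size at most $k$.

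The only real subtlety, and the main obstacle I anticipate, lies in the forward direction: given an \spc solution $X$ of size at most $k$, I must exhibit an assignment that violates exactly the $|X|$ corresponding $R$-constraints, despite the additional disequality literals in $R$. The fix is to assign one distinct value per connected component of $G_i - X_i$, drawing values for $i=1$ and $i=2$ from two disjoint infinite subsets of $\NN$. Because the arguments indexed by $A \subseteq \{1,2\}$ sit in $V(G_1)$ and those indexed by $B \subseteq \{3,4\}$ in $V(G_2)$, the disjoint value ranges make every disequality $x_i \neq x_j$ in a surviving $R$-constraint automatic, so only the $|X|$ deleted $R$-constraints are violated and the reduction is complete.
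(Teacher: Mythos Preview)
Your proposal is correct and follows essentially the same reduction from \spc as the paper: variables for vertices, crisp equalities for unpaired edges, crisp disequalities $s_i\neq t_i$, and a soft $R$-constraint per pair in $\cP$, with the forward direction handled by assigning $V(G_1)$ and $V(G_2)$ values from two disjoint ranges so that the cross-disequalities in $R$ are automatically satisfied. The only cosmetic difference is that the paper uses three fixed values per side rather than one value per component, and your appeal to Observation~\ref{obs:double-conj} is unnecessary since by definition an $(=,=)$-relation already has this 4-ary form.
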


  \begin{lemma}[Section~\ref{ssec:split-paired-cut}]
    \label{lem:eq-neq-hard}
    If $R$ is an $(=,\neq)$-relation, then $\mincsp{R, =, \neq}$ is W[1]-hard.
  \end{lemma}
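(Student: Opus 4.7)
The plan is to FPT-reduce from \spc to \mincsp{R, =, \neq}, adapting the template of the $(\neq,\neq)$-case proof. The key asymmetry to exploit is that an $(=,\neq)$-relation $R$ couples an \emph{equality} on its first two arguments with a \emph{disequality} on its last two, so I would encode one graph ($G_1$) via a plain edge-equality representation and the other ($G_2$) via choice gadgets; a soft $R$-constraint then bundles a $G_1$-edge equality with the disequality at a chosen position of a $G_2$-side gadget, so that a single $R$-violation simultaneously encodes a cut in $G_1$ and a selection in a $G_2$-gadget.

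Given an \spc instance $(G_1, G_2, s_1, t_1, s_2, t_2, \cP, k = 2\ell)$ normalized as in the $(\neq,\neq)$-case so that $\cF_1, \cF_2$ are flow decompositions of $E(G_1), E(G_2)$ into edge-disjoint $s_it_i$-paths, I build an instance $(I, k')$ of \mincsp{R, =, \neq} by: adding a variable for each $v \in V(G_1) \cup V(G_2)$ and a crisp $s_1 \neq t_1$; for each edge $uv \in E(G_1)$ not in any pair of $\cP$, adding a crisp $u = v$; for each path $Q \in \cF_2$, adding a choice gadget $W(Q)$ (as in the $(\neq,\neq)$-reduction) with $v^Q_0, v^Q_{|Q|}$ identified with $s_2, t_2$; and for each pair $\{uv, v^Q_{j-1} v^Q_j\} \in \cP$, replacing the soft $u = v$ and the soft gadget disequality $v^Q_j \neq f(v^Q_j)$ with a single soft $R(u, v, v^Q_j, f(v^Q_j))$. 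The budget $k'$ is set to tightly match three violations per gadget (by Lemma~\ref{lem:wheel-gadget}) plus the size of the $G_1$-cut minus one saving per $R$-pairing. The extra disequalities $(x_i \neq x_j)$ for $i \in A \subseteq \{1,2\}$, $j \in B \subseteq \{3,4\}$ that a general $(=,\neq)$-relation carries can be absorbed: in the forward direction, by assigning $G_1$-variables and $G_2$-gadget variables to disjoint value pools (possible in the equality setting), and in the backward direction, by shifting $G_2$-side values to avoid collisions, which preserves every equality/disequality satisfaction status and hence the cost.

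For correctness, the forward direction takes a \spc solution $X$, selects for each path $Q \in \cF_2$ the position of the $X$-selected edge on $Q$, assigns gadget variables to that selection state (cost three per gadget by Lemma~\ref{lem:wheel-gadget}), and assigns $G_1$-variables so that $\mathrm{proj}_{G_1}(X)$ is exactly the set of cut edges; the resulting cost matches $k'$. The backward direction uses Lemma~\ref{lem:wheel-gadget} to force each gadget to cost exactly three in a canonical selection state; tight budget accounting together with crisp $s_1 \neq t_1$ and crisp unpaired equalities forces every violated $R$ to coincide both with a cut $G_1$-edge and with a gadget-selected position, from which the \spc solution $X$ is recovered as the set of pairs whose $R$ is violated.

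\textbf{Main obstacle.} The central subtlety is enforcing \emph{synchronization}: because $R$ is violated whenever \emph{either} its equality-part $x_1 = x_2$ fails or its disequality-part $x_3 \neq x_4$ fails, a cheating assignment could violate an $R$-constraint on only one side, breaking the correspondence with \spc. Ruling out such cheating via tight budget accounting, the rigidity of the choice gadget supplied by Lemma~\ref{lem:wheel-gadget}, and the crispness of unpaired edges is the main technical step; once synchronization is established, translating between low-cost assignments and \spc solutions is routine bookkeeping.
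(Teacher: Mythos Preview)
Your proposal is correct and follows essentially the same approach as the paper: encode $G_1$ by plain soft/crisp equalities, encode $G_2$ via choice gadgets on the paths of $\cF_2$, and for each pair in $\cP$ fuse the $G_1$-edge equality with the corresponding gadget disequality into a single soft $R$-constraint, with a tight budget forcing synchronization via Lemma~\ref{lem:wheel-gadget}. One small remark: you mention normalizing both $\cF_1$ and $\cF_2$, but your construction (like the paper's) only uses $\cF_2$; the $G_1$ side needs no flow decomposition.
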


  \begin{lemma}[Section~\ref{ssec:split-paired-cut}]
    \label{lem:neq-neq-hard}
    If $R$ is a $(\neq,\neq)$-relation, then $\mincsp{R, =, \neq}$ is W[1]-hard.
  \end{lemma}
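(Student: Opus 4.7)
The plan is to reduce from \spc, extending the construction sketched for the base case $R = R^\land_{\neq,\neq}$ (where $A = B = \emptyset$). Given an instance $(G_1, G_2, s_1, t_1, s_2, t_2, \cP, k)$ of \spc with $k = 2\ell$, I would invoke the result of~\cite[Lemma~5.7]{kim2020solving} to assume each $\cF_i$ is an $s_it_i$-maxflow partitioning $E(G_i)$ into $2\ell$ edge-disjoint paths. For every $P \in \cF_1 \cup \cF_2$ I would build a choice gadget $W(P)$ on variables $v^P_0, \dots, v^P_{2p}$, identifying $v^P_0, \dots, v^P_p$ with the vertices of $P$ (so that paths overlapping in $G_i$ share variables) while keeping the forward-partner variables $v^P_{p+1}, \dots, v^P_{2p}$ fresh per path. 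For every pair $\{e_1, e_2\} \in \cP$ with $e_1 = v^P_{i-1}v^P_i$ in some $P \in \cF_1$ and $e_2 = v^Q_{j-1}v^Q_j$ in some $Q \in \cF_2$, I would replace the two soft disequalities $v^P_i \neq f(v^P_i)$ and $v^Q_j \neq f(v^Q_j)$ by a single soft constraint $R(v^P_i, f(v^P_i), v^Q_j, f(v^Q_j))$. Equalities on edges not appearing in any pair are made crisp, and the budget is $k' = 5\ell$.

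Soundness would go through essentially as in the base-case sketch. By Lemma~\ref{lem:wheel-gadget}, each of the $2\ell$ choice gadgets forces at least three deletions, so meeting the budget $5\ell$ requires the soft $R$-constraints to be used to pair up the third deletion across $\ell$ matched gadgets. Since every $(\neq,\neq)$-relation $R$ is a subrelation of $R^\land_{\neq,\neq}$, the projections onto the first pair and onto the second pair are the binary disequalities $x_1 \neq x_2$ and $x_3 \neq x_4$; hence the choice structure forced by the budget gives a selection of $\ell$ pairs of $\cP$ whose two projections cut $s_it_i$ in $G_i$ for both $i \in \{1,2\}$, yielding a valid \spc solution.

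The main obstacle, and the only genuinely new work relative to the base case, is completeness: the extra crossing literals $x_i \neq x_j$ with $i \in A \subseteq \{1,2\}$ and $j \in B \subseteq \{3,4\}$ in the CNF of $R$ must also be witnessed by the satisfying assignment we construct from a \spc solution. The key observation is that in any $R$-constraint $R(v^P_i, f(v^P_i), v^Q_j, f(v^Q_j))$ appearing in the instance, $P \in \cF_1$ and $Q \in \cF_2$ lie in different graphs, $v^P_i$ and $v^Q_j$ are therefore distinct variables, and the fresh forward partners $f(v^P_i) = v^P_{i+p}$ and $f(v^Q_j) = v^Q_{j+q}$ are private to their respective gadgets and hence also distinct from each other and from $v^P_i, v^Q_j$. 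After deleting the prescribed three constraints in each gadget according to the \spc solution, the surviving equality graph decomposes into a bounded number of equivalence classes per gadget, and classes from different gadgets are disjoint (the only possible overlaps are at the terminals $s_i, t_i$, which are already separated by the crisp $s_i \neq t_i$ constraints and do not appear in any surviving $R$-constraint position).

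To finish completeness I would exhibit an explicit assignment: give every equivalence class of the surviving equality graph a pairwise distinct value from $\NN$, which is possible thanks to the infinite domain. Under this assignment each remaining $R$-constraint receives four values drawn from four distinct equivalence classes, so every literal of $R$, including all crossing disequalities indexed by $A$ and $B$, is satisfied. Combined with the soundness argument above, this establishes an FPT-reduction from \spc to $\mincsp{R,=,\neq}$, and therefore W[1]-hardness.
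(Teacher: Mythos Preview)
Your reduction is the same as the paper's: reduce from \spc, build a choice gadget $W(P)$ per flow path, and pair the soft disequalities across $G_1$ and $G_2$ via $R$-constraints. The soundness sketch matches the paper's budget-counting argument.

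The gap is in your completeness argument. You claim that after deleting the prescribed constraints, ``classes from different gadgets are disjoint (the only possible overlaps are at the terminals $s_i,t_i$)''. This is false: the paths in $\cF_i$ are only edge-disjoint, so distinct gadgets for $G_1$ share arbitrary intermediate vertices of $G_1$, and their equality classes merge through those shared vertices. Consequently, your derivation that the four arguments of a surviving $R$-constraint lie in four distinct classes does not go through as written; in particular you never argue why $v^P_i$ and $f(v^P_i)$ stay in different \emph{global} classes once all the gadgets for $G_1$ are glued together at shared vertices.

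The paper fixes this with a structural observation you are missing: because the \spc solution has size equal to the $s_it_i$-maxflow, the set $X_i=\{e_i:\{e_1,e_2\}\in X\}$ is a \emph{minimum} $s_it_i$-cut, so $G_i-X_i$ splits into an $s_i$-side and a $t_i$-side. One then checks that in every surviving $R(v^P_i,f(v^P_i),v^Q_j,f(v^Q_j))$ the two $P$-arguments lie on opposite sides of the $s_1t_1$-cut and the two $Q$-arguments lie on opposite sides of the $s_2t_2$-cut; since $G_1$ and $G_2$ are vertex-disjoint, this places the four arguments in four different components, so any assignment that is injective on components (or just uses four values, one per side) satisfies all literals of $R$, including the crossing disequalities from $A$ and $B$. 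Add this min-cut observation and your argument goes through.
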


  Observation~\ref{obs:double-conj} and 
  Lemmas~\ref{lem:eq-eq-hard},~\ref{lem:eq-neq-hard}~and~\ref{lem:neq-neq-hard}
  complete the proof of Theorem~\ref{thm:fpt-class}.\ref{thm:fpt-class:ess-neg}.

  \paragraph{Case~\ref{thm:fpt-class:split-nae}: Split and $\rel{NEQ}_3$}

  If $\Gamma$ only contains split relations and $\rel{NEQ}_3$,
  we show that the problem is in FPT via a  
  reduction from \mincsp{\Gamma} to \mdt.

  \begin{lemma}[Section~\ref{ssec:multicut-variants}]
    \label{lem:split-nae3-to-mdt}
    Let $\Gamma$ be an equality constraint language
    where every relation is either split or $\rel{NEQ}_3$.
    There is a polynomial-time reduction that
    takes an instance $(I, k)$ of \mincsp{\Gamma}
    and produces an instance $(G, \cT, k)$ of \mdt such that
    $(I, k)$ is a yes-instance if and only if $(G, \cT, k)$ is a yes-instance.
  \end{lemma}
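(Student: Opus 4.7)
The plan is to build, for each constraint of $I$, a local gadget whose sole deletable element costs one exactly when the constraint is violated. First, add a vertex of $G$ for each variable of $I$ and make it undeletable via the $(k+1)$-clique trick from the preliminaries. For a constraint $\rel{NEQ}_3(a,b,c)$, add the triple $\{a,b,c\}$ to $\cT$, marking it deletable iff the constraint is soft. For a split constraint $R(u_1,\dots,u_p,v_1,\dots,v_q)$, introduce a fresh gadget centre $c$, add edges $cu_i$ for $i\in[p]$, and for each $j\in[q]$ add a triple $\{c,v_j,z_j\}$, where $z_j$ is a fresh undeletable isolated auxiliary vertex. If the constraint is soft, $c$ is deletable and the triples are made undeletable; if the constraint is crisp, $c$ is also undeletable. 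Undeletability of triples can be enforced by taking $k+1$ parallel copies built on fresh auxiliary vertices. The budget transfers directly.

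For the forward direction, given an assignment $\alpha$ with at most $k$ violations, place into $Z_V$ the centre $c$ of every violated split constraint, and into $Z_{\cT}$ the triple of every violated $\rel{NEQ}_3$ constraint. The key invariant to establish is that the component partition of $G-Z_V$ refines $\alpha$ on the original vertices: every surviving edge belongs to the gadget of some satisfied split constraint and runs between its centre $c$ and one of its $u_i$, so any path in $G-Z_V$ alternates between originals and centres, and each surviving centre can consistently be labelled with the common value $\alpha(u_1)=\cdots=\alpha(u_p)$ forced by its satisfied constraint. Given this invariant, every surviving triple $\{c,v_j,z_j\}$ is split into three components (since $z_j$ is isolated while $\alpha(v_j)\neq\alpha(c)$), and every surviving $\rel{NEQ}_3$ triple $\{a,b,c\}$ is split because the constraint is satisfied by $\alpha$. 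Conversely, given a \mdt-solution $(Z_V,Z_{\cT})$ of total size at most $k$, define $\alpha(x)$ as the component index of $x$ in $G-Z_V$ (well-defined because originals are undeletable). Every constraint whose deletable element survives is satisfied: split constraints read off their equalities from the surviving edges and their disequalities from the surviving triples, while $\rel{NEQ}_3$ constraints read off all three pairwise disequalities from their surviving triple. Hence the number of violations is at most $|Z_V|+|Z_{\cT}|\le k$.

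The main technical point is the refinement invariant in the forward direction: in principle, two original vertices could become connected via a long alternating path through several satisfied gadget centres. The argument turns on the facts that every edge of $G$ is incident to exactly one gadget centre, and that the two endpoints of any such edge belong to the same $\alpha$-class, so the $\alpha$-value is preserved along every path of $G-Z_V$. All remaining checks---crisp constraints being unaffected by any budget-$k$ solution, and triples being made undeletable via parallel copies with disjoint auxiliary vertices---are routine bookkeeping.
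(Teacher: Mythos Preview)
Your proof is correct and follows essentially the same construction as the paper: variable vertices are made undeletable, each split constraint gets a centre vertex adjacent to the $u_i$'s with triples encoding the $v_j$-disequalities via an isolated auxiliary, and each $\rel{NEQ}_3$ constraint becomes a triple. The only cosmetic differences are that the paper uses a single global dummy vertex $w$ in all split-gadget triples (rather than a fresh $z_j$ per triple) and handles the converse direction by repairing $Z_{\cT}$ (moving any deleted split-gadget triple to a deletion of its centre $z_c$) rather than via $k{+}1$ parallel copies; your explicit treatment of crisp versus soft constraints is slightly more complete than the paper's writeup.
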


  We solve \mdt by reducing it to \mincsp{\Gamma'}
  for a certain Boolean constraint language $\Delta$
  (i.e. the domain of $\Delta$ is \{0,1\}).
  Then we show that \mincsp{\Delta} is fixed-parameter tractable 
  using the full classification by~\cite{KimKPW23flow3}.

  \begin{theorem}[Theorem~\ref{thm:triple-multicut-fpt} in Section~\ref{sec:triple-multicut}]
    \mdt is fixed-parameter tractable.
  \end{theorem}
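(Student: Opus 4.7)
My plan is to prove the theorem by reducing \mdt to \mincsp{\Delta} for an appropriately chosen Boolean constraint language $\Delta$, and then invoking the complete parameterized dichotomy for Boolean MinCSP from~\cite{KimKPW23flow3} to conclude that \mincsp{\Delta} is FPT, in line with the roadmap sketched just above the theorem.

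First I would construct the reduction. Given an instance $(G, \cT, k)$ of \mdt, introduce a Boolean variable $x_v$ for every vertex $v \in V(G)$. Each edge $uv \in E(G)$ becomes a crisp equality constraint $x_u = x_v$; each triple $\{u,v,w\} \in \cT$ becomes a soft $\rel{NAE}_3$-like constraint on $(x_u, x_v, x_w)$ (satisfied exactly when not all three variables agree); and each vertex is equipped with the standard splitting gadget (two copies joined by a soft equality), so that paying to break the gadget corresponds to deleting the vertex, mirroring the classical reduction from \textsc{Vertex Multicut} to \textsc{Edge Multicut}. The result is an instance of \mincsp{\Delta} with the same parameter $k$, up to a constant.

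The main obstacle is that a Boolean assignment realises only a \emph{bipartition} of $V(G) \setminus Z_V$, whereas a genuine \mdt solution may induce arbitrarily many components. To bridge this gap I would wrap the reduction in an iterative-compression loop: maintain a solution of size $k+2$, enumerate its intersection with a target smaller solution in $2^{O(k)}$ ways, and after fixing this guess argue that every remaining triple needs only two of its three vertices placed on opposite sides of a single bipartition. Verifying that this yields a cost-preserving correspondence between residual \mdt solutions and Boolean assignments of the constructed \mincsp{\Delta} instance is the technical core of the proof, and is where I expect most of the work to lie.

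Finally, I would identify $\Delta$ concretely (Boolean equality, the Boolean $\rel{NAE}_3$-analogue, and the splitting-gadget relations) and verify via the classification of~\cite{KimKPW23flow3} that \mincsp{\Delta} is FPT; these relations correspond to well-known tractable Boolean fragments. Combining the $2^{O(k)}$ branching of iterative compression with the FPT algorithm for \mincsp{\Delta} on each branch then yields the claimed $f(k) \cdot n^{O(1)}$ algorithm for \mdt.
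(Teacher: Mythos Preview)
Your high-level architecture---iterative compression, branching on the intersection with a hypothetical optimum, then reducing the residual problem to a Boolean \textsc{MinCSP} and appealing to~\cite{KimKPW23flow3}---matches the paper. The gap is in the Boolean encoding itself.

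You propose one Boolean variable per vertex and a Boolean $\rel{NAE}_3$ constraint per triple, so that an assignment is a \emph{bipartition} of $V(G)$. But even after compression and guessing $Z\cap X$, a solution $Z_V$ may split $G$ into many components, and different triples may be ``resolved'' by different pairs of components; there is no single $2$-colouring that simultaneously witnesses all of them. Your sentence ``every remaining triple needs only two of its three vertices placed on opposite sides of a single bipartition'' is exactly the step that fails, and you give no argument for it. Moreover, Boolean $\rel{NAE}_3$ is not bijunctive, so you would also need to check separately that your language $\Delta$ lands in the FPT region of the~\cite{KimKPW23flow3} dichotomy; this is not automatic.

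The paper closes this gap with one additional guess and a richer encoding. After compression, let $X$ be the union of $X_V$ and the vertices appearing in $X_{\cT}$; guess the partition $\alpha\colon X\to[d]$ of $X$ into connected components of $G-Z_V$ (at most $d^{O(k)}$ choices). Then introduce, for each vertex $v$ and each class $i\in[d]$, Boolean variables $v_i,\hat v_i$ encoding ``$v$ lies in the component of class $i$''. Edges become crisp implications $\hat u_i\to v_i$; a vertex $v$ carries the soft constraint $\bigwedge_{i<j}(\neg v_i\lor\neg v_j)\land\bigwedge_i(v_i\to\hat v_i)$ (so deleting it lets $v$ be ``in every class'' at unit cost); and a triple $uvw$ becomes, for each $i$, the soft constraint $(\neg\hat u_i\lor\neg\hat v_i)\land(\neg\hat v_i\lor\neg\hat w_i)\land(\neg\hat u_i\lor\neg\hat w_i)$. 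All relations are bijunctive with $2K_2$-free Gaifman graphs (clique with pendants, single edges, triangles), which is precisely the tractable bijunctive case of~\cite{KimKPW23flow3}. The guessed partition $\alpha$ is what lets a Boolean instance simulate the multi-way separation; without it, a single bipartition is not enough.
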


  Combining Lemma~\ref{lem:split-nae3-to-mdt} with Theorem~\ref{thm:triple-multicut-fpt}
  completes the proof of 
  Theorem~\ref{thm:fpt-class}.\ref{thm:fpt-class:split-nae}.

  \subsection{Approximation of Equality MinCSP}
  \label{sec:fpt-approx}

  Under the Unique Games Conjecture (UGC) of Khot~\cite{khot2002power},
  \textsc{Edge Multicut} is NP-hard to approximate within any constant~\cite{chawla2006hardness}.
  By Lemmas~\ref{lem:neither-const-nor-neg}~and~\ref{lem:multicut-to-mincsp}, 
  there is a cost-preserving
  reduction from \textsc{Edge Multicut} to \mincsp{\Gamma}
  whenever $\Gamma$ is Horn and \mincsp{\Gamma} is NP-hard.

  \begin{corollary}
  \label{cor:poly-inapprox}
    Let $\Gamma$ be a Horn equality constraint language. 
    If \mincsp{\Gamma} is NP-hard, then, assuming UGC, 
    it is NP-hard to approximate \mincsp{\Gamma} in 
    polynomial time within any constant.
  \end{corollary}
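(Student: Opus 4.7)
The plan is to chain together the ingredients already assembled in the preceding subsection and apply them to the UGC-based inapproximability of \textsc{Edge Multicut} due to Chawla et al.~\cite{chawla2006hardness}. Concretely, let $\Gamma$ be Horn with $\mincsp{\Gamma}$ NP-hard. First I would invoke Theorem~\ref{thm:eq-mincsp-npn-dichotomy} in its contrapositive form: since $\Gamma$ is Horn and $\mincsp{\Gamma}$ is NP-hard, $\Gamma$ is neither constant nor strictly negative. Hence there exists some $R_1 \in \Gamma$ that is not constant and some $R_2 \in \Gamma$ that is Horn but not strictly negative (the second using that every relation in $\Gamma$ is Horn).

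Next I would apply Lemma~\ref{lem:neither-const-nor-neg}: from $R_1$ I obtain an implementation of the binary disequality $x_1 \neq x_2$ in $\Gamma$, and from $R_2$ I obtain an implementation of the binary equality $x_1 = x_2$ in $\Gamma$. In particular $\Gamma$ both implements binary equality and (a fortiori) pp-defines binary disequality, satisfying the hypotheses of Lemma~\ref{lem:multicut-to-mincsp}. That lemma then provides a cost-preserving reduction from \textsc{Edge Multicut} to $\mincsp{\Gamma}$.

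To finish, I would combine this cost-preserving reduction with the UGC-based result that \textsc{Edge Multicut} has no constant-factor polynomial-time approximation unless $\text{P}=\text{NP}$~\cite{chawla2006hardness}. By the definition of cost-preserving reduction recalled in Section~\ref{sec:prelim}, a constant-factor polynomial-time approximation for $\mincsp{\Gamma}$ would yield one for \textsc{Edge Multicut} with the same factor, contradicting the UGC-based lower bound. Since every step is already available in the excerpt, there is no real obstacle; the only mild care is in verifying that ``$\Gamma$ is Horn but not strictly negative'' yields a \emph{single} relation with the required property, which is immediate from the definitions of Horn and strictly negative as per-relation syntactic conditions.
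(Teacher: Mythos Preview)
Your proposal is correct and follows essentially the same approach as the paper: use Theorem~\ref{thm:eq-mincsp-npn-dichotomy} to conclude that $\Gamma$ is neither constant nor strictly negative, invoke Lemma~\ref{lem:neither-const-nor-neg} to get implementations of $=$ and $\neq$, apply Lemma~\ref{lem:multicut-to-mincsp} to obtain a cost-preserving reduction from \textsc{Edge Multicut}, and finish with the UGC-based inapproximability of \textsc{Edge Multicut}. Your extra care in isolating individual relations $R_1$ and $R_2$ witnessing non-constancy and non-strict-negativity is a welcome bit of precision but not a departure from the paper's argument.
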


  Motivated by this hardness result, we study
  constant-factor approximation algorithms
  running in fpt time.
  Let $\Gamma$ be a Horn equality constraint language.
  By Lemma~\ref{lem:hitting-set-to-odd3},
  if $\Gamma$ is not negative, then
  it admits a cost-preserving reduction
  from \textsc{Hitting Set}.
  Unless FPT=W[2], \textsc{Hitting Set} does not admit 
  a constant-factor fpt approximation~\cite{lin2022constant},
  so there is little hope for obtaining
  constant-factor fpt approximation algorithms
  for \mincsp{\Gamma} when $\Gamma$ is not
  negative.
  It turns out that the hardness result is tight
  for equality \textsc{MinCSP}s.

  \begin{theorem} \label{thm:fpta-class}
    Let $\Gamma$ be a Horn equality constraint language.
    \mincsp{\Gamma} admits a constant-factor 
    approximation in fpt time if $\Gamma$ is negative,
    and is \textsc{Hitting Set}-hard otherwise.
  \end{theorem}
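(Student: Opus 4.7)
The plan is to handle the two directions of the dichotomy separately, using machinery already assembled in the paper.

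\textbf{Algorithmic direction ($\Gamma$ negative).} If $\Gamma$ is negative, each relation $R \in \Gamma$ has a CNF using only positive singleton clauses $(x_i = x_j)$ and strictly negative clauses $(x_{i_1} \neq x_{j_1} \lor \ldots \lor x_{i_r} \neq x_{j_r})$. I would first apply the clause-splitting reduction of~\cite[Lemma~10]{bonnet2016mincsp}, replacing each soft $R$-constraint by its individual clauses as separate soft constraints; since $\Gamma$ is finite, each $R$ has a bounded number of clauses, so this loses only a constant factor in the cost of an optimum. The resulting instance uses only singleton equalities and disjunctive disequalities of length at most $d$, where $d$ is the maximum arity of $\Gamma$. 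I would then reduce to \djcut: turn each variable into a vertex, each equality into an edge subdivided through a private deletable auxiliary vertex (so cutting this vertex at cost one matches removing the equality), and each disjunctive disequality into a request list containing the corresponding pairs, augmented by a private ``skip'' pair whose removal at cost one matches paying to skip the constraint. Since $d = O(1)$, Theorem~\ref{ithm:fpa-algs} gives a constant-factor FPT-approximation for \djcut, which composes with the constant-factor splitting loss to give the desired approximation for $\mincsp{\Gamma}$.

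\textbf{Hardness direction ($\Gamma$ Horn and not negative).} Excluding the trivial case where every relation in $\Gamma$ is constant (in which case the all-equal assignment satisfies every constraint), $\Gamma$ contains some non-constant relation and some non-negative relation $R$. Since $R$ is Horn but not strictly negative, Lemma~\ref{lem:neither-const-nor-neg}(2) lets $\Gamma$ implement $=$; since some relation of $\Gamma$ is non-constant, Lemma~\ref{lem:neither-const-nor-neg}(1) lets $\Gamma$ implement $\neq$. Theorem~\ref{thm:not-essen-neg} then yields a pp-definition of $\rel{ODD}_3$ over $\{R, =, \neq\}$, hence over $\Gamma$ itself. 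Lemma~\ref{lem:hitting-set-to-odd3} supplies a cost-preserving reduction from \textsc{Hitting Set} to $\mincsp{\rel{ODD}_3, =, \neq}$ in which every $\rel{ODD}_3$-constraint is crisp. I would chain: replace each crisp $\rel{ODD}_3$-constraint by its pp-definition gadget, and each $=$ or $\neq$ occurrence by its implementation in $\Gamma$. The result is a cost-preserving reduction from \textsc{Hitting Set} into $\mincsp{\Gamma}$, establishing \textsc{Hitting Set}-hardness.

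The main obstacle is the cost-accounting in the hardness chain. Proposition~\ref{prop:ppdef-impl} is stated as an FPT-reduction, whereas the \textsc{Hitting Set}-hardness required to rule out constant-factor FPT approximations must be cost-preserving. The crucial observation to verify is that pp-definitions used only on crisp constraints introduce no soft cost (the auxiliary gadget variables can always be extended to a satisfying partial assignment since the constraint is satisfied), while implementations used for the soft $=$ and $\neq$ contribute exactly unit cost per violation by the definition of implementation; together these keep cost invariant along the chain. A secondary, more routine obstacle lies on the algorithmic side, where the gadgetization of $\mincsp{\Gamma}$ as a \djcut instance via subdividing equalities and adding private skip pairs to request lists must be checked to match the MinCSP cost exactly (so that only the splitting step contributes the constant-factor loss).
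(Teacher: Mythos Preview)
Your proposal is correct and follows essentially the same route as the paper. For the algorithmic direction, the paper also invokes Lemma~10 of~\cite{bonnet2016mincsp} (stated as Lemma~\ref{lem:pp-def-preserves-apx}) to reduce to $\mincsp{R^{\neq}_d,=}$, then applies the cost-preserving reduction to \djcut (Lemma~\ref{lem:essen-neg-lmcut}, whose construction is exactly your subdivided-equality-plus-skip-pair gadget) and Theorem~\ref{thm:djcut-fpta}. For the hardness direction, the paper likewise chains Lemma~\ref{lem:neither-const-nor-neg}, Theorem~\ref{thm:not-essen-neg}, and Lemma~\ref{lem:hitting-set-to-odd3}; your explicit verification that crisp pp-definition gadgets and soft implementations together keep the reduction cost-preserving is a point the paper leaves implicit, and your exclusion of the all-constant case is appropriate (the theorem is tacitly read under the standing assumption, made explicit in Theorem~\ref{thm:fpt-class}, that $\Gamma$ is neither constant nor strictly negative).
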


  We remark that the approximation factor depends on the language $\Gamma$.
  The following fact is used to obtain the approximation algorithm.

  \begin{lemma}[See e.g. Lemma~10~of~\cite{bonnet2016mincsp}]
    \label{lem:pp-def-preserves-apx}
    Let $\Gamma$ be a constraint language that pp-defines a relation $R$.
    If \mincsp{\Gamma,=} admits constant-factor fpt-approximation,
    then \mincsp{\Gamma,R} also admits constant-factor fpt-approximation.
  \end{lemma}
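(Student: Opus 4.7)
The plan is to reduce $\mincsp{\Gamma,R}$ to $\mincsp{\Gamma,=}$ by replacing each $R$-constraint with a copy of a fixed pp-definition of $R$, losing only a constant factor in the cost. Let $C_R$ be a pp-definition of $R$ in $\Gamma$ with primary variables $\bx$ and auxiliary variables $\by$, and put $s := |C_R|$, which is a constant depending only on $R$. Given an instance $(I,k)$ of $\mincsp{\Gamma,R}$, I would build $(I',k')$ of $\mincsp{\Gamma,=}$ as follows: retain every $\Gamma$-constraint of $I$ verbatim, and for each constraint $R(\bar v)$ of $I$ introduce fresh copies of $\by$ and append the instantiation of $C_R$ with primary variables mapped to $\bar v$, marking every constraint of the added copy as crisp if $R(\bar v)$ is crisp and as soft if $R(\bar v)$ is soft. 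Set $k' := sk$.

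The core of the argument is the two-sided cost bound $\cost(I) \leq \cost(I') \leq s \cdot \cost(I)$. For the upper bound, I would take an optimal assignment $\alpha$ for $I$ and extend it gadget-by-gadget: by property~(2) of Definition~\ref{def:pp-definition}, if $\alpha$ satisfies $R(\bar v)$ then some extension satisfies the whole gadget and contributes $0$; if $\alpha$ violates $R(\bar v)$, which is then necessarily soft, any extension violates at most $|C_R|=s$ gadget constraints. For the lower bound, I would restrict an arbitrary assignment $\beta$ for $I'$ to the variables of $I$: by property~(1), a gadget all of whose constraints are satisfied by $\beta$ forces $R(\bar v)$ to hold in the restriction, so each $R$-constraint violated in $I$ is charged at least one violated gadget constraint in $I'$, and original $\Gamma$-constraints contribute identically on both sides.

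Running the assumed $c$-factor fpt approximation for $\mincsp{\Gamma,=}$ on $(I',k')$ and returning its answer then yields a $(cs)$-factor fpt approximation for $\mincsp{\Gamma,R}$: if $\cost(I) \leq k$ then $\cost(I') \leq sk = k'$ and the algorithm answers yes, while if $\cost(I) > csk$ then $\cost(I') \geq \cost(I) > ck'$ and it answers no. The only subtle point -- more bookkeeping than obstacle -- is the crisp/soft distinction in the gadget replacement: a crisp $R(\bar v)$ must be encoded by a fully crisp gadget, so that no finite-cost assignment can escape satisfying it, while a soft $R(\bar v)$ must be encoded by a fully soft gadget, so that the $s$-fold blow-up is paid only for the $R$-constraints actually violated. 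This is also the reason pp-definitions preserve approximation but, unlike implementations, do not preserve exact cost, and thus why an analogue of Proposition~\ref{prop:ppdef-impl}(2) for pp-definitions is only available up to a constant factor.
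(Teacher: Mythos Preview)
Your argument is correct and is the standard gadget-replacement proof. Note that the paper does not give its own proof of this lemma; it simply cites Lemma~10 of Bonnet et al.~\cite{bonnet2016mincsp}, and your write-up is essentially a reconstruction of that argument with the crisp/soft bookkeeping made explicit.
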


  The final approximation factor in the lemma above depends on 
  the number of constraints in the pp-definition of $R$ in $\Gamma$.
  Define relations 
  $R^{\neq}_d(x_1,y_1,\dots,x_d,y_d) \equiv \bigvee_{i=1}^{d} x_{i} \neq y_{i}$
  for all $d \in \NN$.
  Observe that every negative relation
  admits a (quantifier-free) pp-definition in 
  $\{R^{\neq}_d, =\}$ for some $d \in \NN$:
  the pp-definition contains a constraint for each clause,
  and $d$ is upper-bounded by the number of literals in a largest clause.
  By Lemma~\ref{lem:pp-def-preserves-apx},
  showing an fpt-approximation algorithm for $\mincsp{R^{\neq}_d, =}$
  is sufficient to prove Theorem~\ref{thm:fpta-class}.

  We solve $\mincsp{R^{\neq}_d, =}$ by a cost-preserving 
  reduction to \djcut.

  \begin{lemma}[Section~\ref{ssec:multicut-variants}]
    \label{lem:essen-neg-lmcut}
    There is a polynomial-time algorithm that takes an instance $I$ of
    $\csp{R^{\neq}_d, =}$ as input, and produces a graph $G$ and
    a collection of request lists $\cL$ such that 
    $\max_{L \in \cL} |L| = d + 1$ and $\cost(I) = \cost(G, \cL)$.
  \end{lemma}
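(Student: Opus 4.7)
The plan is a direct gadget-based reduction turning each variable into an undeletable vertex, each equality constraint into a subdivided edge whose midpoint can be deleted, and each $R^{\neq}_d$-constraint into a disjunctive cut request that either separates some $(x_i, y_i)$ or pays one unit to cut an auxiliary gadget. Specifically, for each variable $v$ of $I$ introduce a vertex $v$ of $G$ and make it undeletable (e.g.\ by taking $k+1$ parallel copies, as in Section~\ref{sec:prelim}). For each equality constraint $(x = y)$ of $I$ introduce a fresh deletable vertex $e_{xy}$ with undeletable edges $x e_{xy}$ and $e_{xy} y$. For each soft constraint $C = R^{\neq}_d(x_1, y_1, \ldots, x_d, y_d)$ introduce two undeletable vertices $s_C, t_C$ and one deletable vertex $z_C$ with undeletable edges $s_C z_C$ and $z_C t_C$ and no other incidences, and add to $\cL$ the list
\[
  L_C = \bigl\{ \{x_1, y_1\}, \{x_2, y_2\}, \ldots, \{x_d, y_d\}, \{s_C, t_C\} \bigr\},
\]
which has size exactly $d+1$.

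The key structural observation is that the triple $(s_C, z_C, t_C)$ is an isolated $P_3$ attached to no other vertex of $G$, so $\{s_C, t_C\}$ is separated in $G - X$ if and only if $z_C \in X$, at unit cost. Beyond these gadgets the connectivity between two original variables $x, y$ in $G - X$ is governed entirely by the undeleted subdivision vertices $e_{uv}$.

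For the direction $\cost(G, \cL) \leq \cost(I)$, I take an optimal assignment $\alpha$ of $I$ and set
\[
  X = \{ e_{xy} : \alpha(x) \neq \alpha(y) \} \cup \{ z_C : \alpha \text{ violates } C\}.
\]
Clearly $|X| = \cost(I)$. If $\alpha$ violates $C$ then $z_C \in X$ separates $s_C$ from $t_C$; otherwise some $\alpha(x_i) \neq \alpha(y_i)$, and any $x_i$-to-$y_i$ path in $G$ traverses some $e_{uv}$ with $\alpha(u) \neq \alpha(v)$, which lies in $X$, so $x_i$ and $y_i$ are separated in $G - X$. Hence every list is satisfied.

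For the direction $\cost(I) \leq \cost(G, \cL)$, I take an optimal $X$ (WLOG containing only deletable vertices $e_{xy}$ and $z_C$), assign a distinct value to each connected component of $G - X$, and show the cost of the resulting $\alpha$ is at most $|X|$: if equality $(x = y)$ is violated then $x, y$ are in different components, which forces $e_{xy} \in X$ since $x - e_{xy} - y$ is a path of undeletable edges; if $R^{\neq}_d$-constraint $C$ is violated then every $x_i \sim y_i$ in $G - X$, so $L_C$ can only be satisfied via $\{s_C, t_C\}$, forcing $z_C \in X$. These deletions are disjoint across constraints, so the total number of violated constraints is at most $|X|$. The main care needed is just to ensure the undeletable/parallel-copy machinery and the isolation of the $(s_C, z_C, t_C)$ gadgets are properly set up, but no nontrivial combinatorial step is required beyond this routine verification.
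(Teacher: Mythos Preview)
Your proposal is correct and takes essentially the same approach as the paper: variables become undeletable vertices, equality constraints become subdivided edges with a deletable midpoint, and each $R^{\neq}_d$-constraint becomes a request list of size $d+1$ whose extra entry encodes ``pay one to discard this constraint.'' The only cosmetic difference is that the paper encodes the extra entry as a \emph{singleton request} $z_c z_c$ (satisfiable only by deleting the isolated vertex $z_c$), whereas you build an isolated $P_3$ on $s_C, z_C, t_C$ with request $\{s_C,t_C\}$; both devices have identical effect. One small remark: since \djcut in the paper already supports a native partition $V(G)=V^\infty(G)\uplus V^1(G)$ into undeletable and deletable vertices, you can simply declare variable-vertices undeletable rather than invoking the $k+1$-copies trick (which would need a parameter $k$ not present in the lemma's statement).
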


  We prove that \djcut has a constant-factor fpt-approximation.

  \begin{theorem}[Theorem~\ref{thm:djcut-fpta}~in~Section~\ref{sec:djcut-and-steiner}]
    For every constant $d$, \djcut
    with request lists of length at most $d$ and
    admits an $f(d)$-factor fpt-approximation algorithm
    for some function $f$.
  \end{theorem}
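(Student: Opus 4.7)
The plan is to reduce \djcut to \textsc{Vertex Multicut} in an approximation-preserving way—losing only a factor depending on $d$—and then invoke the FPT $2$-approximation for \textsc{Vertex Multicut} of Lokshtanov et al.~\cite{lokshtanov2021fpt}. For each list $L \in \cL$ (of size at most $d$), I will select a single representative pair $f(L) \in L$ and form a \textsc{Vertex Multicut} instance with cut requests $\mathcal{R} = \{f(L) : L \in \cL\}$. Any multicut for $\mathcal{R}$ is automatically a \djcut solution because separating $f(L)$ already satisfies $L$, so the whole task reduces to choosing $f$ so that the resulting Multicut instance has small optimum relative to the \djcut optimum.

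To choose $f$ well, I solve the natural LP relaxation of \djcut, with deletion variables $x_v \in [0,1]$ and pair-satisfaction variables $y_p \in [0,1]$ under the constraints $y_p \leq \mathrm{dist}_x(s,t)$ for $p=(s,t)$ (the fractional vertex distance induced by $x$) and $\sum_{p \in L} y_p \geq 1$ for each list $L$. Since $|L| \leq d$, by averaging each list has some pair $p$ with $y_p^* \geq 1/d$; take $f(L)$ to be such a pair. Then $d \cdot x^*$ is a feasible fractional \textsc{Vertex Multicut} solution for $\mathcal{R}$ of cost at most $d \cdot \mathrm{OPT}_{LP}(\djcut) \leq d \cdot k$. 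Running the Lokshtanov et al.\ algorithm with budget $O(dk)$ would then return an integral solution of size at most $O(dk)$, which is an $O(d)$-factor approximation for \djcut.

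The main obstacle is bridging the gap between the \emph{fractional} Multicut bound of $dk$ and the \emph{integer} Multicut optimum for the derived instance, since \textsc{Vertex Multicut} has integrality gap $\Theta(\log n)$ in general, so no purely LP-based rounding can close it at constant cost. I plan to sidestep the LP-integer gap by not treating the LP as a black-box bridge; instead, the list-representative selection is integrated directly into the iterative branching of Lokshtanov et al.'s algorithm. Each time their procedure commits to augmenting the partial solution in order to cover a new cut request, I branch over the $\leq d$ representative choices within the associated list. A careful accounting should show that the total number of such list-branching steps throughout the algorithm is $O(k)$ (one per unit of committed budget), so the extra branching work is $d^{O(k)}$, preserving FPT running time in $k$. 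The overall approximation factor then remains $f(d) = O(d)$, arising from composing the representative selection with the $2$-approximation of the Multicut subroutine.
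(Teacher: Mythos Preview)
Your proposal has a genuine gap. You correctly diagnose that the LP route fails due to the $\Theta(\log n)$ integrality gap of \textsc{Multicut}, but the fallback of ``integrating list-branching into Lokshtanov et al.'' is not a proof and, as stated, does not work. Their $2$-approximation reduces \textsc{Multicut} via iterative compression and an approximate multiway cut to per-component \textsc{Strict Multicut}, then solves the latter by closest-separator (\textsc{Digraph Pair Cut}) branching: maintain $Y$, compute the closest minimum $xY$-separator $W$, and if $W$ violates a request, add an endpoint to $Y$, which increases the $xY$-maxflow. Two things break for \djcut. First, a list $L=\{(s_1,t_1),(s_2,t_2)\}$ can have its pairs in different components after the multiway cut, so the per-component decomposition is unsound. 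Second, and more fundamentally, even within one component the closest-separator branching can stall: if $x$ separates $(s_1,t_1)$ but not $(s_2,t_2)$, the current $W$ may place $s_2,t_2$ together in a component not containing $x$ (so $W$ still fails $L$), while the optimum $Z$ satisfies $L$ by cutting inside that shadow component, e.g.\ $Z=\{s_2\}$. Adding $s_2$ or $t_2$ to $Y$ does not increase the $xY$-maxflow (they are already behind $W$), and adding $s_1$ or $t_1$ can blow the budget if separating them from $x$ is expensive. No branch makes progress, so your claimed $O(k)$ depth bound is unsupported. (The paper does carry out exactly your adaptation for \textsc{Steiner Multicut} in Section~\ref{ssec:steiner}; it works there only because Steiner requests are cliques, which forces every endpoint of a violated request to be reachable from $x$.)

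The paper's proof avoids this obstacle by never reducing to \textsc{Multicut}. It defines a measure $\mu(\cL)\le 3d$ and a \simplify routine (Lemma~\ref{lem:lmcut-iteration}) that, using iterative compression, a multiway cut on the compressed solution, and the random shadow-covering technique of Marx--Razgon/Chitnis et al., replaces in every list one non-singleton request by at most two \emph{singleton} requests, decreasing $\mu$ while roughly doubling the budget. After $O(d)$ rounds every list consists of singletons, i.e., the instance is bounded \textsc{Hitting Set} with budget $2^{O(d)}k$, solved by standard branching. Shadow covering is precisely what neutralises the ``optimum acts inside the shadow'' obstruction that breaks your branching argument.
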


  We remark that we do not optimize the function $f$ or the running time.
  Note that \textsc{Steiner Multicut} is a special case of \djcut where each list is a clique.
  This additional structure allows us to obtain a much simpler algorithm for this case.

  \begin{theorem}[Section~\ref{sec:djcut-and-steiner}] 
    \label{thm:steiner-fpa}
    \textsc{Steiner Multicut} with requests of constant size
    is $2$-approximable in $O^*(2^{O(k)})$ time.
  \end{theorem}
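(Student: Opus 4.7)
The plan is to reduce \textsc{Steiner Multicut} to $2^{O(k)}$ calls of the Lokshtanov et al.\ $O^*(2^{O(k)})$-time 2-approximation for \textsc{Vertex Multicut}~\cite{lokshtanov2021fpt}. First I would reduce the edge variant to the vertex variant via the standard edge subdivision: every edge $uv$ is replaced by a length-two path through a fresh deletable vertex $w_{uv}$, with the original vertices marked undeletable (using the convention in Section~\ref{sec:prelim}). Solutions translate one-to-one and preserve size.

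For the vertex variant, I would exploit that any solution $Z^*$ must, for each terminal set $T_i$, separate at least one pair $(s,t)\in\binom{T_i}{2}$, and $|T_i|\leq d$ gives $\binom{d}{2}=O(1)$ candidate pairs per $T_i$. Maintain an evolving set $P$ of cut-pair requests, initially empty, and iterate:
(a) call the Lokshtanov et al.\ 2-approximation on $(G,P,2k)$ to obtain $X$ of size $\leq 2k$, aborting the branch if none exists;
(b) return $X$ if it separates every $T_i\in\cT$;
(c) otherwise pick a violated $T_i$ and branch over the $\binom{d}{2}$ pairs in $\binom{T_i}{2}$, each inserted into $P$ before recursing.

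The hard part will be bounding the recursion depth by $O(k)$, so that the branching tree has $\binom{d}{2}^{O(k)}=2^{O(k)}$ leaves and the total cost is $O^*(2^{O(k)})$. The intuition is that along the ``correct'' branching path---always picking a pair $(s,t)\subseteq T_i$ separated by $Z^*$---the optimum of each intermediate vertex multicut instance stays at most $|Z^*|\leq k$, so every subroutine call returns $X$ of size $\leq 2k$; and each newly added pair can be charged to a distinct vertex of $Z^*$ that the previous $X$ failed to cover, giving depth at most $|Z^*|\leq k$. Making this charging rigorous is the main technical difficulty, since the 2-approximation may return quite different solutions across iterations; a likely remedy is to use an iterative-compression style framework that fixes certain vertices of $X$ across calls to the subroutine, or to argue via the LP relaxation that only $O(k)$ pair constraints in $P$ can be essential before the approximate multicut automatically separates all of $\cT$. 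Combining the $2^{O(k)}$ branching tree with the $O^*(2^{O(k)})$ per-leaf cost yields the claimed running time.
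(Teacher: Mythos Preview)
Your proposal has a genuine gap at precisely the point you flag: bounding the recursion depth by $O(k)$. The charging argument you sketch does not go through. Along the ``correct'' branch, every pair you add to $P$ is indeed separated by the fixed optimum $Z^*$, so the multicut optimum stays $\leq k$ and each call to the black-box $2$-approximation succeeds. But nothing prevents the black box from returning an entirely new $X$ at every call, so a pair you add at step $t$ need not ``use up'' any vertex of $Z^*$ in a way that persists to step $t{+}1$. Concretely, $Z^*$ may separate $\Omega(|\cT|)$ pairs drawn from distinct terminal sets, and you could keep discovering such sets one at a time with no bound tied to $k$. Neither of the remedies you mention closes this: fixing part of $X$ across calls destroys the $2$-approximation guarantee of the subroutine, and the LP value of $(G,P)$ is bounded by $k$ throughout, so it cannot serve as an increasing potential.

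The paper does \emph{not} use the Lokshtanov--Ramanujan--Saurabh--Zehavi algorithm as a black box. It opens it up: iterative compression yields a set $X$ of size $O(k)$ satisfying all requests, then the method of two important separators together with the divide-and-conquer guessing of the partition $\cX$ of $X$ (exactly as in~\cite{lokshtanov2021fpt}) produces a $2|Z'|$-size multiway cut $M$ for $\cX$. This reduces the problem to one \textsc{Strict Steiner Multicut} instance per component of $G-M$, where ``strict'' means a single designated vertex $x$ already satisfies every request. That restricted problem is then solved exactly, in $O^*(p^k)$ time, by a new branching algorithm generalising the \textsc{Digraph Pair Cut} routine of Kratsch and Wahlstr\"om: maintain a growing set $Y$, compute the closest minimum $xY$-separator $W$, and if some $T_i$ is unsatisfied branch on which $t\in T_i$ to add to $Y$. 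The depth bound here is immediate because the $xY$-maxflow strictly increases at every branching step. The key idea you are missing is this reduction to the strict version; there the depth bound is trivial, whereas in your formulation it is the whole difficulty.
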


\fi

\iflong
  \section{Reductions}
  \label{sec:reductions}

  We present several reductions grouped into four parts.
  Section~\ref{ssec:expressive} contains some
  pp-definition and implementation results. proving
  Lemmas~\ref{lem:neither-const-nor-neg}~and~\ref{lem:not-conj-nae-3-or-disjneqneq}.
  Section~\ref{ssec:split-paired-cut} contains W[1]-hardness
  proofs by reduction from \textsc{Split Paired Cut} to
  $\mincsp{R, =, \neq}$ where
  $R$ is an $(=, =)$-relation (Lemma~\ref{lem:eq-eq-hard}),
  $R$ is a $(\neq,\neq)$-relation (Lemma~\ref{lem:neq-neq-hard}),
  $R$ is an $(=,\neq)$-relation (Lemma~\ref{lem:neq-neq-hard}), and
  $R = R^\lor_{\neq,\neq}$ (Lemma~\ref{lem:disjneqneq-hard}).
  Section~\ref{ssec:odd3-nae3} provides
  W[1]- and W[2]-hardness proofs for
  $\mincsp{\rel{NAE}_3, =, \neq}$ and $\mincsp{\rel{ODD}_3, =, \neq}$,
  respectively, proving Lemmas~\ref{lem:nae3-hard}~and~\ref{lem:hitting-set-to-odd3}.
  Finally, 
  in Section~\ref{ssec:multicut-variants}
  we present a reduction from \textsc{Edge Multicut} to $\mincsp{=, \neq}$ (Lemma~\ref{lem:multicut-to-mincsp})
  and reduction from \textsc{MinCSP} problems to
  \mdt and \djcut, supporting the positive results in the classification
  of exact fpt and fpt-approximation complexity
  (specifically, Lemmas~\ref{lem:split-nae3-to-mdt}~and~\ref{lem:essen-neg-lmcut}).

  \subsection{Expressive Power of Some Relations}
  \label{ssec:expressive}

  Let $R$ be an equality constraint relations,
  and $\phi_R$ be a CNF formula that defines $R$.
  We say that $\phi_R$ is \emph{reduced} if
  removing any clause or literal alters
  the defined relation.

  \begin{observation} \label{obs:minimal-formula}
    Let $\phi_R$ be a reduced definition of an equality constraint relation $R$.
    Suppose $\phi_R$ contains clause $C = \bigvee_{s=1}^{t} x_{i_s} \odot_s x_{j_s}$
    where $\odot_s \in \{=, \neq\}$.
    If $\phi_R$ is reduced, then for every $1 \leq u \leq t$,
    there is a tuple in $R$ that
    satisfies literal $x_{i_u} \odot_u x_{j_u}$,
    and violates $x_{i_s} \odot_s x_{j_s}$ for all $s \neq u$.    
  \end{observation}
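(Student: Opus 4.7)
The plan is to exploit the definition of \emph{reduced} directly: if the literal $x_{i_u} \odot_u x_{j_u}$ is essential to the clause $C$, then removing it must change the relation defined by the formula, and this change witnesses the desired tuple.

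Concretely, I would start by letting $\phi'_R$ denote the formula obtained from $\phi_R$ by removing the single literal $x_{i_u} \odot_u x_{j_u}$ from the clause $C$, and let $R'$ be the relation defined by $\phi'_R$. Since shrinking a clause strengthens the disjunction, every assignment satisfying $\phi'_R$ also satisfies $\phi_R$, so $R' \subseteq R$. Because $\phi_R$ is reduced, the removal of this literal must change the defined relation, so $R' \subsetneq R$ and we can pick a tuple $\tau \in R \setminus R'$.

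The heart of the argument is then to show that $\tau$ has exactly the required pattern of satisfied and violated literals in $C$. On the one hand, since $\tau \in R$ the tuple satisfies $\phi_R$, and in particular it satisfies $C$, so at least one literal of $C$ is satisfied by $\tau$. On the other hand, since $\tau \notin R'$ it must falsify some clause of $\phi'_R$; but every clause of $\phi'_R$ other than the shrunk $C$ is identical to a clause of $\phi_R$ and hence is satisfied by $\tau$. Therefore it is the shrunk $C$, namely $\bigvee_{s \neq u} x_{i_s} \odot_s x_{j_s}$, that $\tau$ violates, meaning $\tau$ falsifies every literal $x_{i_s} \odot_s x_{j_s}$ with $s \neq u$. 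Combining the two observations, the unique literal of $C$ that $\tau$ can possibly satisfy is $x_{i_u} \odot_u x_{j_u}$, so $\tau$ satisfies that literal and violates all others, as required.

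There is no real obstacle here: the only subtlety is keeping straight the direction of inclusion (removing a literal from a clause \emph{shrinks} the relation, not enlarges it), and making sure that the reduced hypothesis is applied to a single literal removal rather than to removing the entire clause. Once that is set up, the tuple is produced directly from the witness of $R' \subsetneq R$.
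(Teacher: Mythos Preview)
Your argument is correct and is exactly the standard one: remove the single literal, observe that this shrinks the relation, and take a witnessing tuple from the difference. The paper itself does not supply a proof for this statement---it is labelled an \emph{Observation} and left without justification---so there is nothing to compare against beyond noting that your reasoning is the natural one the authors presumably had in mind.
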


  We can now prove Lemma~\ref{lem:neither-const-nor-neg}
  which states that a non-constant language implements $x_1 \neq x_2$, and
  a Horn, but not strictly negative language implements $x_1 = x_2$.

  \begin{proof}[Proof of Lemma~\ref{lem:neither-const-nor-neg}]
    First, suppose $R \in \NN^r$ is not constant and 
    let $\ba = (\ba_1, \dots, \ba_r)$ be a least refined tuple in $R$,
    i.e. $\ba$ does not strictly refine any tuple in $R$.
    Since $R$ is not constant, the number of 
    distinct entries in $\ba$ is at least two.
    By permuting indices, assume that $\ba_1 \neq \ba_2$ and
    consider the constraint $R(x_{\ba_1}, \dots, x_{\ba_r})$.
    For example, if $\ba = (1,2,2,3,2,3)$, then we consider
    $R(x_1, x_2, x_2, x_3, x_2, x_3)$.
    Suppose $\bb$ is a tuple of values that satisfies the constraint.
    Observe that $\ba_i = \ba_j$ implies $\bb_i = \bb_j$ 
    since $x_{\ba_i}$ and $x_{\ba_j}$ denote the same variable.
    Hence, $\ba$ refines $\bb$.
    By the choice of $\ba$, the refinement is not strict,
    hence $\ba_1 \neq \ba_2$ implies $\bb_1 \neq \bb_2$.

    Now suppose $R$ is Horn and not strictly negative,
    and $\phi_R$ is a reduced CNF definition of $R$.
    Since $R$ is not strictly negative,
    $\phi_R$ contains a clause $C$ with a positive literal.
    By permuting indices, assume $C$ is
    $(x_1 = x_2 \lor \bigvee_{s=1}^{t} x_{i_s} \neq x_{j_s})$.
    By Observation~\ref{obs:minimal-formula},
    there is a tuple $\ba \in R$
    with $\ba_1 = \ba_2$ and $\ba_{i_s} = \ba_{j_s}$ for all $1 \leq s \leq t$.
    Consider an instance $R(\bx)$,
    where $\bx$ is a tuple of variables
    such that $\bx_{i_s} = \bx_{j_s}$ for all $1 \leq s \leq t$,
    while all other variables are distinct.
    Let $\bx_1,\bx_2$ be the primary variables,
    and all remaining variables be auxiliary.
    Since $\ba \in R$, constraint $R(\bx)$ is consistent.
    Moreover, by identifying $\bx_{i_s}$ and $\bx_{j_s}$ for all $s$,
    we falsify every literal in $C$ except for $\bx_1 = \bx_2$,
    hence $R(\bx)$ implies $\bx_1 = \bx_2$.
  \end{proof}

  We use a consequence of Proposition~68~of~\cite{BodirskyCP10equality}.

  \begin{proposition} \label{prop:projection}
    Let $R$ be a negative relation of arity $r$,
    and let $1 \leq i_1, \dots, i_t \leq r$ be a set of indices.
    The projection of $R$ onto indices $i_1, \dots, i_r$
    is a negative relation.
  \end{proposition}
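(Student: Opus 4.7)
The plan is to prove that projection preserves negativity by a direct syntactic argument, exploiting that the domain $\NN$ is infinite, so existentially quantified variables can take ``generic'' values that trivially satisfy every disequality literal involving them. (An algebraic alternative would be to invoke Proposition~68 of~\cite{BodirskyCP10equality} to identify negative relations as invariants of a fixed polymorphism clone, and then use the standard fact that projections preserve all polymorphism invariants; I favour the syntactic route because it is short and self-contained.)

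Fix a reduced CNF definition $\phi_R = \bigwedge_j C_j$ of $R$ in which positive literals appear only as singleton clauses $(x_i = x_{i'})$. Let $\sim$ be the equivalence relation on $[r]$ generated by these singletons. Without loss of generality assume the projection is onto $\{1, \dots, t\}$, and write $V_1 = \{[i] : i \in [t]\}$ and $V_2 = ([r]/\mathord{\sim}) \setminus V_1$ for the equivalence classes that meet, respectively avoid, $[t]$.

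The core step is to establish that $\bb \in \NN^t$ lies in the projection $R'$ if and only if
\begin{enumerate}
\itemsep0em
\item $b_s = b_{s'}$ whenever $s \sim s'$ with $s, s' \in [t]$, and
\item every non-singleton clause of $\phi_R$ all of whose literals $(x_i \neq x_{i'})$ satisfy $[i], [i'] \in V_1$ is satisfied by $\bb$, after rewriting each index using some representative in $[t]$ of its $\sim$-class.
\end{enumerate}
The ``only if'' direction is immediate by restriction. For ``if'', I extend $\bb$ to a tuple in $\NN^r$ by assigning each class in $V_2$ a fresh value, mutually distinct and distinct from every entry of $\bb$; then any literal $(x_i \neq x_{i'})$ with $[i] \neq [i']$ and at least one of $[i], [i']$ in $V_2$ is automatically true, so every non-singleton clause of $\phi_R$ not captured by condition~(2) is satisfied by the extension. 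Conditions~(1) and~(2) together translate into a CNF for $R'$ whose positive literals appear only in singleton clauses, so $R'$ is negative.

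I do not anticipate any substantive obstacle. The only care needed is to rewrite $V_1$-indices into $[t]$ using that every class in $V_1$ contains a representative in $[t]$, and to discard literals of the form $(x_i \neq x_{i'})$ with $[i] = [i']$, which are always false and so contribute nothing to satisfiability; these are purely bookkeeping concerns.
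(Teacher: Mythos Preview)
Your argument is correct. The one point that deserves slightly more care than ``bookkeeping'' is the handling of literals $(x_i \neq x_{i'})$ with $[i]=[i']$: you need that \emph{every} clause containing such a literal still has, after discarding it, either a literal with a $V_2$-class endpoint or is captured by condition~(2). This is exactly what reducedness of $\phi_R$ guarantees (a literal that is always false given the singleton clauses can be dropped without altering the relation, so a reduced negative-form CNF contains none), and you do assume reducedness, but it would be cleaner to state up front that the reduced CNF has no such literals rather than defer this to the final paragraph.

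As for the comparison with the paper: the paper does not give a proof at all. It states the proposition as ``a consequence of Proposition~68 of~\cite{BodirskyCP10equality}'', i.e., it takes precisely the algebraic route you mention and set aside (negative relations are the invariants of a particular polymorphism clone, and pp-definable operations such as projection preserve all polymorphisms). Your syntactic argument is genuinely different: it is self-contained and makes the structure of the projected relation explicit, at the cost of a paragraph of case analysis, whereas the paper's route is a one-line citation that hides the work inside the algebraic machinery. Either is fine here; your version has the advantage that a reader need not consult~\cite{BodirskyCP10equality}.
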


  \begin{proof}[Proof of Lemma~\ref{lem:not-conj-nae-3-or-disjneqneq}]
      Let $\phi_R$ be a CNF definition of $R$
      with the minimum number of literals.
      Then $\phi_R$ contains a clause
      $C = \bigvee_{s=1}^{t} x_{i_s} \neq x_{j_s}$ with $t \geq 2$.
      Define formula $\phi' = \phi_R \land \bigwedge_{s=3}^{t} (x_{i_s} = x_{j_s})$,
      and relation $R'$ obtained by projecting all tuples satisfying $\phi'$ onto $i_1, j_1, i_2, j_2$.
      By Proposition~\ref{prop:projection}, $R'$ is essentially negative.
      Note that $\phi'$ implies $x_{i_1} \neq x_{j_1} \lor x_{i_2} \neq x_{j_2}$.
      Furthermore, $\phi_R$ implies 
      $C' = R'(x_{i_1}, x_{i_2}, x_{j_1}, x_{j_2}) \lor \bigvee_{s=3}^{t} x_{i_s} \neq x_{j_s}$.
      By minimality of $\phi_R$, no clause of the formula $C'$ subsumes the clause $C$,
      so $R'(x_{i_1}, x_{i_2}, x_{j_1}, x_{j_2})$ implies 
      neither $x_{i_1} \neq x_{j_1}$ nor $x_{i_2} \neq x_{j_2}$.
      We proceed with two cases based on 
      the cardinality of $\{i_1, j_1, i_2, j_2\}$.

      If $|\{i_1, j_1, i_2, j_2\}| = 3$, then $R'$ is an essentially ternary relation.
      Without loss of generality, assume $j_1 = i_2$ and note that $(1, 1, 1) \notin R'$.
      Since $R'$ is negative, $R'(x_1, x_2, x_3)$ implies 
      $x_1 \neq x_2$, $x_2 \neq x_3$, or $\rel{NAE}_3(x_1, x_2, x_3)$.
      The first two formulas are ruled out by minimality of $\phi_R$, 
      hence $R = \rel{NAE}_3$.

      If $|\{i_1, j_1, i_2, j_2\}| = 4$, let indices
      $p$ and $q$ range over $\{i_1, j_1\}$ and $\{i_2, j_2\}$, respectively.
      If $\phi_R$ implies $x_p = x_q$ for some $p$ and $q$,
      then we reduce to the previous case since 
      $R'$ is an essentially ternary relation.
      Otherwise, note that $(1, 1, 2, 2) \notin R'$,
      $R'$ is negative and $\phi_R$ does not imply $x_p = x_q$ for any $p, q$,
      hence $R'(x_1, x_2, x_3, x_4)$ implies $x_1 \neq x_2$,
      $x_3 \neq x_4$ or $x_1 \neq x_2 \lor x_3 \neq x_4$.
      The first two formulas are ruled out by minimality of $\phi_R$.
      Thus, $R(x_1, x_2, x_3, x_4) \land \bigwedge_{p,q} (x_p \neq x_q)$
      is a pp-definition of $R^{\lor}_{\neq,\neq}$.
   \end{proof}

\fi
\iflong
  
  \subsection{Hardness from Split Paired Cut}
  \label{ssec:split-paired-cut}
  
  We start with the following problem.
  
  \pbDefP{Split Paired Cut}
  {Graphs $G_1, G_2$, vertices $s_1, t_1 \in V(G_1)$, $s_2, t_2 \in V(G_2)$,
    a family of disjoint edge pairs $\cP \subseteq E(G_1) \times E(G_2)$,
    and an integer $k$.}
  {$k$.}
  {Is there a subset $X \subseteq \cP$ of size at most $k$ 
    such that for both $i \in \{1,2\}$,
    $\{e_i : \{e_1, e_2\} \in X \}$ is an $st$-cut in $G_i$?}
  
  \spc is W[1]-hard~(see Lemma~6.1~in~\cite{dabrowski2023almost}).
  There is a simple reduction
  from \spc to \mincsp{R, =, \neq} where
  $R$ is a $(=,=)$-relation.
  
  \begin{proof}[Proof of Lemma~\ref{lem:eq-eq-hard}]
    Let $(G_1, G_2, s_1, t_1, s_2, t_2, \cP, k)$ be an instance of \spc.
    Construct an instance $(I, k)$ of \mincsp{R, =, \neq} as follows.
    Let $V(I) = V(G_1) \cup V(G_2)$.
    For every edge $uv \in E(G_1) \cup E(G_2)$,
    add crisp constraint $u = v$ to $I$.
    Add crisp constraints $s_1 \neq t_1$ and $s_2 \neq t_2$ to $I$.
    Finally, pair up equality constraints according to
    the pairs in $\cP$.
    For every pair $\{u_1 v_1, u_2 v_2\} \in \cP$,
    remove constraints $u_1 = v_1$ and $u_2 = v_2$ from $I$
    and add a soft constraint $R(u_1, v_1, u_2, v_2)$.
    This completes the construction.
    Note that all unpaired edges correspond to crisp equality constraints in $I$.
    We proceed with the correctness proof.

    For one direction, assume $X \subseteq \cP$ is 
    a solution to $(G_1, G_2, s_1, t_1, s_2, t_2, \cP, k)$.
    Define $X' \subseteq C(I)$ that contains
    $R(u_1, v_1, u_2, v_2)$ for every pair $\{u_1 v_1, u_2 v_2\}$ in $X$.
    Note that $|X'| = |X| \leq k$.
    We claim that $I - X'$ is satisfied by assignment $\alpha$
    defined as follows.
    Let $\alpha(s_1) = 1$, $\alpha(t_1) = 2$,
    $\alpha(s_2) = 4$ and $\alpha(t_2) = 5$.
    Propagate these values to variables connected to
    $s_1, t_1, s_2, t_2$ by equality constraints;
    for the remaining variables $v$, set
    $\alpha(v) = 3$ if $v \in V(G_1)$ and 
    $\alpha(v) = 6$ if $v \in V(G_2)$.
    Since $\bigcup X$ is an $s_1 t_1$-cut and an $s_2 t_2$-cut,
    assignment $\alpha$ is well-defined.
    It satisfies $\alpha(v_1) \neq \alpha(v_2)$ for all
    $v_1 \in V(G_1)$ and $v_2 \in V(G_2)$.
    Furthermore, it satisfies crisp constraints 
    $s_1 \neq t_1$ and $s_2 \neq t_2$,
    hence $I - X'$ is consistent.

    For the other direction, let $Z$ be a solution to $(I, k)$.
    By construction, only $R$-constraints are soft in $I$,
    hence $Z$ only contains $R$-constraints.
    Define $Z' \subseteq \cP$ containing
    $\{u_1 v_1, u_2 v_2\}$ for all $R(u_1, v_1, u_2, v_2)$ in $Z$.
    Note that $|Z'| = |Z| \leq k$.
    Since $s_1 \neq t_1$ and $s_2 \neq t_2$ are crisp in $I$,
    $s_i$ and $t_i$ are not connected by equality constraints in $I - Z$,
    thus $\bigcup Z'$ is an $s_i t_i$-cut in $G_i$,
    and it is a union of $k$ pairs in $\cP$ by definition.
    Hence, $Z'$ is a solution to the instance of \spc.
  \end{proof}
\fi
\iflong
  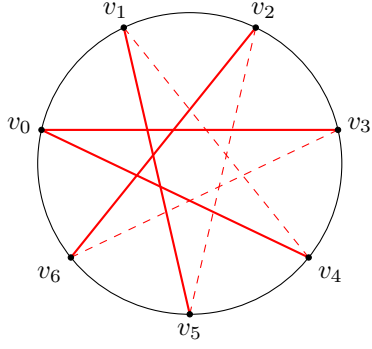
\begin{figure}
    \centering
    \begin{tikzpicture}
  \def \radius {2}
  \def \step {-360/7}

  \draw (0,0) circle (\radius);
  
  \coordinate (v5) at ({\radius*cos(270 + \step*0)},{\radius*sin(270 + \step*0)});
  \coordinate (v6) at ({\radius*cos(270 + \step*1)},{\radius*sin(270 + \step*1)});
  \coordinate (v0) at ({\radius*cos(270 + \step*2)},{\radius*sin(270 + \step*2)});
  \coordinate (v1) at ({\radius*cos(270 + \step*3)},{\radius*sin(270 + \step*3)});
  \coordinate (v2) at ({\radius*cos(270 + \step*4)},{\radius*sin(270 + \step*4)});
  \coordinate (v3) at ({\radius*cos(270 + \step*5)},{\radius*sin(270 + \step*5)});
  \coordinate (v4) at ({\radius*cos(270 + \step*6)},{\radius*sin(270 + \step*6)});

  \draw[red, thick]  (v0) -- (v3);
  \draw[red, dashed] (v1) -- (v4);
  \draw[red, dashed] (v2) -- (v5);
  \draw[red, dashed] (v3) -- (v6);
  \draw[red, thick]  (v4) -- (v0);
  \draw[red, thick]  (v5) -- (v1);
  \draw[red, thick]  (v6) -- (v2);

  \filldraw[black] (v5) circle (1pt) node[anchor={-270}] {$v_5$};
  \filldraw[black] (v6) circle (1pt) node[anchor={-270+\step*1}] {$v_6$};
  \filldraw[black] (v0) circle (1pt) node[anchor={-270+\step*2}] {$v_0$};  
  \filldraw[black] (v1) circle (1pt) node[anchor={-270+\step*3}] {$v_1$};  
  \filldraw[black] (v2) circle (1pt) node[anchor={-270+\step*4}] {$v_2$};  
  \filldraw[black] (v3) circle (1pt) node[anchor={-270+\step*5}] {$v_3$};  
  \filldraw[black] (v4) circle (1pt) node[anchor={-270+\step*6}] {$v_4$};  

\end{tikzpicture}
    \caption{An illustration of a choice gadget for $t = 3$.
      Black arcs represent equality constraints of cost 2,
      dashed red edges -- soft disequality constraints, and
      bold red edges -- crisp disequality constraints.}
    \label{fig:wheel}
  \end{figure}
  
  Further reductions in this section share a choice gadget.
  Let $S = \{s_1, \dots, s_t\}$ be a set.
  Define an instance $W(S)$ of $\csp{=,\neq}$ as follows.
  Introduce $2t+1$ variables $v_0, \dots, v_{2t}$.
  In what follows, indices are identified modulo $2t+1$, e.g. $v_0 = v_{2t+1}$.
  Connect variables in a double-cycle of equalities, i.e.
  add two copies of soft constraint $v_i = v_{i+1}$ for all $0 \leq i \leq 2t$.
  We regard $v_i = v_{i+1}$ as single constraints as having cost two.
  The \emph{forward partner} of a variable $v_i$ is $f(v_i) := v_{i+t}$,
  i.e. the variable that is $t$ steps ahead of $v_i$ on the cycle.
  Add constraints $v_i \neq f(v_i)$ for all $0 \leq i \leq 2t$,
  making them soft if $1 \leq i \leq t$ and crisp otherwise.
  Note that $v_0 \neq v_t$ is crisp.
  See Figure~\ref{fig:wheel} for an illustration.
  
  \begin{lemma} \label{lem:wheel-gadget}
    Let $S$ be a set of size at least two and $W(S)$ be the choice gadget.
    Then $\cost(W(S)) = 5$ and every optimal solution deletes
    $v_{i-1} = v_{i}$, $v_{i} \neq f(v_i)$ and $f(v_{i}) = f(v_{i+1})$
    for some $i \in \range{1}{t}$.
  \end{lemma}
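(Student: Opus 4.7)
The plan is to prove $\cost(W(S)) \leq 5$ by exhibiting a solution and the matching $\cost(W(S)) \geq 5$ by a structural argument on equality cuts. Write the cost of any feasible deletion set as $2 k_{\mathrm{eq}} + k_{\mathrm{dis}}$, where $k_{\mathrm{eq}}$ counts deleted equality constraints (each of cost $2$) and $k_{\mathrm{dis}}$ counts deleted soft disequalities. For the upper bound, for any $i$ with $1 \leq i \leq t$ the deletion of $v_{i-1} = v_i$, $v_i \neq f(v_i)$, and $f(v_i) = f(v_{i+1})$ has total cost $5$; the cycle of equalities splits into two arcs, one containing $v_i, v_{i+1}, \dots, f(v_i)$ and the other the remaining $t$ vertices, and assigning distinct values to the two arcs satisfies every remaining constraint.

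For the lower bound, any feasible solution must disconnect $v_0$ from $v_t$ in the equality graph because $v_0 \neq v_t$ is crisp, and the two edge-disjoint paths between them in the cycle force $k_{\mathrm{eq}} \geq 2$, giving equality cost at least $4$. If $k_{\mathrm{eq}} \geq 3$ the cost is already at least $6$, so the main case is $k_{\mathrm{eq}} = 2$, with cuts $e_a, e_b$ producing an arc $A = \{v_{a+1}, \dots, v_b\}$ of size $s_A = b - a$ and its complement $B$ of size $2t + 1 - s_A$, where $0 \leq a \leq t - 1$ and $t \leq b \leq 2t$. The central step is to count partner pairs $\{v_j, f(v_j)\}$ trapped in each arc, distinguishing straight pairs ($j + t \leq 2t$) from wrap pairs ($j + t > 2t$). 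A direct index calculation shows that arc $A$ contains $\max(0, s_A - t)$ straight pairs, all soft since their $j$-indices lie in $\{a+1, \dots, b-t\} \subseteq \{1, \dots, t\}$, and $\max(0, s_A - t - 1)$ wrap pairs, all crisp since their $j$-indices lie in $\{a+t+2, \dots, b\} \subseteq \{t+2, \dots, 2t\}$; symmetrically, arc $B$ contains $\max(0, t - s_A)$ soft straight pairs and $\max(0, t + 1 - s_A)$ crisp wrap pairs.

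Feasibility forbids any crisp pair from being trapped in a single arc, which forces $s_A - t - 1 \leq 0$ and $t + 1 - s_A \leq 0$, hence $s_A = t + 1$; in this regime exactly one soft pair is trapped in arc $A$ and none in arc $B$, so $k_{\mathrm{dis}} = 1$ and the total cost is exactly $4 + 1 = 5$, giving both the lower bound and the structural characterization. Concretely, the trapped soft pair is $\{v_{a+1}, v_{a+1+t}\} = \{v_i, f(v_i)\}$ for $i := a + 1$, and the two cut equalities $e_a$ and $e_b = e_{a + t + 1}$ are exactly $v_{i-1} = v_i$ and $f(v_i) = f(v_{i+1})$, matching the claim. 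The main obstacle is the index arithmetic modulo $2t+1$ and verifying the \emph{wrap pairs are always crisp, straight pairs are always soft} dichotomy on which the feasibility analysis rests; once that is established, the conclusion reduces to simple inequalities on $s_A$.
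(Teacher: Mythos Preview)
Your proof is correct. The upper bound and the reduction to the $k_{\mathrm{eq}}=2$ case match the paper, but the way you pin down the two equality cuts is genuinely different. The paper fixes the cut $v_{j-1}=v_j$ on the short $v_0$--$v_t$ path and then uses two specific crisp constraints, $f(v_j)\neq v_{j-1}$ and $f(v_{j+1})\neq v_j$, arguing that the unique equality edge lying on both of the corresponding paths in $W(S)-\{v_{j-1}=v_j\}$ is $f(v_j)=f(v_{j+1})$; hence that must be the second cut, and the last unit of budget goes to $v_j\neq f(v_j)$. You instead parametrize both cuts, count trapped partner pairs in each arc, and use the observation that no crisp pair may be trapped to force $s_A=t+1$. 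Your counting approach is more systematic and makes the ``no other disequality is violated'' claim fall out of the same computation, at the cost of heavier index bookkeeping; the paper's path-intersection argument is shorter but requires spotting the right pair of crisp constraints. Both are sound.
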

  \begin{proof}
    First, we claim that every optimal solution
    consists of two equality constraints and a disequality constraint.
    Note that $v_0$ and $v_t$ are connected
    by two disjoint paths of equality constraints, so
    one constraint of cost two has to be deleted from each. 
    After this, the cycle is split into two paths,
    and the longest of them has at least
    $\ceil{\frac{2t + 1 - 2}{2}} = t$ edges.
    Then there is a pair of variables $v_i$ and $f(v_i)$
    still connected on the longer path.
    To see that deleting two equalities and a disequality suffices,
    note that $\{v_{i-1} = v_{i}, v_{i} \neq f(v_i), f(v_{i}) = f(v_{i+1}) \}$
    for any $1 \leq i \leq t$ is a solution.

    To show every optimal solution is of the form above,
    observe that the crisp constraint $v_0 \neq v_t$ implies that
    every solution has to delete
    $v_{j-1} = v_{j}$ for some $1 \leq j \leq t$.
    By construction, there are crisp constraints
    $f(v_{j}) \neq v_{j-1}$ and $f(v_{j+1}) \neq v_{j}$ in $W(S)$, and
    there are paths of equality constraints connecting
    $f(v_{j}), v_{j-1}$ and $f(v_{j+1}), v_{j}$ in $W(S) - \{v_{j-1} = v_{j}\}$
    which intersect only in $f(v_j) = f(v_{j+1})$, so 
    this constraint has to be deleted.
    Now $W(S) - \{ v_{j-1} = v_j, f(v_{j}) = f(v_{j+1}) \}$
    contains a path connecting $v_j$ and $f(v_j)$,
    and the remaining budget of one is only sufficient
    to delete the soft constraint $v_j \neq f(v_j)$.
  \end{proof}

  We interpret deleting $v_{i-1} = v_{i}$, $v_{i} \neq f(v_i)$ and $f(v_{i}) = f(v_{i+1})$ 
  from $W(S)$ as choosing element $s_i$ from the set $S$.
  For the next proofs, we remark that by
  the construction in Lemma~5.7~of~\cite{kim2020solving},
  we may assume that graphs $(G_1, G_2)$ in an instance of \spc 
  come with two maxflows $\cF_1$ and $\cF_2$ that partition
  $E(G_1)$ and $E(G_2)$, respectively,
  into $k$ pairwise edge-disjoint paths.
  We are now ready to show that $\mincsp{R, =, \neq}$ is W[1]-hard
  if $R$ is a $(\neq, \neq)$-relation.

  \begin{proof}[Proof of Lemma~\ref{lem:neq-neq-hard}]
    Let $(G_1, G_2, s_1, t_1, s_2, t_2, \cP, k)$ 
    be an instance of \spc.
    Assume $k = 2\ell$, and $\cF_i$ for $i \in \{1,2\}$ are 
    $s_i t_i$-maxflows in $G_i$ partitioning 
    $E(G_i)$ into $k$ pairwise edge-disjoint paths.
    Construct an instance $(I, k')$ of \mincsp{R, =, \neq} 
    with $k'= 9\ell$ as follows.
    Start by creating a variable for every vertex in $V(G_1) \cup V(G_2)$ with the same name.
    For each $i \in \{1,2\}$, consider a path $P \in \cF_i$, and 
    let $p$ be the number of edges on $P$.
    Create a choice gadget $W(P)$ for every $P$ with variables 
    $v^{P}_0, \dots, v^{P}_p$ following the path, and
    fresh variables $v^{P}_{j}$ for $p < j \leq 2p$ added to the instance.
    Observe that variables may appear on several paths in $\cF_i$.
    In particular, $v^P_{0} = s_i$ and $v^P_{p} = t_i$ for every $P \in \cF_i$,
    so we have crisp constraints $s_i \neq t_i$.
    Furthermore, since $\cF_i$ partitions $E(G_i)$, 
    the construction contains a copy of graphs $G_1$ and $G_2$
    with equality constraints for edges.
    Now we pair up edges according to $\cP$.
    For every pair $\{e_1, e_2\} \in \cP$,
    let $P \in \cF_1$ and $Q \in \cF_2$ be the paths such that $e_1 \in P$ and $e_2 \in Q$, and 
    suppose $e_1 = v^{P}_{i-1} v^{P}_{i}$ and $e_2 = v^{Q}_{j-1} v^{Q}_{j}$.
    Pair up soft constraints $v^{P}_i \neq f(v^{P}_i)$ and $v^{Q}_j \neq f(v^{Q}_j)$,
    i.e. replace individual constraints with one soft constraint
    \[ R(v^{P}_{i}, f(v^{P}), v^{Q}_{j}, f(v^{Q}_{j})). \]
    Finally, if an edge $uv \in E(G)$ does not appear in any pair of $\cP$,
    make constraint $u = v$ crisp in $I$.
    This completes the construction.

    For one direction, suppose $X \subseteq \cP$ is a solution to 
    $(G_1, G_2, s_1, t_1, s_2, t_2, \cP, k)$.
    Define $X' \subseteq C(I)$ as follows.
    For every pair $\{e_1, e_2\} \in X$,
    let $P \in \cF_1$ and $Q \in \cF_2$ be the paths such that $e_1 \in P$ and $e_2 \in Q$, and 
    suppose $e_1 = v^{P}_{i-1} v^{P}_{i}$ and $e_2 = v^{Q}_{j-1}, v^{Q}_{j}$.
    Add constraints $v^{P}_{i-1} = v^{P}_{i}$, $f(v^{P}_{i}) = f(v^{P}_{i+1})$,
    $v^{Q}_{j-1} = v^{Q}_{j}$, $f(v^{Q}_{j}) = f(v^{Q}_{j+1})$ and
    $R(v^{P}_{i}, f(v^{P}_i), v^{Q}_{j} f(v^{Q}_{j}))$ to $X'$.
    Note that $X'$ only contains soft constraints,
    and $|X| = 2\ell$ implies $|X'| = (2 \cdot 4 + 1)\ell = 9\ell$.
    Moreover, by construction of $X'$ and Lemma~\ref{lem:wheel-gadget}, 
    instances $W(P) - X'$ are consistent for every $P \in \cF_1 \cup \cF_2$.
    We claim that $I - X'$ is consistent.
    Consider the graph $G' = G - \bigcup X$.
    Observe that $X_i = \{e_i : \{e_1, e_2\} \in X\}$ is an $s_i t_i$-cut,
    $|X_i| = |X| = 2\ell$ and $s_i t_i$-maxflow in $G_i$ equals $2\ell$. 
    Hence, $G'$ contains exactly four components which contain
    $s_1$, $t_1$, $s_2$ and $t_2$, respectively.
    Define assignment that maps the components of $G'$ to distinct values.
    This assignment clearly satisfies all crisp constraints.
    It also satisfies constraints constraints $R(v^{P}_{i}, f(v^{P}_i), v^{Q}_{j} f(v^{Q}_{j}))$
    that remain in $I - X'$ for any $(\neq,\neq)$-relation $R$,
    since it satisfies 
    $v^{P}_{i} \neq v^{Q}_{j}$, 
    $f(v^{P}_{i}) \neq v^{Q}_{j}$,
    $v^{P}_{i} \neq f(v^{Q}_{j})$ and
    $f(v^{P}_{i}) \neq f(v^{Q}_{j})$.
    We conclude that $I - X'$ is consistent.

    For the opposite direction,
    suppose $Z \subseteq C(I)$ is a solution to $(I, k')$.
    Define the set of edges $Z' \subseteq E(G)$
    with an edge $v^{P}_{i-1}v^{P}_{i}$ for every constraint $v^{P}_{i} \neq f(v^{P}_{i})$ in $Z$.
    We claim that $Z'$ is a union of $\ell$ pairs.
    We have $2\ell$ gadgets $W(P)$ in $I$ for each path $P \in \cF_1 \cup \cF_2$,
    so by Lemma~\ref{lem:wheel-gadget}, 
    $Z$ contains two equality constraints from every gadget $W(P)$,
    and deleting each costs two.
    Since the gadgets are constraint-disjoint,
    this amounts to $8\ell$ equality constraints in total.
    The remaining budget is $\ell$,
    so the remaining constraints in $Z$ are $(\neq,\neq)$-constraints.
    By construction,
    $Z'$ is then a union of $\ell$ pairs.
    Finally, note that $\{e_i : \{e_1, e_2\} \in Z'\}$ are 
    $s_i t_i$-cuts because $s_i \neq t_i$ are crisp constraints in $I$,
    hence $Z'$ is a solution to the instance of \spc.
  \end{proof}
\fi
\iflong
  Now we prove that
  \mincsp{R, =, \neq} for an $(=,\neq)$-relation $R$
  is W[1]-hard by providing a hybrid of the previous two reductions.

  \begin{proof}[Proof of Lemma~\ref{lem:eq-neq-hard}]
    Let $(G_1, G_2, s_1, t_1, s_2, t_2, \cP, k)$ 
    be an instance of \spc.
    Assume $k = 2\ell$, and $\cF_2$ is an 
    $s_2 t_2$-maxflow in $G_2$ that partitions
    $E(G_2)$ into $k$ pairwise edge-disjoint paths.
    Construct an instance $(I, k')$ of \mincsp{R, =, \neq} as follows.
    Start by creating a variable for every vertex in $V(G_1) \cup V(G_2)$ with the same name.
    For every edge $uw \in E(G_1)$, add constraint $u = w$ in $I$.
    For every path $P \in \cF_2$ with $p$ edges,
    create a choice gadget $W(P)$ with variables 
    $v^{P}_0, \dots, v^{P}_p$ following the path, and
    fresh variables $v^{P}_{j}$ for $p < j \leq 2p$ added to $I$.
    Now we pair up edges according to $\cP$.
    For every pair $\{e_1, e_2\} \in \cP$,
    let $e_1 = uv$ and $e_2 = v^{P}_{j-1} v^{P}_{j}$,
    where $P$ is the path in $\cF_2$ such that $e_2 \in P$.
    Pair up constraints $u = w$ and $v^{P}_j \neq f(v^{P}_j)$,
    i.e. remove individual constraints and add one soft constraint
    $R(u, w, v^{Q}_{j}, f(v^{Q}_{j}))$.
    Finally, if an edge $uv \in E(G)$ does not appear in any pair of $\cP$,
    make constraint $u = v$ crisp in $I$,
    and set $k' = 5\ell$.
    This completes the construction.

    Correctness proof is analogous to the proofs of
    Lemmas~\ref{lem:eq-eq-hard}~and~\ref{lem:neq-neq-hard}
    and follows from the observation that
    satisfying all choice gadgets $W(P)$ for $P \in \cF_2$
    requires deleting $5\ell$ constraints,
    and the choices exactly determine
    which $R$-constraints are picked.
  \end{proof}

  Finally, we prove that $\mincsp{R^{\lor}_{\neq, \neq}, =, \neq}$ is W[1]-hard
  by reduction from the following problem.

  \pbDefP{Multicoloured Independent Set (MIS)}
  {Graph $G$, partition of $V(G) = V_1 \uplus \dots \uplus V_k$, and integer $k$.}
  {$k$.}
  {Is there an independent set in $G$ with one vertex from each $V_i$?}

  \begin{proof}[Proof of Lemma~\ref{lem:disjneqneq-hard}]
    Let $(G, V_1 \uplus \dots \uplus, V_k, k)$ be an instance of \textsc{MIS}.
    Enumerate vertices in each set $V_i$.
    Construct an instance $(I, k')$ of \mincsp{R^{\lor}_{\neq, \neq}, =, \neq} as follows.
    Create a choice gadget $W_i = W(V_i)$ for each $i \in [k]$.
    Let the variables in the gadget be $x^i_0, \dots, x^i_{2|V_i|}$.
    Add crisp disequality constraint $x^i_0 \neq x^i_{|V_i|}$.
    For every edge $uv \in E(G)$, assume
    $u$ is vertex $j$ of $V_i$,
    $v$ is vertex $j'$ of $V_{i'}$, and
    add a crisp constraint
    \[
      R^\lor_{\neq, \neq}(x^i_j, f(x^i_j), x^{i'}_{j'}, f(x^{i'}_{j'})). 
    \]
    Finally, set the budget to $k' = 5k$.
    This completes the reduction.

    For one direction, suppose $X \subseteq V(G)$ is
    a solution to $(G, V_1 \uplus \dots \uplus, V_k, k)$.
    Construct a subset $X'$ of constraints in $I$
    by adding $x^i_{j-1} = x^i_j$, $f(x^i_j) = f(x^i_{j+1})$ and
    $x^{i}_j \neq f(x^{i}_j)$ to $X'$
    whenever $X$ contains vertex $j$ from $V_i$.
    Note that $|X| = k$ implies $|X'| = 4k$
    because equalities are present in two copies.
    Since $X'$ separates $x^i_0$ and $x^i_{|V_i|}$
    in every gadget, we can define an assignment
    that maps all vertices in $\{x^i_0$, $x^i_{|V_i|} : i \in [k]\}$
    to distinct values, and propagate through equality constraints.
    We claim that this assignment satisfies $I - X'$.
    Observe that, by the choice of $X'$,
    equality constraints in $W_i - X'$ imply $x^i_j = f(x^i_j)$ for exactly one value of $j$.
    Suppose for contradiction that a crisp constraint
    $R^\lor_{\neq, \neq}(x^i_j, f(x^i_j), x^{i'}_{j'}, f(x^{i'}_{j'}))$ is violated.
    Then we have $x^{i}_{j} = f(x^{i}_{j})$ and $x^{i'}_{j'} = f(x^{i'}_{j'})$,
    hence vertex $j$ of $V_i$ and vertex $j'$ of $V_{i'}$ are in $X$.
    However, by construction of $I$,
    these vertices are connected by an edge in $G$,
    contradicting the fact that $X$ is an independent set.

    For the opposite direction, let $Z \subseteq C(I)$
    be a solution to $(I, k')$.
    By Lemma~\ref{lem:wheel-gadget},
    after an optimal solution is deleted from $W(P)$,
    there remains exactly one pair $x^i_j$ and $f(x^i_j)$ connected 
    by a path of equalities in $I - Z$.
    Construct $Z' \subseteq V(G)$ by adding
    vertex $j$ of $V_i$ to $Z'$ if $x^i_j, f(x^i_j)$ is picked.
    Note that $|Z| = 5k$ implies $|Z'| = k$ by construction.
    It remains to show that $Z'$ is an independent set.
    For the sake of contradiction, suppose $u, v \in Z'$,
    $u$ is vertex $j'$ of $V_{i'}$,
    $v$ is vertex $j$ of $V_i$,
    and $uv \in E(G)$.
    By construction, $I$ contains a crisp constraint
    $R^\lor_{\neq, \neq}(x^i_j, f(x^i_j), x^{i'}_{j'}, f(x^{i'}_{j'}))$.
    However, $I - Z$ implies
    $x^{i}_{j} = f(x^{i}_{j})$ and $x^{i'}_{j'} = f(x^{i'}_{j'})$,
    so we arrive at a contradiction.
  \end{proof}
\fi

\iflong
  \subsection[Hardness of ODD3 and NAE3]{Hardness for $\rel{ODD}_3$ and $\rel{NAE}_3$}
  \label{ssec:odd3-nae3}

  We prove that if an equality constraint language $\Gamma$
  pp-defines $\rel{ODD}_3$ or $\rel{NAE}_3$,
  then $\mincsp{\Gamma, =, \neq}$ is W[2]- and W[1]-hard, respectively.
  The reductions are from \textsc{Hitting Set} and \textsc{Steiner Multicut}.
  We start with a cost-preserving reduction
  that takes an instance of \textsc{Hitting Set}
  and produces in polynomial time an instance
  of \mincsp{\rel{ODD}_3, =, \neq} with strict
  $\rel{ODD}_3$-constraints.

  \begin{proof}[Proof of Lemma~\ref{lem:hitting-set-to-odd3}]
    Let $(V, \cE, k)$ be an instance of \textsc{Hitting Set}.
    Assume $V = \{1,\dots,n\}$. 
    Construct an instance $(I, k)$ of $\mincsp{\rel{ODD}_3, =, \neq}$ as follows.
    First, introduce variables $x_1,\dots,x_n$ and $z$, and 
    add soft constraints $x_i = z$ for all $i \in [n]$.
    For every subset $e = \{a_1, \dots, a_{\ell}\} \in \cE$,
    introduce auxiliary variables $y_2, \dots, y_{\ell}$
    and the following crisp constraints:
    \begin{enumerate}
    \item $\rel{ODD}_3(x_{a_1}, x_{a_2}, y_2)$,
    \item $\rel{ODD}_3(y_{i-1}, x_{a_i}, y_{i})$ for all $3 \leq i \leq \ell$, and
    \item $x_{a_1} \neq y_\ell$.
    \end{enumerate}
    This completes the reduction.

    To show correctness, first assume $X$ is a solution to $(I, k)$.
    Define a subset $X' \subseteq [n]$ containing all indices $i \in [n]$
    such that $x_i = z$ is in $X$.
    We claim that $X'$ intersects every set in $\cE$.
    For the sake of contradiction, assume $e \cap X' = \emptyset$ for some subset 
    $e = \{a_1, \dots, a_{\ell}\} \in \cE$.
    Then $I - X$ contains constraints $x_{a_i} = z$ for all $i \in [\ell]$,
    implying that $x_{a_i} = x_{a_j}$ for all $a_i,a_j \in e$.
    Now consider the crisp constraints introduced for $e$ in $I$.
    Constraint $\rel{ODD}_3(x_{a_1}, x_{a_2}, y_2)$ together with $x_{a_1} = x_{a_2} = z$
    implies that $y_{2} = z$.
    Constraints $\rel{ODD}_3(y_{i-1}, x_{a_i}, y_{i})$
    imply that $y_{i} = z$ for $3 \leq i \leq \ell$.
    However, then $x_{a_1} = z = y_\ell$, which is a contradiction.

    For the other direction, assume $Z \subseteq [n]$ intersects every subset in $\cE$.
    By renaming elements, we may assume that $Z = \{1,\dots,k'\}$ for some $k' \leq k$.
    We define an assignment $\alpha_Z : V(I) \to \NN$ as follows.
    First, let $\alpha_Z(x_i) = i$ if $1 \leq i \leq k'$, $\alpha_Z(x_j) = 0$ if $j > k'$,
    and $\alpha_Z(z) = 0$.
    By definition, $\alpha_Z$ satisfies all but $k'$ soft constraints $x_i = z$.
    Now we extend $\alpha_Z$ to auxiliary variables so that it satisfies all crisp constraints. 
    Consider $e = \{a_1, \dots, a_\ell\} \in \cE$.
    Since $Z$ intersects $e$, there is an index $1 \leq i < \ell$ 
    such that $\alpha_Z(x_{a_i}) \neq \alpha_Z(x_{a_{i+1}})$.
    Let $i$ be the minimal such index, and notice that
    $\alpha_Z(x_1) = \dots = \alpha_Z(x_i)$.
    Set $\alpha_Z(y_j) = \alpha_Z(x_j)$ for $1 \leq j \leq i$.
    For variables $y_j$ with $j > i$, we choose values that are
    pairwise distinct and different from those assigned to $x_1,\dots,x_n$,
    for instance $\alpha_Z(y_j) = k + j$.
    To check that $\alpha_Z$ satisfies all crisp constraints
    corresponding to subset $e$, 
    note that the constraints using $\rel{ODD}_3$ and 
    containing variables $y_j$ for $1 \leq j \leq i$ in the scope
    are satisfied because all variables are assigned to the same value,
    while in the remaining constraints all variables are assigned distinct values.
    Hence, the reduction is correct.
  \end{proof}

  Now we show that if an equality constraint language 
  $\Gamma$ pp-defines $\rel{NAE}_3$, 
  then there is a reduction from \textsc{Steiner Multicut} to 
  $\mincsp{\Gamma, =, \neq}$.

  \begin{proof}[Proof of Lemma~\ref{lem:nae3-hard}]
    Let $(G, \cT, k)$ be an instance of
    \textsc{(Edge) Steiner Multicut},
    where $\cT = (T_1, \dots, T_k)$ and $|T_i| = 3$ for all $i$.
    Create an instance $(I, k)$ of $\mincsp{\rel{NAE}_3, =}$ as follows.
    Introduce a variable for every vertex in $V(G)$.
    Add a soft binary equality constraint $u = v$ for every edge $\{u,v\} \in E(G)$, and
    a crisp constraint $\rel{NAE}_3(x_i,y_i,z_i)$ for every subset $T_i = \{x_i,y_i,z_i\}$.
    This completes the reduction.

    To argue correctness, assume first that $Z$ is a solution to $(I, k)$.
    Note that the only soft constraints in $I$ are binary equalities,
    so the set $S_Z$ of edges in $G$ corresponding to constraints in $Z$ is well-defined.
    We claim that $S_Z$ is a solution for $(G, \cT, k)$.
    Consider a triple $\{x,y,z\} \in \cT$.
    If all vertices $x,y,z$ are in the same connected component of $G - S_Z$,
    then binary equality constraints in $I - Z$ force $x$, $y$ and $z$
    to take the same value, violating a crisp constraint $\rel{NAE}_3(x,y,z)$,
    which leads to a contradiction.

    For the other direction, assume $S$ is a solution to $(G, \cT, k)$,
    and $Z_S$ is the set of corresponding binary equality constraints in $I$.
    Consider an assignment 
    $\alpha : V(G) \to \NN$ that is constant on the connected
    components of $G - S$ and assigns distinct values to distinct components.
    We claim that $\alpha$ satisfies $I - Z_S$, therefore $I - Z_S$ is consistent.
    All binary equality constraints in $I - Z_S$ are satisfied by $\alpha$
    since the assigned value is the same for any pair of connected variables.
    All constraints $\rel{NAE}_3(x,y,z)$ are satisfied because $S$ separates
    at least one pair of variables in $\{x,y,z\}$, and variables
    in that pair are assigned different values by $\alpha$.
    Thus, the reduction is correct.
  \end{proof}

  \subsection{Reductions and Multicut Variants}
  \label{ssec:multicut-variants}

  We start by showing that if $=$ and $\neq$ are available in 
  an equality constraint language $\Gamma$, 
  then \textsc{Edge Multicut} reduces to \mincsp{\Gamma}.

  \begin{proof}[Proof of Lemma~\ref{lem:multicut-to-mincsp}]
    By Lemma~\ref{lem:neither-const-nor-neg}, $\Gamma$ implements both $=$ and $\neq$.
    Let $(G, \req, k)$ be an arbitrary instance of \textsc{Edge Multicut},
    and create an instance $(I, k)$ of \mincsp{=,\neq} 
    with $V(G)$ as the set of variables, 
    soft constraints $u = v$ for every edge $uv \in E(G)$, and 
    crisp constraints $s \neq t$ for every cut request $st \in \req$.
    Clearly, the reduction requires polynomial time,
    and leaves the parameter is unchanged.
    If $X$ is a solution to $(G, \req, k)$,
    let $X' = \{ u = v : uv \in X \}$ and observe that
    an assignment mapping distinct connected components of $G - X$
    to distinct values satisfies $I - X'$.
    On the other hand, if $Z$ is a solution to $I$,
    let $Z' = \{ uv : u = v \in Z \}$ and suppose there is 
    path in $G - Z'$ contains no path connecting a terminal pair $st$.
    Then $I - Z$ contains a path of $=$-constraints connecting $s$ and $t$,
    and a crisp constraint $s \neq t$, contradicting that $I - Z$ is consistent. 
  \end{proof}

  For algorithmic purposes,
  we need reduction from \textsc{MinCSP} to
  several variants of \textsc{Vertex Multicut}.
  The first one reduces \mincsp{\Gamma}
  with split and $\rel{NEQ}_3$ relations to \mdt.

  \begin{proof}[Proof of Lemma~\ref{lem:split-nae3-to-mdt}]
    Enumerate variables in $V(I)$ as $x_1, \dots, x_n$.
    Introduce a vertex $v_i$ in $G$ for every variable $x_i \in V(I)$.
    Introduce a dummy vertex $w$.
    Consider a constraint $c \in C(I)$ that uses a split relation of arity $r$,
    and let $P \uplus Q$ be the partition of $\{1,\dots,r\}$.
    Introduce vertex $z_c$ in $G$,
    connect it by edges to $x_p$ for all $p \in P$,
    and add triples $z_c v_q w$ to $\cT$ for all $q \in Q$.
    Note that $w$ is disconnected from all other vertices,
    so the triple only requests to separate $z_c$ and $x_q$.
    For every constraint 
    of the form $\rel{NEQ}_3(x_h, x_i, x_j)$ in $I$,
    add triple $v_h v_i v_j$ to $\cT$.
    Finally, make vertices $v_i$ undeletable by replacing 
    each $v_i$ with copies $v_i^{(1)}, \dots, v_i^{(k+1)}$
    that have the same neighbourhood $N(v_i)$.
    To avoid cumbersome notation, we will regard the copies 
    $v_i^{(1)}, \dots, v_i^{(k+1)}$ as a single undeletable vertex $v_i$
    in the remainder of the proof.
    Observe that, by construction,
    vertices $v_i$ and $v_j$ are connected in $G$
    if and only if the constraints in $I$ imply $x_i = x_j$.

    For one direction, assume $(I, k)$ is a yes-instance, and let $X$ be a solution.
    Let $Z_V$ contain vertices $z_c$ for all constraints $c \in X$
    that use a split relation,
    and let $Z_{\cT}$ contains triples $v_h v_i v_j$ for all
    constraints of the form $\rel{NEQ}_3(v_h, v_i, v_j)$ in $X$.
    Clearly, $|Z_V| + |Z_{\cT}| \leq |X| \leq k$.
    We claim that $(Z_V, Z_{\cT})$ is a solution to $(G, \cT, k)$, i.e.
    every connected component of $G - Z_V$ intersects
    every triple in $\cT \setminus Z_{\cT}$ in at most one vertex.
    Suppose for the sake of contradiction that there is a triple
    $v_h v_i v_j \in \cT \setminus Z_{\cT}$ such that 
    $v_i$ and $v_j$ are connected in $G - Z_V$.
    Then the constraints in $I - X$ imply $x_i = x_j$.
    Moreover, since $v_h v_i v_j \notin Z_{\cT}$,
    constraint $\rel{NEQ}_3(x_h, x_i, x_j)$ is present in $I - X$,
    leading to a contradiction.

    For the opposite direction, assume $(G, \cT, k)$ is a yes-instance,
    and $(Z'_V, Z'_{\cT})$ is a solution.
    Note that $|Z'_V| \leq k$ implies that $Z'_V$ does not contain 
    undeletable vertices $v_i$ for any $i \in [n]$
    (or more precisely, for every undeletable vertex $v_i$, 
    at least one copy of $v_i$ is untouched by $Z'_V$).
    Moreover, we can assume that $Z'_{\cT}$ does not contain triples
    involving the dummy variable $w$:
    every such triple is of the form $z_c v_q w$,
    so we can replace it by deleting $z_c$ instead.
    In other words, $(Z'_V \cup \{z_c\}, Z'_{\cT} \setminus \{z_c v_q w\})$
    is still a solution of the same size.
    Define $X' \subseteq C(I)$ as
    $X'= \{ c : z_c \in Z'_V \} \cup \{ \rel{NEQ}_3(x_h, x_i, x_j) : v_h v_i v_j \in Z'_{\cT} \}$
    and pick any assignment $\alpha : V(I) \to \NN$ such that
    $\alpha(x_i) = \alpha(x_j)$ if and only $v_i$ and $v_j$ are connected in $G - Z'_V$.
    We claim that $\alpha$ satisfies $I - X'$.
    Consider a constraint $c$ in $I - X'$.
    First, suppose $c$ uses a split relation.
    Since $c \notin X'$, we have $z_c \in V(G) \setminus Z'_V$.
    If $c$ implies $x_i = x_j$, then $v_i$ and $v_j$ are connected through
    $z_c$ in $G - Z'_V$, hence $\alpha(v_i) = \alpha(v_j)$.
    If $c$ implies $x_i \neq x_j$, 
    assume by symmetry that
    $z_c$ is connected to $x_i$ in $G - Z'_V$, and 
    there is a triple $z_c x_j w$ in $\cT$.
    By our assumption, $z_c x_j w \notin Z'_T$,
    hence $z_c$ and $x_j$ are disconnected and
    $\alpha(x_i) = \alpha(z_c) \neq \alpha(x_j)$.
    Finally, suppose $c$ is of the form $\rel{NEQ}_3(x_h, x_i, x_j)$.
    Then $v_h v_i v_j$ is in $\cT \setminus Z'_{\cT}$, so
 $v_h$, $v_i$ and $v_j$ appear in distinct components of $G - Z'_V$,
    and $\alpha$ assigns distinct values to them.
  \end{proof}

  Another reduction used to obtain constant-factor fpt-approximation
  for negative equality constraint languages
  starts with an instance
  of $\csp{R^{\neq}_d, =}$, where
  $R^{\neq}_d(x_1, y_1, \dots, x_d, y_d) \equiv \bigvee_{i=1}^{d} x_{i} \neq y_{i}$,
  and produces an instance of \djcut with the same cost.

  \begin{proof}[Proof of Lemma~\ref{lem:essen-neg-lmcut}]
    Introduce a crisp vertex $x_v$ in $G$ for every variable $v$ in $I$ 
    and a soft vertex $z_c$ for every constraint $c$ in $I$.
    For every constraint $c$ in $I$ of the form $u = v$, 
    add edges $x_u z_c$ and $x_v z_c$ in $G$.
    For every constraint $c$ in $I$ of the form 
    $R^{\neq}_d(u_1, v_1, \dots, u_d, v_d)$,
    create a list request 
    $\{x_{u_1} x_{v_1}, \dots, x_{u_d} x_{v_d}, z_c z_c\}$ in $\cL$. 
    Observe that every list has $d+1$ requests.
    This completes the construction.
    Clearly, it requires polynomial time.

    To show that the reduction preserves costs,
    assume $X$ is a solution to $I$, and 
    let $X' = \{z_c : c \in X\}$, noting that $|X'| = |X|$.
    We claim that $X'$ satisfies all lists in $\cL$.
    Observe that all edges in $G$ correspond
    to equality constraints in $I$, which implies
    that $x_u$ and $x_v$ are connected in $G - X'$
    if and only if the constraints in $I - X$ imply that $u = v$.
    Now, consider a list request
    $\{x_{u_1} x_{v_1}, \dots, x_{u_d} x_{v_d}, z_c z_c\}$ in $\cL$. 
    If the constraint $c$ is in $X$, then $z_c$ is in $X'$ and 
    $X'$ satisfies the request.
    Otherwise, there is $i$ such that the constraints in $I - X$ 
    are consistent with $u_i \neq v_i$,
    hence $x_{u_i}$ and $x_{v_i}$ are disconnected in $G - X'$.

    For the opposite direction,
    assume $Y \subseteq V^1(G)$ satisfies $\cL$
    and let $Y' = \{c : z_c \in Y\}$. 
    Observe that $|Y| = |Y'|$.
    We claim that $I - Y'$ is consistent.
    Similarly to the previous part of the proof,
    observe that $x_u$ and $x_v$ are connected in $G - Y$
    if and only if the constraints in $I - Y'$ imply $u = v$.
    Then, for every request list
    $\{x_{u_1} x_{v_1}, \dots, x_{u_d} x_{v_d}, z_c z_c\}$ in $\cL$,
    either $z_c \in Y$, in which case $c \in Y'$,
    or there is $i$ such that $x_{u_i}$ and $x_{v_i}$
    are disconnected in $G - Y$,
    in which case $I - Y'$
    is consistent with $u_i \neq v_i$.
    Hence, $I - Y'$ is satisfied by any assignment
    that maps variables $u$ and $v$
    to the same value if and only if
    $x_u$ and $x_v$ are connected in $G - Y$.
  \end{proof}

\fi 

\section{Triple Multicut}
\label{sec:triple-multicut}

We show that \mdt is in FPT, thus proving
Theorem~\ref{thm:triple-multicut-fpt}.
The algorithm works by a reduction to \textsc{Boolean MinCSP},
i.e. \mincsp{\Delta} for a constraint language
$\Delta$ over the binary domain $\{0,1\}$.
Parameterized complexity of \textsc{Boolean MinCSP} was 
completely classified by~\cite{KimKPW23flow3}.
As a result of our reduction, we obtain an instance
where $\Delta$ is bijunctive, i.e. every relation
in $\Delta$ can be defined by a Boolean formula in CNF
with at most two literals in each clause.
Define the \emph{Gaifman graph} of a bijunctive relation $R$
with vertices $\{1, \dots, r\}$,
where $r$ is the arity of $R$, 
and edges $ij$ for every pair of indices such that
$R(x_1, \dots, x_r)$ implies a 2-clause involving $x_i$ and $x_j$.
A graph is $2K_2$-free if no four vertices
induce a subgraph with two independent edges.

\begin{theorem}[Theorem~1.2~of~\cite{KimKPW23flow3}]
  \label{thm:bij-mincsp}
  Let $\Delta$ be a finite bijunctive Boolean constraint language
  such that the Gaifman graphs of all relations in $\Delta$ are $2K_2$-free.
  Then \mincsp{\Delta} is in FPT.
\end{theorem}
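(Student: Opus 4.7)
The plan is to prove Theorem~\ref{thm:bij-mincsp} via a structural reduction to a canonical basis followed by iterative compression with flow augmentation. The first step is to identify a finite basis $\Delta_0$ of $2K_2$-free bijunctive relations such that every $R \in \Delta$ can be implemented, in the sense of Definition~\ref{def:implementation}, by a constant-size conjunction of constraints from $\Delta_0$. Natural candidates for $\Delta_0$ include the unary constants $(x{=}0)$ and $(x{=}1)$, the implication $x \to y$, binary equality and disequality, and ``star'' relations where a pivot variable is forced equal or unequal to each of a group of other variables. By Proposition~\ref{prop:ppdef-impl}, an FPT algorithm for $\mincsp{\Delta_0}$ transfers to $\mincsp{\Delta}$ with only a constant blow-up in the budget, so it suffices to solve $\mincsp{\Delta_0}$.

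The second step is to reformulate $\mincsp{\Delta_0}$ as a structured cut problem on a mixed graph. Each variable becomes a vertex; implications become directed edges; constant constraints designate sources and sinks; star relations are realised by small gadgets attached to a pivot vertex; soft constraints carry unit deletion cost and crisp ones are undeletable. The $2K_2$-free Gaifman condition is the crucial invariant at this stage: it guarantees that no single relation can impose a ``paired cut'' requirement between two independent variable pairs, because such a requirement would correspond to an induced $2K_2$ in the Gaifman graph and would allow encoding the W[1]-hard \textsc{Split Paired Cut} problem directly. Thus every basis constraint can be represented by gadgets with a single distinguished vertex through which all cuts must be routed.

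The third step is to solve the resulting cut problem by iterative compression combined with flow augmentation. Given a solution of size $k{+}1$, guess its intersection with a sought solution of size $k$ to obtain $2^{O(k)}$ subproblems per compression round. In each subproblem, compute a max-flow between the designated sources and sinks in the gadget graph and iteratively augment the flow, at each step branching on which side of an augmenting path the desired deletion lies. The $2K_2$-free property is what bounds the branching and ensures that the number of augmentations needed depends only on $k$; once these guesses are fixed, the remaining problem reduces to a polynomial-time min-cut computation, and a standard shadow-removal step recovers the actual deletion set.

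The main obstacle is matching the \emph{local} $2K_2$-free condition on each relation in $\Delta$ to a \emph{global} structural property of the combined instance that can be exploited by flow augmentation. Concretely, one must (i) verify that the implementations into $\Delta_0$ do not introduce new $2K_2$ patterns across constraint boundaries, and (ii) show that the augmenting-path branching terminates after $f(k)$ rounds for some computable $f$. Carrying out this analysis uniformly across all relation types in the basis, in particular across star relations of unbounded arity that interact with many implications simultaneously, is the technical crux of the argument.
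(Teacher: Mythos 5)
This is an imported result: the paper does not prove Theorem~\ref{thm:bij-mincsp} --- it cites it verbatim as Theorem~1.2 of \cite{KimKPW23flow3} and invokes it as a black box in the reduction from \mdt (Theorem~\ref{thm:triple-multicut-fpt}). There is therefore no proof in the paper against which to compare your sketch; the correct ``proof'' here is the citation.

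Judged on its own terms, your sketch names the right tools (iterative compression, flow augmentation) but misdescribes the central one and leaves the decisive step unargued. Flow augmentation in \cite{KimKPW23flow3} is not the Ford--Fulkerson-style ``compute a max-flow and branch on which side of each augmenting path the deletion lies'' loop you describe. It is a randomized subroutine that, with success probability $1/f(k)$, adds arcs to the graph so that a fixed but unknown solution of size at most $k$ becomes a \emph{minimum} $st$-cut in the augmented graph, after which polynomial-time min-cut machinery extracts it; establishing this subroutine and showing how to drive the MinCSP recursion with it is the bulk of that paper. Your assertion that ``the $2K_2$-free property is what bounds the branching and ensures that the number of augmentations needed depends only on $k$'' is the crux of your argument and is simply asserted; in the cited work the $2K_2$-free condition plays a different role --- it excludes the relations that would implement \textsc{Paired Cut}-type behaviour, placing the language on the tractable side of the dichotomy, and it does not bound anything resembling an augmentation-round count. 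Likewise, the claim that every $2K_2$-free bijunctive relation ``can be represented by gadgets with a single distinguished vertex through which all cuts must be routed'' is not obvious and would itself need proof: $2K_2$-freeness is an induced-subgraph condition (for instance $C_5$ is $2K_2$-free yet has no such pivot vertex), so a single-pivot normal form does not follow immediately. As written, the sketch recognizes which external theorem to invoke but does not constitute a proof of it.
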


We are ready to present the algorithm.

\begin{theorem} \label{thm:triple-multicut-fpt}
  \mdt is fixed-parameter tractable.
\end{theorem}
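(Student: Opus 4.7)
The plan is to reduce \mdt to \mincsp{\Delta} for a Boolean constraint language $\Delta$ that is bijunctive and whose relations all have $2K_2$-free Gaifman graphs, and then invoke Theorem~\ref{thm:bij-mincsp}. Starting from an instance $(G, \cT, k)$, I would first handle vertex deletions by the standard vertex-splitting gadget: replace each vertex $v$ with two incidence variables $x_v^-, x_v^+$ linked by a soft equality whose violation encodes placing $v$ in $Z_V$, and translate each edge $uv \in E(G)$ into a crisp equality clause between the appropriate copies. Every edge-derived relation is binary, so its Gaifman graph has two vertices and is trivially both bijunctive and $2K_2$-free.

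The non-trivial step is to encode the triple constraints. For each $T = \{a,b,c\} \in \cT$ I would introduce a fresh Boolean $y_T$ together with a unit-cost soft constraint that corresponds to placing $T$ in $Z_\cT$, and then add an auxiliary relation on the Boolean copies of $(x_a, x_b, x_c)$, $y_T$, and possibly a constant number of further auxiliary Booleans, designed so that whenever $y_T = 0$ the three vertices $a,b,c$ are forced into pairwise-distinct components of $G - Z_V$. I would construct the auxiliary relation so that every pair of its variables is joined by at least one 2-clause of its CNF definition; the Gaifman graph of the relation is then a clique on a constant number of vertices and hence $2K_2$-free. With this, the resulting language $\Delta$ satisfies all the hypotheses of Theorem~\ref{thm:bij-mincsp}.

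The main obstacle is precisely this triple encoding: the ``natural'' relation $\rel{NEQ}_3$ is not bijunctive, and no 3-variable Boolean relation captures ``three pairwise distinct values'', so the constraint must be routed through auxiliary variables. I expect to use $y_T$ as a guard that activates the triple requirement exactly when the triple is not deleted, and to express pairwise-distinctness by combining several constant-arity bijunctive clauses that together distinguish three Boolean ``classes'' near $T$ (for example, by using additional auxiliary Booleans per triple to locally refine the two-sided Boolean partition). Once the encoding is verified to be cost-preserving (so that the optimum of the constructed Boolean instance equals $|Z_V| + |Z_\cT|$ up to a controlled additive and multiplicative overhead bounded in terms of $k$), applying Theorem~\ref{thm:bij-mincsp} to the resulting \mincsp{\Delta} instance yields the desired FPT algorithm for \mdt.
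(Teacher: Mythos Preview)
Your proposal has a genuine gap at exactly the point you flag as ``the main obstacle''. The difficulty is not merely that $\rel{NEQ}_3$ is not bijunctive over the Booleans; it is that over a two-element domain there is no way at all to express ``$a,b,c$ lie in three pairwise distinct classes'', no matter how many constant-arity auxiliary Booleans you attach locally to a single triple. Your vertex-splitting gadget $x_v^-,x_v^+$ with equality clauses along edges can only record a two-sided partition of $V(G)$; it carries no information about which of the (potentially unboundedly many) connected components of $G-Z_V$ a vertex lies in. Any per-triple auxiliary relation you add sees only the Boolean values of its constant-size scope, so it cannot distinguish the global situation ``$a,b,c$ are in three different components'' from ``$a,b$ are in the same component''. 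Your plan as written therefore cannot be made cost-preserving.

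The paper's proof supplies precisely the missing idea: before reducing to Boolean \textsc{MinCSP}, it applies iterative compression and branching to obtain a set $X$ of $O(k)$ vertices together with a guessed partition of $X$ into the components of $G-Z_V$. This bounds the number of relevant components by some $d=O(k)$, after which one can afford $d$ Boolean variables $v_1,\ldots,v_d$ (plus companions $\hat v_i$) per vertex $v$ to encode ``$v$ lies in component $i$''. With this encoding, the triple condition becomes, for each $i$, the bijunctive triangle $(\neg \hat u_i\lor\neg \hat v_i)\land(\neg \hat v_i\lor\neg \hat w_i)\land(\neg \hat u_i\lor\neg \hat w_i)$, whose Gaifman graph is $2K_2$-free; the per-vertex ``at most one $v_i$ is true'' constraint is a clique with pendants, also $2K_2$-free. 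The compression/guessing step is not optional bookkeeping---it is what makes a Boolean encoding of component membership possible in the first place, and your plan is missing it.
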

\begin{proof}[\iflong Proof \fi \ifshort Proof Sketch \fi]
Let $(G, \cT, k)$ be an instance of \mdt.
By iterative compression, we obtain
$X_V \subseteq V(G)$ and $X_{\cT} \subseteq \cT$
such that $|X_V| + |X_{\cT}| \leq k + 1$ and
all components of $G - X_V$ intersect
triples in $\cT \setminus X_{\cT}$ in at most one vertex.
Moreover, by branching on the intersection,
we can assume that a hypothetical optimal solution $(Z_V, Z_{\cT})$ to 
$(G, \cT, k)$ is disjoint from $(X_V, X_{\cT})$.
Let $X = X_V \cup \bigcup_{uvw \in X_{\cT}} \{u, v, w\}$ and
guess the partition of the vertices in $X$ into connected components of $G - Z_V$.
Identify vertices that belong to the same component, and enumerate 
them via the bijective mapping $\alpha : X \to \{1,\dots,d\}$.
Observe that for every triple $uvw \in X_{\cT}$,
values $\alpha(u)$, $\alpha(v)$ and $\alpha(w)$ are distinct
since $X_{\cT} \cap Z_{\cT} = \emptyset$.
Create an instance $I_\alpha$ of \textsc{Boolean MinCSP} as follows.
\begin{enumerate}
\itemsep0em
\item \label{step:variables} 
Introduce variables $v_i$ and $\hat{v_i}$
for every $v \in V(G)$ and $i \in [d]$.
\item \label{step:vertices} 
For every vertex $v \in V(G)$,
add soft constraint
$\bigwedge_{i < j} (\neg v_i \lor \neg v_j) \land \bigwedge_i (v_i \to \hat{v}_i)$.
\item \label{step:terminals}
For every vertex $v \in X$, add
crisp constraints $v_{\alpha(v)}$, $\hat{v}_{\alpha(v)}$, and
$\neg v_j$, $\neg \hat{v}_j$ for all $j \neq \alpha(v)$.
\item \label{step:edges}
For every edge $uv \in E(G)$ and $i \in [d]$, 
add crisp constraints $\hat{u_i} \to v_i$ and $\hat{v_i} \to u_i$.
\item \label{step:triples}
For every triple $uvw \in \cT$ and $i \in [d]$, add soft constraints 
$(\neg \hat{u_i} \lor \neg \hat{v_i}) \land (\neg \hat{v_i} \lor \neg \hat{w_i}) \land (\neg \hat{u_i} \lor \neg \hat{w_i})$.
\end{enumerate}
This completes the reduction.
Observe that all relations used in $I_\alpha$ are bijunctive.
Their Gaifman graphs are
cliques with pendant edges attached to all vertices (case~\ref{step:vertices}),
edgeless (case~\ref{step:terminals}),
single edges (case~\ref{step:edges}), or
triangles (case~\ref{step:triples}).
These graphs are $2K_2$-free,
therefore, we can decide whether $(I_\alpha, k)$
is a yes-instance in fpt time
by Theorem~\ref{thm:bij-mincsp}.

The intuitive idea behind the reduction is that deleting
a vertex from the graph corresponds to
deleting the constraint of type~\ref{step:vertices}
for that vertex from the \textsc{MinCSP} instance.
If the constraint for $v \in V(G)$ is present, then
$s$ may reach at most one variable in $\{\hat{v}_i : i \in [d]\}$
by a path of implications.
We can interpret $s$ reaching $\hat{v}_i$ as
placing $v$ into the $i$th connected component of the resulting graph.
Crisp constraints of type~\ref{step:terminals} ensure that
the vertices of $X$ are partitioned into components according to $\alpha$.
Crisp constraints of type~\ref{step:edges} ensure that
neighbouring vertices are placed into the same components.
If $s$ does not reach any variable in $\{\hat{v}_i : i \in [d]\}$,
then $v$ ends up in the same component as in $G - X_V$,
and, by iterative compression, it is not part of any violated triple.
On the other hand, if a constraint of type~\ref{step:vertices}
for a vertex $v$ is deleted, then the Boolean assignment mapping
all $v_i$ and $0$ and all $\hat{v}_i$ to $1$
is compatible with all other constraints involving any of these variables.
Finally, constraints of type~\ref{step:triples} ensure 
that no pair of variables in a triple is assigned the same value.
\ifshort
We defer further details to the full version.
\fi

\iflong
  To show correctness formally, we first assume that 
  $(G, \cT, k)$ is a yes-instance and $(Z_V, Z_{\cT})$ is a solution
  that respects $\alpha$, i.e.
  if $\alpha(x) \neq \alpha(y)$, then
  $Z_V$ disconnects $x$ and $y$ in $G$.
  Define an assignment $\varphi : V(I_\alpha) \to \{0,1\}$ as follows.
  For all $i \in [d]$
  set $\varphi(v_i) = \varphi(\hat{v}_i) = 1$ if $v$ is not in $Z_V$ 
  and is connected to $\alpha^{-1}(i)$ in $G - Z_V$,
  and let $\varphi(v_i) = \varphi(\hat{v}_i) = 0$ otherwise.
  For $v \in Z_V$ and $i \in [d]$, set
  $\varphi(v_i) = 1$ and $\varphi(\hat{v}_i)=0$.
  We claim that $\varphi$ satisfies all crisp constraints in $I_\alpha$,
  and violates at most $|Z_V| + |Z_{\cT}| \leq k$ soft constraints.
  Constraints introduced in step~\ref{step:terminals}
  are satisfied because $(Z_V, Z_{\cT})$ agree with $\alpha$.
  To see that constraints introduced in step~\ref{step:edges}
  are satisfied, we need to consider several cases.
  If $u,v \in V(G) \setminus Z_V$ and both are reachable in $G - Z_V$
  from $\alpha^{-1}(i)$, then the constraints are satisfied
  because $\varphi(u_i) = \varphi(\hat{u}_i) = 1 = \varphi(v_i) = \varphi(\hat{v}_i)$.
  If $u,v \in V(G) \setminus Z_V$ and both are unreachable in $G - Z_V$
  from $\alpha^{-1}(i)$, then the constraints are satisfied
  because $\varphi(u_i) = \varphi(\hat{u}_i) = 0 = \varphi(v_i) = \varphi(\hat{v}_i)$.
  This completes the cases with $u,v \in V(G) \setminus Z_V$
  because $uv$ is an edge in $G$.
  Finally, if $v \in Z_V$, then $\varphi({v}_i) = 1$ and $\varphi(\hat{v}_i) = 0$,
  so $\varphi$ satisfies all constraints.
  Now consider soft constraints, starting with step~\ref{step:vertices}.
  These constraints are satisfied by $\varphi$ for all $v \in V(G) \setminus Z_V$:
  since $Z_V$ respects $\alpha$,
  $v$ is reachable from $\alpha^{-1}(i)$ for at most one $i \in [d]$;
  moreover, $v_i = \hat{v}_i$ is satisfied by $\varphi$ by definition.
  Hence, $\varphi$ violates at most $|Z_V|$ such constraints.
  Further, consider a triple $uvw \in \cT \setminus Z_{\cT}$.
  We claim that $\varphi$ satisfies all constraints
  introduced in step~\ref{step:triples} for $uvw$.
  Suppose for the sake of contradiction that 
  $\varphi(\hat{u}_i) = \varphi(\hat{v}_i) = 1$ for some $i \in [d]$.
  Then, by definition of $\varphi$,
  $\alpha^{-1}(i)$ reaches both $u$ and $v$ in $G - Z_V$,
  contradicting the fact that $(Z_V, Z_{\cT})$ is a solution.
  Finally, for every $uvw \in Z_{\cT}$ there can be at most one $i \in [d]$ 
  for which the introduced constraint is violated by $\varphi$,
  since each such violation requires 
  two out of the variables $u_i, v_i, w_i$ to be assigned $1$.
  Hence, $\varphi$ violates at most $|Z_{\cT}|$ such constraints,
  and $|Z_V| + |Z_{\cT}|$ constraints in total.
  
  Now suppose $\varphi'$ satisfies all crisp constraints and
  violates at most $k$ soft constraints in $I_\alpha$.
  Define $Z'_V$ as the set of variables $v \in V(G)$
  such that the constraint for $v$ introduced 
  in step~\ref{step:vertices} is violated $\varphi$.
  Define $Z'_{\cT}$ as the set of triples $uvw \in \cT$
  such that a constraint for $uvw$ introduced
  in step~\ref{step:triples} is violated by $\varphi'$.
  Observe that $|Z'_V| + |Z'_{\cT}| \leq k$.
  We claim that $(Z'_V, Z'_\cT)$ is a solution to $(G, \cT, k)$.
  First, observe that $(Z'_V, Z'_\cT)$ respects $\alpha$,
  i.e. $\alpha(u) \neq \alpha(v)$ implies that
  $u$ and $v$ are disconnected in $G - Z'_V$.
  Indeed, if this is not the case, then
  crisp constraints of $I_\alpha$ imply
  that $\varphi(\hat{u}_\alpha(v)) = 1$,
  which violates the crisp constraint $\neq \hat{u}_{\alpha(v)}$.
  Now we show that for every triple $uvw \in \cT \setminus Z'_{\cT}$,
  vertices $u$, $v$ and $w$ occur in distinct connected
  components of $G - Z'_V$.
  Suppose for the sake of contradiction
  that there is a triple $uvw \in \cT \setminus Z'_{\cT}$
  such that $u$ and $v$ are connected in $G - Z'_V$.
  If there exists $i \in [d]$ such that
  $\alpha^{-1}(i)$ reaches $u$ and $v$, then
  the crisp constraints imply that 
  $\varphi'(\hat{u}_i) = \varphi'(\hat{v}_i) = 1$, and 
  the clause $(\neg \hat{u}_i \lor \neg \hat{v}_i)$ is violated,
  which implies $uvw \in Z'_{\cT}$, a contradiction.
  Otherwise, $u$ and $v$ are disconnected from $X$ in $G - Z'_V$.
  Then, there is a path from $u$ to $v$ in $G - (Z'_V \cup X)$, and,
  consequently, in $G - X$, which implies $uvw \in X_{\cT}$.
  However, values $\alpha(u)$, $\alpha(v)$ and $\alpha(w)$
  are distinct, and $Z'_V$ respects $\alpha$, which is a contradiction.
  This completes the proof.
\fi
\end{proof}

\section{Disjunctive and Steiner Multicut}
\label{sec:djcut-and-steiner}

We show that two generalizations of \textsc{Vertex Multicut},
\textsc{Disjunctive Multicut} and \textsc{Steiner Multicut},
are constant-factor fpt-approximable,
proving Theorems~\ref{thm:djcut-fpta}~and~\ref{thm:steiner-fpa},
respectively.
Section~\ref{ssec:djcut-main-loop} presents the main loop
of the \textsc{Disjunctive Multicut} algorithm,
while Section~\ref{ssec:djcut-simplification}
is dedicated to the most technical subroutine of the algorithm
that involves \emph{randomized covering of shadow}~\cite{marx2014fixed}.
In Section~\ref{ssec:steiner} we present a simpler and more efficient
algorithm for~\textsc{Steiner Multicut}
that avoids shadow covering using the idea of~\cite{lokshtanov2021fpt}.

\subsection{Main Loop of the Disjunctive Multicut Algorithm}
\label{ssec:djcut-main-loop}

Let $G$ be a graph with vertices $V(G) = V^\infty(G) \uplus V^1(G)$ 
partitioned into undeletable and deletable, respectively.
A subset $L \subseteq \binom{V(G)}{2}$ of pairs
is a \emph{request list}, and
a set of vertices $X \subseteq V(G)$ \emph{satisfies} $L$ if
there is a pair $st \in L$ separated by $X$.
This includes the possibility that $s \in X$ or $t \in X$.
For a graph $G$ and a collection of request lists $\cL$,
we let $\cost(G, \cL)$ be the minimum size of a set 
$X \subseteq V^1(G)$ that satisfies all lists in $\cL$.
\djcut asks, given an instance $(G, \cL)$,
whether $\cost(G, \cL) \leq k$.

\djcut problem generalizes not only \textsc{Multicut}
(which is a special case with $d = 1$) 
but also \textsc{$d$-Hitting Set}.
To see the latter, take an edgeless graph $G$ and
make every request a \emph{singleton}, i.e. a pair $ss$ 
for a vertex $s \in V(G)$.
The only way to satisfy a singleton $ss$ is 
to delete the vertex $s$ itself,
and the only way to satisfy a list of singletons is 
to delete one of the vertices in it.

The intuitive idea behind the approximation algorithm for \djcut
is to iteratively simplify the instance $(G, \cL, k)$, 
making it closer to \textsc{Bounded Hitting Set} after each iteration.
Roughly, we make progress if the maximum number of non-singleton requests in a list decreases.
In each iteration, the goal is to find a set of $O(k)$ vertices 
whose deletion, combined with some branching steps, simplifies every request list.
This process can continue for $O(d)$ steps until we obtain
an instance of \textsc{Bounded Hitting Set}, 
which can be solved in fpt time by branching.
The instance may increase in the process,
but finally we obtain a solution of cost $f(d) \cdot k$ for some function $f$.
We do not optimize for $f$ in our proofs.
Observe also that in the context of constant-factor fpt approximability,
some dependence on $d$ is unavoidable since 
the problem with unbounded $d$ generalizes \textsc{Hitting Set}.

Formally, for a request list $L$, let $\mu_1(L)$ and $\mu_2(L)$
be the number of singleton and non-singleton cut requests in $L$, respectively.
Define the measure for a list $L$ as 
$\mu(L) = \mu_1(L) + 3\mu_2(L) = |L| + 2\mu_2(L)$,
and extend it to a collection of list requests $\cL$ 
by taking the maximum, i.e. $\mu(\cL) = \max_{L \in \cL} \mu(L)$.
Observe that $\mu(\cL) \leq 3d$ for any instance of \djcut.
Further, let $V(L) = \bigcup_{st \in L} \{s,t\}$ denote
the set of vertices in a list $L$, and $\nu(\cL) = \mu_1(L) + 2\mu_2(L)$
be an upper bound on the maximum number of variable occurrences in a list of $\cL$.
The workhorse of the approximation algorithm is the following lemma.

\begin{lemma} \label{lem:lmcut-iteration}
  There is a randomized algorithm \emph{\simplify}
  that takes an instance
  $(G, \cL, k)$ of \djcut as input, and 
  in $O^*(2^{O(k)})$ time produces 
  a graph $G'$ and a collection of requests $\cL'$
  such that
  $|V(G')| \leq |V(G)|$, 
  $\nu(\cL') \leq \nu(\cL)$,
  $|\cL'| \leq k^2|\cL|$,
  and
  $\mu(\cL') \leq \mu(\cL) - 1$.
  Moreover, the following holds.
  \begin{itemize}
  \itemsep0em
  \item If $\cost(G, \cL) \leq k$, then, with probability $2^{-O(k^2)}$,  we have $\cost(G', \cL') \leq 2k$.
  \item If $\cost(G, \cL) > 3k$, then we have $\cost(G', \cL') > 2k$.
  \end{itemize}
\end{lemma}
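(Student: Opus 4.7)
The proof will follow the pattern of Marx's random shadow-covering algorithm for \textsc{Multicut}, adapted to the more general list/singleton structure of \djcut. My overall goal is to reduce the measure of the currently hardest list, then iterate until the global maximum drops. Let $L^* \in \cL$ be a list achieving $\mu(L^*) = \mu(\cL)$, and let $T = V(L^*)$. Any hypothetical solution $Z$ of cost at most $k$ must fulfill some request in $L^*$, so I can branch over the $|L^*| = O(d) = O(1)$ possible choices and output the best result. Each branch will be responsible for strictly decreasing $\mu(L^*)$ while inflating the solution budget by at most a factor of two.

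In the easy branch, when the guessed request is a singleton $aa$, I simply commit $a$ into the output deletion, delete $a$ from $G$, remove $L^*$, and decrement $k$ by one; this removes a summand of $1$ from $\mu(L^*)$. The interesting branch is the non-singleton case, where the guessed request is a pair $ab$ with $a \neq b$. Here I invoke the \randomcover routine of Section~\ref{ssec:djcut-simplification}: it samples a set $R \subseteq V^1(G) \setminus \{a,b\}$ of size at most $k$ from a carefully chosen distribution so that, with probability $2^{-O(k)}$, $R$ is disjoint from $Z$ and covers the shadow of $Z$ with respect to $\{a,b\}$, i.e.\ every component of $G - (Z \cup R)$ other than the two containing $a$ and $b$ is empty. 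I add $R$ to the deletion (the cost rises from at most $k$ to at most $2k$) and then torso-contract the component containing $a$ into a single vertex $a$ and likewise for $b$. In the contracted graph $G'$ the request $ab$ is fulfilled precisely by deleting either $a$ or $b$, so I replace it in $L^*$ by the two singletons $aa$ and $bb$. This turns a non-singleton contribution of weight $3$ into two singletons of total weight $2$, strictly lowering $\mu(L^*)$.

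To achieve the global drop $\mu(\cL') \leq \mu(\cL) - 1$ I iterate the above procedure over all lists currently at the maximum. Since each invocation may add up to $k$ vertices via \randomcover and each list contributes a bounded number of branching choices, the number of iterations needed is $O(k)$, yielding the joint success probability of $2^{-O(k^2)}$ promised by the lemma. The graph only undergoes contractions so $|V(G')| \leq |V(G)|$; the list $L^*$ splits into at most a constant number of simplified lists, and with $O(k)$ iterations over at most $|\cL|$ lists the bound $|\cL'| \leq k^2 |\cL|$ follows; the per-list bound $\nu(\cL') \leq \nu(\cL)$ holds because converting a non-singleton into two singletons keeps the variable-occurrence count constant. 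The ``no'' direction, $\cost(G,\cL) > 3k \Rightarrow \cost(G',\cL') > 2k$, follows from the fact that a solution of cost $\leq 2k$ for $(G',\cL')$ can be lifted back through the contractions and the at most $k$ extra \randomcover vertices to a solution of cost $\leq 3k$ for $(G,\cL)$.

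The main obstacle is the simultaneous analysis of \randomcover across all branches and all iterations: showing that a single round of random sampling, with the probability guarantee $2^{-O(k)}$, suffices to both clear the shadow for the chosen pair $ab$ \emph{and} keep the torso operation sound for all other lists that reference $a$ or $b$. I expect this is the content of Section~\ref{ssec:djcut-simplification} and will rely on the standard Marx bound on the size of ``important'' $\{a,b\}$-separators together with a union bound over the $O(k)$ iterations. Writing \randomcover as a black box, the main-loop proof itself is then essentially an accounting exercise, with the subtlety concentrated in verifying that the per-list measure truly drops for every list achieving the current maximum.
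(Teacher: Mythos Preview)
Your proposal has two genuine gaps that prevent it from going through.

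\textbf{First, you have misdescribed what \randomcover produces.} The routine of Theorem~\ref{thm:shadow-cover} outputs a potentially large set $S$ that \emph{covers} the shadow; it only guarantees the \emph{existence} of a transversal $Y$ of size at most $k$ disjoint from $S$. There is no set that is simultaneously of size $\leq k$, disjoint from $Z$, and contains the entire shadow of $Z$; the shadow can be almost all of $V(G)$. Consequently the step ``add $R$ to the deletion (the cost rises from at most $k$ to at most $2k$)'' is unsound: adding the shadow-cover blows up the cost, and the transversal $Y$ is not known to the algorithm. Torso-contracting the shadow is the right reflex, but then the claim that ``$ab$ is fulfilled precisely by deleting either $a$ or $b$'' is false in general, and you cannot simply convert the pair to two singletons on $a$ and $b$.

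\textbf{Second, your iteration count is wrong.} You write that processing all lists currently at the maximum $\mu$ takes $O(k)$ iterations, but nothing bounds the number of such lists; it can be $|\cL|$. Doing a separate shadow-covering per list, each succeeding with probability $2^{-O(k)}$, would give success probability $2^{-O(k|\cL|)}$, not $2^{-O(k^2)}$, and the budget inflation would also be unbounded.

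The paper avoids both problems by working globally. Via iterative compression it obtains a single set $X$ of size $O(k)$ that satisfies \emph{every} list; after guessing its partition into components of $G-Z$ and deleting a multiway cut $M$ of size $\leq k$ (this is where the extra $k$ in the $3k$ bound comes from), $X$ meets each component in at most one vertex. A \emph{single} call to \randomcover with terminal set $X$ then yields, for every vertex $v$, a canonical candidate separator $R_v \subseteq R$ (the boundary of $v$'s shadow component) such that whenever the hidden transversal $Y$ separates $v$ from $X$, one has $R_v \subseteq Y$ and hence $|R_v|\leq k$. Each list is now simplified by picking the non-singleton request $st$ satisfied by $X$ and replacing it by singletons drawn from $R_s$ and $R_t$, fanning out to at most $k^2$ new lists (rules (R1)--(R4)); this is where the bound $|\cL'|\leq k^2|\cL|$ actually comes from. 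The solution witnessing $\cost(G',\cL')\leq 2k$ is $Y\cup Z$, and the reverse direction lifts a $2k$-solution back through the at most $k$ multiway-cut deletions to a $3k$-solution.
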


Randomization in Lemma~\ref{lem:lmcut-iteration} comes from the use
of the \emph{random covering of shadow} of~\cite{marx2014fixed,chitnis2015directed}.
They also provide a derandomized version of this procedure,
so our algorithm can be derandomized as well.
We postpone the proof of Lemma~\ref{lem:lmcut-iteration} 
until Section~\ref{ssec:djcut-simplification}
since it requires introduction of some technical machinery.
For now, we show how to prove 
Theorem~\ref{thm:djcut-fpta} using the result of the lemma.

\begin{algorithm}
  \caption{Main Loop.}
  \begin{algorithmic}[1]
    \Procedure{SolveDJMC}{$G, \cL, k$}
    \While{$\mu_2(\cL) > 0$}
    \State $(G, \cL) \gets \simplify(G, \cL, k)$
    \If{\simplify rejects}
    \State \textbf{reject}
    \EndIf
    \State $k \gets 2k$
    \EndWhile
    \State $W \gets \{vv : v \in V(G) \}$
    \If{\textsc{SolveHittingSet}($W, \cL, k$) accepts}
    \State \textbf{accept}
    \Else
    \State \textbf{reject}
    \EndIf
    \EndProcedure
  \end{algorithmic}
  \label{alg:mainloop}
\end{algorithm}

\begin{theorem} \label{thm:djcut-fpta}
  \djcut is fixed-parameter tractable.
\end{theorem}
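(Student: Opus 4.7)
The plan is to verify that Algorithm~\ref{alg:mainloop} runs in fpt time and is correct, by combining the iterative \simplify procedure (Lemma~\ref{lem:lmcut-iteration}) with a standard fpt solver for bounded-arity \textsc{Hitting Set}. The proof has three ingredients: bounding the number of iterations of the \texttt{while} loop, controlling the blow-up of the budget parameter $k$ and the instance size, and verifying correctness using the approximate guarantees of \simplify.

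For termination, I would observe that at the start $\mu(\cL) \leq 3d$, since each list has length at most $d$ and every request contributes at most $3$ to $\mu$. Lemma~\ref{lem:lmcut-iteration} guarantees $\mu(\cL') \leq \mu(\cL)-1$ per call, so the loop runs at most $3d = O(1)$ times before $\mu_2(\cL) = 0$. When that happens every request in every list is a singleton $vv$, so satisfying a list amounts to choosing one of its at most $d$ vertices to delete. The set $W = \{vv : v \in V(G)\}$ then witnesses that the remaining task is \textsc{$d$-Hitting Set} with solution size at most the current budget, which is solvable in $O^*(d^{k})$ time by the standard branching algorithm.

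For the running-time bookkeeping, each iteration doubles $k$ and multiplies $|\cL|$ by at most $k^2$, while $|V(G)|$ does not grow. After $\leq 3d$ iterations the budget is at most $2^{3d}k = O(k)$ and $|\cL|$ has grown by a factor of at most $(k^2)^{3d}$, which is polynomial in $k$ for constant $d$. Each call to \simplify runs in $O^*(2^{O(k)})$ time, so the composed running time remains $O^*(f(d,k))$ for a computable $f$ after the exponentially many repetitions needed to boost the one-sided success probability $2^{-O(k^2)}$ to a constant.

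The main obstacle, and the most delicate part of the argument, is propagating correctness through the randomized \simplify calls despite their gap guarantee. I would argue inductively using both bullets of Lemma~\ref{lem:lmcut-iteration}: in the completeness direction, if $\cost(G_0,\cL_0) \leq k_0$, then conditioning on every \simplify call succeeding (which occurs with probability $2^{-O(k_0^2 \cdot d)}$ overall) the invariant $\cost(G_i,\cL_i) \leq 2^i k_0$ is preserved, and the terminal \textsc{Hitting Set} instance admits a solution of size $\leq 2^{3d} k_0$; in the soundness direction, the contrapositive of the second bullet gives that if the algorithm accepts then the cost at each step is at most $3$ times the budget one level up, so telescoping yields $\cost(G_0,\cL_0) \leq 3^{3d} k_0$. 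Amplifying the success probability by $2^{O(k_0^2 \cdot d)}$ independent repetitions of the whole procedure — still within the fpt budget — then yields the claim. (Strictly speaking this establishes a constant-factor fpt-approximation distinguishing $\cost \leq k_0$ from $\cost > 3^{3d}k_0$; the stated theorem should be read in light of the approximation claim of Theorem~\ref{ithm:fpa-algs}, since an exact FPT algorithm would contradict W[1]-hardness of \textsc{Steiner Multicut} with triples.)
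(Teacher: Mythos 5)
Your proof follows the paper's route exactly: Algorithm~\ref{alg:mainloop}, iterating \simplify until $\mu_2(\cL)=0$ and then branching on the remaining bounded \textsc{Hitting Set}, with Lemma~\ref{lem:lmcut-iteration} as the engine and the same bookkeeping of $\mu$, $\nu$, the doubling budget, and the $k^2$-factor growth of $|\cL|$. You are also correct that the theorem should be read as the constant-factor fpt-approximation claim of Theorem~\ref{ithm:fpa-algs}; this is a sharp observation, since an exact algorithm for \djcut with $d=3$ would contradict W[1]-hardness of \textsc{Steiner Multicut}.

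The one place your argument is not solid is the soundness telescope: ``$\cost(G_0,\cL_0)\le 3^{3d}k_0$'' does not follow from iterating the second bullet of Lemma~\ref{lem:lmcut-iteration}. That bullet relates $\cost>3k$ to $\cost>2k$, while the budget only doubles between calls, so the factors do not compose. Concretely, acceptance gives $\cost(G_r,\cL_r)\le k_r=2k_{r-1}$, hence the contrapositive yields $\cost(G_{r-1},\cL_{r-1})\le 3k_{r-1}$; but $3k_{r-1}=6k_{r-2}>2k_{r-2}$, so the contrapositive cannot be invoked again at the next level. (The paper's own sentence ``If $\cost(I)>3k$, then $\cost(G',\cL')>k'$'' has the same problem for $r\ge 2$.) The clean fix is hidden in the proof of Lemma~\ref{lem:lmcut-iteration}: Lemma~\ref{lem:cost-sound} transfers any solution of the output instance back to the input at the same size (its ``$\le 2k$'' hypothesis is cosmetic), and Lemma~\ref{lem:initial-phase} adds at most $k$ for the multiway cut, so $\cost(G_{i-1},\cL_{i-1})\le k_{i-1}+\cost(G_i,\cL_i)$ holds deterministically. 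Telescoping this additive bound gives $\cost(G_0,\cL_0)\le\sum_{i=0}^{r}2^{i}k_0=(2^{r+1}-1)k_0$, a constant factor for $r\le 3d$, which is the guarantee you actually need.
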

\begin{proof}
  Let $(G, \cL, k)$ be an instance of \djcut.
  Repeat the following steps until $\mu_2(\cL) = 0$.
  Apply the algorithm of Lemma~\ref{lem:lmcut-iteration} to
  $(G, \cL, k)$, obtaining a new graph $G$ and
  a new collection of lists $\cL$,
  and let $(G, \cL, k) := (G', \cL', 2k)$.
  When $\mu_2(\cL) = 0$,
  let $W = \{ vv : v \in V(G) \}$ be the set of
  singleton cut requests for every vertex in $V(G)$.
  Check whether $(W, \cL, k)$ is a yes-instance of 
  \textsc{Hitting Set} -- if yes, accept, otherwise reject. 
  See Algorithm~\ref{alg:mainloop} for the pseudocode.

  To argue correctness,
  let $(G, \cL, k)$ be the input instance
  and $(G', \cL', k')$ be the instance obtained after simplification.
  By induction and Lemma~\ref{lem:lmcut-iteration}, we have
  $|V(G')| \leq |V(G)|$ and $\nu(\cL') \leq \nu(\cL)$. 
  Since $\nu(\cL') \leq \nu(\cL) \leq 2d$ and $\mu_2(\cL) = 0$, 
  every list in $\cL$ has at most $2d$ requests.
  Let $r$ be the number of calls to \simplify performed by the algorithm.
  Note that $r \leq \mu(\cL) \leq 3d$ since the measure decreases by at least one 
  with each iteration, and define $k' = 2^r k$.
  The lists in $\cL'$ only contain singletons, thus $(G', \cL', k')$ is essentially 
  an instance of \textsc{Hitting Set} with sets of size $2d$.
  Moreover, $|\cL'| \leq k^{2r} |\cL|$, so the number of lists is polynomial in $|\cL|$.
  We can solve $(G', \cL', k')$ in $O^*((2d)^{k'})$ time by branching
  (see, for example, Chapter~3~in~\cite{cygan2015parameterized}).
  For the other direction, suppose $\cost(G, \cL) \leq k$.
  By Lemma~\ref{lem:lmcut-iteration} and induction,
  we have $\cost(G', \cL') \leq 2^r k \leq k'$ with probability $2^{-O(rk^2)}$,
  and the algorithm accepts.
  If $\cost(I) > 3k$, then $\cost(G', \cL') > k'$ and the algorithm rejects.
\end{proof}

\subsection{Simplification Procedure}
\label{ssec:djcut-simplification}

In this section we prove Lemma~\ref{lem:lmcut-iteration}.
We start by iterative compression and guessing.
Then we delete at most $k$ vertices from the graph and modify it,
obtaining an instance amenable to the main technical tool 
of the section -- the \emph{shadow covering} technique.

\subsubsection{Initial Phase}
\label{sec:initial-phase}

Let $(G, \cL, k)$ be an instance of \djcut.
By iterative compression, assume we have a set $X \subseteq V(G)$
that satisfies all lists in $\cL$ and $|X| = c \cdot k + 1$,
where $c := c(d)$ is the approximation factor.
Assume $Z$ is an optimal solution to $G$, i.e. 
$|Z| \leq k$ and $Z$ satisfies all lists in $\cL$.
Guess the intersection $W = X \cap Z$,
and let $G' = G - W$,
$X' = X \setminus W$, and
$Z' = Z \setminus W$.
Construct $\cL'$ starting with $\cL$ and removing 
all lists satisfied by $W$.
Further, guess the partition $\cX = (X_1, \dots, X_\ell)$
of $X'$ into the connected components of $G' - Z'$, and 
identify the variables in each subset $X_i$ into a single vertex $x_i$,
and redefine $X'$ accordingly.
Note that the probability of our guesses being correct
up to this point is $2^{-O(k \log k)}$.
Also, these steps can be derandomized by creating
$2^{O(k \log k)}$ branches.

Now compute a minimum $\cX$-multiway cut in $G'$,
i.e. a set $M \subseteq V^1(G')$ that separates
every pair of vertices $x_i$ and $x_j$ in $X'$.
Note that $Z'$ is a $\cX$-multiway cut by the definition of $\cX$,
so $|M| \leq |Z'| \leq k$.
Such a set $M$ can be computed in $O^*(2^k)$ time
using the algorithm of~\cite{cygan2013multiway}.
If no $\cX$-multiway cut of size at most $k$ exists, 
then abort the branch and make another guess for $\cX$.
If an $\cX$-multiway cut $M$ of size at most $k$ is obtained,
remove the vertices in $M$ from $G$ and 
along with the lists in $\cL'$ satisfied by $M$.
This completes the initial phase of the algorithm.
Properties of the resulting instance are summarized below.

\begin{lemma} \label{lem:initial-phase}
  After the initial phase we obtain 
  a graph $G'$, a family of list requests $\cL'$,
  and subset of vertices $X' \subseteq V(G')$
  such that
  $|V(G')| \leq |V(G)|$, 
  $\nu(\cL') \leq \nu(\cL)$,
  $\mu(\cL') \leq \mu(\cL)$, and 
  $|X'| \in O(k)$.
  The set $X'$ satisfies all lists in $\cL'$ and
  intersects each connected component of $G'$ in at most one vertex.
  Moreover, the following hold.
  \begin{itemize}
  \itemsep0em
  \item $\cost(G, \cL) \leq k + \cost(G', \cL')$.
  \item If $\cost(G, \cL) \leq k$, then,
    with probability $2^{-O(k \log k)}$,
    we have $\cost(G', \cL') \leq k$. 
    Moreover, there is a set $Z' \subseteq V(G')$, $|Z'| \leq k$ 
    that satisfies all lists in $\cL'$ and is disjoint from $X'$.
  \end{itemize}
\end{lemma}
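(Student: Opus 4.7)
The plan is to verify each property by tracing through the initial-phase construction of Section~\ref{sec:initial-phase} and checking what each step preserves. Write $G_0, \cL_0$ for the input and $G', \cL'$ for the final output, so that $G' = G_0 - W - M$ and $\cL'$ is $\cL_0$ with the lists satisfied by $W$ or $M$ removed.

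First I would dispatch the structural bounds. The graph only shrinks via vertex deletion or contraction, so $|V(G')| \leq |V(G_0)|$; lists are only dropped or have endpoints identified, so neither $\nu$ nor $\mu$ increases; and $|X'| \in O(k)$ follows from $|X| = ck + 1$. That $X'$ satisfies every list of $\cL'$ is immediate: any list retained at the end was not satisfied by $W$, yet was satisfied by $X = W \cup X'$, and so must be satisfied by $X'$. The ``at most one intersection per component'' property follows from noting that after the identifications $X'$ is precisely the set of type-vertices $\{x_1, \dots, x_\ell\}$, and $M$ is chosen so that these vertices lie in pairwise distinct components of $G' = G_0 - W - M$.

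Next, for the cost bound $\cost(G_0, \cL_0) \leq k + \cost(G', \cL')$, the idea is that any optimum $Z^*$ for $(G', \cL')$ lifts to a solution $W \cup M \cup Z^*$ of $(G_0, \cL_0)$: $W$ covers the lists dropped in the first pruning, $M$ those dropped after the multiway cut is computed, and $Z^*$ the rest. The main obstacle here, and the only genuinely non-trivial point in the lemma, is why $|W| + |M| \leq k$ rather than merely $\leq 2k$; the key is that $M$ is chosen as a \emph{minimum} $\cX$-multiway cut, so whenever $(G_0, \cL_0)$ admits a solution $Z$ with $|Z| \leq k$, the set $Z \setminus W$ is itself an $\cX$-multiway cut in $G_0 - W$, yielding $|M| \leq |Z \setminus W|$ and hence $|W| + |M| \leq |Z| \leq k$.

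Finally, for the conditional existence of a small solution in the reduced instance, I would assume $\cost(G_0, \cL_0) \leq k$ and fix an optimum $Z$. The probability of guessing both $W = X \cap Z$ (one of $2^{O(k)}$ subsets of $X$) and the partition $\cX$ of $X \setminus W$ induced by the components of $G_0 - Z$ (one of at most $|X|^{|X|} = k^{O(k)}$ partitions) is $2^{-O(k \log k)}$. Conditioned on both guesses being correct, $(Z \setminus W) \setminus M$ lies in $V(G') \setminus X'$, has size at most $k$, and satisfies every list of $\cL'$, since those lists are by definition not satisfied by $W \cup M$ yet are satisfied by $Z$. A small detail to handle carefully is that the identifications of vertices in each $X_i$ do not spuriously alter satisfiability of the retained lists, but since identifications only merge endpoints of the same list (or leave separate lists separate), this is routine.
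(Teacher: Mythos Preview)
Your overall approach matches the paper's: walk through the construction and check each claim, with the structural bounds being routine and the second bullet following once the guesses $W$ and $\cX$ are assumed correct. Your choice of $Z' = (Z\setminus W)\setminus M$ is equivalent to the paper's $Z' = Z\setminus W$, since $M$ has been deleted from the graph anyway.

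There is, however, a real gap in your argument for the first bullet. You correctly observe that $W\cup M\cup Z^*$ satisfies every list of $\cL_0$, but your bound $|W|+|M|\le k$ rests on two hypotheses you are not entitled to here: that $\cost(G_0,\cL_0)\le k$, and that the guessed $W$ and $\cX$ happen to be the ones induced by an optimal $Z$ (otherwise $Z\setminus W$ need not be an $\cX$-multiway cut, and $W$ need not be contained in $Z$). The first bullet is invoked for \emph{soundness} in Lemma~\ref{lem:lmcut-iteration} (to conclude $\cost(G'',\cL'')\le 2k \Rightarrow \cost(G,\cL)\le 3k$), and soundness must hold on \emph{every} non-aborting branch, including those where the guesses are wrong and including instances with $\cost(G_0,\cL_0)>k$. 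On an arbitrary surviving branch the only unconditional guarantee is $|M|\le k$ (the branch aborts otherwise); there is no unconditional bound on $|W|$ beyond $|W|\le |X|=O(k)$.

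The paper's own proof is terser and sidesteps $W$ entirely, combining only $M$ with an optimum for $(G',\cL')$. You should think carefully about whether that actually suffices to cover the lists dropped because of $W$, and if not, how to close the gap (for instance, by restricting the guessed $W$ to have size at most $k$, which still captures the correct guess and yields $|W|+|M|\le 2k$; this gives a slightly weaker inequality that is nevertheless enough for a constant-factor guarantee).
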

\begin{proof}
  All statements apart from the last two
  are immediate from the construction.
  For the first statement,
  note that an optimal solution to $(G', \cL', k)$
  combined with the $\cX$-multiway cut $M$ has size at most $2k$ and 
  satisfies all lists in $\cL'$.

  To see that $\cost(G, \cL) \leq k$ implies $\cost(G', \cL') \leq k$
  with probability $2^{-O(k \log k)}$,
  observe that, assuming our guesses for $W$ and $\cX$ are correct,
  there is an optimal solution $Z$ to $(G, \cL, k)$
  such that $W = Z \cap X$ and 
  $Z$ partitions vertices of $X$ into 
  connected components according to $\cX$.
  Then, $Z' = Z \setminus W$ is a solution to $(G', \cL')$,
  $|Z'| \leq k$, and $Z' \cap X = \emptyset$.
\end{proof}

\subsubsection{Random Covering of Shadow}
\label{sec:shadow-cover}

Random covering of shadow is a powerful tool
introduced by~\cite{marx2014fixed} and 
sharpened by~\cite{chitnis2015directed}.
We use the latter work as our starting point.
Although~\cite{chitnis2015directed} present their theorems in terms of directed graphs,
their results are applicable to our setting by considering 
undirected edges as bidirectional,
i.e. replacing every edge $uv$ with a pair of antiparallel arcs $(u,v)$ and $(v, u)$.
Consider a graph $G$ with vertices partitioned 
into deltable and undeletable subsets, i.e. $V(G) = V^1(G) \uplus V^\infty(G)$.
Let $\cF = (F_1, \dots, F_q)$ be a family of connected subgraphs of $G$.
An \emph{$\cF$-transversal} is a set of vertices $T$
that intersects every subgraph $F_i$ in $\cF$.
If $T$ is an $\cF$-transversal, we say that $\cF$ is \emph{$T$-connected}.
For every $W \subseteq V(G)$, the \emph{shadow of $W$ (with respect to $T$)}
is the subset of vertices disconnected from $T$ in $G - W$.
We state it for the case $T \subseteq V^\infty(G)$ which suffices for our applications.

\begin{theorem}[Random Covering of Shadow, Theorem~3.5~in~\cite{chitnis2015directed}] \label{thm:shadow-cover}
  There is an algorithm \emph{\randomcover} that takes a graph $G$,
  a subset $T \subseteq V^{\infty}(G)$ and an integer $k$ as input,
  and in $O^*(4^k)$ time outputs a set $S \subseteq V(G)$
  such that the following holds.
  For any family $\cF$ of $T$-connected subgraphs,
  if there is an $\cF$-transversal of size at most $k$ in $V^1(G)$,
  then with probability $2^{-O(k^2)}$,
  there exists an $\cF$-transversal $Y \subseteq V^1(G)$ 
  of size at most $k$ such that
  \begin{enumerate}
  \itemsep0em
  \item $Y \cap S = \emptyset$, and
  \item $S$ covers the shadow of $Y$ with respect to $T$.
  \end{enumerate}
\end{theorem}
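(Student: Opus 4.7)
The plan is to follow the Marx--Razgon template via the theory of \emph{important separators}. For $X \subseteq V(G) \setminus T$, recall that an important $(T, X)$-separator of size at most $k$ is a minimal vertex separator $C \subseteq V^1(G)$ between $T$ and $X$ with $|C| \leq k$, such that no other $(T,X)$-separator $C'$ with $|C'| \leq |C|$ has $R_{G-C'}(T) \supsetneq R_{G-C}(T)$, where $R_{G-C}(T)$ denotes the set of vertices reachable from $T$ in $G - C$. The cornerstone combinatorial fact, due to Chen, Liu and Lu, is that the family $\mathcal{I}_k$ of important $(T, V(G) \setminus T)$-separators of size at most $k$ has $|\mathcal{I}_k| \leq 4^k$ and can be enumerated in $O^*(4^k)$ time. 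This enumeration is what will dominate the running time.

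The structural heart of the argument is a \emph{pushing lemma}: for any $\cF$-transversal $Y \subseteq V^1(G)$ with $|Y| \leq k$ granted by the hypothesis, one starts from the boundary of $R_{G-Y}(T)$ and iteratively replaces the current cut by a same-or-smaller-size separator with a strictly larger $T$-reachable side; the procedure terminates at some $C^\star \in \mathcal{I}_k$ with $R_{G-Y}(T) \subseteq R_{G-C^\star}(T)$. Consequently the shadow of $Y$ is contained in $\operatorname{shad}_T(C^\star) \cup C^\star$, and $Y^\dagger := (Y \setminus C^\star) \cup C^\star$ is still an $\cF$-transversal of size at most $k$, since $\cF$-connectedness to $T$ is preserved whenever the cut is pushed outward along the reachable side.

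The algorithm \randomcover will enumerate $\mathcal{I}_k$ in $O^*(4^k)$ time and then sample $S$ by a two-level random process. For each $C \in \mathcal{I}_k$, flip a fair coin $b_C \in \{0, 1\}$, and set
\[
  S := \Bigl(\bigcup_{C \in \mathcal{I}_k,\, b_C = 1} \operatorname{shad}_T(C)\Bigr) \setminus \Bigl(\bigcup_{C \in \mathcal{I}_k,\, b_C = 0} C\Bigr).
\]
The target success event is $b_{C^\star} = 1$, coupled with $b_C = 0$ for every $C$ whose shadow would otherwise force a vertex of $C^\star$ into $S$. Organising $\mathcal{I}_k$ into $O(k)$ layers according to \emph{push depth} (the number of push steps needed to reach an important separator from a given cut) and observing that at each layer only $O(k)$ flips are truly relevant to $C^\star$, a union bound across the $O(k)$ layers yields the desired success probability of $2^{-O(k^2)}$; conditioning on this event forces both $Y^\dagger \cap S = \emptyset$ and $\operatorname{shad}_T(Y^\dagger) \subseteq S$.

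The main obstacle will be the layered analysis: one must show that the success event depends on only $O(k^2)$ relevant coin flips out of all $|\mathcal{I}_k|$ bits. This uses the submodular behaviour of important separators --- in particular, that the ``uncrossing'' of two important separators of size at most $k$ yields a single important separator of size at most $k$ whose reachable side contains both --- to argue that separators interact with $C^\star$ only through the $O(k)$ separators in neighbouring layers. Once this layered independence is set up, the probability calculation is immediate, and the scheme can be derandomised within the same $O^*(4^k)$ budget via the explicit constructions of~\cite{chitnis2015directed}, completing the proof.
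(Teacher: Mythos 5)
This is not a statement the paper proves: it is imported verbatim as a black-box result from Chitnis, Hajiaghayi and Marx's work on directed subset feedback vertex set (their Theorem~3.5), so there is no proof in the paper for your argument to be compared against. That said, since you have tried to reconstruct the original argument, a few points are worth flagging.

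Your sketch captures the broad flavour of the Chitnis et al.\ technique (important separators, random sampling, shadow covering), but several load-bearing steps are either ill-posed or unjustified. First, the object ``important $(T, V(G)\setminus T)$-separator'' is not well-defined, and is not what the technique uses: an important separator needs two disjoint terminal sets, and if $T$ has any neighbour at all there is no separator between $T$ and $V(G)\setminus T$. The actual construction works with important $(v, T)$-separators for each individual vertex $v$, so the relevant family is not a single set of size $4^k$ but a union over all $v$ of $4^k$-bounded families, and the sampling is designed around that. Second, the pushing step you use to obtain $Y^\dagger := (Y\setminus C^\star)\cup C^\star$ and declare it still an $\cF$-transversal is not automatic: replacing part of a transversal by an important separator with a larger reachable side can let some $F\in\cF$ escape, and the original proof handles this by a genuinely more delicate ``shadow-removal'' argument (roughly, a careful induction that modifies $Y$ component by component while invoking $T$-connectedness of each $F$), not a single uncrossing step. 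Third, the claim that the success event ``depends on only $O(k^2)$ relevant coin flips'' organised by ``push depth'' is asserted rather than argued; the $2^{-O(k^2)}$ bound in the original comes from a union bound over $k$ positions times $k$ candidate important separators, not from a layered-independence structure of the kind you gesture at. As written, your proposal would not reproduce the theorem. Given that the paper itself treats the statement as an external lemma, citing Chitnis et al.\ and not re-deriving the $O^*(4^k)$ enumeration or the $2^{-O(k^2)}$ bound is the correct move; attempting a from-scratch proof is overreach unless you do the full shadow-removal construction.
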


The following consequence is convenient for our purposes.

\begin{corollary} \label{cor:solution-cover}
  Let $S$ and $Y$ be the shadow-covering set and the $\cF$-transversal
  from Theorem~\ref{thm:shadow-cover}, respectively.
  Define $R = V(G) \setminus S$ to be the complement of $S$.
  Then $Y \subseteq R$ and, for every vertex $v \in R$, 
  either $v \in Y$ or $v$ is connected to $T$ in $G - Y$.
\end{corollary}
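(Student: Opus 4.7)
The plan is to derive both claims as essentially immediate bookkeeping consequences of the two properties of $S$ and $Y$ guaranteed by Theorem~\ref{thm:shadow-cover}, so no substantive new argument is required. For the inclusion $Y \subseteq R$, I would simply invoke property~1 of the theorem, which states $Y \cap S = \emptyset$, giving $Y \subseteq V(G) \setminus S = R$.

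For the second claim, I would fix an arbitrary $v \in R$ and assume $v \notin Y$ (the case $v \in Y$ being covered by the first alternative of the disjunction). The task is then to show that $v$ is connected to $T$ in $G - Y$. I would argue by contradiction: if $v$ were disconnected from $T$ in $G - Y$, then by definition $v$ would lie in the shadow of $Y$ with respect to $T$. Property~2 of Theorem~\ref{thm:shadow-cover} says that $S$ covers this shadow, forcing $v \in S$, which contradicts $v \in R = V(G) \setminus S$. Hence $v$ must remain connected to $T$ in $G - Y$.

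The only subtlety worth flagging is the degenerate case $v \in Y$, where $v$ is removed from the graph and the phrase ``connected to $T$ in $G - Y$'' is not meaningful; the statement of the corollary sidesteps this by explicitly allowing $v \in Y$ as one of the two alternatives. Beyond this, there is no real obstacle: the corollary is a restatement of Theorem~\ref{thm:shadow-cover} phrased in terms of the complement set $R$, packaged for convenient reuse in the \simplify routine.
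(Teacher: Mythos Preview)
Your proposal is correct and matches the paper's own proof essentially verbatim: both derive $Y \subseteq R$ directly from $Y \cap S = \emptyset$, and both obtain the second claim from the fact that $S$ covers the shadow of $Y$, so any $v \in R = V(G)\setminus S$ lies outside the shadow and is therefore in $Y$ or connected to $T$ in $G - Y$.
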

\begin{proof}
  By definition, $R$ is the complement of $S$, hence 
  $Y \cap S = \emptyset$ implies that $Y \subseteq R$.
  Since $S$ covers the shadow of $Y$ with respect to $T$, 
  the set $R$ is outside the shadow.
  Hence, a vertex $v \in R$ is either connected to $T$ in $G - Y$ or it is contained in $Y$.
\end{proof}

Note that if a vertex $v \in N(S)$ and $v$ is undeletable,
then $v \in R$ and $v \notin Y$,
hence $v$ is connected to $T$ in $G - Y$.
Since $Y \cap S = \emptyset$,
every vertex in $N(v) \cap S$ is also connected to $T$ in $G - Y$,
so we can remove $N(v) \cap S$ from $S$ (and add it to $R$ instead).
By applying this procedure to exhaustion, 
we may assume that no vertex in $N(S)$ is undeletable.

With the random covering of shadow at our disposal, 
we return to \djcut.
By Lemma~\ref{lem:initial-phase},
we can start with an instance $(G, \cL, k)$
and a set $X \subseteq V(G)$ such that
$|X| \in O(k)$, $X$ satisfies all lists in $\cL$,
every connected component of $G$ intersects $X$ in at most one vertex,
and there is an optimal solution $Z$ disjoint from $X$.
Let $\cT := \cT(G, \cL, X, Z)$ be the set of cut requests in $\bigcup \cL$ 
satisfied by both $X$ and $Z$.
Define $\cF$ as the set of $st$-walks for all $st \in \cT$.
Observe that an $\cF$-transversal is precisely a $\cT$-multicut.
Apply the algorithm from Theorem~\ref{thm:shadow-cover} to $(G, X, k)$.
Since $X$ and $Z$ are $\cF$-transversals and $|Z| \leq k$ by assumption,
Theorem~\ref{thm:shadow-cover} and Corollary~\ref{cor:solution-cover} imply that
we can obtain a set $R \subseteq V(G)$ in fpt time such that,
with probability $2^{-O(k^2)}$,
there is an $\cF$-transversal $Y \subseteq R$ of size at most $k$,
and every vertex in $R \setminus Y$ is connected to $X$ in $G - Y$.

For every vertex $v \in V^{1}(G) \setminus X$, 
define a set of vertices $R_v \subseteq R \setminus X$ as follows:
\begin{itemize}
\itemsep0em
\item if $v$ is disconnected from $X$, then let $R_v = \emptyset$;
\item if $v \in N(X)$ or $v \in R$, then let $R_v = \{v\}$;
\item otherwise, let $R_v = R \cap N(H)$, 
  where $H$ is the component of $G[S]$ containing $v$.
\end{itemize}
Note that, by definition, the set $R_v$ is an $Xv$-separator in $G$.
Moreover, we have ensured that $N(S)$ does not contain undeletable vertices,
so $R_v$ does not contain undeletable vertices.
In a certain sense, the sets $R_v$ are the only $Xv$-separators 
that $Y$ needs to use.
This idea is made precise in the following lemma.

\begin{lemma} \label{lem:decided-separators}
  Let $G$ be a graph.
  Let $X$ and $Y$ be disjoint subsets of $V(G)$ such that 
  $X$ intersects every connected component of $G$ in at most one vertex.
  Suppose $R \subseteq V(G)$ is such that $Y \subseteq R$ and 
  all vertices in $R \setminus Y$ are connected to $X$ in $G - Y$.
  If a vertex $s$ is disconnected from $X$ in $G - Y$, 
  then $R_s \subseteq Y$.
\end{lemma}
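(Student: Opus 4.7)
The plan is to proceed by case analysis on which branch of the definition applies to $R_s$. The first case, $s$ disconnected from $X$ in $G$, is immediate since $R_s = \emptyset \subseteq Y$, so I would dispose of it at once.

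The second case is $R_s = \{s\}$, arising when $s \in N(X) \cup R$; here the goal reduces to showing $s \in Y$. If $s \in R$, then the hypothesis on $R$ forces $s \in Y$, because $s$ is disconnected from $X$ in $G - Y$ while every vertex of $R \setminus Y$ is assumed to be connected to $X$ in $G - Y$. If $s \in N(X)$, I would note that $s$ has a neighbour $x \in X$, and since $X$ and $Y$ are disjoint we have $x \notin Y$; hence if $s$ itself were not in $Y$, the edge $sx$ would exhibit a connection from $s$ to $X$ in $G - Y$, contradicting the hypothesis.

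The main case is the third, where $s \in S := V(G) \setminus R$, $s \notin N(X)$, and $R_s = R \cap N(H)$ for the component $H$ of $G[S]$ containing $s$. I would pick an arbitrary $u \in R \cap N(H)$ and show $u \in Y$ by contradiction. Suppose $u \notin Y$; then $u \in R \setminus Y$, so by hypothesis there is a walk $P$ from $u$ to $X$ in $G - Y$. Since $u \in N(H)$, there exists $v \in H$ with $uv \in E(G)$. The key observation is that $Y \subseteq R$ and $H \subseteq V(G) \setminus R$ together imply $H \cap Y = \emptyset$; in particular $v \notin Y$, so the edge $uv$ survives in $G - Y$, and $H$ remains a connected subgraph of $G - Y$. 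Concatenating an $s$-to-$v$ walk inside $H$, the edge $vu$, and the walk $P$ then yields a walk from $s$ to $X$ in $G - Y$, contradicting the hypothesis that $s$ is disconnected from $X$ there.

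The only obstacle of note is in the third case, and its technical heart is the observation that both $H$ and the crossing edge $uv$ remain after deleting $Y$, which is a direct consequence of $Y \subseteq R$ together with $H \subseteq V(G) \setminus R$. No further properties of $X$, such as the fact that it hits each component in at most one vertex, seem to be needed for this lemma itself; that hypothesis is presumably exploited in how the lemma is applied downstream rather than in the proof.
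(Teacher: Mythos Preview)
Your proof is correct and follows essentially the same case analysis as the paper's proof: both handle the three branches of the definition of $R_s$ separately, and in the third case both derive a contradiction by observing that a vertex of $R_s \setminus Y$ would connect $s$ to $X$ through $H$ in $G-Y$, using $Y \subseteq R$ to ensure $H$ survives. Your write-up is in fact more explicit than the paper's in two places (the $s \in N(X)$ subcase and the survival of $H$ in $G-Y$), and your closing remark that the ``one vertex of $X$ per component'' hypothesis is not used in the proof itself is accurate.
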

\begin{proof}
  Assume that $s$ is connected to a vertex $x \in X$ in $G$,
  as otherwise the conclusion holds vacuously.
  Let $Y$ be an $xs$-separator for some $x \in X$.
  If $s \in N(x)$, then clearly $R_s = \{s\} \subseteq Y$.
  Otherwise, if $s \in R$, then $R_s = \{s\} \subseteq Y$ 
  because every vertex in $R \setminus Y$ is connected to $X$ in $G - Y$.
  Finally, if $s \notin N(x) \cup R$, then we claim that
  $R_s = R \cap N(H)$ is contained in $Y$, 
  where $H$ is the connected component of $s$ in $G - R$.
  Suppose for the sake of contradiction that $R_s \nsubseteq Y$ and 
  there is $v \in R_s \setminus Y$.
  Then $v \in R \setminus Y$, so
  $v$ is connected to $x$.
  However, since $v \in N(H)$, it has a neighbour in $H$,
  and is connected to $s$ in $G - R$ and in $G - Y$,
  contradicting that $Y$ is an $xs$-separator.
\end{proof}

Now we compute a simplified collection of lists $\cL'$.
Start by adding all lists in $\cL$ to $\cL'$.
Remove every singleton request $xx$ such that $x \in X$ from every list of $\cL'$.
For every list $L \in \cL'$ not shortened this way,
let $st \in L$ be a non-singleton cut request satisfied by $X$. 
Consider $R_s$ and $R_t$ and
apply one of the following rules.
\begin{enumerate}[(R1)]
\itemsep0em
\item If $|R_s| > k$ and $|R_t| > k$, remove $st$ from $L$.
\item If $|R_s| \leq k$ and $|R_t| > k$, replace $L$
  with sets $(L \setminus \{st\}) \cup \{aa\}$ for all $a \in R_s$.
\item If $|R_s| > k$ and $|R_t| \leq k$, replace $L$ 
  with sets $(L \setminus \{st\}) \cup \{bb\}$ for all $b \in R_t$.
\item If $|R_s| \leq k$ and $|R_t| \leq k$, replace $L$ with
  sets $(L \setminus \{st\}) \cup \{aa, bb\}$ 
  for all $a \in R_t$, $b \in R_t$.
\end{enumerate}
Finally, make vertices in $X$ undeletable, obtaining a new graph $G'$.
This completes the simplification step.
Note that each list in $\cL$ is processed once,
so the running time of the last step is polynomial.

Now we prove some properties of $G'$ and $\cL'$ obtained above.
Note that $|V(G')| = |V(G)|$.
Since every list in $\cL$ is processed once
and with at most $k^2$ new lists,
the size of $|\cL'|$ grows by a factor of at most $k^2$.
To see that $\nu(\cL') \leq \nu(\cL)$ and $\mu(\cL') \leq \mu(\cL) - 1$,
observe that every reduction rule replaces a list $L$ with new lists
with either one less non-singleton request (so $\mu_2$ decreases by at least $1$),
and adds up to two singleton requests (so $\mu_1$ increases by at most $2$).
Moreover, in every list of $\cL$ there is a cut request satisfied by $X$,
so no list of $\cL$ remains unchanged in $\cL'$.
We prove the remaining observations in separate lemmas.

\begin{lemma} \label{lem:cost-complete}
  If $\cost(G, \cL) \leq k$, then, with probability $2^{-O(k^2)}$, 
  we have $\cost(G', \cL') \leq 2k$.
\end{lemma}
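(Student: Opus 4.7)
The approach will be to exhibit $Y \cup Z$ as a solution to $(G', \cL')$ of size at most $2k$, where $Z$ is the optimal solution to $(G, \cL)$ guaranteed by Lemma~\ref{lem:initial-phase} (so $|Z| \leq k$ and $Z \cap X = \emptyset$) and $Y$ is the $\cT$-multicut of size at most $k$ contained in $R$, whose existence follows with probability $2^{-O(k^2)}$ from Theorem~\ref{thm:shadow-cover} combined with Corollary~\ref{cor:solution-cover}. Because $X \subseteq V^{\infty}(G)$ when \randomcover is invoked, $Y$ is automatically disjoint from $X$, so $Y \cup Z \subseteq V^{1}(G')$ and $|Y \cup Z| \leq 2k$. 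The bound on the probability combines with the $2^{-O(k \log k)}$ success probability of the initial phase to still give $2^{-O(k^2)}$.

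The bulk of the work will be to verify that $Y \cup Z$ satisfies every derived list $L' \in \cL'$. I plan to argue by a case distinction on each original list $L \in \cL$: the simplification procedure chose a request $st \in L$ satisfied by $X$ and then applied one of the rules R1--R4. Since $Z$ satisfies $L$, I will fix a request $pq \in L$ that $Z$ makes satisfied. If $pq \neq st$, then $pq$ survives in every derived list $L'_i$ (the rules only remove the chosen request $st$), so $Z$ alone satisfies $L'_i$. The interesting case is $pq = st$: here $st$ is satisfied by both $X$ and $Z$, so $st \in \cT$, and thus $Y$, being a $\cT$-multicut, separates $s$ from $t$ in $G - Y$.

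For this case the key tool will be Lemma~\ref{lem:decided-separators}. Because $X$ meets each component of $G$ in at most one vertex, separating $s$ from $t$ in $G - Y$ forces at least one of them -- say $s$ -- to be disconnected from $X$ in $G - Y$. Lemma~\ref{lem:decided-separators} then yields $R_s \subseteq Y$, so in particular $|R_s| \leq k$ and hence rule R2 or R4 was applied to $L$ (not R1 or R3). In either of these cases every derived list $L'_i$ contains a singleton request $aa$ with $a \in R_s \subseteq Y$, and $Y$ satisfies $L'_i$ through this singleton; the symmetric subcase where $t$ is the one disconnected from $X$ proceeds via rule R3 or R4 in the same way.

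The main obstacle I anticipate is ruling out the combination of $st \in \cT$ with rule R1, i.e., $|R_s|, |R_t| > k$: if this occurred, neither $R_s$ nor $R_t$ could be contained in $Y$, which would force both $s$ and $t$ to remain connected to $X$ in $G - Y$, hence to lie in distinct $X$-components of $G - Y$ and therefore in distinct connected components of $G$ already -- making $st$ trivially separated. Handling this cleanly will require a short preprocessing pass that discards, from each $L \in \cL$, any chosen request $st$ that is trivially satisfied (e.g.\ $s \in X$, $t \in X$, or $s$ and $t$ in different components of $G$); this also ensures $R_s$ and $R_t$ are nonempty whenever they are used by rules R2--R4. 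With this preprocessing in place the case analysis closes, and $Y \cup Z$ is a valid solution to $(G', \cL')$ of size at most $2k$ with probability $2^{-O(k^2)}$, as required.
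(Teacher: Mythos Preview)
Your proposal is correct and follows essentially the same approach as the paper: exhibit $Y \cup Z$ as a size-$2k$ solution, case-split on whether the request $Z$ uses coincides with the removed request $st$, and in the coinciding case invoke Lemma~\ref{lem:decided-separators} to get $R_s \subseteq Y$ (hence $|R_s|\le k$, forcing rule R2 or R4). Your extra preprocessing pass for trivially separated pairs is more explicit than the paper (which tacitly assumes $s,t$ lie in a common component containing a unique $x\in X$), but the core argument is identical.
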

\begin{proof}
  Let $Z$ be a solution to $(G, \cL, k)$,
  i.e. $|Z| \leq k$ and $Z$ satisfies all requests in $\cL$.
  By Lemma~\ref{lem:initial-phase}, we may assume that $Z \cap X = \emptyset$.
  Let $R$ be the complement of the shadow covering set obtained 
  by Theorem~\ref{thm:shadow-cover} and Corollary~\ref{cor:solution-cover}.
  With probability $2^{-O(k^2)}$, there exists an $\cF$-transversal
  $Y \subseteq R$ of size at most $k$.
  We claim that $Y \cup Z$ satisfies all lists in $\cL'$.
  Since $|Y \cup Z| \leq 2k$, this suffices to prove the lemma.

  Consider a list $L \in \cL$. 
  We show that $Z \cup Y$ 
  satisfies every list obtained from $L$ by the reduction rules.
  If $L$ contains $xx$ for some $x \in X$, then 
  $Z$ satisfies $L \setminus \{xx\}$ because $X \cap Z = \emptyset$.
  If $L$ contains a request satisfied by $Z$ but not $X$,
  then this request remains in every list derived from $L$,
  and $Z$ satisfies all these lists.
  There is one remaining case: when $X$ and $Z$ satisfy 
  the same non-singleton request $st$ in $L$.
  Then $Y$ also satisfies $st$ since $st \in \cT$ and $Y$ is a $\cT$-multicut.
  We claim that $Y$ satisfies all lists derived from $L$ in this case.
  Note that there is a unique $x \in X$ that is contained on every $st$-walk,
  and $Y$ separates $x$ from $s$ or $t$.
  Assume by symmetry that $Y$ is an $xs$-separator.
  By Lemma~\ref{lem:decided-separators},
  we obtain $R_s \subseteq Y$, and since
  $|Y| \leq k$, we have $|R_s| \leq k$.
  Thus, every list derived from $L$
  contains a single request $aa$ for some $a \in R_s$,
  and $Y$ satisfies every such list.
\end{proof}

Now we show the remaining direction.

\begin{lemma} \label{lem:cost-sound}
  If $\cost(G, \cL) > 2k$, then
  we have $\cost(G', \cL') > 2k$. 
\end{lemma}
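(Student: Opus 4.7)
The plan is to prove the contrapositive: given a solution $Z' \subseteq V^1(G')$ to $(G', \cL')$ of size at most $2k$, I claim that $Z'$ is also a valid solution to $(G, \cL)$ of the same size. Since $X$ is rendered undeletable in $G'$, we have $Z' \cap X = \emptyset$, so $Z'$ is a legal deletion set in $G$, and separation in $G - Z'$ coincides with separation in $G' - Z'$. It therefore suffices to verify that $Z'$ satisfies every list $L \in \cL$ interpreted in $G$.

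For each such $L$, I would trace which branch of the simplification produced the corresponding list(s) in $\cL'$. If $L$ contained a singleton $xx$ with $x \in X$, then the shortened list $L \setminus \{xx : x \in X\}$ lies in $\cL'$, so some pair in this subset of $L$ is separated by $Z'$ and $L$ is satisfied. Otherwise, one of the rules (R1)--(R4) was applied to some non-singleton $st \in L$ separated by $X$. Rule (R1) places $L \setminus \{st\}$ directly into $\cL'$, so $Z'$ satisfies $L$ through it. For (R2), every derived list $(L \setminus \{st\}) \cup \{aa\}$ with $a \in R_s$ lies in $\cL'$: either $Z'$ already satisfies some request of $L \setminus \{st\}$, and we are done, or $a \in Z'$ for every $a \in R_s$, i.e.\ $R_s \subseteq Z'$. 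Rule (R3) is symmetric. For (R4), applying the same reasoning across all pairs $(a,b) \in R_s \times R_t$, either some request of $L \setminus \{st\}$ is satisfied by $Z'$, or one of $R_s \subseteq Z'$, $R_t \subseteq Z'$ holds, since otherwise any $(a_0, b_0) \in (R_s \setminus Z') \times (R_t \setminus Z')$ would witness an unsatisfied list in $\cL'$.

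All non-trivial subcases reduce to a single key claim, which is where the construction of $R_s$ pays off: if $R_s \subseteq Z'$ (and symmetrically for $R_t$), then $Z'$ separates $s$ from $t$ in $G$, and hence satisfies $st \in L$. This is true because $R_s$ is an $Xs$-separator in $G$ by construction, so $s$ is disconnected from $X$ in $G - R_s$ and a fortiori in $G - Z'$; since $st$ is separated by $X$, every $st$-path in $G$ meets $X$, so $s$ and $t$ lie in different components of $G - Z'$. The main subtlety I anticipate is the edge case in which the chosen $st$ has an endpoint in $X$, since $R_v$ is only defined for $v \in V^1(G) \setminus X$ and the key claim implicitly treats $X$ as a topological $st$-separator. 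This should be resolved by insisting that the witness $st$ be chosen with $s, t \notin X$, which is possible because Case~2 already excludes singletons $xx$ with $x \in X$ from $L$, and any pair $xy$ with $x \in X$, $y \notin X$ can be handled analogously to a singleton via $R_y$. Modulo this routine verification, the case analysis above closes the proof.
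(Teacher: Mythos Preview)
Your proposal is correct and follows essentially the same contrapositive argument as the paper: take a solution $Z'$ to $(G',\cL')$, note $Z' \cap X = \emptyset$ since $X$ is undeletable in $G'$, and verify list by list that $Z'$ also satisfies $\cL$ in $G$. Your case split over the reduction rules (R1)--(R4) matches the paper's, and your argument for (R4) via a witness pair $(a_0,b_0) \in (R_s \setminus Z') \times (R_t \setminus Z')$ is the natural way to extract $R_s \subseteq Z'$ or $R_t \subseteq Z'$.

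You are in fact more explicit than the paper on the key implication: the paper simply concludes ``$R_s \subseteq Z'$ or $R_t \subseteq Z'$'' and declares the case analysis complete, leaving implicit that $R_s$ being an $Xs$-separator together with $X$ lying on every $st$-path forces $Z'$ to separate $st$. Your spelled-out version of this step is correct and arguably an improvement. The edge case you flag (an endpoint of $st$ lying in $X$, so that $R_s$ or $R_t$ is undefined) is a genuine wrinkle in the construction that the paper does not address either; it is a concern about the simplification procedure itself rather than about this soundness lemma, and your suggested handling via $R_y$ for the non-$X$ endpoint is the right fix.
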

\begin{proof}
  We prove the contrapositive.
  Suppose $\cost(G', \cL') \leq 2k$ and 
  $Z'$ is an optimal solution to $(G', \cL', 2k)$.
  We claim that $Z'$ is also a solution to $(G, \cL, 2k)$.
  Consider a list $L \in \cL$.
  It suffices to show that if $Z'$ satisfies a list $L'$ derived from $L$
  by one of the reduction rules, then $Z'$ satisfies $L$ as well.
  If $L'$ is derived from $L$ be removing $xx$ for some $x \in X$,
  then $Z'$ satisfies $L \setminus \{xx\}$
  because the vertices in $X$ are undeletable in $G'$, so $Z' \cap X = \emptyset$.
  If $L'$ is derived from $L$ by removing some request $st \in \cT$,
  then we need to consider several cases.
  If $Z'$ satisfies $L \setminus \{st\}$, then it clearly satisfies $L$.
  Otherwise, we claim that $R_s \subseteq Z'$ or $R_t \subseteq Z'$.
  Suppose all lists derived from $L$ after removing $st$
  have singletons $aa$ for all $a \in R_s$ added to them.
  Since $Z'$ satisfies all these lists and does not satisfy $L \setminus \{st\}$,
  we have $R_s \subseteq Z'$.
  The cases when singletons $bb$ for all $b \in R_t$ 
  are added to the derived lists
  follow by an analogous argument.
  This completes the case analysis.
\end{proof}

We are now ready to prove Lemma~\ref{lem:lmcut-iteration}.

\begin{proof}[Proof of Lemma~\ref{lem:lmcut-iteration}]
  Suppose $(G, \cL, k)$ is a yes-instance of \djcut.
  By Lemma~\ref{lem:initial-phase}, 
  after the initial phase we obtain $G'$, $\cL'$
  such that
  $|V(G')| \leq |V(G)|$,
  $\nu(\cL') \leq \nu(\cL)$,
  $\mu(\cL') \leq \mu(\cL)$,
  and
  $\cost(G', \cL') \leq k$.
  Moreover, we obtain a set $X' \subseteq V(G')$, $|X'| \in O(k)$,
  that satisfies all lists in $\cL'$,
  intersects every component of $G'$ in at most one vertex,
  and is disjoint from an optimum solution $Z'$ to $(G', \cL', k)$.
  Now we apply random covering of shadow and the list reduction rules
  to $G', \cL', X'$,
  obtaining a new graph $G''$ and a new set of lists $\cL''$.
  By Lemma~\ref{lem:cost-complete},
  with probability $2^{O(-k^2)}$, we have $\cost(G'', \cL'') \leq 2k$.
  This proves one statement of Lemma~\ref{lem:lmcut-iteration}.

  For the second statement of Lemma~\ref{lem:lmcut-iteration},
  suppose $\cost(G'', \cL'') \leq 2k$.
  By Lemma~\ref{lem:cost-sound},
  we have $\cost(G', \cL') \leq 2k$.
  By Lemma~\ref{lem:initial-phase},
  $\cost(G' \cL') \leq 2k$
  implies that 
  $\cost(G, \cL) \leq 2k + k \leq 3k$, 
  and we are done.
\end{proof}

\subsection{An Improved Algorithm for Steiner Multicut}
\label{ssec:steiner}

\ifshort
We present a simpler and faster algorithm for 
\textsc{(Vertex) Steiner Multicut}, proving the following theorem.

\begin{theorem} \label{thm:steiner-fpa}
  \textsc{Steiner Multicut} with requests of constant size
  is $2$-approximable in $O^*(2^{O(k)})$ time.
\end{theorem}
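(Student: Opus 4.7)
The plan is to adapt the $O^*(2^{O(k)})$-time $2$-approximation of Lokshtanov et al.~\cite{lokshtanov2021fpt} for \textsc{Vertex Multicut} to the Steiner setting with terminal sets of size at most $d = O(1)$. Their algorithm iteratively processes cut requests: at each step it locates an unseparated pair $(s,t)$, computes a minimum $st$-vertex-cut, and branches on the cut vertices, using a budget of $2k$ to achieve a $2$-approximation in $O^*(2^{O(k)})$ total branching. The key observation I rely on is that separating a set $T_i$ amounts to cutting at least one pair within $T_i$, and since $|T_i| \leq d$ there are only $\binom{d}{2} = O_d(1)$ candidate pairs per set.

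My plan is to run essentially the same algorithm with a single modification at the inner step. While some $T_i \in \cT$ is not yet separated by the current partial solution $Z$, I iterate over all pairs $(s, t) \in \binom{T_i}{2}$, compute the minimum $st$-vertex-cut in $G - Z$ for each, and let $(s^*, t^*)$ be the pair achieving the smallest such cut $C^*$. I then feed $C^*$ into the Lokshtanov et al.\ branching step as if it were the cut for the ``currently selected'' pair. Correctness of the $2$-approximation is preserved because whenever an optimal solution $Z^*$ of size $k$ separates $T_i$, some pair in $T_i$ is cut by $Z^*$, so the minimum $st$-cut over all pairs is at most $k$---exactly the hypothesis needed at the branching step. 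The running time acquires only a multiplicative $\binom{d}{2}$ overhead per iteration from enumerating pairs within $T_i$, which is absorbed into $O^*(2^{O(k)})$ for constant $d$.

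The main obstacle is that Lokshtanov et al.'s algorithm may rely on structural invariants tied to a fixed source--sink pair (for instance, shadow-based or important-separator arguments local to a chosen $(s,t)$). I expect that substituting the argmin pair $(s^*, t^*)$ at each iteration preserves their analysis, since the only property invoked by the branching step is an upper bound $k$ on the minimum cut between the targeted vertices; nonetheless, this substitution must be verified step-by-step to ensure that the progress invariant, namely that each branch commits at least one vertex toward separating $T_i$, continues to hold. A safer variant, should the pair-based analysis fail, is to enumerate the $2^{d-1} - 1 = O_d(1)$ nontrivial bipartitions $T_i = A \cup B$ and use minimum $A$-$B$ vertex cuts instead of pair cuts: the per-iteration overhead remains constant, and any optimal solution induces at least one such bipartition witnessing a cut of size at most $k$.
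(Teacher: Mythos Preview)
Your description of the Lokshtanov et al.\ algorithm is not accurate, and this is where the proposal breaks down. Their $2$-approximation does \emph{not} proceed by repeatedly computing a minimum $st$-cut and branching on its vertices (that would cost $k^{O(k)}$, not $2^{O(k)}$). Rather, it uses iterative compression to obtain an approximate solution $X$, then the \emph{method of two important separators} together with a divide-and-conquer guessing of how the optimum partitions $X$ to produce an $\cX$-multiway cut $M$ of size at most $2|Z'|$. After removing $M$, each component contains a single contracted terminal vertex $x$, and what remains is an instance of \emph{Strict Multicut}: every request is already satisfied by $\{x\}$, and one must find a solution of size $k-|Z'|$ avoiding $x$. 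This last subproblem is solved exactly by the \textsc{Digraph Pair Cut} algorithm of Kratsch and Wahlstr\"om. There is no ``branching step on the cut $C^*$'' into which you could plug an argmin pair.

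The paper's fix is localized precisely at this final subroutine. The outer machinery (compression, important separators, divide-and-conquer) carries over verbatim to Steiner requests, reducing to \emph{Strict Steiner Multicut}: a designated vertex $x$ alone satisfies every set $T_i$. The key observation is that in this strict setting, a deletion set $W$ satisfies $T_i$ if and only if $W$ separates \emph{some} $t\in T_i$ from $x$. Hence one generalizes \textsc{Digraph Pair Cut} as follows: maintain $Y$, compute the minimum $xY$-separator $W$ closest to $x$; if some $T_i=\{t_1,\dots,t_p\}$ is unsatisfied, branch on adding each $t_j$ to $Y$. Closeness guarantees the $xY$-maxflow strictly increases, so depth is at most $k$ and the branching factor is $p$, giving $O^*(p^k)$. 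Your argmin-pair heuristic and the bipartition fallback are both unnecessary once you see that the modification lives in the \textsc{Pair Cut} subroutine, not in the outer loop.
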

\fi

\iflong
We present a simpler and faster algorithm for 
\textsc{(Vertex) Steiner Multicut}, proving Theorem~\ref{thm:steiner-fpa}.
\fi
Our approximation algorithm builds upon
the $O^*(2^{O(k)})$-time $2$-approximation
for \textsc{Vertex Multicut} of~\cite{lokshtanov2021fpt}.
Note that it is a special case of \textsc{Steiner Multicut} with $p = 2$.
We only need to change one subroutine in their algorithm,
so we describe the complete procedure informally,
invoking relevant results from~\cite{lokshtanov2021fpt}
and proving that our modified subroutine allows us to handle 
the cases with $p > 2$.
Our goal is to reduce the problem to
\textsc{Strict Steiner Multicut},
which is a special case of \textsc{Steiner Multicut}
where the input comes with a designated vertex $x$
such that $\{x\}$ satisfies all requests,
and the goal is to find a multicut of size at most $k$
that does not include $x$.
As we show in the sequel, this problem can be solved in
single-exponential fpt time.

Let $(G, \cT, k)$ be an instance of \textsc{Steiner Multicut}.
Start by iterative compression and branching,
which allows to assume access to a set $X$
of size at most $2k+1$ that satisfies all subsets in $\cT$.
Further, we may assume that a hypothetical 
optimal solution $Z$ is disjoint from $X$.
Let $\cX$ be the partition of $X$ into connected components of $G - Z$,
and $Z' \subseteq Z$ be a minimal subset of $Z$ that is a $\cX$-multiway cut.
One can guess $|Z'|$ in polynomial time.
Using the method of two important separators~(Lemma~3.2~of~\cite{lokshtanov2021fpt})
and guessing partial information about $\cX$ using a divide-and-conquer approach
(as in Theorem~3.2~of~\cite{lokshtanov2021fpt}),
one can find a subset $M$ of at most $2|Z'|$ vertices
such that $X$ is partitioned into the connected components of $G - M$ according to $\cX$.
The remainder of the problem can be solved by $|Z - Z'| = k - Z'$ deletions.
Since $M$ is a $\cX$-multiway cut,
all vertices of $X$ that intersect a connected component of $G - M$ can be identified.
Thus, the problem reduces to solving several instances of
\textsc{Strict Steiner Multicut}, one for each connected component of $G - M$.
\iflong
  To summarize, Theorem~3.2~of~\cite{lokshtanov2021fpt} leads to the following observation.

  \begin{observation} \label{obs:steiner-to-strict-steiner}
  If \textsc{Strict Steiner Multicut} is solvable in $O^*(2^{O(k)})$ time,
  then \textsc{Steiner Multicut} is approximable within a factor of $2$ in 
  $O^*(2^{O(k)})$ time.
\end{observation}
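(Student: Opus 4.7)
The plan is to prove the observation by combining the iterative-compression/branching framework sketched in the paragraph preceding the statement with the $2$-approximate $\cX$-multiway-cut subroutine of Lokshtanov et al., and then decomposing the remaining problem into independent instances of \textsc{Strict Steiner Multicut}. Concretely, given an instance $(G,\cT,k)$, I would first apply iterative compression to reduce to the situation where we have a set $X\subseteq V(G)$ with $|X|\le 2k+1$ that satisfies every request in $\cT$; this costs only an $O^*(2^{O(k)})$ overhead. Next, guess the intersection $X\cap Z$ of an optimal solution $Z$ with $X$, and, by deleting those vertices and adjusting $k$, assume $Z\cap X=\emptyset$. This adds a $2^{O(k)}$ branching factor.

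The crucial step is to identify, at approximation cost~$2$, a set $M$ that realizes the partition $\cX$ of $X$ into the connected components of $G-Z$. Let $Z'\subseteq Z$ be a minimal subset that is already a $\cX$-multiway cut; note $|Z'|\le k$. I would invoke the two-important-separators routine (Lemma~3.2 of~\cite{lokshtanov2021fpt}) together with the divide-and-conquer strategy behind Theorem~3.2 there: guessing partial information about $\cX$ via a recursion tree of depth $O(\log|X|)$, one processes one $(A,B)$-separation at a time, replacing the optimal separator by the two important separators that dominate it. Because each replacement step costs a factor of at most~$2$ in solution size, the resulting set $M$ satisfies $|M|\le 2|Z'|$ and the combined branching fits in $O^*(2^{O(k)})$ time. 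I would be careful that the partition $\cX$ itself need not be guessed in one shot (which would cost $|X|^{|X|}=2^{\Omega(k\log k)}$); instead, it is reconstructed implicitly through the divide-and-conquer recursion, which is exactly where the Lokshtanov et al.\ argument is delicate, and this is the main obstacle I expect to have to verify carefully.

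Once $M$ is in hand, $X$ is partitioned among the components of $G-M$ according to $\cX$, so inside each component $H$ of $G-M$ the remaining subinstance $(H,\cT_H,k-|Z'|)$ is a collection of \textsc{Steiner Multicut} requests, all of whose vertices lie within $H$ and are separated from everything else by $M$. In each such $H$ pick any vertex $x_H\in X\cap V(H)$; then $\{x_H\}$ already separates every request in $\cT_H$ (since $X$ satisfies all requests and $\cX$ has been realized by $M$), so the subinstance is exactly an instance of \textsc{Strict Steiner Multicut} with distinguished vertex $x_H$. Solving each such subinstance exactly by the hypothesized $O^*(2^{O(k)})$-time algorithm and taking the union with $M$ yields a solution $Y$ with
\[
|Y|\;\le\;|M|+(k-|Z'|)\;\le\;2|Z'|+(k-|Z'|)\;=\;k+|Z'|\;\le\;2k,
\]
which is the desired factor-$2$ approximation.

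The overall running time is the product of the branching/compression factor $2^{O(k)}$, the $\cX$-multiway-cut approximation step of cost $2^{O(k)}$, and the strict-multicut subroutine of cost $2^{O(k)}$, so it stays in $O^*(2^{O(k)})$. The main obstacle, as noted above, is re-verifying that the $\cX$-multiway cut can indeed be $2$-approximated in $2^{O(k)}$ time without paying $2^{\Omega(k\log k)}$ for guessing $\cX$; once that is in place, the decomposition into strict Steiner multicut instances and the cost accounting are straightforward.
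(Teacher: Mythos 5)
Your proposal matches the paper's proof essentially step for step: iterative compression to obtain $X$ of size $O(k)$, branching on $X\cap Z$, invoking the two-important-separators/divide-and-conquer machinery from Lokshtanov et al.\ (Lemma~3.2 and Theorem~3.2) to produce an $\cX$-multiway cut $M$ with $|M|\le 2|Z'|$ without paying $2^{\Theta(k\log k)}$ for guessing $\cX$, and then decomposing into per-component \textsc{Strict Steiner Multicut} instances with distinguished vertex $x_H$. Your explicit budget accounting $|Y|\le 2|Z'| + (k-|Z'|) = k + |Z'|\le 2k$ is the same accounting the paper uses, only spelled out slightly more carefully.
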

\fi
\ifshort
  We conclude that if \textsc{Strict Steiner Multicut} is solvable in $O^*(2^{O(k)})$ time,
  then \textsc{Steiner Multicut} is approximable within a factor of $2$ in $O^*(2^{O(k)})$ time.
\fi

For the case with $p = 2$, \cite{lokshtanov2021fpt} invoke the algorithm for 
\textsc{Digraph Pair Cut}~of~\cite{kratsch2020representative}.
We generalize this algorithm to handle $p > 2$.
For this, we need to invoke a definition.
\iflong
  \begin{definition}[Section~3.2~in~\cite{kratsch2020representative}] 
    \label{def:closest-cut}
    Let $G = (V, E)$ be a graph and fix $x \in V(G)$.
    A set $W \subseteq V$ is \emph{closest to $v$}
    if $v \notin W$ and $W$ is the unique minimum $vW$-separator.
  \end{definition}
\fi
\ifshort
Let $G = (V, E)$ be a graph and fix $x \in V(G)$.
A set $W \subseteq V$ is \emph{closest to $v$}
if $v \notin W$ and $W$ is the unique minimum $vW$-separator.
\fi
Intuitively, the closest $vW$-separator is distinguished
among all $vW$-separators by the property that
its deletion minimizes the subset of vertices reachable from $v$.
The uniqueness of the closest minimum $vW$-separator
follows by submodularity of cuts.
Such a set can be computed in polynomial time
(see~e.g.~Section~3.2~of~\cite{kratsch2020representative}).

\begin{lemma} \label{lem:strict-steiner}
\textsc{Strict Steiner Multicut} with sets of size $p$ 
is solvable in $O^*(p^k)$ time.
\end{lemma}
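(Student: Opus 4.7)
The plan is to generalize the algorithm of Kratsch and Wahlstr\"om~\cite{kratsch2020representative} for \textsc{Digraph Pair Cut} from vertex pairs to $p$-element requests, branching on closest minimum $(x,T_i)$-separators. I first reformulate the problem. A preprocessing step removes any $T_i \in \cT$ with a vertex not reachable from $x$ in $G$ (already satisfied by $\emptyset$); after this, every remaining $T_i$ has all its vertices reachable from $x$. Combined with the hypothesis that $\{x\}$ satisfies every request, a short case analysis shows that $S \subseteq V(G) \setminus \{x\}$ satisfies $T_i$ if and only if some vertex of $T_i$ is unreachable from $x$ in $G - S$: a hypothetical component of $G - S$ containing all of $T_i$ would either contain $x$, contradicting that $\{x\}$ separates $T_i$, or avoid $x$, persisting as a component of $G - \{x\}$ and contradicting that $\{x\}$ satisfies $T_i$. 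The task thus becomes to find $S$ of size at most $k$ such that for every $T_i$, at least one vertex of $T_i$ lies outside the $x$-reachable set of $G - S$.

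The algorithm picks an unsatisfied $T_i = \{t_1, \ldots, t_p\}$ and computes the closest minimum $(x, T_i)$-vertex separator $W \subseteq V(G) \setminus \{x\}$ in the sense of Definition~\ref{def:closest-cut}; such $W$ is unique and polynomial-time computable by standard max-flow techniques. The crucial structural observation is $|W| \leq p$, because $T_i$ itself is an $(x, T_i)$-separator of size $p$. The branching rule recurses on $|W| \leq p$ subinstances: for each $w \in W$, commit $w$ to the solution, remove it from $G$, and decrement $k$. With branching factor at most $p$, depth at most $k$, and polynomial work per node (including closest cut computation and the fold step described below), the total running time is $O^*(p^k)$.

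The main obstacle is the correctness of this pure $p$-ary hit branching, namely, that some optimal solution always hits $W$. I would establish this by submodular uncrossing of the cut function: for any optimum $S^\star$ disjoint from $W$ separating $x$ from some $t_j \in T_i$, with $A$ and $B$ denoting the $x$-reachable sets of $G - S^\star$ and $G - W$ respectively, the vertex boundary $\partial(A \cup B)$ is an $(x, t_j)$-separator of size at most $|S^\star|$ that intersects $W$ whenever $W \not\subseteq A$. The boundary case $W \subseteq A$ is handled by a polynomial-time \emph{fold} step applied between branching nodes: since $W$ and its reachability closure lie inside $A$, contracting them into $x$ preserves $S^\star$, and the request $T_i$ is replaced by $T_i \setminus W$, which is non-empty unless $W = T_i$, in which case hit branching is forced anyway. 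A delicate additional step verifies that the uncrossed cut preserves satisfaction of every other request $T_{i'}$, essentially because witnesses for other requests outside $A$ can be shown to remain outside $A \cup B$ under the closest-cut property; this parallels the corresponding argument in the \textsc{Digraph Pair Cut} proof of~\cite{kratsch2020representative}.
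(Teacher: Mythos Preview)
Your approach diverges from the paper's and carries a genuine gap in the correctness argument.

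The paper does \emph{not} branch on cut vertices. It maintains a growing set $Y$ of terminals (initially $Y = \emptyset$), computes the closest minimum $(x,Y)$-separator $W$, and checks whether $W$ itself already satisfies every request. If not, it picks an unsatisfied $T_i = \{t_1,\ldots,t_p\}$ and branches on which $t_j$ to add to $Y$. The running-time bound comes from the fact that the minimum $(x,Y)$-cut strictly increases along every branch; correctness comes from the existence of a \emph{closest} optimal solution $Z'$ (the unique minimum $(x,Z')$-separator), so that on the branch tracking vertices separated by $Z'$, the candidate $W$ eventually dominates~$Z'$ and hence satisfies all requests.

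Your scheme instead branches on the vertices of the closest $(x,T_i)$-cut $W$ for a \emph{single} request $T_i$, and needs that some optimal solution hits~$W$. Your uncrossing argument does not establish this. Writing $A$ for the $x$-reachable set in $G - S^\star$ and $B$ for the $x$-reachable set in $G - W$, submodularity indeed gives $|\partial(A\cup B)| \le |S^\star|$ and $\partial(A\cup B)\cap W \ne \emptyset$ whenever $W\not\subseteq A$. But $\partial(A\cup B)$ need not satisfy the \emph{other} requests: a witness $t' \in T_{i'}\setminus A$ can lie in $B$, since $W$ is only a cut for $T_i$ and says nothing about $T_{i'}$. Then $t' \in A\cup B$ and $\partial(A\cup B)$ no longer separates $t'$ from $x$. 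Your appeal to ``the closest-cut property'' to rule this out is unfounded---that property concerns $(x,T_i)$-cuts only. This is \emph{not} parallel to the Digraph Pair Cut argument: there the closest cut is taken with respect to the growing set $Y$, which accumulates witnesses for all requests, so the issue never arises.

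The ``fold'' step does not rescue the argument. As you describe it, folding is triggered by the condition $W\subseteq A$, but $A$ depends on the unknown optimum $S^\star$, so the algorithm cannot test it. If you instead fold unconditionally (contracting $B$, or $B\cup W$, into $x$), you may destroy every optimal solution: take $T_1=\{t_1,t_2\}$ with $x\text{--}w\text{--}t_1$, $x\text{--}t_2$, and $T_2=\{u_1,u_2\}$ with $x\text{--}b_1\text{--}u_1$, $x\text{--}b_2\text{--}u_2$. Here $W=\{w,t_2\}$ and $B=\{x,b_1,u_1,b_2,u_2\}$; any solution must delete a vertex of $B$ to handle $T_2$, so contracting $B$ into $x$ makes the instance unsolvable. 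The fold also does not obviously bound recursion depth, since it does not decrement $k$.

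In short, the paper's ``grow $Y$, test the closest $(x,Y)$-cut, branch on terminals'' route is both simpler and avoids the obstacle that defeats your cut-vertex branching.
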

\iflong
  \begin{proof}
    Let $(G, \cT, k)$ be an instance of \textsc{Steiner Multicut}
    and assume $x \in V(G)$ satisfies all subsets in $\cT$.
    Our goal is to compute a set of vertices $W \subseteq V(G)$
    such that $|W| \leq k$, $x \notin W$ and 
    $W$ satisfies all subsets in $\cT$.

    Initialize a set $Y = \emptyset$ and continue with the following steps.
    \begin{enumerate}
    \item \label{alg:separate} 
      Compute the minimum $xY$-separator $W$ closest to $x$.
    \item \label{alg:reject}  
      If $|W| > k$, reject the instance.
    \item \label{alg:accept} 
      If $W$ satisfies all subsets in $\cT$, accept the instance.
    \item \label{alg:branch} 
      Otherwise, pick a subset $T_i = \{t_1, \dots, t_p\}$
      that is not satisfied by $W$.
      Branch in $p$ directions, adding $t_i$ to $Y$ in each branch.
    \end{enumerate}

    Since $W$ in step~\eqref{alg:separate} is chosen to be a closest separator,
    the maxflow between $x$ and $Y$ is increased in step~\eqref{alg:branch}
    by adding a vertex to $Y$.
    Hence, the depth of the recursion tree is at most $k$,
    and the branching factor is $p$, which yields the running time $O^*(p^k)$.

    To prove correctness, the key observation is
    that a set $W \subseteq V(G)$ satisfies a subset $T_i \in \cT$
    if and only if at least one vertex in $t \in T_i$
    is separated from $x$ in $W$.
    Clearly, if all vertices of $T_i$ are 
    connected to $x$ in $G - W$, then $W$ does not satisfy $T_i$.
    Moreover, if $W$ is an $xt$-separator for some $t \in T_i$,
    then then either there is 
    a vertex $s \in T_i \setminus \{t\}$ 
    reachable from $x$ but not from $t$ in $G - W$,
    or the set $T_i$ is completely separated from $x$ in $G - W$.
    In the first case, $W$ is an $st$-separator for $\{s,t\} \subseteq T_i$, 
    so it satisfies $T_i$.
    In the second case, recall that $x$ satisfies $T_i$,
    so there is a pair of vertices in $T_i$ such that all paths
    connecting them contain $x$.
    Thus, separating $x$ and $T_i$ satisfies $T_i$.

    It remains to show that if $(G, \cT, k)$ is a yes-instance,
    then it has a solution that is closest to $x$.
    Suppose $Z$ is a solution to $(G, \cT, k)$
    and let $Z'$ be the unique minimum $vZ'$-separator.
    Clearly, $|Z'| \leq |Z| \leq k$ and if $Z$ separates $x$
    from $t \in V(G)$, then so does $Z'$.
    Hence, $Z'$ is a solution as well. 
  \end{proof}

  Combining Lemmas~\ref{obs:steiner-to-strict-steiner}~and~\ref{lem:strict-steiner}
  proves Theorem~\ref{thm:steiner-fpa}.
\fi

\ifshort

\begin{proof}[Proof Sketch]
  Let $(G, \cT, k)$ be an instance of \textsc{Steiner Multicut}
  and assume $x \in V(G)$ satisfies all subsets in $\cT$.
  Our goal is to compute a set of vertices $W \subseteq V(G)$
  such that $|W| \leq k$, $x \notin W$ and 
  $W$ satisfies all subsets in $\cT$.

  Initialize a set $Y = \emptyset$.
  Compute the minimum $xY$-separator $W$ closest to $x$.
  If $|W| > k$, reject the instance.
  If $W$ satisfies all subsets in $\cT$, accept the instance.
  Otherwise, pick a subset $T_i = \{t_1, \dots, t_p\}$
  that is not satisfied by $W$.
  Branch in $p$ directions, adding $t_i$ to $Y$ in each branch.

  Since $W$ i is chosen to be a closest separator,
  the maxflow between $x$ and $Y$ is increased
  by adding a vertex to $Y$.
  Hence, the depth of the recursion tree is at most $k$,
  and the branching factor is $p$, which yields the running time $O^*(p^k)$.

  The key observation for showing correctness is
  that a set $W \subseteq V(G)$ satisfies a subset $T_i \in \cT$
  if and only if at least one vertex in $t \in T_i$
  is separated from $x$ in $W$.
  We omit verification in the short version of the paper.
  \end{proof}
\fi

\iflong
  \section{Singleton Expansion}
  \label{sec:singleton-expansion}

  In this section we study the complexity of the \emph{singleton expansion}
  of equality languages, i.e., informally, the effect of enriching an
  equality language $\Gamma$ by assignment constraints $(v=i)$.
  Viewed as a constraint language, allowing constraints $(v=i)$ for some constant $i$
  corresponds to adding the singleton unary relation $R_i=\{(i)\}$ to
  $\Gamma$, hence the term singleton expansion.
  For completeness, we consider both adding just a constant number of
  such relations, and infinitely many.

  More formally, let $c \in \NN$. Define $\Gamma_c=\{\{(1)\}, \ldots, \{(c)\}\}$
  and $\Gamma_\NN=\{\{(i)\} \mid i \in \NN\}$. For an equality language
  $\Gamma$, define $\Gamma^+=\Gamma \cup \Gamma_\NN$ and for $c \in \NN$
  define $\Gamma_c^+=\Gamma \cup \Gamma_c$. For an equality language $\Gamma$,
  a \emph{singleton expansion of $\Gamma$} is either the language
  $\Gamma^+$ or $\Gamma_c^+$ for $c \in \NN$.
  For every equality language $\Gamma$ and every singleton expansion
  $\Gamma'$ of $\Gamma$, we study the complexity of $\mincsp{\Gamma'}$.
  
  We note that if $\Gamma=\{=\}$, then $\mincsp{\Gamma^+}$ naturally
  captures the problem \textsc{Edge Multiway Cut}. Let $(G,\cT,k)$ be an
  instance of \textsc{Edge Multiway Cut} with terminal set $\cT=\{t_1,\ldots,t_p\}$.
  Then we create an equivalent instance of $\mincsp{\Gamma^+}$ over
  variable set $V(G) \setminus \cT$ as follows. First, we assume there
  are no edges in $G[\cT]$, as any such edge must be deleted anyway.
  Next, for every edge $t_iv \in E(G)$ with $v \in V(G) \setminus \cT$ 
  and $t_i \in \cT$, add a soft constraint $(v=i)$, and for
  every edge $uv \in E(G-\cT)$ add a soft constraint $(u=v)$. 
  Clearly, this reduction can also be employed in the reverse,
  i.e., for every instance of $\mincsp{\Gamma^+}$ we can
  create an equivalent instance of \textsc{Edge Multiway Cut}.
  In a similar way, $\mincsp{=,\Gamma_2}$ corresponds to
  \textsc{$st$-Min Cut} and is in P, and \mincsp{=,\Gamma_s} for $s \in \NN$
  corresponds to \textsc{$s$-Edge Multiway Cut}, the restriction of
  \textsc{Edge Multiway Cut} to $s$ terminals, which for $s \geq 3$
  is NP-hard but FPT. 

  In this sense, studying singleton expansions of equality languages allows us
  to consider MinCSPs that may be intermediate between \textsc{Multiway Cut}
  and \textsc{Multicut}. 

  Unfortunately, our main conclusion is that nothing novel occurs in this range. 
  Let us first provide the explicit characterization of all properties
  under consideration for \mincsp{\Gamma^+}, since these cases are
  relatively manageable. Say that $\Gamma$ is \emph{positive conjunctive} if
  $\Gamma$ is both conjunctive and constant, i.e., every relation $R \in \Gamma$
  is defined as a conjunction of positive literals,
  and \emph{positive conjunctive and connected} if
  it is additionally split, i.e., the literals $(x_i=x_j)$ in the definition of a
  relation $R(x_1,\ldots,x_r) \in \Gamma$, if viewed as edges $x_ix_j$
  in a graph, form a connected graph. 
  
  \begin{theorem} \label{thm:gammaplus}
    Let $\Gamma$ be an equality language. The following hold.
    \begin{itemize}
    \item \csp{\Gamma^+} is in P if $\Gamma$ is Horn, and NP-hard otherwise
    \item \mincsp{\Gamma^+} is NP-hard
    \item \mincsp{\Gamma^+} has a polynomial-time constant-factor approximation if 
      $\Gamma$ is strictly negative or if $\Gamma$ is positive
      conjunctive, and otherwise it has no constant-factor approximation
      under UGC
    \item \mincsp{\Gamma^+} is FPT if either every relation in $\Gamma$
      is $\rel{NEQ}_3$ or split, or $\Gamma$ is strictly negative,
      otherwise it is W[1]-hard
    \item \mincsp{\Gamma^+} has a constant-factor FPT approximation
      if $\Gamma$ is negative, and otherwise it is \textsc{Hitting Set}-hard
    \end{itemize}
  \end{theorem}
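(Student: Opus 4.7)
The plan is a case analysis on $\Gamma$, driven by a \emph{constant-elimination reduction} that collapses most cases to $\mincsp{\Gamma, =, \neq}$, which has already been classified earlier in the paper. Given an instance of $\mincsp{\Gamma^+}$, I would introduce a fresh variable $c_i$ for each constant $i$ appearing in an assignment constraint, replace each $(x=i)$ constraint (preserving softness) by $x = c_i$, and add crisp constraints $c_i \neq c_j$ for every $i \neq j$. By Lemma~\ref{lem:neither-const-nor-neg}, the language $\Gamma$ already implements both $=$ and $\neq$ whenever $\Gamma$ is Horn, non-constant, and non-strictly-negative, so in this ``generic'' regime $\mincsp{\Gamma^+}$ and $\mincsp{\Gamma, =, \neq}$ are equivalent up to a constant-size implementation blow-up. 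This immediately yields bullets~1, 3, 4, and 5 in the generic regime, via Theorem~\ref{thm:eq-csp-dichotomy}, Theorems~\ref{thm:fpt-class} and~\ref{thm:fpta-class}, and Corollary~\ref{cor:poly-inapprox}. Bullet~2 is already covered by the fact that $\mincsp{\{=\}^+}$ encodes \textsc{Edge Multiway Cut}.

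The remaining non-generic cases are $\Gamma$ strictly negative, $\Gamma$ Horn and constant (which will either be positive conjunctive or reduce to it), and $\Gamma$ non-Horn. For non-Horn $\Gamma$, $\csp{\Gamma}$ is NP-hard by Theorem~\ref{thm:eq-csp-dichotomy} and the hardness lifts trivially to $\csp{\Gamma^+}$. For strictly negative $\Gamma$, I would show a direct cost-preserving reduction from $\mincsp{\Gamma^+}$ to \djcut with lists of length bounded by the maximum clause length of $\Gamma$: each assignment constraint $(x=i)$ becomes a deletable singleton request separating $x$ from a terminal $t_i$, and each strictly negative constraint, being a conjunction of pure disjunctions of disequalities, becomes a collection of disjunctive separation requests. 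The approximation and FPT claims then follow from Theorem~\ref{thm:djcut-fpta} (constant-factor FPT-approximation) together with standard LP-rounding for bounded-disjunct covering problems (polynomial-time constant approximation), and the FPT status follows by direct branching or by invoking Theorem~\ref{thm:triple-multicut-fpt} when the lists collapse to triples. For positive conjunctive $\Gamma$, every constraint equates a set of variables, so the problem naturally reduces to \textsc{(Edge) Multiway Cut}: introduce a terminal $t_i$ per constant $i$, link each assigned variable to the corresponding terminal, and model each positive-conjunctive constraint by a star through a fresh vertex. This gives the polynomial-time constant approximation of bullet~3, and FPT status in bullet~4 when $\Gamma$'s relations are split (yielding an ordinary \textsc{Multiway Cut} instance, FPT by~\cite{Marx06mwc}), with W[1]-hardness otherwise by reduction from \textsc{Steiner Multicut}~\cite{bringmann2016parameterized}.

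The main obstacle I expect is a careful accounting of constants in the approximation-preserving reductions, particularly the \emph{reverse} direction of constant elimination (from $\mincsp{\Gamma, =, \neq}$ back to $\mincsp{\Gamma^+}$), used to propagate lower bounds. The implementations of $=$ and $\neq$ from Lemma~\ref{lem:neither-const-nor-neg} use a constant number of $\Gamma$-constraints (bounded by the size of a reduced CNF definition of a witnessing relation $R \in \Gamma$), so each $=$- or $\neq$-constraint blows up by a constant factor, preserving both the parameter (for FPT reductions) and approximation ratios. A secondary subtlety is handling the boundary between the strictly negative and positive conjunctive cases when both the length and structure of relations affect whether $\neq$ arises ``for free'' from assignment constraints; here I would appeal to the algebraic characterization of~\cite{BodirskyCP10equality} to ensure that no intermediate case slips through the analysis.
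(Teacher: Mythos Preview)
Your overall architecture---the constant-elimination reduction (which is exactly the paper's Lemma~\ref{lem:eq-neq-gives-constants}) to collapse the generic Horn, non-constant, non-strictly-negative case to the already-classified $\mincsp{\Gamma,=,\neq}$---is correct and matches the paper. The handling of the non-Horn case and of positive conjunctive languages via \textsc{Multiway Cut} is also essentially what the paper does.

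However, there is a genuine gap in your case analysis. You write that the Horn constant case ``will either be positive conjunctive or reduce to it''. This is false. Take $\Gamma=\{\rel{ODD}_3\}$: it is Horn and constant (since $(1,1,1)\in\rel{ODD}_3$), but it is not positive conjunctive, and $\mincsp{\{\rel{ODD}_3\}^+}$ does \emph{not} reduce to any positive conjunctive problem. The paper shows (Lemmas~\ref{lem:retract-or-odd3} and~\ref{lem:odd3-constants-hard}) that such languages implement $\rel{ODD}_3$ over the singleton-expanded domain and are \textsc{Hitting Set}-hard, hence W[2]-hard with no constant-factor FPT approximation---a strictly harder regime than positive conjunctive. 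Your constant-elimination trick does not apply here because a constant language never implements $\neq$, so you cannot emulate the $c_i\neq c_j$ constraints. Establishing the correct boundary for the constant case (positive conjunctive versus \textsc{Hitting Set}-hard versus $\csp{}$ NP-hard) genuinely requires the algebraic argument via Theorem~\ref{thm:bodirsky:thm8}; your closing sentence defers to this but does not identify that a whole case is missing from your trichotomy.

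Two smaller points. First, for W[1]-hardness of positive conjunctive but non-split languages, \textsc{Steiner Multicut} is the wrong source: you only have conjunctions of equalities and constants available, which cannot encode an $\rel{NAE}$-type request. The paper reduces from \textsc{Split Paired Cut} via the implemented relation $R^\land_{=,=}$. Second, for strictly negative $\Gamma$, your proposed reduction to \djcut is underspecified (there are no equality constraints, so the graph has no edges as you describe it) and in any case \djcut only yields FPT \emph{approximation}; the paper instead gives a direct bounded-depth branching argument for exact FPT (Lemma~\ref{lem:negative-fpt}), which is what ``direct branching'' in your proposal would have to become.
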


  The cases for $\Gamma_c^+$, $c \in \NN$ are more involved, and
  require additional definitions. Let us provide the statements in
  multiple steps. Recall that \mincsp{\Gamma} and \mincsp{\Delta} are
  \emph{equivalent} if there are cost-preserving reductions in between
  the problems in both directions. 
  We first observe that if $\Gamma$ implements $\neq$
  and $=$, then we can ``emulate'' arbitrarily many constants via
  auxiliary variables (Lemma~\ref{lem:eq-neq-gives-constants}),
  and \mincsp{\Gamma'} maps back to \mincsp{\Gamma} for any singleton
  expansion $\Gamma'$ of $\Gamma$.  Now Theorem~\ref{thm:eq-csp-dichotomy}
  and Lemma~\ref{lem:neither-const-nor-neg} give the following.
  
  \begin{lemma} \label{olem:first}
    Let $\Gamma'$ be a singleton expansion of an equality language $\Gamma$.
    If $\Gamma$ is not Horn or constant, then \csp{\Gamma'} is NP-hard.
    If $\Gamma$ is Horn but not strictly negative or constant,
    then \mincsp{\Gamma'} is equivalent to \mincsp{\Gamma}.
    Otherwise $\Gamma$ is strictly negative or constant (and not both).
  \end{lemma}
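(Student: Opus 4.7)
The plan is to dispatch the three bulleted claims using the CSP dichotomy for equality languages (Theorem~\ref{thm:eq-csp-dichotomy}), the expressibility result of Lemma~\ref{lem:neither-const-nor-neg}, and the constant-emulation gadget promised by Lemma~\ref{lem:eq-neq-gives-constants}. The final disjointness clause is a short syntactic check.

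For the first bullet, suppose $\Gamma$ is neither Horn nor constant. Theorem~\ref{thm:eq-csp-dichotomy} yields NP-hardness of \csp{\Gamma}. Since $\Gamma \subseteq \Gamma'$, every instance of \csp{\Gamma} is already an instance of \csp{\Gamma'}, so the identity map transfers NP-hardness to \csp{\Gamma'} with no further work.

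For the second bullet, let $\Gamma$ be Horn, not strictly negative, and not constant. The forward direction from \mincsp{\Gamma} to \mincsp{\Gamma'} is immediate from $\Gamma \subseteq \Gamma'$. For the reverse, I would apply Lemma~\ref{lem:neither-const-nor-neg} twice: non-constantness of $\Gamma$ yields an implementation of $x_1 \neq x_2$, while being Horn but not strictly negative yields an implementation of $x_1 = x_2$. With both $=$ and $\neq$ available, Lemma~\ref{lem:eq-neq-gives-constants} supplies the needed gadget: given a \mincsp{\Gamma'} instance, introduce a fresh variable $c_i$ for every constant $i$ appearing in an assignment constraint, add crisp constraints $c_i \neq c_j$ for $i \neq j$ using the implemented disequality, and rewrite each $(x = i)$ as a constraint $x = c_i$ using the implemented equality. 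This yields a cost-preserving reduction from \mincsp{\Gamma'} to \mincsp{\Gamma}.

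For the third bullet, what remains after excluding the first two cases is precisely the languages in which every relation is Horn and either strictly negative or constant. These alternatives are mutually exclusive for proper (neither empty nor full) relations: a strictly negative relation is defined by a non-empty conjunction of clauses consisting of negative literals (otherwise it would be the trivial full relation, which we exclude), and every such clause is falsified by the all-equal tuple $(1,\ldots,1)$, so $(1,\ldots,1) \notin R$; on the other hand, a constant relation contains $(1,\ldots,1)$ by definition. The main obstacle in the proof is the gadget of Lemma~\ref{lem:eq-neq-gives-constants}; once it is in hand, the three cases follow by routine invocation of earlier lemmas and the preceding syntactic observation.
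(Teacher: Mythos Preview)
Your approach matches the paper's, which merely cites Theorem~\ref{thm:eq-csp-dichotomy}, Lemma~\ref{lem:neither-const-nor-neg}, and Lemma~\ref{lem:eq-neq-gives-constants} as justification without spelling out the case analysis. One small slip in the third bullet: you assert the remaining languages are all Horn, but a constant non-Horn language (e.g.\ $\Gamma$ consisting of the single relation defined by $(x_1=x_2)\lor(x_3=x_4)$) is excluded by neither of the first two cases yet is not Horn---fortunately the lemma only claims the residual is strictly negative or constant, and once you drop the spurious ``Horn'' qualifier your case split and mutual-exclusivity argument establish exactly that.
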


  For strictly negative languages $\Gamma$, all the positive
  properties mentioned in Theorem~\ref{thm:gammaplus} of course
  translate to \mincsp{\Gamma'} for any singleton extension $\Gamma'$
  of $\Gamma$.
  
  \begin{lemma} \label{olem:strictly-negative}
    Let $\Gamma'$ be a singleton expansion of a finite strictly negative equality
    language $\Gamma$. 
    Then $\csp{\Gamma'}$ is in P and $\mincsp{\Gamma'}$ is NP-hard,
    FPT, and has a constant-factor approximation.
  \end{lemma}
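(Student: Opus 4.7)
The plan is to establish each of the four assertions in turn. I assume throughout that $\Gamma$ contains at least one non-trivial relation --- otherwise every $R \in \Gamma$ equals $\NN^r$, so $\mincsp{\Gamma'}$ trivially reduces to counting unsatisfied singletons, and the NP-hardness claim does not apply. Under this assumption, Lemma~\ref{lem:neither-const-nor-neg} gives an implementation of $\neq$ inside $\Gamma$, which I invoke freely to transfer lower bounds stated for $\mincsp{\neq, \cdot}$ over to $\mincsp{\Gamma'}$.

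For \csp{\Gamma'}, the algorithm is direct propagation: for every variable $v$ check that its singletons are mutually consistent (rejecting otherwise), assign $v$ its forced value, and assign every remaining variable a distinct fresh natural number. Since any strictly negative relation is satisfied by tuples with pairwise distinct entries, each $\Gamma$-constraint can be verified clause by clause --- a clause is satisfied provided that at least one literal $x \neq y$ in it has $f(x) \neq f(y)$ --- which is polynomial overall. For NP-hardness, I would reduce \textsc{Minimum Vertex Cover} to $\mincsp{\neq, \Gamma_1^+}$: given a graph $G$, create variables $V(G)$, a soft $u \neq v$ for every edge, and a single soft $v = 1$ per vertex. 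A short argument shows that $\max_{V' \subseteq V(G)}(|V'| - |E[V']|)$ is attained by an independent set (any vertex of $V'$ that carries an edge to another member can be removed without decreasing the expression), so the MinCSP optimum equals the minimum vertex cover number of $G$, and the hardness transfers to $\mincsp{\Gamma'}$ via the $\neq$-implementation.

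The main obstacle is establishing fixed-parameter tractability. The structural point is that a constraint $R(x_1, \ldots, x_r)$ with $R$ strictly negative is violated only when some CNF clause of $R$ ``collapses'' --- that is, every pair of variables listed in that clause is assigned a common value --- and, because the domain $\NN$ is infinite, such collapses can only be triggered by singleton constraints pinning the relevant variables. My plan is to exploit this via iterative compression on the solution size: given a candidate deletion set $X_0$ of size $k+1$, branch over $W := X \cap X_0$ for the unknown optimum $X$; the constraints in $X_0 \setminus W$ must then be satisfied by the target assignment, determining a bounded partial assignment on the variables appearing in $X_0$. The residual problem can be encoded as an instance of \mdt, with triples arising from CNF clauses of $\Gamma$-constraints and deletable vertices corresponding to singleton assignment choices of the remaining variables, at which point Theorem~\ref{thm:triple-multicut-fpt} completes the algorithm.

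For the constant-factor approximation, decompose every strictly negative relation of arity $r$ into its CNF clauses, losing a factor bounded by the maximum arity of a relation in $\Gamma$ (since the original relation is violated iff at least one clause is, so paying per clause over-counts by at most the number of clauses). Each resulting clause has the form $R^{\neq}_d$ for some $d$ bounded by the maximum arity, so the problem reduces to $\mincsp{R^{\neq}_d, \Gamma'}$ with $d = O(1)$. Each soft clause can be ``hit'' by modifying a single singleton assignment so that at least one of its $O(d)$ designated pairs takes distinct values; a standard LP-rounding or primal-dual argument on the resulting bounded-size hitting set formulation then yields a polynomial-time constant-factor approximation, independent of the FPT algorithm above.
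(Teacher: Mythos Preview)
Your polynomial-time and NP-hardness arguments are fine; the NP-hardness reduction differs from the paper's (you use soft $\neq$-constraints and an optimisation argument over $|V'|-|E[V']|$, whereas the paper makes the edge constraints crisp so that the set of vertices assigned $1$ is forced to be independent), but both are correct.

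The genuine gap is the FPT argument. You propose iterative compression followed by an encoding into \mdt, but you never say what the graph is. \mdt is a graph-separation problem: edges come from equality constraints, and triples are cut requests. A strictly negative language has \emph{no} equality constraints, so the graph would be edgeless and every triple is vacuously separated; there is nothing for \mdt to do. Moreover, the negative clauses of $\Gamma$ have arbitrary (bounded) length, not length three, so ``triples arising from CNF clauses'' does not match the input format of \mdt. The reduction as stated does not go through.

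The paper avoids all of this with a direct bounded-branching argument (Lemma~\ref{lem:negative-fpt}). Form the canonical assignment: every variable with a singleton constraint gets that value (after first branching away any variable carrying contradictory singletons), every other variable gets a fresh value. If this assignment satisfies the instance, stop. Otherwise some $R$-constraint is violated, which can only happen because a single negative clause of $R$ has every pair pinned to the same constant by at most $r(R)$ singleton constraints; branch on deleting $R$ or one of those singletons. This is $(r(\Gamma)+1)$-way branching to depth $k$. The same local-obstruction idea yields the constant-factor approximation directly (delete $R$ together with one copy of each offending singleton, since any solution must delete at least one of them), which is more concrete than the LP/primal-dual sketch you gesture at.
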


  Finally, we assume that $\Gamma$ is constant.
  For $c \in \NN$, let the \emph{$c$-slice} of $\Gamma$ be the language
  \[
    \Delta = \{ R \cap [c]^{r(R)} \mid R \in \Gamma\}
  \]
  where $r(R)$ is the arity of $R$. For $c=2$, we will also interpret
  this as a Boolean language. The language $\Delta$ is
  \emph{trivial} if for every relation $R \in \Delta$, of arity $r$,
  either $R=\emptyset$ or $R=[c]^r$. We generalize the notions of
  \emph{positive conjunctive} and \emph{positive conjunctive and
    connected} to $\Delta$ in the natural way. Finally, a Boolean
  language is \emph{affine} if every relation in it can be modelled as
  the set of solutions to a system of affine linear equations over GF(2).
  We state the remaining cases.

  \begin{theorem}
    \label{thm:constant-gammacplus}
    Let $\Gamma$ be a constant equality language and $c \in \NN$.
    Let $\Delta$ be the $c$-slice of $\Gamma$.
    If $c=1$ then \mincsp{\Gamma_c^+} is in P. Otherwise the following hold.
    \begin{itemize}
    \item If $\Gamma$ has a retraction to domain $[c]$ then
      \mincsp{\Gamma_c^+} is equivalent to \mincsp{\Delta_c^+}.
    \item If $\Gamma$ has no retraction to domain $[c]$
      but is Horn, then \csp{\Gamma_c^+} is in P but
      \mincsp{\Gamma_c^+} is \textsc{Hitting Set}-hard
    \item Otherwise \csp{\Gamma_c^+} is NP-hard.
    \end{itemize}
    Furthermore, assume that $\Gamma$ has a retraction to domain $[c]$.
    If $c \geq 3$, then this implies that $\Delta$ is positive conjunctive.
    Furthermore the following hold.
    \begin{itemize}
    \item If $\Delta$ is trivial, then \mincsp{\Gamma_c^+} is in P
    \item If $\Delta$ is non-trivial, positive conjunctive and connected,
      then \mincsp{\Gamma_c^+} is in P for $c=2$, and NP-hard but FPT
      and constant-factor approximable for $c \geq 3$
    \item If $\Delta$ is positive conjunctive but not connected,
      then \mincsp{\Gamma_c^+} is W[1]-hard
      but constant-factor approximable
    \item If $c=2$ and $\Delta$ is affine but not positive
      conjunctive, then \csp{\Gamma_2^+} is in P but 
      \mincsp{\Gamma_c^+} has a cost-preserving
      reduction from \textsc{Nearest Codeword}
    \item Otherwise $c=2$ and \csp{\Gamma_2^+} is NP-hard 
    \end{itemize}    
  \end{theorem}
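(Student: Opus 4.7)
The plan handles the nested case structure in three stages: the $c=1$ base case together with the retraction reduction, the no-retraction dichotomy, and the finer classification inside the retraction case.

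For the first stage, when $c=1$ the constant assignment $\alpha \equiv 1$ satisfies every constraint, since every $R \in \Gamma$ is constant and the only available singleton is $\{(1)\}$, so $\mincsp{\Gamma_1^+}$ is in P. When $c \geq 2$ and a retraction $f \colon \NN \to [c]$ of $\Gamma$ exists, I would establish the equivalence of $\mincsp{\Gamma_c^+}$ and $\mincsp{\Delta_c^+}$ by composition: for any assignment $\alpha$, the map $f \circ \alpha$ takes values in $[c]$ and satisfies every constraint satisfied by $\alpha$ (for relation constraints $R(x_1,\ldots,x_r)$ this is the homomorphism property of $f$, and for singletons $(x = i)$ with $i \in [c]$ it is because $f$ fixes $[c]$), so $\cost(f \circ \alpha) \leq \cost(\alpha)$. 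The converse reduction simply treats any $[c]$-valued assignment as an $\NN$-valued one.

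For the second stage, suppose $\Gamma$ does not retract to $[c]$. If $\Gamma$ is not Horn, Theorem~\ref{thm:eq-csp-dichotomy} gives NP-hardness of $\csp{\Gamma}$, which lifts to $\csp{\Gamma_c^+}$. If $\Gamma$ is Horn, then $\Gamma_c^+$ remains Horn-solvable in P via unit propagation (singleton constraints $(x = i)$ are themselves positive unit clauses), so the task reduces to proving \textsc{Hitting Set}-hardness of $\mincsp{\Gamma_c^+}$. I would exploit the non-retraction witness: it must supply some $R \in \Gamma$ and a family of $R$-tuples whose equality patterns cannot all be simultaneously realised by any map into $[c]$. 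Combining such a witness with the $c$ singletons yields a pp-definition of a gadget analogous to $\rel{ODD}_3$, after which Lemma~\ref{lem:hitting-set-to-odd3}'s chain construction produces the cost-preserving reduction from \textsc{Hitting Set} to $\mincsp{\Gamma_c^+}$.

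For the third stage, assume $\Gamma$ retracts to $[c]$. The main structural claim, and the main technical obstacle, is that for $c \geq 3$ the retraction forces $\Delta$ to be positive conjunctive; I would argue this by contradiction, exploiting the permutation-invariance of equality relations together with the retraction property to extract, from any non-positive-conjunctive $R \cap [c]^r$, an $R$-tuple whose $f$-image falls outside $R$ under every candidate retraction $f$. Granting this claim, the subcases reduce to known or routine results: trivial $\Delta$ is in P by counting; non-trivial connected positive conjunctive $\Delta$ reduces $\mincsp{\Delta_c^+}$ to $st$-min cut (for $c = 2$, in P) or to \textsc{$s$-Edge Multiway Cut} (for $c \geq 3$, NP-hard but FPT and constant-factor approximable via Theorem~\ref{ithm:pos-conj}); non-connected positive conjunctive $\Delta$ falls into the W[1]-hard but constant-factor approximable case of Theorem~\ref{ithm:pos-conj}; for $c = 2$, an affine but non-positive-conjunctive $\Delta$ has $\csp{\Delta_2^+}$ in P by Schaefer's dichotomy but admits a cost-preserving reduction from \textsc{Nearest Codeword} via the standard encoding of $\mathrm{GF}(2)$-linear equations as soft constraints; every remaining $c = 2$ case has $\csp{\Delta_2^+}$ NP-hard by Schaefer's dichotomy.
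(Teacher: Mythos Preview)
Your first stage is fine and matches the paper. The real problems are in stages two and three, and they share a common root: you are trying to do everything by elementary manipulation of relations, while the paper's proof rests on the algebraic classification of polymorphisms from Bodirsky, Chen and Pinsker (their Theorem~8, restated in the paper as Theorem~\ref{thm:bodirsky:thm8}).

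In stage two, your NP-hardness argument is simply wrong. You invoke Theorem~\ref{thm:eq-csp-dichotomy} to get that $\csp{\Gamma}$ is NP-hard when $\Gamma$ is not Horn, but the standing hypothesis of the theorem is that $\Gamma$ is \emph{constant}, and the equality CSP dichotomy says that constant languages have $\csp{\Gamma}$ in P regardless of the Horn property. The singletons in $\Gamma_c^+$ are what make the CSP hard, and you need a genuinely new argument for $\csp{\Gamma_c^+}$. The paper obtains this via Lemma~\ref{lem:essential-or-hard} (no essential polymorphism implies NP-hardness) together with the polymorphism analysis: in the non-Horn, non-retracting case, $\Gamma_c^+$ is shown to have no essential polymorphism. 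Your sketch for \textsc{Hitting Set}-hardness in the Horn, non-retracting case is similarly underpowered: the paper shows that this case forces $\Gamma$ to have no non-trivial unary polymorphism at all (via Theorem~\ref{thm:bodirsky:thm8}), whence Lemma~\ref{lem:retract-or-odd3} applies and $\Gamma_2^+$ already implements $\rel{ODD}_3$. Your ``gadget analogous to $\rel{ODD}_3$'' from a generic non-retraction witness is not fleshed out enough to be convincing, and the paper's route through the algebraic classification is what makes this tractable.

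In stage three, the structural claim that retraction to $[c]$ with $c \geq 3$ forces $\Delta$ to be positive conjunctive is the crux, and your contradiction sketch does not work as stated. The paper's argument is that if $\Gamma_c^+$ has any essential polymorphism then, by Theorem~\ref{thm:bodirsky:thm8}, $\Gamma$ is preserved by \emph{every} operation taking at most $c$ values (the quasilinear branch is excluded once $c \geq 3$ since quasilinear operations take at most two values). This is far stronger than mere retraction and immediately forces each relation in the $c$-slice to be preserved by all operations on $[c]$, hence pp-definable over equality alone. Your permutation-invariance idea does not yield this: a relation like $Q(a,b,c) \equiv (a=b \lor b=c)$ is preserved by every unary map and hence retracts to $[c]$ for all $c$, yet its $3$-slice is not positive conjunctive. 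The paper resolves this by landing such examples in the NP-hard bucket via the absence of essential polymorphisms, so the ``retraction implies positive conjunctive'' slogan in the theorem statement is really a consequence of the finer polymorphism trichotomy, not of retraction alone.
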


  This theorem in particular refines Theorem~\ref{ithm:pos-conj} from Section~\ref{sec:intro}.

  \subsection{The first step}

  Let us begin by showing that if $\Gamma$ implements $=$ and $\neq$ then the
  singleton expansion over $\Gamma$ adds no additional power.

  \begin{lemma} \label{lem:eq-neq-gives-constants}
    Let $\Gamma$ be an equality language that implements $=$ and $\neq$.
    Then there is a cost-preserving reduction from $\mincsp{\Gamma^+}$ to $\mincsp{\Gamma}$. 
  \end{lemma}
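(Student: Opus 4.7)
The plan is to simulate assignment constraints $(x=i)$ by introducing a fresh variable $c_i$ for each constant $i$ that appears, replacing each constraint $(x=i)$ with a copy of $x=c_i$, and enforcing $c_i \neq c_j$ for distinct $i,j$ via crisp constraints. Since $\Gamma$ implements $=$ and pp-defines $\neq$ (and vice versa), all these new constraints can be translated into $\Gamma$-gadgets via Proposition~\ref{prop:ppdef-impl}: soft equalities become soft implementation gadgets, whereas crisp equalities and crisp disequalities become pp-definition gadgets with all internal constraints made crisp.

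Formally, given an instance $I$ of $\mincsp{\Gamma^+}$, let $S \subseteq \NN$ be the finite set of constants appearing in the singleton constraints of $I$. The reduced instance $I'$ over $\mincsp{\Gamma}$ has variable set $V(I) \cup \{c_i : i \in S\}$ and contains: (i) a copy of every non-singleton constraint of $I$ (these already use relations in $\Gamma$); (ii) for every singleton constraint $(x=i)$ of $I$, a fresh copy of the equality gadget between $x$ and $c_i$, preserving the soft/crisp status; (iii) a crisp disequality gadget between $c_i$ and $c_j$ for every pair $i \neq j$ in $S$. Cost preservation holds gadget-by-gadget: by Definition~\ref{def:implementation}, the minimum cost of a soft equality gadget, as a function of the primary variables, is $0$ when the governing equality holds and $1$ otherwise, exactly matching a soft equality constraint, while crisp pp-definition gadgets are satisfiable precisely when their governing relation holds.

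The forward direction is routine: given $\alpha$ satisfying $I$ with cost $k$, define $\alpha'(v) = \alpha(v)$ on $V(I)$, $\alpha'(c_i) = i$ for $i \in S$, and extend optimally on each gadget's auxiliary variables. The main subtlety is the backward direction: I cannot simply take $\alpha = \alpha'|_{V(I)}$ because the values $\alpha'(c_i)$ may be arbitrary. However, the crisp $c_i \neq c_j$ gadgets guarantee that $\{\alpha'(c_i) : i \in S\}$ are pairwise distinct, so there exists a permutation $\pi$ of $\NN$ with $\pi(\alpha'(c_i)) = i$ for all $i \in S$. Setting $\alpha(v) = \pi(\alpha'(v))$ for $v \in V(I)$, each original singleton $(x=i)$ is satisfied by $\alpha$ iff $\alpha'(x) = \alpha'(c_i)$, iff the corresponding equality gadget in $I'$ is satisfied. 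All non-singleton constraints remain satisfied since every relation of $\Gamma$, being an equality relation, is invariant under every permutation of $\NN$. This permutation step, which relies essentially on $\Gamma$ being an equality language, is the crux of the argument; it would fail the moment genuine singletons are part of the target language, which is precisely why the reduction goes to $\mincsp{\Gamma}$ and not to $\mincsp{\Gamma'}$ for a different expansion.
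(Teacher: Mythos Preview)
Your proposal is correct and follows essentially the same approach as the paper: introduce a fresh variable for each constant used, replace singleton constraints by (implementations of) equality constraints to these new variables, enforce pairwise crisp disequalities among them, and in the backward direction apply a permutation of $\NN$ sending the values assigned to the new variables onto the intended constants. Your write-up is slightly more explicit about the gadget mechanics via Proposition~\ref{prop:ppdef-impl} and about preserving the soft/crisp status, but the argument is the same.
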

  \begin{proof}
    Let $I$ be an instance of $\mincsp{\Gamma^+}$ and let $C \subset \NN$
    be the set of constants $i$ used in assignment constraints $v=i$ in $I$.
    Create a new set of variables $T=\{t_i \mid i \in C\}$ and add a
    crisp constraint $t \neq t'$ for all distinct pairs $t, t' \in T$.
    Replace every constraint $(x=i)$ in $I$ by an implementation of
    $(x=t_i)$, and keep all other constraints unchanged. Let $I'$ be the
    output instance produced. We show that this reduction is cost-preserving.
    On the one hand, let $\alpha$ be an assignment to $V(I)$. Define an
    assignment $\alpha'$ to $V(I')$ by extending $\alpha$ by $\alpha(t_i)=i$
    for every $i \in C$. Then $\alpha$ and $\alpha'$ have the same cost.
    On the other hand, let $\alpha'$ be an assignment to $V(I')$ with
    finite cost. Since $I'$ consists of only equality-language
    constraints, we may apply any bijunction over $\NN$ to $\alpha'$
    and retain an assignment that satisfies the same set of constraints.
    In particular, since $\alpha'$ has finite cost it must hold that
    $\alpha'(t) \neq \alpha'(t')$ for every distinct pair $t, t' \in T$.
    We may then apply a bijunction to $\alpha'$ such that $\alpha'(t_i)=i$
    for every $i \in C$. Now letting $\alpha$ be the restriction of $\alpha'$
    to the variables $V(I)$ produces an assignment for $I$ of the same
    cost as $\alpha'$. Finally, assume that $I'$ has no finite-cost solutions.
    Then by the above, the same holds for $I$, as otherwise a
    finite-cost solution to $I$ could be transformed to a finite-cost
    solution to $I'$. 
  \end{proof}

  We may thus focus on languages that are either constant or strictly negative:
  By Theorem~\ref{thm:eq-csp-dichotomy}, if $\Gamma$ is not constant and
  not Horn, then $\csp{\Gamma}$ is NP-hard, and by Lemma~\ref{lem:neither-const-nor-neg},
  if $\Gamma$ is not constant
  and not strictly negative but is Horn, then Lemma~\ref{lem:eq-neq-gives-constants} applies.
  Recall in particular that every strictly negative language is Horn.

  Our main focus will be on languages that are constant, as they allow
  a richer range of interesting behaviour, but even strictly negative equality
  languages yield non-trivial problems under singleton expansion.

  Let us formalize the following simple consequence.

  \begin{corollary} \label{cor:horn-with-constants}
    If $\Gamma$ is Horn, then $\csp{\Gamma_c^+}$ and $\csp{\Gamma^+}$
    are in P for every $c$. 
  \end{corollary}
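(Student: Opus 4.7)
The plan is to reduce $\csp{\Gamma^+}$ to $\csp{\Gamma \cup \{=, \neq\}}$, which remains a finite Horn equality language and hence is in P by Theorem~\ref{thm:eq-csp-dichotomy}. Since $\csp{\Gamma_c^+}$ is a syntactic restriction of $\csp{\Gamma^+}$ to instances that use only constants from $[c]$, handling $\csp{\Gamma^+}$ handles both problems at once.

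Given an instance $I$ of $\csp{\Gamma^+}$, let $C \subseteq \NN$ be the finite set of constants that actually appear in singleton constraints of $I$. I would introduce an auxiliary variable $c_i$ for each $i \in C$, replace every constraint $x = i$ by the equality constraint $x = c_i$, and add a disequality constraint $c_i \neq c_j$ for every distinct pair $i, j \in C$. The resulting instance $I'$ uses only relations from $\Gamma \cup \{=, \neq\}$ and has size polynomial in $|I|$.

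For correctness, any assignment $\alpha$ satisfying $I$ extends to one satisfying $I'$ by setting $\alpha(c_i) = i$. Conversely, given $\alpha'$ satisfying $I'$, the added disequalities force the values $\alpha'(c_i)$ to be pairwise distinct, so one may compose $\alpha'$ with a permutation of $\NN$ sending each $\alpha'(c_i)$ to $i$; because every relation of an equality language is closed under automorphisms of $\NN$, the composed assignment still satisfies every constraint of $I'$, and its restriction to $V(I)$ satisfies $I$.

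This is essentially the CSP analogue of Lemma~\ref{lem:eq-neq-gives-constants}, with one simplification: we do not need $\Gamma$ itself to pp-define $=$ or $\neq$, since we simply adjoin these two relations to the language. The enlarged language $\Gamma \cup \{=, \neq\}$ remains Horn, as both $=$ and $\neq$ are singleton Horn clauses, so Theorem~\ref{thm:eq-csp-dichotomy} applies directly and no substantive obstacle arises.
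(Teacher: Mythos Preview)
Your proof is correct and takes essentially the same approach as the paper: the paper's proof simply says ``Apply Lemma~\ref{lem:eq-neq-gives-constants} to $\Gamma'=\Gamma \cup \{=,\neq\}$; the cost-preserving reduction in particular preserves satisfiability,'' and you have unpacked exactly that reduction by hand. The only difference is presentational---you spell out the variable-for-constant substitution and the permutation argument directly rather than invoking the lemma, but the underlying argument is identical.
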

  \begin{proof}
    Apply Lemma~\ref{lem:eq-neq-gives-constants} to $\Gamma'=\Gamma \cup \{=,\neq\}$.
    The cost-preserving reduction in particular preserves satisfiability. 
  \end{proof}

  This finishes Lemma~\ref{olem:first}.
  
  \subsection{Strictly negative languages}

  Assume that $\Gamma$ is strictly negative. We note that in this case, adding a
  single unary singleton relation makes the optimization problem hard.

  \begin{lemma} \label{lem:vertex-cover}
    $\mincsp{\neq,x=1}$ is NP-hard. 
  \end{lemma}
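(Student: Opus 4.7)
The plan is to reduce from \textsc{Vertex Cover}. Given an instance $(G,k)$ of \textsc{Vertex Cover}, I would create an instance of $\mincsp{\neq, x=1}$ with one variable $x_v$ for every $v \in V(G)$, $k+1$ parallel copies of the soft constraint $x_u \neq x_v$ for every edge $uv \in E(G)$ (so that these effectively behave as crisp constraints, since violating any of them costs more than the budget), and one soft constraint $x_v = 1$ for every $v \in V(G)$. The budget is $k$.

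The intuition is that a satisfying assignment of the disequalities partitions $V(G)$ into independent colour classes, with the class of vertices receiving value $1$ being the set of vertices we \emph{pay nothing for}; the remaining vertices form a vertex cover whose size equals the cost. To prove the forward direction I would take a vertex cover $S$ with $|S|\leq k$, set $\alpha(v)=1$ for $v\notin S$, and give each $v\in S$ a distinct fresh value from $\{2,3,\ldots,|S|+1\}$; this satisfies every disequality (since at least one endpoint of every edge lies in $S$ and carries a unique value) and violates exactly $|S|$ soft constraints. For the reverse direction, if the instance has a solution of cost at most $k$, then no disequality can be violated (each has $k+1$ copies), so $\{v : \alpha(v)\neq 1\}$ meets every edge and has size at most $k$, giving a vertex cover of size at most $k$.

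I do not expect any real obstacle here: the reduction is essentially the classical coding of \textsc{Vertex Cover} as ``pay to recolour away from the majority value'', and the only subtle point is simulating crisp disequalities, which is handled by the standard $k+1$-copy trick available in MinCSP. The argument is parameter-preserving and runs in polynomial time, yielding NP-hardness.
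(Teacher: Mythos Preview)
Your proposal is correct and follows essentially the same reduction as the paper: encode edges as disequality constraints and vertices as soft assignments $(v=1)$, so that the set of vertices not assigned $1$ is a vertex cover of cost equal to the number of violated soft constraints. The only cosmetic difference is that the paper uses crisp $\neq$-constraints directly (as permitted by the MinCSP model) whereas you simulate them with $k+1$ parallel soft copies, which is equivalent and equally valid.
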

  \begin{proof}
    There is a cost-preserving reduction from \textsc{Vertex Cover}.
    Let $G$ be a graph. Create an instance $I$ of ${\mincsp{\neq,x=1}}$
    as follows. The variable set is $V(G)$. For every edge $uv \in E(G)$, 
    create a crisp constraint $(u \neq v)$. For every vertex $v \in V(G)$,
    create a soft constraint $(v=1)$. On the one hand, let $S$ be a
    vertex cover for $G$. Then we may assign variables of $S$ distinct
    values from $2, 3, \ldots$, thereby breaking the soft constraint
    $(v=1)$ for every $v \in S$, but satisfying every crisp disequality
    constraint. Conversely, let $\alpha$ be an assignment to $I$ of
    finite cost. Let $S$ be the set of variables $v \in V(I)$ such 
    that $\alpha(v) \neq 1$. Then $S$ is a vertex cover of $G$.
  \end{proof}

  On the other hand, if $\Gamma$ is strictly negative then its
  singleton expansion has both an FPT algorithm and constant-factor
  approximation.
    
  \begin{lemma} \label{lem:negative-fpt}
    Let $\Gamma$ be a strictly negative language. Then $\mincsp{\Gamma^+}$ is FPT
    and has a constant-factor approximation.
  \end{lemma}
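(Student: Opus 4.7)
The approach is to reduce $\mincsp{\Gamma^+}$ to \djcut with request lists of bounded length and invoke Theorem~\ref{thm:djcut-fpta}. Every relation in $\Gamma$ is a conjunction of clauses $\bigvee_{j=1}^{d}(x_j \neq y_j)$ of length at most some constant $d_0 = d_0(\Gamma)$, and a standard clause-splitting argument (at the cost of a constant factor in the approximation setting, analogous to Lemma~\ref{lem:pp-def-preserves-apx}) lets us assume that every soft constraint of the input instance $I$ of $\mincsp{\Gamma^+}$ is either a single constant constraint $(x=c)$ or a single such clause.

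I would then build a graph $G$ with undeletable vertices $V(I) \cup \{t_c : c \in D\}$, where $D$ is the set of constants appearing in $I$. For every soft constant constraint $(x=c)$, insert a deletable witness $u$ with edges $ux$ and $ut_c$; violating this constraint corresponds to deleting $u$. For every soft clause $C = \bigvee_{j=1}^{d}(x_j \neq y_j)$, insert an isolated deletable witness $w_C$ and add the request list $L_C = \{\{x_1,y_1\},\dots,\{x_d,y_d\},\{w_C,w_C\}\}$, so that $L_C$ is satisfied iff $w_C$ is deleted or some literal pair is separated. Finally, for each pair of distinct constants $c,c' \in D$, add a crisp request $\{\{t_c,t_{c'}\}\}$ forcing the terminals into different components.

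The key structural invariant is that every $t_c$-to-$t_{c'}$ path in $G - X$ alternates between deletable witnesses and variables, and each surviving witness pins its incident variable to a fixed constant; hence any such path forces $c = c'$. Consequently, from any \djcut solution $X$ we recover an assignment $\alpha$ to $V(I)$ by setting $\alpha(x) = c$ when $x$ lies in the component of $t_c$ and $\alpha(x)$ to a fresh distinct value otherwise, and the number of $\mincsp{\Gamma^+}$ constraints violated by $\alpha$ is at most $|X|$; conversely, an $\mincsp{\Gamma^+}$ assignment of cost $k$ induces a \djcut solution of cost $k$ by choosing one witness per violated constraint. Since every list has length at most $d_0 + 1 = O(1)$, Theorem~\ref{thm:djcut-fpta} supplies the constant-factor FPT approximation; a polynomial-time constant-factor approximation follows by LP rounding, since the instance is essentially a soft-edge variant of weighted \textsc{Multiway Cut}. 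For exact FPT, iterative compression together with a guess of the terminal-neighbourhood partition reduces the problem to a bijunctive Boolean \mincsp whose Gaifman graphs are $2K_2$-free, which is handled by Theorem~\ref{thm:bij-mincsp} in the same spirit as Theorem~\ref{thm:triple-multicut-fpt}.

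\textbf{Main obstacle.} The technical heart is verifying the cost-preserving property of the reduction, in particular ensuring that paths through undeleted witnesses cannot accidentally merge distinct terminals and that the assignment extracted from a \djcut solution does not inadvertently satisfy or violate constraints for reasons not accounted for by $|X|$; this is precisely what the chain-propagation invariant above is designed to control, but it needs to be made fully formal in both directions.
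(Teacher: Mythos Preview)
Your reduction to \djcut is correct and, via Theorem~\ref{thm:djcut-fpta}, yields a constant-factor \emph{FPT} approximation. But the lemma asks for more: exact FPT and a (polynomial-time) constant-factor approximation. On both of these your proposal has real gaps.

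For exact FPT, the sentence ``iterative compression plus a guess of the terminal-neighbourhood partition reduces to a bijunctive Boolean \mincsp\ with $2K_2$-free Gaifman graphs'' does not go through. In the \mdt\ reduction the triple constraints become pairwise ``not both in component $i$'' clauses, which are bijunctive and form a triangle. Your request lists $L_C=\{\{x_1,y_1\},\dots,\{x_d,y_d\},\{w_C,w_C\}\}$ encode the disjunction ``some $\{x_j,y_j\}$ is separated, or $w_C$ is deleted''; in the Boolean encoding this is $\bigvee_j \bigwedge_i \neg(\hat x_{j,i}\land \hat y_{j,i})$, which is neither bijunctive nor $2K_2$-free for $d\ge 2$. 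There is no evident way to massage it into the framework of Theorem~\ref{thm:bij-mincsp}. Likewise, the polynomial-time approximation claim ``essentially a soft-edge variant of weighted \textsc{Multiway Cut}'' ignores the request lists $L_C$, which are genuinely disjunctive and do not fit the Multiway Cut LP.

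The paper's proof avoids all of this machinery by exploiting a structural fact you overlook: because every relation in $\Gamma$ is strictly negative, the canonical assignment $\alpha$ that obeys all surviving singleton constraints and gives every unconstrained variable a \emph{fresh distinct} value automatically satisfies every $\Gamma$-constraint. Hence any violation is witnessed locally by either two contradictory singletons on one variable, or a single negative clause all of whose literals are falsified by singleton constraints on its arguments. This yields an $(r{+}1)$-way bounded-depth branching for exact FPT, and a direct greedy/local-ratio argument for a polynomial-time $O(r)$-approximation, with $r$ the maximum arity in $\Gamma$. No graph construction, shadow covering, or Boolean encoding is needed. I would recommend scrapping the \djcut detour for this lemma and arguing directly from the canonical assignment.
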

  \begin{proof}
    Let $I$ be an instance of $\mincsp{\Gamma^+}$ with parameter $k$. 
    We branch over obvious obstructions. First, assume there is a
    variable $v \in V(I)$ that is subject to contradictory assignment
    constraints $(v=i)$ and $(v=j)$, $i \neq j$. Then we recursively
    branch into the two cases of either setting $\alpha(v)=i$, in which case $(v=j)$
    is violated (possibly along with further assignment constraints on $v$),
    or concluding that $(v=i)$ is violated and deleting those constraints. In both branches, the parameter $k$
    decreases. Next, form a tentative assignment $\alpha$ by setting
    $\alpha(v)=i$ for every variable $v$ subject to an assignment
    constraint $(v=i)$, and setting all other variables to mutually
    distinct values, distinct from all values occurring in assignments
    in $I$. If $\alpha$ satisfies $I$, then we are done. Otherwise,
    let $R(x_1,\ldots,x_r)$ be a constraint violated by $\alpha$. 
    Since $R$ is strictly negative, it has a definition as a conjunction of
    strictly negative clauses. Let $C$ be such a clause that is violated by
    $\alpha$. Then $C$ is a disjunction over a finite number of negative
    literals $(x_i \neq x_j)$, and for every such literal $(x_i \neq x_j)$
    there are assignment constraints $(x_i=a)$ and $(x_j=a)$ in $I$ for
    some shared value $a$. Then we can branch on either $R$ being
    violated or an assignment constraint $(x_i=a_i)$, $i \in [r]$ being
    violated. This yields $r+1=O(1)$ branches, and since $k$ decreases
    in each branch the result is an FPT branching algorithm.

    For the approximation, similar arguments apply. Let $v$ be a
    variable subject to multiple contradictory assignment constraints.
    Let $i \in \NN$ be the constant for which the number of copies of a
    constraint $(v=i)$ is maximized. Let $m_1$ be the number of
    assignment constraints $(v=j)$ for $j \neq i$, and let $m_2$
    be the number of constraints $(v=i)$. Delete all constraints $(v=j)$
    for $j \neq i$, and delete $\min(m_1,m_2)$ constraints $(v=i)$.
    Let $X_1$ be set of assignment constraints deleted in total in this phase.
    Then any assignment to the instance will violate at least half of
    the constraints in $X_1$, and in the remaining instance every
    variable $v$ occurs in assignment constraints $(v=i)$ for at most
    one value $i$. Let $\alpha$ be as above, selecting an assignment
    that satisfies all assignment constraints and assigns unique
    distinct values to all variables not occurring in assignment
    constraints. Then, as above, if there is a constraint $R(X)$ not
    satisfied by $\alpha$ then there is an explicit contradiction
    between $R(X)$ and at most $r(R)=O(1)$ assignment constraints.
    Since at least one of these constraints must be violated in any
    assignment, we get an $O(1)$-approximation by simply deleting both
    $R(X)$ and one copy of every assignment constraint with scope
    intersecting $X$. Repeating this until no constradictions remain
    gives an $r(\Gamma)$-approximation, where $r(\Gamma)$ is the largest
    arity of a relation in $\Gamma$. 
  \end{proof}

  We summarise the properties of singleton expansions of strictly negative
  languages as follows.

  \begin{lemma}[Lemma~\ref{olem:strictly-negative}, repeated]
    Let $\Gamma'$ be a singleton expansion of a finite strictly negative equality
    language $\Gamma$. 
    Then $\csp{\Gamma'}$ is in P and $\mincsp{\Gamma'}$ is NP-hard
    but FPT, and has a constant-factor approximation.
  \end{lemma}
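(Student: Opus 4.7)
The plan is to assemble three earlier results of the section, one for each component of the statement. For $\csp{\Gamma'}$ being in P, I will invoke Corollary~\ref{cor:horn-with-constants}, noting that every strictly negative language is Horn; this step is immediate.

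For NP-hardness, the plan is a cost-preserving reduction from $\mincsp{\neq,\{(1)\}}$, which is NP-hard by Lemma~\ref{lem:vertex-cover}. This reduces to showing that $\Gamma'$ can implement both the relation $\neq$ and a singleton $\{(1)\}$. For the singleton, every singleton expansion contains at least one unary singleton $\{(i)\}$ by definition, and I may assume $i=1$ by first applying the $\NN$-automorphism swapping $1$ and $i$ to the whole construction (this operation fixes every equality relation in $\Gamma$). For $\neq$, I will apply Lemma~\ref{lem:neither-const-nor-neg}(1) to any proper relation $R \in \Gamma$: since $R$ is strictly negative and proper, the all-equal tuple $(1,\ldots,1)$ violates every non-empty strictly negative clause of $R$, so $R$ is non-constant and hence implements $\neq$.

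For the FPT and constant-factor approximation claims, Lemma~\ref{lem:negative-fpt} directly handles $\Gamma' = \Gamma^+$. For $\Gamma' = \Gamma_c^+$ with $c \geq 1$, I will observe that $\Gamma_c^+ \subseteq \Gamma^+$, so every instance of $\mincsp{\Gamma_c^+}$ is in particular an instance of $\mincsp{\Gamma^+}$, and the same branching algorithm and approximation analysis apply unchanged.

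The only subtlety is the implicit non-triviality assumption in the NP-hardness step: the argument needs $\Gamma$ to contain at least one proper relation in order to implement $\neq$. This matches the standing convention from Section~\ref{ssec:equalangs-csp} that we consider only proper relations without redundant arguments. No step looks like a real obstacle; the work amounts to packaging the three implementations carefully and checking that the $\NN$-automorphism used to normalize the constant $i$ to $1$ commutes with the Vertex Cover reduction of Lemma~\ref{lem:vertex-cover}.
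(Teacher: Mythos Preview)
Your proposal is correct and follows essentially the same route as the paper: Corollary~\ref{cor:horn-with-constants} for $\csp{\Gamma'}\in\mathrm{P}$, Lemma~\ref{lem:neither-const-nor-neg}(1) to get $\neq$ and then Lemma~\ref{lem:vertex-cover} for NP-hardness, and Lemma~\ref{lem:negative-fpt} (via $\Gamma_c^+\subseteq\Gamma^+$) for the FPT and approximation claims. One minor simplification: by definition $\Gamma_c^+=\Gamma\cup\{\{(1)\},\ldots,\{(c)\}\}$ already contains $\{(1)\}$ for every $c\geq 1$, so the automorphism-normalization step is unnecessary.
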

  \begin{proof}
    \csp{\Gamma'} is in P by Cor.~\ref{cor:horn-with-constants}
    (or indeed by Lemma~\ref{lem:negative-fpt}). 
    For NP-hardness, we invoke Lemma~\ref{lem:vertex-cover}. In
    particular since no proper relation is both strictly negative and
    constant and $\Gamma$ contains at least one relation by assumption,
    Lemma~\ref{lem:neither-const-nor-neg} implies that $\Gamma$ implements
    $\neq$. Then $\mincsp{\Gamma_1^+}$ is already NP-hard by
    Lemma~\ref{lem:vertex-cover}, and adding further unary singleton
    relations to the language clearly does not change this fact. 
    Finally, the FPT algorithm and constant-factor approximation of
    Lemma~\ref{lem:negative-fpt} also applies to $\mincsp{\Gamma'}$. 
  \end{proof}

  \subsection{Constant languages}

  We now assume that $\Gamma$ is constant but not strictly negative.
  For studying these cases, we need to employ the algebraic machinery  
  for studying CSPs.

  Let $R \subseteq D^r$ be a relation, and $f \colon D^c \to D$ an
  operation on a domain $D$. We say that $f$ \emph{preserves} $R$ if,
  for any $t_1, \ldots, t_c \in R$ we have
  $f(t_1,\ldots,t_c) \in R$, where $f$ is applied component-wise.
  Let $\Gamma$ be a constraint language over $D$.
  A \emph{polymorphism} of $\Gamma$ is an operation over $D$
  that preserves every relation $R \in \Gamma$.
  A polymorphism $f \colon D^c \to D$ is \emph{essentially unary}
  if there exists an index $i \in [c]$ and an operation $g \colon D \to D$
  such that $f(x)=g(x_i)$ for every $x \in D^c$. 
  A polymorphism is \emph{essential} if it is not essentially unary. 
  Polymorphisms are a standard tool in studying finite constraint
  languages, since they characterize the expressive power of the
  language up to pp-definability. Bodirsky~\cite{Bodirsky-Hab} shows
  that under mild assumptions they can be used for infinite languages
  and languages over infinite domains as well.

  \begin{theorem}[{\cite[Theorem~5.2.3]{Bodirsky-Hab}}]
    Let $\Gamma$ be an $\omega$-categorical structure. A relation $R$
    has a pp-definition in $\Gamma$ if and only if $R$ is preserved
    by all polymorphisms of $\Gamma$.
  \end{theorem}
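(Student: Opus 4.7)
The plan is to handle the two directions separately. The direction ``pp-definable implies preserved by all polymorphisms'' is a straightforward structural induction on pp-formulas, and does not use $\omega$-categoricity at all. Atomic formulas $R(\bar x)$ with $R \in \Gamma$ are preserved by every polymorphism of $\Gamma$ by definition; equality atoms $x_i = x_j$ are preserved by every function; conjunction trivially preserves the property; and for existential quantification, if $R(\bar x) \equiv \exists \bar y\,\phi(\bar x, \bar y)$ and tuples $\bar a_1,\dots,\bar a_c$ are in $R$ with respective witnesses $\bar b_1,\dots,\bar b_c$, then applying a polymorphism $f$ componentwise gives that $f(\bar b_1,\dots,\bar b_c)$ witnesses $\phi$ at $f(\bar a_1,\dots,\bar a_c)$, which is therefore in $R$.

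For the converse, let $R \subseteq D^n$ be preserved by every polymorphism of $\Gamma$. Since $\mathrm{Aut}(\Gamma) \subseteq \mathrm{Pol}(\Gamma)$, the relation $R$ is a union of $\mathrm{Aut}(\Gamma)$-orbits of $n$-tuples; by the Ryll-Nardzewski characterization of $\omega$-categoricity, there are only finitely many such orbits, so $R$ is a finite union of orbits (and in particular first-order definable). Let $\overline R$ be the intersection of all pp-definable relations that contain $R$; since pp-formulas are closed under conjunction and there are finitely many orbits outside $R$ (again by Ryll-Nardzewski), it suffices to produce, for each orbit $O \not\subseteq R$, a pp-formula $\phi_O(\bar x)$ that is satisfied by every tuple of $R$ but falsified by every tuple of $O$; then the conjunction of the $\phi_O$ is a pp-definition of $R$. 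Equivalently, I would aim to show $\overline R = R$.

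The main obstacle is ruling out a hypothetical $\bar a \in \overline R \setminus R$, and this is where $\omega$-categoricity carries the argument. Assume $\bar a$ exists. The idea is to construct a polymorphism of $\Gamma$ that maps tuples of $R$ to $\bar a$, contradicting the preservation assumption. Enumerate (orbit-representatives of) the tuples of $R$ as $\bar t_1,\bar t_2,\dots$ and consider the ``indicator'' pp-instance $\mathfrak{I}$ whose variables are the coordinate projections of the sequence $(\bar t_1,\bar t_2,\dots)$ and whose constraints are all $\Gamma$-atoms satisfied by this sequence. Because $\bar a$ lies in $\overline R$, it satisfies every pp-formula holding on $R$; therefore the partial map sending the $n$ distinguished variables of $\mathfrak{I}$ to the coordinates of $\bar a$ extends, by a back-and-forth argument using the fact that $\omega$-categorical structures are homogenized by their pp-types, to a full homomorphism $\mathfrak{I} \to \Gamma$. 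Restricting this homomorphism to the coordinates on which it actually depends yields a finite-arity polymorphism of $\Gamma$ mapping tuples of $R$ to $\bar a$, contradicting preservation. The technical care will go into making the back-and-forth converge so as to hit \emph{every} constraint in $\Gamma$ (not merely those enumerated at a finite stage) and into verifying that the resulting map has genuinely finite arity, rather than requiring infinitely many inputs to determine its output.
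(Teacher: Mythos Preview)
The paper does not prove this theorem at all; it is quoted verbatim as Theorem~5.2.3 from Bodirsky's habilitation and used as a black box. There is therefore no ``paper's own proof'' to compare against.

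As for your sketch itself: the forward direction is fine. In the converse you already note, via Ryll--Nardzewski, that $R$ has only finitely many orbits, say with representatives $\bar t_1,\dots,\bar t_m$; but you then write the enumeration with open-ended dots and spend the final paragraph worrying about how to extract a finite-arity polymorphism from an a priori infinitary construction. That worry is self-inflicted: the whole point of invoking $\omega$-categoricity is that $m$ is finite, so the indicator instance lives in $\Gamma^m$ and any homomorphism $\Gamma^m \to \Gamma$ sending the $i$-th row to $a_i$ is already an $m$-ary polymorphism. No ``restriction to coordinates it depends on'' is needed, and no back-and-forth is needed to produce it. What you do need, and do not address, is why the partial map $r_i \mapsto a_i$ (where $r_i \in D^m$ is the $i$-th row of the matrix with columns $\bar t_1,\dots,\bar t_m$) extends to a homomorphism defined on \emph{all} of $\Gamma^m$, not just on the substructure generated by the rows. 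This is where a genuine compactness or K\H{o}nig-type argument enters (every finite subinstance maps because $\bar a$ satisfies the corresponding finite pp-formula; then lift using that $\Gamma$ has finitely many orbits of $k$-tuples for each $k$). Your phrase ``$\omega$-categorical structures are homogenized by their pp-types'' is not a standard fact and does not obviously do this job; $\omega$-categorical structures need not be homogeneous.
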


  In particular, every equality constraint language is
  $\omega$-categorical. Furthermore, for any equality constraint
  language $\Gamma$ and any $c \in \NN$, the language $\Gamma_c^+$ is
  $\omega$-categorical (see Section~3.1 of Bodirsky~\cite{Bodirsky-Hab}).
  We will not need to consider any subtleties around the language $\Gamma^+$
  which has an infinite number of additional relations, since all
  hardness claims over languages $\Gamma^+$ will follow from finite
  subsets $\Gamma_c^+$ of $\Gamma^+$. 

  We give two simple statements for reference.

  \begin{lemma} \label{lem:essential-or-hard}
    Let $\Gamma$ be an equality constraint language and $c \in \NN$ a
    constant, $c \geq 2$. If $\Gamma_c^+$ has no essential polymorphisms
    then $\csp{\Gamma_c^+}$ is NP-hard.
  \end{lemma}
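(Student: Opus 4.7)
The plan is to apply the polymorphism--pp-definability Galois connection for $\omega$-categorical structures (Theorem~5.2.3 of~\cite{Bodirsky-Hab}, already cited) in order to show that $\Gamma_c^+$ pp-defines a relation whose CSP is NP-hard. First I would note that $\Gamma_c^+$ is $\omega$-categorical, since $(\NN,=)$ is $\omega$-categorical and adjoining finitely many unary singleton relations preserves $\omega$-categoricity. Consequently, any relation $R \subseteq \NN^r$ preserved by every polymorphism of $\Gamma_c^+$ admits a pp-definition in $\Gamma_c^+$, and hence $\csp{\{R\}}$ reduces in polynomial time to $\csp{\Gamma_c^+}$.

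Next, I would exhibit an NP-hard target relation preserved by all essentially unary polymorphisms of $\Gamma_c^+$. Since $c \geq 2$, define the ternary relation $R := \{1,2\}^3 \setminus \{(1,1,1),(2,2,2)\} \subseteq \NN^3$, the positive not-all-equal relation on the values $\{1,2\}$. By assumption, every polymorphism of $\Gamma_c^+$ has the form $f(x_1,\ldots,x_n) = g(x_i)$ for some index $i$ and some unary map $g \colon \NN \to \NN$. Because $\{(j)\}$ belongs to $\Gamma_c^+$ for each $j \in [c]$, preserving it forces $g(j)=j$ for all $j \in [c]$, and in particular $g$ is the identity on $\{1,2\}$. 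Applying $f$ coordinatewise to any tuples $t_1,\ldots,t_n \in R \subseteq \{1,2\}^3$ therefore yields $t_i \in R$, so $R$ is preserved by every polymorphism of $\Gamma_c^+$.

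By the Galois connection, $R$ has a pp-definition in $\Gamma_c^+$, so it suffices to verify that $\csp{\{R\}}$ is NP-hard. Given a positive NAE-$3$-SAT instance with clauses $C_1,\ldots,C_m$ on Boolean variables, form the CSP instance with the same variables and one constraint $R(y_{i_1},y_{i_2},y_{i_3})$ for every clause $C_i = \{y_{i_1},y_{i_2},y_{i_3}\}$. Any satisfying assignment automatically maps constrained variables into $\{1,2\}$ (since $R \subseteq \{1,2\}^3$), and the constraint $R$ encodes precisely the NAE condition, so the two instances are equivalent; as positive NAE-$3$-SAT is NP-hard, so is $\csp{\Gamma_c^+}$. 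The only delicate step is the invocation of the Galois connection in the infinite-domain setting, but this is already supplied by the cited theorem; both the verification that $R$ is preserved and the reduction from NAE-$3$-SAT are then routine.
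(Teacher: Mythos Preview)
Your proof is correct and follows the same high-level strategy as the paper: invoke $\omega$-categoricity of $\Gamma_c^+$, use the Galois connection to pp-define a relation preserved by all essentially unary polymorphisms of $\Gamma_c^+$, and observe that CSP over that relation is NP-hard.

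The only difference is in the choice of target relation. The paper pp-defines the equality relation $Q(a,b,c)\equiv(a=b\lor b=c)$ over $\NN$ (which is preserved by \emph{every} essentially unary operation, a fact cited from Bodirsky), and then in a second step uses the constants $1,2$ to pp-define the unary domain relation $x\in\{1,2\}$, reducing to Boolean CSP$(Q)$. You instead go straight for the finite relation $R=\{1,2\}^3\setminus\{(1,1,1),(2,2,2)\}$, exploiting the extra observation that the unary part $g$ of any essentially unary polymorphism of $\Gamma_c^+$ must fix $[c]$ pointwise (because it preserves each singleton $\{(j)\}$). This lets you collapse the two steps into one and land directly on monotone NAE-3-SAT. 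Your route is slightly more direct; the paper's route is slightly more modular in that it separates the ``no essential polymorphism'' part (yielding $Q$) from the ``constants available'' part (yielding the domain restriction), and it can lean on an existing citation for the key preservation claim. Both arguments are equally valid.
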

  \begin{proof}
    Define $Q(a,b,c)$ over $\NN$ as the relation $(a=b \lor b=c)$. 
    As noted by Bodirsky~{\cite[Lemma~5.3.2]{Bodirsky-Hab}}, $Q$ is
    preserved by all essentially unary operations but has no essential
    polymorphisms. As noted above, $\Gamma_c^+$ is $\omega$-categorical,
    and therefore pp-defines $Q$. However, $\csp{Q,x=1,x=2}$ is NP-hard:
    Note that $\exists y_1, y_2: (y_1=1) \land (y_y=2) \land Q(y_1,x,y_2)$ pp-defines the unary
    relation $x \in \{1,2\}$. Thus it is enough that $\csp{Q}$ is
    NP-hard as a Boolean language, which is standard (e.g., see Exercise~3.24~in~\cite{chen2006rendezvous}).  
  \end{proof}

  We also give a direct proof of the following implementation result.
  Recall that $\rel{ODD}_3 \subset \NN^3$ accepts any tuple that takes
  either one or three distinct values.

  \begin{lemma} \label{lem:retract-or-odd3}
    Let $\Gamma$ be a constant equality language. Let $f \colon \NN \to \NN$
    be the retraction defined by $f(1)=1$ and $f(x)=2$ otherwise.
    If $\Gamma$ is not preserved by $f$, then $\Gamma_2^+$ implements $\rel{ODD}_3$. 
  \end{lemma}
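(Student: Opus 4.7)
The plan is to turn the structural failure of $f$ to preserve some $R \in \Gamma$ into a concrete pp-definition of $\rel{ODD}_3$ in $\Gamma_2^+$. First I would fix a tuple $t \in R$ with $f(t) \notin R$ and analyse its coordinates: write $I_1 = \{i : t_i = 1\}$ and let $I_0^{(1)}, \ldots, I_0^{(m)}$ denote the blocks of coordinates where $t$ takes a common non-$1$ value. Constantness of $\Gamma$ forces $I_1 \neq [r]$, while $f(t) \notin R$ combined with permutation-closure of $R$ forces both $I_1 \neq \emptyset$ and $m \geq 2$. By permutation-closure, the entire partition class of $t$ lies in $R$, whereas that of $f(t)$ (namely $\{I_1,\,[r]\setminus I_1\}$) lies entirely outside $R$.

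Next I would pp-define a small auxiliary relation that exposes the ``$1$ versus not $1$'' distinction. Pinning all $I_1$-coordinates of $R$ to the constant $1$ via $\{(1)\}$ and identifying the coordinates within each block $I_0^{(j)}$ to a single variable $y_j$ yields an atom whose projection is an $m$-ary relation; reducing to $m=2$ (either by iteratively replacing $t$ with a witness having fewer non-$1$ blocks, or by further existential elimination) produces a binary $S(y_1,y_2)$. By construction $S$ contains $(1,1)$ and every pair of distinct values both $\neq 1$ (from the orbits of the all-$1$ tuple and of $t$), but excludes every pair $(a,a)$ with $a \neq 1$ (which would realise the forbidden partition class of $f(t)$). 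Whether $S$ also contains the ``mixed'' pairs $(1,b)$ or $(a,1)$ for $a,b \neq 1$ splits the remainder of the argument into sub-cases; in the principal sub-case neither is present, and the auxiliary relation $T(y_1,y_2) := \exists w.\, S(w,y_1) \wedge S(w,y_2)$ then encodes precisely ``$y_1 = 1 \Leftrightarrow y_2 = 1$''.

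The final step is to use $S$, $T$, and both singletons to pp-define $\rel{ODD}_3(z_1,z_2,z_3)$. The conjunction $\bigwedge_{i<j} T(z_i,z_j) \wedge \bigwedge_{i<j} S(z_i,z_j)$ implements a ``$1$-biased fragment'' of $\rel{ODD}_3$ --- namely $(1,1,1)$ together with all pairwise distinct triples of values $\neq 1$ --- and the symmetric construction with $\{(2)\}$ in place of $\{(1)\}$ yields the analogous ``$2$-biased fragment''. The hard part will be fusing these two fragments into the full, permutation-invariant $\rel{ODD}_3$: pp-definitions admit no disjunction, and the naive union of the two fragments still omits constant tuples $(a,a,a)$ with $a \notin \{1,2\}$ and pairwise distinct triples containing both a $1$ and a $2$. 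Closing this gap will require a delicate use of additional auxiliary variables that let the pp-instance ``select'' a regime via values from the infinite domain, or alternatively an abstract appeal to $\omega$-categoricity of $\Gamma_2^+$ and Bodirsky's pp-definability theorem, arguing that every polymorphism of $\Gamma_2^+$ must preserve $\rel{ODD}_3$ once $f$ has been ruled out.
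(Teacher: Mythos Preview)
Your plan is considerably more elaborate than the paper's proof and leaves the decisive steps open. The paper never touches the singletons at all: it picks a tuple $\bc \in R$ with $f(\bc) \notin R$ that is \emph{least refined} (coarsest partition) among all such witnesses, identifies the arguments of $R$ according to the blocks of $\bc$ to obtain an $r'$-ary relation $R'(x_1,\ldots,x_{r'})=R(x_{c_1},\ldots,x_{c_r})$ with $r' \geq 3$, argues from minimality that the only orbits of $R'$ are the all-equal and the all-distinct one, and then sets $R''(x,y,z)=\exists x_4,\ldots,x_{r'}\,R'(x,y,z,x_4,\ldots,x_{r'})$. This is a one-constraint pp-definition, hence automatically an implementation.

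Your route, by contrast, pins the $1$-block to the constant $1$ and tries to rebuild $\rel{ODD}_3$ from an auxiliary binary relation $S$ together with its ``$2$-biased'' twin. Three concrete problems arise. First, the reduction from $m$ non-$1$ blocks to $m=2$ is not justified: iterative replacement of $t$ need not ever reach $m=2$ (there are constant relations $R$ for which every witness to non-preservation has at least three non-$1$ blocks), and existential elimination of some $y_j$ can introduce precisely the ``mixed'' pairs $(1,b)$ or $(a,1)$ that you hope to exclude in the principal sub-case. Second, you sketch only that principal sub-case of $S$ and leave the others untouched. Third, the ``fusion'' of the $1$-biased and $2$-biased fragments into a permutation-invariant $\rel{ODD}_3$ is, as you already concede, blocked by the absence of disjunction in pp-definitions, and you do not carry out either proposed workaround. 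The underlying difficulty is that by pinning a constant early you leave the category of equality relations and then have to fight your way back; the paper sidesteps this entirely by working with the refinement order on partitions throughout.
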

  \begin{proof}
    Let $R \in \Gamma$ be a relation not preserved by $f$, of arity $r$. 
    For a tuple $\bt \in \NN^r$, write $f(\bt)=(f(t_1),\ldots,f(t_r))$
    for the result of applying $f$ to $\bt$. Since $\Gamma$ is constant,
    $\ba := (1,\ldots,1) \in R$, and since $R$ is not preserved by $f$
    there is a tuple $\bc \in R$ such that $f(\bc) \notin R$. 
    Note that we have a refinement order, where $\bc$ strictly refines
    $f(\bc)$ which strictly refines $\ba$. Let $\bc$ be a least refined
    tuple in $R$ such that $f(\bc) \notin R$, and let $r'$ be the number
    of distinct values used in $\bc$. Without loss of generality, 
    assume that $\bc$ uses values $1$ through $r'$. Note that $r' \geq 3$.
    Define a relation $R'$ as $R'(x_1,\ldots,x_{r'})=R(x_{c_1},\ldots,x_{c_r})$.
    Then $R'$ accepts $(1,\ldots,1)$ and $(1,\ldots,r')$,
    but no tuple between these two in the refinement order. 
    Thus $R''(x,y,z) = \exists_{x_4,\ldots,x_{r'}} R'(x,y,z,x_4,\ldots,x_{r'})$ 
    is an implementation of $\rel{ODD}_3$.
  \end{proof}

  We show that $\mincsp{\rel{ODD}_3,\Gamma_2}$ is \textsc{Hitting Set}-hard.

  \begin{lemma} \label{lem:odd3-constants-hard}
    There is a cost-preserving reduction from \textsc{Hitting Set}
    to $\mincsp{\rel{ODD}_3,\Gamma_2}$.
  \end{lemma}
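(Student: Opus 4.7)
The plan is to adapt the construction of Lemma~\ref{lem:hitting-set-to-odd3}, replacing the use of soft $=$-constraints and the single crisp $\neq$-constraint by constraints built from $\rel{ODD}_3$ together with the two available singleton constants. The key observation is that $\rel{ODD}_3(a,b,c)$ with two fixed distinct constants $a=1$ and $b=2$ forces $c\notin\{1,2\}$, which is a good substitute for the crisp disequality $y_\ell\neq 1$ used in the original proof.

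Given an instance $(V,\cE,k)$ of \textsc{Hitting Set} with $V=\{1,\ldots,n\}$, I would build an instance $(I,k)$ of $\mincsp{\rel{ODD}_3,\Gamma_2}$ as follows. First, introduce fresh ``constant'' variables $c_1,c_2$ with crisp singleton constraints $c_1=1$ and $c_2=2$. For each $i\in[n]$, introduce a variable $x_i$ and a \emph{soft} singleton constraint $x_i=1$; deleting this constraint will represent placing $i$ into the hitting set. Finally, for every hyperedge $e=\{a_1,\ldots,a_\ell\}\in\cE$, introduce auxiliary variables $y_2,\ldots,y_\ell$ and add the crisp constraints $\rel{ODD}_3(x_{a_1},x_{a_2},y_2)$, the chain $\rel{ODD}_3(y_{i-1},x_{a_i},y_i)$ for $3\le i\le\ell$, and the new terminating constraint $\rel{ODD}_3(c_1,c_2,y_\ell)$ in place of the disequality $x_{a_1}\neq y_\ell$ from the original proof.

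For the backward direction I expect the argument to go just as in Lemma~\ref{lem:hitting-set-to-odd3}: if $Z$ is a solution to $(I,k)$ of cost at most $k$, then (since only the $x_i=1$ constraints are soft) the set $X=\{i:(x_i=1)\in Z\}$ has $|X|\le k$, and if some $e\in\cE$ were missed by $X$ then every $x_{a_i}$, $a_i\in e$, would be forced to $1$ in the surviving instance, propagating along the $\rel{ODD}_3$-chain to $y_\ell=1$, contradicting the constraint $\rel{ODD}_3(c_1,c_2,y_\ell)$ which forbids $y_\ell\in\{1,2\}$. For the forward direction, given a hitting set $X$ with $|X|\le k$, I would define $\alpha(x_i)=1$ for $i\notin X$ and $\alpha(x_i)=3$ for $i\in X$, and then extend $\alpha$ to the auxiliaries $y_2,\ldots,y_\ell$ of each edge greedily along the chain: whenever the two preceding arguments of an $\rel{ODD}_3$-constraint are equal, the next $y_i$ is forced to that common value, and whenever they differ, I pick a fresh value distinct from everything used so far. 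Since $X$ is a hitting set, some $x_{a_i}$ in each edge is not~$1$, which is exactly the slack needed for the fresh-value extension to satisfy $\rel{ODD}_3(c_1,c_2,y_\ell)$ too.

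I do not expect a serious obstacle: the only real design choice is how to emulate the crisp $y_\ell\neq 1$ without disequality, and the trick $\rel{ODD}_3(c_1,c_2,y_\ell)$ with $c_1\neq c_2$ handles it cleanly while keeping the number of deleted soft constraints in bijection with the hitting set, so the reduction is cost-preserving.
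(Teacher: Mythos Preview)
Your proposal is correct and takes essentially the same approach as the paper's own proof: both replace the crisp disequality $x_{a_1}\neq y_\ell$ from Lemma~\ref{lem:hitting-set-to-odd3} by the terminating constraint $\rel{ODD}_3(c_1,c_2,y_\ell)$ with $c_1=1$, $c_2=2$, and use soft singletons $(x_i=1)$ in place of the soft equalities $x_i=z$. The paper packages the chain-plus-terminator gadget as a separate claim (that $\{\rel{ODD}_3,\Gamma_2\}$ pp-defines the $\ell$-ary ``not all equal to $1$ or all equal to $2$'' relation) before applying it, whereas you inline the gadget directly, but the construction and verification are otherwise identical.
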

  \begin{proof}
    We give a variant of the hardness proof for
    \mincsp{\rel{ODD}_3,=,\neq} (Lemma~\ref{lem:hitting-set-to-odd3}).

    \begin{claim} \label{claim:odd3-nary}
      The language $\{\rel{ODD}_3,\Gamma_2\}$ pp-defines, for every arity
      $\ell$, the relation $R(x_1,\ldots,x_\ell)$ that accepts every
      tuple except $(1,\ldots,1)$ and $(2, \ldots,2)$. 
    \end{claim}
    \begin{claimproof}
      Create a pp-definition with local variables $z_1$, $z_2$
      and $Y=\{y_2,\ldots,y_\ell\}$ using the constraints
      \[
        (z_1=1) \land (z_2=2) \land 
        \rel{ODD}_3(x_1,x_2,y_2) \land
        \bigwedge_{i=2}^{\ell-1} \rel{ODD}_3(y_i,x_{i+1},y_{i+1}) \land
        \rel{ODD}_3(z_1,z_2,y_\ell).
      \]
      We claim that this pp-defines the relation $R$.
      First, assume that $x_1=x_2=\ldots=c$ for some value $c$.
      Then $y_\ell=c$ by induction. If $c \in \{1,2\}$, then
      the formula is unsatisfiable, otherwise the assignment where
      $y=c$ for every $y \in Y$ satisfies the formula. 
      Next, assume that $x_i \neq x_{i+1}$. Then we may set $y_{i+1}$
      to any free value, and by induction we can set all variables $y_j$
      for $j>i$ to distinct values not equal to 1 or 2 and not used by
      any variable $x_i$. Then also the constraint $\rel{ODD}_3(z_1,z_2,y_\ell)$
      is satisfied.
    \end{claimproof}

    Now let $I=(n, \cF, k)$ be an instance of \textsc{Hitting Set} (using
    the same encoding as in Lemma~\ref{lem:hitting-set-to-odd3}). 
    We create an instance $I'$ of $\mincsp{\rel{ODD}_3,\Gamma_2}$
    on variable set $V=\{v_1,\ldots,v_n\}$ as follows. 
    First, create a soft constraint $(v_i=1)$ for every $i \in [n]$.
    Next, for every set $F \in \cF$ we construct the relation of
    Claim~\ref{claim:odd3-nary} over the variable set $\{v_i \mid i \in F\}$. 
    We let every such construction consist of crisp constraints only. 
    We claim that this defines a cost-preserving reduction.
    On the one hand, let $S \subseteq [n]$ be a hitting set for $\cF$.
    Then we assign $\alpha(v_i)=3$ for $i \in S$ and $\alpha(v_i)=1$
    otherwise. This satisfies every constraint of $I'$ except the soft
    constraints $(v_i=1)$ for $i \in S$, i.e., the cost of $\alpha$ is
    precisely $|S|$. On the other hand, let $\alpha$ be an assignment to $I'$
    of some finite cost $c$, and let $S = \{i \in [n] \mid \alpha(v_i) \neq 1\}$
    be the indices corresponding to violated soft constraints in $I'$.
    Then $S$ is a hitting set of $\cF$. 
  \end{proof}

  With the preliminaries above in place, we can provide the main
  algebraic characterization we use from previous work.

  The following is a direct rephrasing of a result of Bodirsky et
  al.~\cite{BodirskyCP10equality}. We use the phrase
  \emph{$f$ takes (at most) $k$ values}
  to mean that the image of $f$ has cardinality (at most) $k$.
  For the definition of quasilinear operation,
  see~\cite{BodirskyCP10equality}; we will only need that
  quasilinear operations take at most two values. 

  \begin{theorem}[Theorem 8 of \cite{BodirskyCP10equality}]
    \label{thm:bodirsky:thm8}
    Let $\Gamma$ be an equality language with at least one unary
    polymorphism that is not constant or injective. If $\Gamma$ has any
    essential polymorphism, then the essential polymorphisms of  
    $\Gamma$ are described by one of the following cases.
    \begin{enumerate}
    \item All quasilinear operations
    \item For some $k \geq 2$, all operations which take at most $k$
      different values
    \end{enumerate}
  \end{theorem}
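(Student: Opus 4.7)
The plan is to prove this structural classification by first reducing to canonical (symmetric-group-equivariant) polymorphisms, then analysing the two types that can arise. Throughout, I use that for an equality language $\Gamma$ over $\NN$, every permutation $\sigma \in \mathrm{Sym}(\NN)$ is an automorphism, and so for any polymorphism $f$ of $\Gamma$ and any permutations $\sigma, \sigma_1, \ldots, \sigma_c$, the operation $\sigma \circ f \circ (\sigma_1, \ldots, \sigma_c)$ is again a polymorphism. This means the polymorphism clone is closed under a very rich action of $\mathrm{Sym}(\NN)$, which is the lever for everything that follows.

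First I would exploit the hypothesis that $\Gamma$ has a unary polymorphism that is neither constant nor injective. By composing with suitable permutations, one can normalise this polymorphism to witness an explicit ``collapse'' on two specific elements while fixing a third, and closing under products with permutations yields a sizeable supply of unary polymorphisms with controllable finite-range behaviour. This step sets the stage for arguing that whatever essential polymorphism $f$ exists, its orbit under composition with unary polymorphisms is rich enough to produce a canonical representative.

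The main step is a canonisation argument. For any essential polymorphism $f \colon \NN^c \to \NN$, I would show that by composing with permutations on inputs and outputs one can extract a polymorphism $\tilde f$ whose behaviour on a tuple $(a_1, \ldots, a_c)$ depends only on the equality pattern (i.e.\ the kernel partition) of the tuple. In the finite-domain world this is immediate; over an infinite domain with the full symmetric group acting, the cleanest route is a direct pigeonhole/compactness argument (or Ramsey-style canonisation as in the Bodirsky--Pinsker programme). Once $\tilde f$ is canonical, its output on a tuple $\mathbf a$ must either be a function of the equality pattern alone (giving an operation with range contained in a finite set of ``new'' constants) or it must return one of the input values, selected according to the equality pattern. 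Call these the \emph{bounded-range} and \emph{projection-like} cases.

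In the bounded-range case, the minimal finite range size gives the integer $k \geq 2$ in case~(2), and closing under composition with all permutations yields precisely the clone of all operations with range of size at most $k$; essentiality rules out $k = 1$ (constant) and the assumption of some non-injective unary polymorphism keeps the clone strictly larger than the essentially unary operations. In the projection-like case, the essential use of more than one coordinate together with invariance under $\mathrm{Sym}(\NN)$ forces the output value to be chosen by a quasilinear rule, yielding case~(1). The hard part, and the main obstacle, is the canonisation step together with ruling out intermediate clones: one must show that there is no ``mixed'' essential polymorphism that is neither quasilinear nor bounded-range, which requires carefully using the unary collapsing polymorphism to generate, from any such hypothetical mixed operation, either a bounded-range operation dominating it or a quasilinear one, thereby collapsing the polymorphism clone into one of the two listed families.
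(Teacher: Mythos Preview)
The paper does not prove this theorem; it is quoted as Theorem~8 of Bodirsky, Chen and Pinsker~\cite{BodirskyCP10equality} and used as a black box in Section~\ref{sec:singleton-expansion}. There is therefore no proof in the paper to compare your proposal against.

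On the merits of your sketch as a proof of the cited result: the overall shape (exploit the $\mathrm{Sym}(\NN)$-action to canonise, then analyse the resulting canonical operation) is the right genre, but two points are genuine gaps rather than routine details. First, your case split ``output depends only on the equality pattern'' versus ``output is one of the input values'' is not exhaustive: a canonical operation on an equality structure may return a fresh value on some equality patterns and an input coordinate on others, and nothing in your argument rules this out. Second, and more importantly, your identification of the ``projection-like'' branch with quasilinear operations is not justified. Quasilinear operations are by definition operations of range two (the paper itself notes this just before using the theorem), so they live inside the bounded-range world, not in a separate ``returns an input'' world. The actual dichotomy in the cited theorem is between the clone of \emph{all} operations of range at most $k$ and the strictly smaller clone of quasilinear operations; distinguishing these two requires showing that if the clone contains any range-$2$ operation that is not quasilinear then it contains every range-$2$ operation, and this is precisely where the non-constant non-injective unary polymorphism is used in a way your sketch does not explain.
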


  The remaining case looks as follows. 

  \begin{lemma} \label{lem:support-iplus}
    Let $\Gamma$ be a constant equality language that does not have any
    unary polymorphism that is not injective or a constant. Then
    $\Gamma_2^+$ implements $\rel{ODD}_3$. Furthermore, if $\Gamma$ has an
    essential polymorphism then $\Gamma$ is Horn. 
  \end{lemma}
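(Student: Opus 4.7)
The first statement follows immediately by invoking the previous lemma. The retraction $f \colon \NN \to \NN$ of Lemma~\ref{lem:retract-or-odd3}, given by $f(1)=1$ and $f(x)=2$ for $x \neq 1$, is a unary operation that is neither injective nor a constant. By the hypothesis on the unary polymorphisms of $\Gamma$, this $f$ cannot be a polymorphism of $\Gamma$, so $\Gamma$ is not preserved by $f$. Lemma~\ref{lem:retract-or-odd3} then yields the desired implementation of $\rel{ODD}_3$ in $\Gamma_2^+$.

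For the second statement, the plan is to appeal to the classification of polymorphism clones over equality languages in Bodirsky-Chen-Pinsker~\cite{BodirskyCP10equality}, complementing Theorem~\ref{thm:bodirsky:thm8}. That theorem describes the essential polymorphism clones under the assumption that there exists a non-injective, non-constant unary polymorphism, which is precisely the case excluded here. In the complementary case -- where every unary polymorphism of $\Gamma$ is injective or constant -- the classification in the same paper leaves only one clone that admits an essential polymorphism: the one generated (together with all permutations of $\NN$) by a canonical binary operation of ``Horn type'' (essentially a binary operation that picks out the common part when two tuples agree, and otherwise behaves injectively). The known polymorphism characterization of Horn equality languages then forces every relation in $\Gamma$ to admit a Horn CNF definition.

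The main obstacle is identifying and correctly invoking the relevant clone-classification statement from~\cite{BodirskyCP10equality} for this complementary setting, since Theorem~\ref{thm:bodirsky:thm8} itself is phrased under the opposite unary hypothesis. Once the right clone is pinned down, the deduction that $\Gamma$ is Horn is a direct consequence of the standard polymorphism characterization of Horn equality languages; no new gadget constructions are required. The rest is routine bookkeeping about which clones are compatible with our restriction on unary polymorphisms.
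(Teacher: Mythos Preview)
Your proposal is correct and follows essentially the same approach as the paper. The paper's proof is even terser than yours: for the first statement it simply invokes Lemma~\ref{lem:retract-or-odd3} (your added observation that the retraction $f$ is neither injective nor constant, hence not a polymorphism of $\Gamma$, makes explicit the one step the paper leaves implicit), and for the second statement it cites Theorem~15 of Bodirsky et al.~\cite{BodirskyCP10equality} directly, which is precisely the complementary clone-classification result you are describing.
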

  \begin{proof}
    The first statement follows from Lemma~\ref{lem:retract-or-odd3}.
    The second statement is Theorem~15 of Bodirsky et al.~\cite{BodirskyCP10equality}.
  \end{proof}

  \subsubsection{At most two singletons}

  We first treat the case of $\Gamma_c^+$ for $c \leq 2$, since it
  behaves somewhat differently from the rest. 
  We begin with a trivial observation.

  \begin{proposition} \label{lem:single-trivial}
    If $\Gamma$ is a constant equality constraint language, then $\mincsp{\Gamma_1^+}$ is in P.
  \end{proposition}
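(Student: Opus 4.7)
The plan is simply to observe that the all-ones assignment already satisfies every constraint of every instance, so the problem is trivial. Since $\Gamma$ is constant, every relation $R \in \Gamma$ contains the tuple $(1,\ldots,1)$ by definition. The only extra relation in $\Gamma_1^+ = \Gamma \cup \{\{(1)\}\}$ is the unary singleton $\{(1)\}$, i.e.\ the assignment constraint $(v = 1)$. Thus, given any instance $I$ of \mincsp{\Gamma_1^+}, define $\alpha \colon V(I) \to \NN$ by $\alpha(v) = 1$ for every variable $v$. For every constraint $R(x_1,\ldots,x_r)$ with $R \in \Gamma$, we have $\alpha(x_1,\ldots,x_r) = (1,\ldots,1) \in R$; and every assignment constraint $(v = 1)$ is satisfied directly. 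Hence $\alpha$ violates no constraint, soft or crisp, so $\cost(I) = 0$.

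It follows that the trivial algorithm that always accepts is correct for $\mincsp{\Gamma_1^+}$ (assuming $k \geq 0$, as is standard), and it clearly runs in polynomial time. I expect no technical obstacle here; the statement is a degenerate base case that merely needs to be handled explicitly before the main classification for $c \geq 2$, where the second singleton genuinely activates non-trivial behaviour (cf.\ Lemma~\ref{lem:vertex-cover} and Lemma~\ref{lem:retract-or-odd3}).
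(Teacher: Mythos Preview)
Your argument is correct and matches the paper's approach exactly: the paper states this as a ``trivial observation'' without proof, and the intended reason is precisely that the all-ones assignment satisfies every constraint when $\Gamma$ is constant and only the singleton $\{(1)\}$ has been added.
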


  Thus we consider the language $\Gamma_2^+$. Our main strategy will be
  a reduction to a Boolean MinCSP problem, all of which have been fully
  characterized~\cite{KimKPW23flow3,bonnet2016mincsp}.

  \begin{lemma} \label{lem:cases-two-constants}
    Let $\Gamma$ be a finite, constant equality language. 
    One of the following applies.
    \begin{itemize}
    \item $\Gamma_2^+$ has a retraction to domain $\{1,2\}$
      and $\mincsp{\Gamma_2^+}$ is equivalent under cost-preserving reductions to
      $\mincsp{\Delta}$ for some Boolean language $\Delta$
    \item $\Gamma$ is Horn and $\csp{\Gamma_2^+}$ is in P,
      but $\mincsp{\Gamma_2^+}$ is \textsc{Hitting Set}-hard
    \item $\csp{\Gamma_2^+}$ is NP-hard
    \end{itemize}
  \end{lemma}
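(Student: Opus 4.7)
The plan is to case-analyze on whether $\Gamma_2^+$ admits a retraction to $\{1,2\}$, meaning a unary polymorphism of $\Gamma_2^+$ with image contained in $\{1,2\}$. Note that any such retraction automatically fixes $1$ and $2$ since it must preserve the singletons $\{(1)\},\{(2)\} \in \Gamma_2^+$, and is therefore simultaneously a polymorphism of $\Gamma$. I will prove the trichotomy by first handling the retraction case and then splitting the non-retraction case by whether $\Gamma$ is Horn.

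\textbf{Case 1 (a retraction $r$ exists).} I will define $\Delta$ to be the Boolean language whose relations are, for each $R \in \Gamma_2^+$, the restriction of $R$ to $\{1,2\}$-valued tuples (under the bijection $1 \mapsto 0$, $2 \mapsto 1$). The reduction $\mincsp{\Delta} \to \mincsp{\Gamma_2^+}$ is the trivial inclusion of instances. For the reverse direction, I will map an instance $I$ of $\mincsp{\Gamma_2^+}$ to the instance $I'$ obtained by replacing every constraint $R(\mathbf{x})$ by its $2$-slice, and argue that for any $\alpha : V(I) \to \NN$ the composition $r \circ \alpha$ takes values only in $\{1,2\}$ and satisfies every constraint satisfied by $\alpha$ (since $r$ is a polymorphism); hence $r \circ \alpha$ corresponds to a Boolean assignment of $I'$ of cost at most $\cost(\alpha)$, and combined with the trivial direction this yields $\cost(I) = \cost(I')$.

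\textbf{Case 2 (no retraction).} Then in particular the specific function $f$ from Lemma~\ref{lem:retract-or-odd3} (which is itself a retraction) is not a polymorphism of $\Gamma$, so that lemma gives that $\Gamma_2^+$ implements $\rel{ODD}_3$. Chaining Lemma~\ref{lem:odd3-constants-hard} with the implementation part of Proposition~\ref{prop:ppdef-impl} yields a cost-preserving reduction from \textsc{Hitting Set} to $\mincsp{\Gamma_2^+}$, establishing \textsc{Hitting Set}-hardness. To classify $\csp{\Gamma_2^+}$, if $\Gamma$ is Horn then $\csp{\Gamma_2^+}$ lies in P by Corollary~\ref{cor:horn-with-constants}, giving the second bullet. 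If $\Gamma$ is not Horn I will prove NP-hardness of $\csp{\Gamma_2^+}$ via Lemma~\ref{lem:essential-or-hard}, for which it suffices to show that $\Gamma_2^+$ has no essential polymorphism.

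\textbf{The hard part} is this last claim. By the contrapositive of the second statement of Lemma~\ref{lem:support-iplus}, a non-Horn $\Gamma$ either has no essential polymorphism at all (in which case neither does $\Gamma_2^+$ and we are done) or admits a non-constant non-injective unary polymorphism, in which case Theorem~\ref{thm:bodirsky:thm8} classifies all essential polymorphisms of $\Gamma$ as either quasilinear or of image size bounded by some constant $k$. I will contradict Case~2 in each sub-case by exhibiting an explicit retraction. In the quasilinear branch, any essential quasilinear polymorphism $g$ of $\Gamma_2^+$ has image of size exactly two, forced to equal $\{1,2\}$ by the conditions $g(1,\ldots,1)=1$ and $g(2,\ldots,2)=2$, so the diagonal $h(x) := g(x,\ldots,x)$ is itself a retraction. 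In the bounded-image branch, the binary operation $\phi$ defined by $\phi(x,y)=1$ if $x=y=1$ and $\phi(x,y)=2$ otherwise has image $\{1,2\}$ of size at most $k$ and depends on both arguments, so by the classification it is a polymorphism of $\Gamma$; it fixes $1$ and $2$ hence is a polymorphism of $\Gamma_2^+$, and its diagonal $\phi(x,x)$ coincides with the function $f$ from Lemma~\ref{lem:retract-or-odd3}, once again a retraction. The key technicality is verifying that $\phi$ indeed lies in the clone of $\Gamma_2^+$ in the bounded-image branch, which requires the precise clone-theoretic description provided by Theorem~\ref{thm:bodirsky:thm8}.
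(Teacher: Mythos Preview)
Your approach is essentially the paper's: both case-split on the existence of a retraction to $\{1,2\}$, handle the retraction case via the Boolean $2$-slice, and in the non-retraction case invoke Lemma~\ref{lem:retract-or-odd3} and Lemma~\ref{lem:odd3-constants-hard} for \textsc{Hitting Set}-hardness, then use Theorem~\ref{thm:bodirsky:thm8} and Lemma~\ref{lem:support-iplus} to separate the Horn and NP-hard sub-cases. The organization differs (you split on Horn/non-Horn first; the paper splits on whether $\Gamma$ has an essential polymorphism first), but the content is the same.

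There is one gap in your quasilinear branch. You write ``any essential quasilinear polymorphism $g$ of $\Gamma_2^+$ has image $\{1,2\}$, so its diagonal is a retraction,'' but you never exhibit such a $g$. Theorem~\ref{thm:bodirsky:thm8} describes the essential polymorphisms of $\Gamma$, not of $\Gamma_2^+$; an essential quasilinear polymorphism of $\Gamma$ need not fix $1$ and $2$, and you have not argued that one does. The paper closes this more directly by citing that the specific unary $f$ of Lemma~\ref{lem:retract-or-odd3} is itself quasilinear; hence in the quasilinear branch $f$ is already a polymorphism of $\Gamma$ (and of $\Gamma_2^+$, since $f$ fixes $1$ and $2$), giving the retraction immediately and contradicting Case~2. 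Incidentally, the paper absorbs your bounded-image branch into the same sentence, since quasilinear operations take at most two values and are therefore preserved in branch~2 as well; your explicit construction of $\phi$ is correct, but the paper's shortcut via $f$ handles both branches uniformly.
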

  \begin{proof}
    First assume that $\Gamma$ is preserved by the retraction $f$ defined in Lemma~\ref{lem:retract-or-odd3}.
    Define the language 
    \[
      \Gamma'=\{R \cap \{1,2\}^{r(R)} \mid R \in \Gamma\}
    \]
    as the slice of $\Gamma$ that only uses two values, and
    interpret $\Gamma'$ as a language over domain $\{1,2\}$.
    Let $\Delta$ be the corresponding Boolean language, under the domain
    renaming $1 \mapsto 0$ and $2 \mapsto 1$, with the singleton
    relations $(x=0)$ and $(x=1)$ added to $\Delta$. 
    We claim that $\mincsp{\Gamma_2^+}$ and $\mincsp{\Delta}$ are
    equivalent problems. Indeed, let $I$ be an instance of $\mincsp{\Gamma_2^}$
    and let $I'$ be the instance of $\mincsp{\Delta}$ resulting from replacing
    every constraint in $I$ to the corresponding constraint over $\Delta$. 
    Let $\varphi$ be an assignment to $V(I)$. By applying the retraction
    $f$ to $\varphi$ followed by the domain renaming mapping,
    we get an assignment $\varphi' \colon V(I) \to \{0,1\}$
    which violates at most as many constraints in $I'$ as $\varphi$
    violates in $I$. Correspondingly, any assignment $\varphi' \colon V(I') \to \{0,1\}$ 
    can be used directly as an assignment to $V(I)$, and violates precisely
    the same set of constraints in $I$ as $\varphi'$ violates in $I'$. 
    Clearly, the same reduction also works when interpreted as a
    reduction from $\mincsp{\Delta}$ to $\mincsp{\Gamma_2^+}$. 

    Otherwise, if $\Gamma$ is not preserved by $f$ then by Lemma~\ref{lem:retract-or-odd3}, $\Gamma_2^+$
    implements $\rel{ODD}_3$ and $\mincsp{\Gamma_2^+}$ is at least
    \textsc{Hitting Set}-hard by Lemma~\ref{lem:odd3-constants-hard},
    and the question is whether $\csp{\Gamma_2^+}$ is in P.
    If $\Gamma$ has no essential polymorphism, then $\csp{\Gamma_2^+}$
    is NP-hard by Lemma~\ref{lem:essential-or-hard}.
    Hence assume that $\Gamma$ has some essential polymorphism.
    If $\Gamma$ has a unary polymorphism that is not constant or an
    injection, then by Theorem~\ref{thm:bodirsky:thm8},
    $\Gamma$ is preserved by every quasilinear operation. However, the
    retraction $f$ of Lemma~\ref{lem:retract-or-odd3} is quasilinear
    (see Bodirsky  et al.~\cite{BodirskyCP10equality}), so this case is
    impossible. In the remaining case, $\Gamma$ is Horn by
    Lemma~\ref{lem:support-iplus} and $\csp{\Gamma_2^+}$ is in P
    by Cor.~\ref{cor:horn-with-constants}.
  \end{proof}

  Finally, the possibilities for the last case here are quite limited.

  \begin{lemma} \label{lem:cases-boolean}
    Let $\Gamma$ be a constant equality language such that $\Gamma_2^+$
    has a retraction to domain $\{1,2\}$ and let $\Delta$ be the
    corresponding Boolean language. The following hold.
    \begin{itemize}
    \item If $\Delta$ is positive conjunctive and connected then
      \mincsp{\Gamma_2^+} is in P
    \item If $\Delta$ is positive conjunctive but not connected,
      then \csp{\Gamma_2^+} is in P and 
      \mincsp{\Gamma_2^+} is W[1]-hard but has a constant-factor
      approximation
    \item If $\Delta$ is not positive conjunctive but affine,
      then \csp{\Gamma_2^} is in P but
      \mincsp{\Gamma_2^+} is \textsc{Nearest Codeword}-hard
    \item Otherwise \csp{\Gamma_2^+} is NP-hard.
    \end{itemize}
  \end{lemma}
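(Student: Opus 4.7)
The plan is to chain Lemma~\ref{lem:cases-two-constants} with Schaefer's dichotomy, exploiting the fact that every non-unary relation in $\Delta$ is \emph{self-dual}, i.e., closed under the bit-flip $x \mapsto 1-x$, because it arose by restricting an equality relation to $\{1,2\}$. Since $\Delta$ contains both $(x{=}0)$ and $(x{=}1)$, only the four idempotent Schaefer tractable classes are relevant: Horn, dual-Horn, bijunctive, and affine. A self-dual Horn relation is simultaneously dual-Horn (bit-flipping a Horn CNF yields a dual-Horn CNF for the same relation), hence implicational: its reduced CNF uses only unit clauses and single implications $(x \to y)$. Self-duality forbids unit clauses and pairs every $(x \to y)$ with $(y \to x)$, collapsing them into $(x=y)$, so self-dual Horn equals positive conjunctive. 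Analogously, every self-dual bijunctive relation is a conjunction of $(x=y)$ and $(x \neq y)$ clauses, hence affine. Thus $\csp{\Delta}$ is in P iff $\Delta$ is positive conjunctive or affine, and NP-hard otherwise by Schaefer, settling the last bullet.

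If $\Delta$ is positive conjunctive and connected, every constraint forces all its variables to agree, so contracting constraints reduces $\mincsp{\Delta}$ to an instance using only assignment constraints, i.e., $s$-$t$ min cut between the two constants, which is polynomial-time solvable. If $\Delta$ is positive conjunctive but not connected, every relation of $\Delta$ is pp-definable from $\{=,(x{=}0),(x{=}1)\}$ by constant-size gadgets, so a constant-factor approximation transfers from the connected case by Lemma~\ref{lem:pp-def-preserves-apx}. For W[1]-hardness, every non-connected positive conjunctive relation implements an $(=,=)$-relation, and the reduction of Lemma~\ref{lem:eq-eq-hard} from \spc applies verbatim once the crisp disequalities $s_i \neq t_i$ are replaced by the crisp assignment constraints $(s_i{=}0)$ and $(t_i{=}1)$.

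For the affine but not positive conjunctive case, some $R \in \Delta$ is not a conjunction of equalities. By affinity, $R$ is the solution set of a linear system over $\mathrm{GF}(2)$, and self-duality forces each equation to have even arity; non-positive-conjunctivity means at least one such equation is more than a plain equality. Projecting and using the two constants as padding, we pp-define a nontrivial XOR atom $x_1 + \cdots + x_r = b \pmod 2$. This enables a cost-preserving reduction from \textsc{Nearest Codeword} by encoding the parity-check matrix through XOR constraints and the received word through soft assignment constraints $(x_i = b_i)$; CSP-tractability follows from Schaefer. The main technical hurdle lies here: extracting a clean XOR atom from an abstract affine non-positive-conjunctive self-dual $\Delta$ while respecting the even-arity restriction that self-duality imposes, which requires careful use of the $0$-constant to adjust parity and arity. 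The remaining bullets reduce to Schaefer and to the cut reductions already in the paper.
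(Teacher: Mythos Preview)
Your overall plan is correct and parallels the paper's proof closely: both arguments reduce to the Boolean slice $\Delta$ via Lemma~\ref{lem:cases-two-constants} and then classify $\Delta$ using the structure of Boolean co-clones. The paper does this by invoking Post's lattice directly (noting that $\Delta$ is $0$-valid, $1$-valid, and self-dual, hence either has no essential polymorphism, is preserved by $3$-ary XOR, or is positive conjunctive) and then citing Bonnet et al.~\cite{bonnet2016mincsp} for the \textsc{Nearest Codeword} hardness. Your Schaefer-based argument arrives at the same trichotomy by a more elementary route, showing that self-duality collapses Horn, dual-Horn, and bijunctive to (subsets of) affine, so that ``positive conjunctive or affine'' exhausts the tractable Schaefer classes. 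That is a nice self-contained derivation of what the paper pulls from the lattice.

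Two points need attention. First, your argument for the P case is not right as stated: ``contracting constraints'' forces them to be satisfied, which is not what MinCSP allows. The paper instead observes that the Boolean cost function $[x_1=\cdots=x_r]$ is submodular, so $\mincsp{\Delta}$ is solvable in polynomial time. Alternatively, you can reduce to vertex $s$--$t$ min cut by introducing one deletable vertex per constraint, adjacent to all its variable vertices (undeletable), with assignment constraints realized as deletable vertices adjacent to $s$ or $t$; deleting a constraint vertex then models violation at unit cost. Either fix is easy, but ``contracting'' does not do the job.

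Second, on the hurdle you flag in the affine case: your row-extraction sketch is not obviously complete, since projecting $\{x:Ax=0\}$ onto the support of one row need not yield that single equation. The clean way through is exactly the Post's-lattice fact the paper leans on: since $R$ is $0$-valid, $1$-valid, self-dual, and affine, $\mathrm{Pol}(R)\supseteq L$; if the containment were strict, $\mathrm{Pol}(R)$ would be the full clone and $R$ would be positive conjunctive, contradiction. Hence $\mathrm{Pol}(\{R,(x{=}0),(x{=}1)\})$ is exactly the idempotent affine clone, so $\Delta$ pp-defines $3$-ary XOR, and your \textsc{Nearest Codeword} reduction (crisp parity checks plus soft assignments) goes through. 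With these two corrections, your proof is complete and essentially matches the paper's.
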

  \begin{proof}
    Let $\Gamma'$ be the intersection of $\Gamma$ with domain $\{0,1\}$,
    interpreted as a Boolean language (as in Lemma~\ref{lem:cases-two-constants}).
    The possible cases for such languages follow from Bonnet et al.~\cite{bonnet2016mincsp}
    and the structure of Post's lattice of co-clones.
    Specifically, $\Gamma'$ is 0-valid, 1-valid and preserved by negation. 
    If $\Gamma'$ has no further polymorphism, then \csp{\Gamma_2^+} is
    NP-hard; if $\Gamma'$ is preserved by the 3-ary XOR operation, then
    \csp{\Gamma_2^+} is in P, but if $\Gamma'$ has no further
    polymorphism then \mincsp{\Gamma_2^+} is as hard to FPT-approximate
    as \textsc{Nearest Codeword}; and in every remaining case, $\Gamma'$
    is positive conjunctive~\cite{bonnet2016mincsp}.
    Finally, if a Boolean language $\Gamma'$ is positive conjunctive,
    then either every relation in $\Gamma'$ is connected, in which case
    the relation is submodular and $\mincsp{\Gamma_2^+}$ is in P,
    or $\Gamma'$ implements $R_{=,=}$ and $\mincsp{\Gamma_2^+}$ is
    W[1]-hard. However, as in Lemma~\ref{lem:negative-fpt} we
    can split every relation $R \in \Gamma'$ into separate equality
    constraints, and reduce to \textsc{st-Min Cut}, up to a
    constant-factor approximation loss.
  \end{proof}

  For completion, let us provide example languages for each of the
  cases of Lemma~\ref{lem:cases-boolean}. If $\Gamma=\{Q\}$ where 
  $Q(a,b,c) \equiv (a=b \lor b=c)$ is from Lemma~\ref{lem:essential-or-hard},
  then $\Gamma$ is closed under the retraction to domain $\{1,2\}$ but
  $\csp{\Gamma_2^+}$ is NP-hard. Next, consider the relation
  $R(a,b,c,d)$ which accepts any assignment where every block has even
  cardinality (i.e., $R$ accepts tuples $(1,1,1,1)$, $(1,1,2,2)$,
  $(1,2,1,2)$ and $(1,2,2,1)$). Then $\mincsp{\Gamma_2^+}$ corresponds 
  to $\mincsp{\Delta}$ for the Boolean language $\Delta=\{x=0, x=1, a+b+c+d=0 \pmod 2\}$
  which is \textsc{Nearest Codeword}-hard~\cite{bonnet2016mincsp}.
  The final two cases are represented by the languages 
  $\Gamma=\{R_{=,=}\}$ and $\Gamma=\{=\}$. 
  As a final example, consider $\Gamma=\{R\}$ where $R(a,b,c,d)$ is the 4-ary relation
  what accepts any tuple $(a,b,c,d)$ where $a=b$ and $|\{a,b,c,d\}| \neq 3$.
  Then $\Gamma$ is not itself positive conjunctive,
  but $\Gamma$ is constant and closed under the retraction $f \colon \NN \to \{1,2\}$
  and the Boolean slice $\Gamma'$ of $\Gamma$ simply contains the relation $(a=b)$
  with two additional irrelevant arguments.

  \subsubsection{At least three singletons}

  The cases where $c \geq 3$ (and the case of $\Gamma^+$) are more regular.

  \begin{lemma} \label{lem:c3-cases}
    Let $c \geq 3$ and let $\Gamma$ be a constant equality language.
    One of the following applies.
    \begin{itemize}
    \item \mincsp{\Gamma_c^+} is equivalent to \mincsp{\Delta_c^+}    
      where $\Delta$ is the $c$-slice of $\Gamma$, and $\Delta$ is
      positive conjunctive
    \item $\Gamma$ is Horn but the last case does not apply;
      \csp{\Gamma_c^+} is in P, but \mincsp{\Gamma_c^+} is \textsc{Hitting Set}-hard
    \item Neither case applies, and \csp{\Gamma_c^+} is NP-hard.
    \end{itemize}
    Similarly, one of the following applies.
    \begin{itemize}
    \item $\Gamma$ is positive conjunctive
    \item $\Gamma$ is Horn but not positive conjunctive; 
      $\csp{\Gamma^+}$ is in P, but $\mincsp{\Gamma^+}$ is \textsc{Hitting Set}-hard
    \item Neither case applies, and $\csp{\Gamma^+}$ is NP-hard.
    \end{itemize}
  \end{lemma}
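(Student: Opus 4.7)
My plan is to split the analysis according to whether $\Gamma$ admits a retraction to some $c$-element subset of $\NN$, and then further whether $\Gamma$ is Horn, matching the three bullet cases exactly. In the retraction case, fix a polymorphism $f \colon \NN \to [c]$ of $\Gamma$ with $f(i)=i$ for $i \in [c]$. The equivalence $\mincsp{\Gamma_c^+} \equiv \mincsp{\Delta_c^+}$ is routine: the direction from $\Delta_c^+$ to $\Gamma_c^+$ is the identity embedding since every $R \in \Gamma$ has $R \cap [c]^{r(R)}$ as its $\Delta$-counterpart, while the reverse replaces any assignment $\alpha$ by $f \circ \alpha$, which lies in $[c]^{V}$, agrees with every assignment constraint (as $f$ fixes $[c]$), and preserves every $R$-constraint (as $f$ is a polymorphism), so it has cost no larger than $\alpha$. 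To establish that $\Delta$ is positive conjunctive when $c \geq 3$, I would combine two closure properties of $\Delta$. First, because equality languages are preserved by every permutation of $\NN$, composing $f$ with a transposition swapping some $j \in [c]$ with a value outside $[c]$ yields a polymorphism whose restriction to $[c]$ identifies a pair of elements of $[c]$; varying the transposition gives closure of $\Delta$ under all pairwise identifications within $[c]$, and hence under arbitrary coarsenings of partitions induced by tuples in $\Delta$. Second, since $f$ is a non-injective non-constant unary polymorphism whose image has size $c \geq 3$, the quasilinear branch of Theorem~\ref{thm:bodirsky:thm8} is excluded (quasilinear operations take at most two values), leaving only the ``operations taking at most $k$ values'' branch; a binary operation from this family, restricted to $[c]$, supplies meet-closure on $\Delta$. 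Coarsening-closure together with meet-closure force the partition lattice of $\Delta$ to have a unique minimum $P_0$, which is precisely the defining partition of a positive conjunctive relation.

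For the non-retraction case, suppose $\Gamma$ has no retraction to any $c$-subset of $\NN$. I would mimic the proof of Lemma~\ref{lem:retract-or-odd3} by taking the canonical retraction $f_c(i)=i$ for $i<c$, $f_c(i)=c$ otherwise, and some $R \in \Gamma$ not preserved by $f_c$; selecting a least-refined tuple $\bc \in R$ with $f_c(\bc) \notin R$, projecting $R$ onto the indices carrying the distinct values of $\bc$, and using singleton assignment constraints from $\Gamma_c^+$ to pin the remaining coordinates, one obtains an implementation of $\rel{ODD}_3$ (up to relabelling constants)---the all-equal and all-distinct partitions are accepted, while the minimal intermediate one witnessed by $f_c$ is rejected. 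Composing with Lemma~\ref{lem:odd3-constants-hard} yields a cost-preserving reduction from \textsc{Hitting Set} to $\mincsp{\Gamma_c^+}$, giving \textsc{Hitting Set}-hardness. If $\Gamma$ is Horn, Corollary~\ref{cor:horn-with-constants} places $\csp{\Gamma_c^+}$ in P, completing the middle case. Otherwise, combining non-Horn status with the absence of a retraction (which via Lemma~\ref{lem:support-iplus} and Theorem~\ref{thm:bodirsky:thm8} rules out the remaining possibilities for essential polymorphisms of $\Gamma$ compatible with the singletons) leaves $\Gamma_c^+$ with no essential polymorphism, so $\csp{\Gamma_c^+}$ is NP-hard by Lemma~\ref{lem:essential-or-hard}.

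The $\Gamma^+$ statement follows from the same trichotomy: any instance of $\mincsp{\Gamma^+}$ uses only finitely many singleton constants, so choosing $c$ larger than all of them reduces to the $\Gamma_c^+$ analysis; positive conjunctivity of $R \cap [c]^{r(R)}$ for every $c$ exceeding the arity of $R$ forces $R$ itself to be positive conjunctive, since the defining partition $P_0$ is already realized inside any sufficiently large slice. The main obstacle is the positive conjunctive claim for $c \geq 3$: the $c=2$ analogue fails (cf.\ Lemma~\ref{lem:cases-boolean}, where affine Boolean slices are possible), so the proof must genuinely exploit $c \geq 3$ both to rule out the quasilinear case of Theorem~\ref{thm:bodirsky:thm8} and to leave enough freedom for the coarsening and meet closures to combine. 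Checking that the binary polymorphism obtained algebraically indeed descends to a well-defined polymorphism of $\Delta$---and that its meet-closure interacts correctly with the coarsening-closure derived from unary permutation composition---is the delicate ingredient.
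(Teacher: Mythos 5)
There is a genuine gap, and it sits exactly where you flagged the ``delicate ingredient.'' Your case decomposition --- ``does $\Gamma$ admit a retraction to a $c$-element set?'' --- does not match the lemma's trichotomy, and the implication ``retraction to $[c]$ exists $\Rightarrow$ the $c$-slice $\Delta$ is positive conjunctive'' is false. Concretely, take $\Gamma=\{Q\}$ with $Q(a,b,c) \equiv (a=b \lor b=c)$. This is a constant equality language defined by a positive (negation-free) formula, so it is preserved by \emph{every} unary operation on $\NN$, in particular by every retraction to $[c]$; yet its $c$-slice contains $(1,1,2)$ and $(1,2,2)$ but not $(1,2,3)$, so it is not positive conjunctive, and indeed $\csp{Q_c^+}$ is NP-hard (Lemma~\ref{lem:essential-or-hard}), so $\{Q\}$ must land in the third bullet, not the first. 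Your attempted rescue --- extracting a binary ``meet'' operation from the second branch of Theorem~\ref{thm:bodirsky:thm8} --- silently assumes the hypothesis of that theorem that $\Gamma$ \emph{has an essential polymorphism}, which is precisely what fails for $Q$. Without that check, your first case proves a false statement.

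The correct structure (which is how the paper argues) is to case-split on polymorphisms rather than on retractions: a polymorphism of $\Gamma_c^+$ is a polymorphism of $\Gamma$ fixing $[c]$ diagonally, hence taking at least $c$ values. If $\Gamma$ has a non-constant, non-injective unary polymorphism and $\Gamma_c^+$ has an essential polymorphism, Theorem~\ref{thm:bodirsky:thm8} applies; the quasilinear branch is excluded because $c\geq 3$, so $\Gamma$ is preserved by \emph{all} operations taking at most $c$ values --- not merely by one retraction. This immediately gives both the retraction (hence the equivalence with the slice) and the fact that $\Delta$ is preserved by \emph{every} operation on $[c]$, and the only such relations are those pp-definable over equality, i.e.\ positive conjunctive. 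This one-line observation replaces your coarsening-plus-meet lattice argument entirely. If instead $\Gamma$ has no non-constant non-injective unary polymorphism, Lemma~\ref{lem:support-iplus} yields $\rel{ODD}_3$ in $\Gamma_2^+\subseteq\Gamma_c^+$ (you do not need to re-derive a $c$-ary analogue of Lemma~\ref{lem:retract-or-odd3}), giving \textsc{Hitting Set}-hardness via Lemma~\ref{lem:odd3-constants-hard}, with the P/NP-hard split for the CSP coming from Horn-ness versus absence of essential polymorphisms. Your handling of the equivalence reduction itself and of the $\Gamma^+$ statement is fine in outline, but both rest on the flawed first case.
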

  \begin{proof}
    Consider the polymorphisms of $\Gamma_c^+$. They are precisely the
    intersection of the polymorphisms of $\Gamma$ and of $\Gamma_c$,
    i.e., every polymorphism $f$ of $\Gamma$ such that 
    $f(i,\ldots,i)=i$ for every $i \in [c]$.  Then any such $f$ is an
    operation that takes at least $c$ values. First, assume that $\Gamma$
    has at least one unary polymorphism that is not constant or injective,
    so that Theorem~\ref{thm:bodirsky:thm8} applies. In this case,
    either $\Gamma_c^+$ has no essential polymorphisms or $\Gamma$ is 
    preserved by every operation that takes $c$ values. 
    In the former case, $\csp{\Gamma_c^+}$ is NP-hard by
    Lemma~\ref{lem:essential-or-hard}. In the latter case, it first of all
    follows that $\Gamma_c^+$ is preserved by a retraction to domain $[c]$.
    Furthermore, consider the resulting slice language
    \[
      \Gamma'=\{R \cap [c]^{r(R)} \mid R \in \Gamma\}
    \]
    (where $r(R)$ denotes the arity of $R$) as a language over domain $[c]$.
    Then every relation $R \in \Gamma'$ 
    is preserved by every operation over $[c]$. The only such relations
    are $=$ and relations pp-defined over $=$. Thus, the slice language
    $\Gamma'$ is positive conjunctive.
    Otherwise, by assumption Lemma~\ref{lem:support-iplus} applies, so that
    $\Gamma_c^+$ implements $\rel{ODD}_3$ and \mincsp{\Gamma_c^+} is
    \textsc{Hitting Set}-hard by Lemma~\ref{lem:odd3-constants-hard}.
    Furthermore, either $\Gamma$ has no essential polymorphisms or $\Gamma$ is Horn.
    In the former case $\csp{\Gamma_c^+}$ is NP-hard by Lemma~\ref{lem:essential-or-hard};
    in the latter case $\csp{\Gamma_c^+}$ is in P by Cor.~\ref{cor:horn-with-constants}.

    The case of $\Gamma^+$ is similar. Clearly the hardness cases for
    $\Gamma_c^+$, $c \geq 3$ carry over to $\Gamma^+$. Hence if $\Gamma$
    is not Horn then $\csp{\Gamma^+}$ is NP-hard. Furthermore, let
    $R \in \Gamma$ be any relation not preserved by all operations. 
    Then every tuple of $R$ uses at most $r(R)$ distinct values,
    and $R \cap [r(R)]^{r(R)}$ is not preserved by all operations. 
    Hence $\Gamma_{r(R)}^+$ is already \textsc{Hitting Set}-hard by the above.
    The only remaining case is that $\Gamma$ itself is pp-definable over
    $\{=\}$, i.e., positive conjunctive.
  \end{proof}

  We also observe the tractable cases for positive conjunctive languages. 

  \begin{lemma} \label{lem:c3-posconj}
    Let $\Gamma$ be a finite, positive conjunctive equality language,
    possibly trivial, and let $\Gamma'$ be a singleton expansion of
    $\Gamma$ with at least three singleton relations. 
    Then \csp{\Gamma'} is in P and \mincsp{\Gamma'} has a
    constant-factor approximation. Furthermore the following hold. 
    \begin{enumerate}
    \item If $\Gamma$ is trivial, then \mincsp{\Gamma'} is in P
    \item If $\Gamma$ is connected but not trivial,
      then \mincsp{\Gamma'} is NP-hard but FPT
    \item If $\Gamma$ is not connected, 
      then \mincsp{\Gamma'} is W[1]-hard.
    \end{enumerate}
  \end{lemma}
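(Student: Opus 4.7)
The plan is to dispatch the global claims first and then treat the three cases of the lemma in turn. For the first global claim, positive conjunctive implies Horn, so $\csp{\Gamma'}$ is in P by Corollary~\ref{cor:horn-with-constants}. For the constant-factor approximation, every $R \in \Gamma$ is a conjunction of at most $\binom{r(R)}{2}$ literals $(x_i = x_j)$, and the plan is to split every soft $R$-constraint into its literal constraints. This increases the cost by a factor of at most $\max_{R \in \Gamma}\binom{r(R)}{2} = O(1)$, so it suffices to approximate $\mincsp{\{=\}, \Gamma_c}$ (or $\mincsp{\{=\}^+}$). The resulting problem is a soft-terminal \textsc{Edge Multiway Cut}, which admits a polynomial-time constant-factor approximation by standard means.

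For item~(1), triviality means every $R \in \Gamma$ equals $\NN^{r(R)}$ and imposes no restriction, so only the singleton assignments matter. The optimum cost equals $\sum_{v \in V(I)} \bigl(|A_v| - \max_{i} m_v(i)\bigr)$, where $A_v$ is the multiset of assignment constraints on $v$ and $m_v(i)$ counts its $(v=i)$ members; this is polynomial-time computable by scanning each variable independently.

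For item~(2), every $R \in \Gamma$ has a connected, non-empty defining graph, which forces all arguments of $R$ equal. The plan is to reduce $\mincsp{\Gamma'}$ to \textsc{Node Multiway Cut}: introduce one terminal vertex $t_i$ per constant $i$, keep every variable as an ordinary vertex, replace each soft $k$-ary equality by a fresh deletable vertex joined by $k+1$ parallel undeletable edges to each of its argument variables, and encode each soft singleton $(v=i)$ by a deletable vertex joined to both $v$ and $t_i$. Crisp constraints are handled by identifying variables or by analogous undeletable gadgets. A deletion of the proxy vertex for a constraint corresponds exactly to violating that constraint at unit cost, so the optimum assignment and the minimum node multiway cut coincide, and FPT membership follows from~\cite{Marx06mwc}. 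NP-hardness for $c \geq 3$ is immediate since \textsc{Edge Multiway Cut} with three terminals is NP-hard and embeds as the case $\Gamma = \{=\}$.

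For item~(3), I would take some $R \in \Gamma$ whose defining graph has at least two connected components, pick one representative variable from each of two distinct components, and existentially quantify the rest, which yields a pp-definition of $R^{\land}_{=,=}$ over $\{R\}$. W[1]-hardness then follows by adapting the reduction from \spc in Lemma~\ref{lem:eq-eq-hard}: the only crisp $\neq$-constraints used there are $s_i \neq t_i$ for $i \in \{1,2\}$, and I would replace them by crisp singleton assignments $s_1 = 1$, $t_1 = 2$, $s_2 = 1$, $t_2 = 2$ (two constants suffice, and we have at least three). The main obstacle will be verifying that this substitution preserves correctness: in the forward direction any satisfying assignment of the reduced instance gives $s_i$ and $t_i$ distinct values, precisely as the original disequality enforced; in the backward direction the crisp singletons merely fix the labels of the two cut classes without affecting which paired equalities are removed, so the extraction of a \spc-solution from the \mincsp-solution goes through unchanged.
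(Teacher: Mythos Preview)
Your plan is correct in outline and largely matches the paper's proof; the global claims, item~(1), and the approximation argument are essentially identical. For item~(2) you take a different route: you give a direct cost-preserving reduction to \textsc{Node Multiway Cut}, whereas the paper observes that a positive conjunctive connected relation is split, adjoins $=$ and $\neq$ to $\Gamma$, and invokes Theorem~\ref{thm:fpt-class} together with Lemma~\ref{lem:eq-neq-gives-constants} to eliminate the constants. Your approach is more self-contained and avoids the paper's FPT machinery; the paper's approach has the advantage of reusing existing results. (Two small cleanups in your reduction: say explicitly that the variable vertices are undeletable, and drop the ``$k+1$ parallel undeletable edges'' decoration, which is meaningless in a vertex-deletion problem.)

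There is one genuine slip in item~(3). Picking \emph{one} representative from each of two components and existentially quantifying the rest yields the trivial binary relation $\NN^2$, not $R^{\land}_{=,=}$: with a single free variable per component you can always set the quantified variables to match it. You need to keep \emph{two} variables from each of two components (equivalently, pick one defining edge from each); under the standing no-redundant-arguments assumption every component has size at least two, so this is available. With that fix the resulting gadget is a single-constraint pp-definition, hence automatically an implementation, which is what you actually need since Lemma~\ref{lem:eq-eq-hard} uses \emph{soft} $R$-constraints. Your replacement of the crisp $s_i \neq t_i$ by crisp singletons is fine for $R^{\land}_{=,=}$ (no cross $\neq$-clauses), and the forward direction still works because the domain is $\NN$ and you may assign fresh values to components not containing $s_i$ or $t_i$.
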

  \begin{proof}
    Since every conjunctive language is Horn, \csp{\Gamma'} is in P 
    by Cor.~\ref{cor:horn-with-constants}.   
    If $\Gamma$ is trivial, then instances of \mincsp{\Gamma'}
    consist of trivial constraints (always satisfied, or never satisfied)
    which can be discarded, and unary constraints $(x=i)$. 
    An optimal assignment to such an instance can be computed easily
    with a greedy algorithm. 
    Otherwise, let $R \in \Gamma$ be non-trivial. Since $R$ is
    positive conjunctive, $R$ is Horn and not strictly negative and
    $R$ implements $=$ by Lemma~\ref{lem:neither-const-nor-neg}.
    Since \mincsp{=,\Gamma_3} captures \textsc{Edge Multiway Cut} with
    three terminals, which is NP-hard, it follows that \mincsp{\Gamma'}
    is NP-hard. If $\Gamma$ is positive conjunctive and connected,
    then $\Gamma \cup \{=,\neq\}$ is split,
    hence \mincsp{\Gamma',=,\neq} is FPT by Theorem~\ref{thm:fpt-class}
    and Lemma~\ref{lem:eq-neq-gives-constants}.
    Finally, let $R \in \Gamma$ be a relation that is positive
    conjunctive but not connected. Then $R$ is defined by a conjunction of 
    positive literals $(x_i=y_i)$, and the graph induced over its
    arguments has at least two non-trivial connected components
    since otherwise $R$ is split. Then $R$ implements the relation
    $R'(x_1,y_1,x_2,y_2) \equiv (x_1=y_1) \land (x_2=y_2)$
    using existential quantification over irrelevant variables
    and \mincsp{\Gamma'} is W[1]-hard by a simple reduction from
    \textsc{Split Paired Cut} (see Section~\ref{ssec:split-paired-cut}). 
    
    For approximation, since $\Gamma$ is finite there is a finite bound
    $d \in \NN$ such that every relation $R \in \Gamma$ is a conjunction
    of at most $d$ terms. Then we can split every constraint
    \[
      R(x_1,\ldots,x_r)=(x_{i_1} = x_{j_1}) \land \ldots \land (x_{i_d}=x_{j_d})
    \]
    in an instance of \mincsp{\Gamma'} into the $d$ separate constraints $(x_{i_p}=x_{j_p})$,
    $p \in [d]$. This increases the cost of the instance at most by a
    factor $d$. Now we have an instance with just equality and
    assignments, which reduces to \textsc{Edge Multiway Cut} which has a
    constant-factor approximation~\cite{CalinescuKR00mwc}.
  \end{proof}

  \section{Discussion}
  \label{sec:discuss}
  
  We classify the parameterized complexity of $\mincsp{\Gamma}$ for all
  finite equality constraint languages $\Gamma$ and their singleton expansions.
  In particular, we show that for an equality language $\Gamma$, $\mincsp{\Gamma}$ is in FPT
  if $\Gamma$ only contains split relations and $\rel{NEQ}_3$,
  it is W[1]-hard and fpt-approximable within a constant factor if 
  $\Gamma$ is negative but contains a relation that is neither
  split nor $\rel{NEQ}_3$, and is \textsc{Hitting Set}-hard otherwise.
  We also show that the complications introduced by singleton expansion
  can be handled by well-known fpt algorithms.

  Note that singleton expansion is distinct from adding constants to the underlying
  structure $(\NN,=)$, as the latter creates a much more powerful
  language. For example, just adding a single constant to the structure
  allows first-order reducts to implement arbitrary Boolean relations
  (using, e.g., $v=0$ and $v \neq 0$ as the two domain values).
  Hence, studying $\mincsp{\Gamma}$ for reducts of the structure $(\NN,=)$
  with any finite number of constants added generalizes the task of
  producing a parameterized dichotomy for $\mincsp{\Gamma}$
  over all finite-domain language $\Gamma$, which is a highly
  challenging task. There is a CSP dichotomy for this
  setting~\cite{BodirskyM18unary}, but it explicitly reduces back to 
  the CSP dichotomy for finite-domain languages, and no such result is
  known for parameterized complexity.
  The complexity of $\mincsp{\Gamma}$ for the reducts of $(\NN,=)$ 
  extended with finitely many constants is therefore an interesting but challenging
  question as it generalizes $\mincsp{\Gamma}$ for all finite-domain
  constraint languages $\Gamma$. We note that this question is open
  even for the case of $(\NN,=)$ plus a single constant, as mentioned above.

  The natural next step is to study \emph{temporal constraint languages}
  where the relations are first-order definable over $(\QQ; <)$.
  Note that relations $\leq$ and $\neq$ belong to this class,
  and $\mincsp{\leq, \neq}$ is equivalent to
  \textsc{Symmetric Directed Multicut},
  an important open problem on digraphs~\cite{EibenRW22ipec,kim2022weighted}.
  One way forward would be to study \emph{constant-factor fpt approximation} instead
  since there the classes of tractable languages are preserved by
  (equality-free) pp-definitions, and thus one can employ
  powerful tools from the algebraic approach to CSP.  

\fi

\bibliographystyle{plainurl}
\bibliography{references}

\end{document}